\theoremstyle{definition}
\newtheorem{theorem}{Theorem}
\newtheorem{lemma}{Lemma}
\newtheorem{corollary}{Corollary}
\newtheorem{remark}{Remark}
\newtheorem{example}{Example}
\newtheorem{assumption}{Assumption}
\newtheorem{definition}{Definition}
\newcommand{\indep}{\perp \!\!\! \perp}
\newcommand{\asto}{\overset{a.s.}{\to}}
\begin{document}
\title{Unobserved Heterogeneous Spillover Effects in Instrumental Variable Models \thanks{I am deeply grateful to my advisor, D\'esir\'e K\'edagni, for his invaluable guidance and support. I also sincerely thank my committee members, Andrii Babii, Jacob Kohlhepp, Adam Rosen, and Valentin Verdier, for their constructive comments and continuous support. I am thankful to Muyang Ren, Tianqi Li, and participants in the UNC Econometrics Workshop and Seminars, Duke Microeconometrics Breakfast, and Triangle Econometrics Conference for helpful discussions and suggestions. This research uses data from Add Health. No direct support was received from grant P01 HD31921 for this analysis. All remaining errors are my own.}}
\author{Huan Wu \thanks{Department of Economics, University of North Carolina at Chapel Hill, the United States. Email address: \href{mailto: huan.wu@unc.edu}{huan.wu@unc.edu}}}
\date{\textbf{Job Market Paper} \\ 
    This version: November 11, 2025 \\
    \vspace{0.2em}
    \href{https://huanwu-econ.github.io/files/JMP_Wu.pdf}{[Link to the latest version]}}
\maketitle
\onehalfspacing

\begin{abstract}

\begin{footnotesize}
    This paper develops a general framework for identifying causal effects in settings with spillovers, where both outcomes and endogenous treatment decisions are influenced by peers within a known group. It introduces \textit{the generalized local average controlled spillover and direct effects (LACSEs and LACDEs)}, which extend the local average treatment effect framework to settings with spillovers and establish sufficient conditions for their point identification without restricting the cardinality of the support of instrumental variables. These conditions clarify the necessity of commonly imposed restrictions to achieve point identification with binary instruments in related studies. The paper then defines \textit{the marginal controlled spillover and direct effects (MCSEs and MCDEs)}, which naturally extend the marginal treatment effect framework to settings with spillovers and are nonparametrically point identified from continuous variation in instruments. These marginal effects serve as building blocks for a broad class of policy-relevant treatment effects, including some causal spillover parameters in the related literature. Semiparametric and parametric estimators are developed, and an application using Add Health data reveals heterogeneity in education spillovers within best-friend networks.
\end{footnotesize}
\end{abstract}

\maketitle
{\footnotesize \textbf{Keywords}: Unobserved heterogeneous spillover/direct effect; violation of SUTVA; causal inference; instrumental variable.

}

\newpage

\section{Introduction}
In econometric analyses of treatment effects, the Stable Unit Treatment Value Assumption (SUTVA) is typically imposed, requiring that each individual's potential outcomes depend only on their own treatment assignment and not on the treatment assignments of others. This assumption, however, may fail in environments involving social interactions or group structures, where one unit's treatment can influence another's outcome. In such contexts, the SUTVA is unlikely to hold. In addition, treatment assignment may be endogenous, particularly in observational studies or randomized experiments with imperfect compliance, which further complicates the identification of causal parameters.

This paper develops a new framework for identifying causal effects in environments with within-group spillovers and endogenous treatments, such as education decisions among friends or pricing choices in oligopolistic markets. The framework explicitly accounts for two sources of SUTVA violations: individual outcomes may depend on peers' treatment selection, and individual treatment decisions may be influenced by instruments assigned to other group members.

To begin, the paper introduces \textit{the generalized local average controlled spillover and direct effects (LACSEs and LACDEs)}, which measure treatment effects from peers and from one's own treatment, respectively, for specific subpopulations. These parameters extend the local average treatment effect (LATE) framework \citep{imbens1994identification} by allowing for spillovers in both outcomes and endogenous treatment decisions. This paper is the first to formally establish conditions under which the LACSEs and LACDEs can be point identified for specific subpopulations, and to derive general point identification results that do not rely on the cardinality of the support of instrumental variables. The results characterize the precise variation in instruments required to achieve point identification of local effects in environments with within-group spillovers and endogenous treatments.

When the instrumental variables exhibit continuous variation, the analysis introduces \textit{the marginal controlled spillover and direct effects (MCSEs and MCDEs)}, which measure the effects of peers' and individuals' own treatments conditional on specific values of unobserved characteristics within the group. These parameters naturally extend the marginal treatment effect (MTE) framework \citep{heckman2001policy,heckman2005structural} to settings with spillovers. The paper first establishes the nonparametric point identification of the MCSEs and MCDEs from continuous variation in instruments, without imposing functional form restrictions on the outcome equation or the joint distribution of unobserved characteristics across group members, thereby accommodating flexible forms of spillovers. Similar to the standard MTE, the MCSEs and MCDEs serve as building blocks for identifying a broad class of policy-relevant treatment effects (PRTEs), including the LACSEs and LACDEs, as well as other PRTEs arising from counterfactual policy changes, facilitating policy evaluation in environments with spillovers.


\subsection*{Organization of the Paper}

Section~\ref{sec:disc_treatment} develops the framework with within-group spillovers and endogenous treatments, formally defining, identifying, and analyzing the causal parameters of interest. 
Section~\ref{sec:basic_setting} introduces an outcome model that allows each unit's potential outcome to depend flexibly on the entire vector of treatments within the group, capturing spillover effects from peers' treatment decisions on own outcomes. Treatment selection follows a single-index threshold-crossing structure, in which an individual receives treatment if her unobserved characteristic falls below a threshold function determined by her own and her peers' instrumental variables. This structure, which can be interpreted as the equilibrium behavior of a simultaneous incomplete-information game \citep{aradillas2010semiparametric}, captures how peers' instruments can influence individual treatment decisions. Importantly, this framework imposes no parametric restrictions on the outcome equation or the threshold function and allows for arbitrary dependence among unobserved characteristics across group members, accommodating a broad range of environments in which spillovers may be present.

Section~\ref{sec:local_average_effects} defines and identifies two causal parameters, the generalized local average controlled spillover and direct effects. The term ``generalized local" indicates that these effects are defined for specific subpopulations of groups, while ``spillover" and ``direct" refer to the sources of treatment variation from peers and from the individual herself, respectively. The term ``controlled" highlights that one treatment dimension, either own or peer, is held fixed when measuring the effect of the other. This section establishes general conditions under which the LACSEs and LACDEs are point identified, without requiring the instrumental variables to be either discrete or continuous, as long as they generate the required variation for identification. These identification conditions also clarify the rationale for additional restrictions, such as one-sided noncompliance, which are often imposed to achieve point identification with binary instruments (e.g., \cite{vazquez2023causal}).

Section~\ref{sec:id_mcse_msde} establishes that when instrumental variables exhibit continuous variation, the marginal spillover effect and marginal direct effect are nonparametrically point identified without requiring functional form assumptions on the outcome equation. In addition, the joint distribution of unobserved characteristics across group members is nonparametrically identified over the support of the observed treatment probabilities, without imposing parametric restrictions on how these unobserved factors are distributed. The MCSEs and MCDEs are defined analogously to the LACSEs and LACDEs but condition on a specific realization of unobserved characteristics within each group. By conditioning on the latent characteristics of all group members, these parameters flexibly capture heterogeneity in both direct and spillover effects that arise from variation in unobserved factors. Section~\ref{sec:prte} further demonstrates that the marginal controlled effects form the basis for identifying a broad class of policy-relevant treatment parameters. By integrating the MCSEs and MCDEs over appropriate regions of the unobserved heterogeneity distribution, one can recover the LACSEs, LACDEs, and other treatment effect parameters associated with counterfactual policy interventions.

Section~\ref{sec:compare_mte} formally compares the MCSEs and MCDEs with the standard MTE and shows that, in the presence of spillovers, the conventional MTE may lose its causal interpretation, whereas in the absence of spillovers, the MCSEs and MCDEs coincide with the standard MTE. These results demonstrate that the MCSE-MCDE framework provides a natural generalization of the MTE framework to accommodate environments with spillovers. Section~\ref{sec:test_implication} derives testable implications implied by the model structure and the identification assumptions.

Section~\ref{sec:est_inference} develops a semiparametric estimation procedure for the MCSEs and MCDEs, extending the framework of \citet{carneiro2009estimating} to accommodate within-group spillovers. The proposed approach mitigates the curse of dimensionality associated with covariates while maintaining the model's nonparametric flexibility, as the key structural components other than the covariate adjustment are left unrestricted. The section also establishes the asymptotic properties of the semiparametric estimators. Because these estimators converge at nonparametric rates, their finite sample precision may be limited in small samples or when groups include a large number of members. To address this concern, a complementary parametric framework is introduced, relying on intuitive assumptions that enable straightforward implementation and facilitate valid inference through nonparametric bootstrap methods.

Section~\ref{sec:simulation_application} presents both parametric simulation results and an empirical application.  Section~\ref{sec:para_simulation} presents Monte Carlo simulation results for the parametric estimation procedure, demonstrating the strong finite-sample performance of the proposed parametric methods. 
Section~\ref{subsec:para_application} implements the proposed framework empirically using the parametric procedure. The analysis examines how education attainment affects long-term earnings within best-friend groups, drawing on data from the National Longitudinal Study of Adolescent to Adult Health (Add Health). The results indicate positive dependence between friends' unobserved characteristics and reveal systematic heterogeneity in the marginal controlled direct and spillover effects. The estimated MCDEs of completing 16 years of education are significantly positive when the best friend has also attained this level of education across most values of the latent characteristics, but become statistically insignificant when the friend has not. Similarly, the estimated MCSEs are significantly positive for individuals who completed 16 years of education across most values of the latent characteristics, whereas for those who did not, the spillover effects are insignificant and even negative for certain ranges of unobserved heterogeneity. These results provide empirical evidence of heterogeneous spillover effects of education on long-term earnings within friendship networks, highlighting how their magnitude and direction depend on both individuals' and peers' education attainment.


The framework can be extended to accommodate additional settings. Section~\ref{sec:extension} generalizes the analysis to cases where outcomes depend on an exposure mapping, which is a known function of group members' treatment statuses, rather than the full treatment vector. This extension is particularly relevant when group sizes vary or are large, making the full treatment representation impractical. The section formally defines the MCSEs and MCDEs under this extended setting and establishes their nonparametric point identification using continuous instrumental variables.

\subsection*{Related literature}

Recent research has devoted increasing attention to the identification and estimation of treatment effects in the presence of spillovers. This paper contributes to several key strands within this growing body of work.

A common strategy for addressing interference has been to impose parametric structures on social interactions. For instance, \citet{manski1993identification} discussed the linear-in-means model, formulated as a system of linear simultaneous equations to capture endogenous, exogenous, and correlated peer effects. Building on this result, subsequent work, such as \citet{bramoulle2009identification} and \citet{blume2015linear}, extended the framework to more complex forms of interaction within linear models and derived conditions under which social effects can be identified. However, they fundamentally rely on correct parametric assumptions regarding the structure of social interactions. Such assumptions may lead to model misspecification, particularly in the presence of nonlinear spillovers or heterogeneity across individuals. The framework developed in this paper departs from such reliance on parametric restrictions by studying identification under a nonparametric structure in both the outcome equation and the treatment selection mechanism. This design accommodates flexible and potentially complex spillover mechanisms and provides a robust framework for causal analysis in environments with within-group interactions.

In the main setting considered, an individual's outcome depends on the full vector of treatments within the group, consistent with the treatment response framework of \citet{manski2013identification}. Within the context of randomized controlled trials (RCTs), \citet{hudgens2008toward} and \citet{aronow2017estimating}, along with related studies, formalized design-based frameworks for analyzing interference. The framework in this paper extends this line of research  by allowing for noncompliance, so that individuals may not adhere to their assigned treatments. This feature is important in observational studies, where treatment status is not fully controlled by the researcher, and in experimental settings where imperfect compliance may occur. The analysis adopts a large-sample framework rather than a design-based approach to study causal identification under endogenous treatment selection.

\citet{vazquez2023causal} employed a potential outcomes framework to analyze similar settings with spillovers operating through both outcomes and treatment selection, using a binary instrumental variable for identification. His approach classifies individuals into discrete compliance types according to how their treatment choices respond to changes in instruments and focuses on identifying local average spillover and direct effects for each specific type, which are closely related to the LACSEs and LACDEs introduced in this paper. He achieved point identification by excluding certain subpopulations under one-sided noncompliance, a restriction also used in related work such as \citet{ditraglia2023identifying}. This paper establishes general identification conditions for the LACSEs and LACDEs, which clarify why one-sided noncompliance is required for point identification when the instrumental variable is binary. 
This paper further employs a continuously distributed instrumental variable to point identify the marginal controlled direct and spillover effects, defined conditional on continuous realizations of latent characteristics within groups. This approach connects the analysis to the marginal treatment effect literature and establishes the marginal effects as fundamental components for identifying a wide class of policy-relevant treatment effects. In particular, aggregating these marginal effects recovers the local average direct and spillover effects in \citet{vazquez2023causal}, as well as other causal parameters under counterfactual policy interventions.

Recent studies, including \citet{balat2023multiple} and \citet{hoshino2023treatment}, use instrumental variable methods to identify spillover effects in settings with direct strategic interactions among agents, where each individual's treatment choice directly depends on the treatment decisions of other group members. Frameworks with direct strategic interactions assume that an individual's treatment decision does not directly depend on the instruments assigned to other group members, thereby ruling out spillovers from peers' instruments in the treatment selection process.
In contrast, the framework developed here does not model explicit strategic interactions in treatment choices but allows each individual's treatment decision to depend on instruments assigned to other group members. This structure can be interpreted as the equilibrium outcome of a simultaneous incomplete-information game, following \citet{aradillas2010semiparametric}, and thus provides a complementary perspective to models that incorporate direct strategic interaction.

The spillover framework developed in this paper and the multivalued treatment framework are not nested. When the group is treated as a single decision-making unit, the group treatment vector can be reformulated as a multivalued group-level treatment. This links the setting to multivalued MTEs such as \citet{lee2018identifying}. When applied to spillover contexts, identification in \citet{lee2018identifying} relies on an exclusion restriction that an individual's treatment does not depend on peers' instruments, whereas the framework developed here allows and models such spillovers from peers' instruments. 

\section{Model} \label{sec:disc_treatment}

\subsection{Setting} \label{sec:basic_setting}
I consider a sample of $G$ independent and identically distributed (i.i.d.) groups, indexed by $g = \{1, \cdots, G\}$. Each group consists of the same number of units, denoted by $n \geq 2$. For example, a group may correspond to a market with several competing firms or to a household with multiple members. Within each group, units are indexed by $i = \{0, \cdots, n-1\}$. Throughout, I assume that spillover effects operate only within groups and do not extend across groups.

Researchers are often interested in how a treatment affects an outcome. Let $Y_{ig}$ denote the outcome of interest for unit $i$ in group $g$, and let $\mathcal{Y}$ denote its support. In some settings, the outcome $Y_{ig}$ may depend not only on unit $i$'s own treatment status but also on the treatment choices of other units within the same group. For example, a firm's market share is influenced both by its own pricing decisions and by those of its competitors. Within a friendship network, an individual's labor earnings may be influenced by her best friend's education attainment, not only through direct support or access to resources, but also through information-sharing or social learning mechanisms that facilitate the transmission of knowledge about opportunities, norms, and strategies. In such contexts, the Stable Unit Treatment Value Assumption (SUTVA) may be violated, which motivates researchers to develop models that explicitly allow for spillover effects in outcomes.

The binary treatment decision of unit $i$ in group $g$ is denoted by $D_{ig} \in \{0,1\}$, where $D_{ig} = 1$ indicates that unit $i$ adopts the treatment and $D_{ig} = 0$ otherwise. In many applications, treatment decisions are not randomly assigned but instead depend on unobserved characteristics that also influence outcomes, giving rise to endogeneity concerns. For instance, a firm's pricing decision or an individual's education choice may both be endogenously determined. In group settings, units may make their decisions simultaneously, taking into account private information as well as expectations about the behavior of other group members. Each unit's decision depends on its own private information, denoted by $V_{ig}$, as well as on its expectations about the probability that other members of the group will adopt the treatment. Units form their expectations on the basis of publicly observed variables $(Z_{ig}, Z_{-ig})$, where $Z_{ig}$ denotes the random assignment received by unit $i$ in group $g$, and $Z_{-ig}$ denotes the assignments of the remaining group members. The vector $(Z_{ig}, Z_{-ig})$ thus serves as the set of instrumental variables for addressing endogeneity in treatment decisions. Since each unit's treatment choice may respond to the assignments received by other group members, these instruments can also induce spillover effects in treatment selection.

Building on the setting described above, consider the following model for unit $i$ in group $g$, where the peer of unit $i$ is denoted by $-i$. For clarity of exposition, I focus on the case in which each group $g$ consists of two units, indexed by $i = \{0,1\}$, while noting that the identification and estimation results extend straightforwardly to groups with more than two members: 

\begin{equation} \label{eq:basic_model_disc_trt}
    \left\{\begin{array}{l}
        Y_{ig} = m_i(D_{ig}, D_{-ig}, U_{ig}, U_{-ig}) \\
        D_{ig} = \mathbbm{1}\left\{V_{ig} \leq h_{ig}(Z_{ig}, Z_{-ig})\right\}.
        \end{array}\right.
\end{equation}

The first line of Equation \eqref{eq:basic_model_disc_trt} specifies the outcome equation. In this framework, unit $i$'s outcome $Y_{ig}$ depends on her own treatment $D_{ig}$ and on her group member's treatment, $D_{-ig}$, which explicitly models spillover effects. Importantly, I also allow $Y_{ig}$ to depend on both unit $i$'s own unobservables $U_{ig}$ and the unobservables of her group member, $U_{-ig}$. For example, in the oligopoly market, this specification captures the possibility that firm $i$'s market share $Y_{ig}$ is influenced not only by its own unobserved product characteristics $U_{ig}$ but also by the unobserved product characteristics of its rival, $U_{-ig}$.

Throughout the paper, I define the potential outcome for unit $i$ when her treatment is set to $d$ and her group member's treatment to $d'$ as $Y_{ig}(d, d') \equiv m_i(d, d', U_{ig}, U_{-ig})$. The observed outcome $Y_{ig}$ is determined according to the following equation,
\begin{equation*}
    \begin{aligned}
        Y_{ig} =& \big(Y_{ig}(1, 1) D_{-ig} + Y_{ig}(1, 0) (1 - D_{-ig})\big) D_{ig} \\
        &+ \big(Y_{ig}(0, 1) D_{-ig} + Y_{ig}(0, 0) (1 - D_{-ig})\big) (1 - D_{ig}).
    \end{aligned}
\end{equation*}
This paper focuses on identifying reduced-form causal effects arising from a unit's own treatment and peers' treatments, rather than the underlying structural parameters specified in structural equations. A detailed comparison with a system of structural equations is provided in the Appendix \ref{app:model_endo_effect}.

The framework imposes no functional form restrictions on the outcome equation $m_i$, and the subscript $i$ indicates that each group member, $i \in \{0,1\}$, may have a distinct outcome equation, meaning their functional forms are not required to be identical. It also places no restrictions on the dimension of the unobserved components $(U_{ig}, U_{-ig})$. Consequently, the influence of the peer's treatment $D_{-ig}$ on unit $i$'s outcome $Y_{ig}$ remains fully unrestricted. This generality provides a flexible structure that accommodates complex and heterogeneous spillover patterns in outcomes.

The second line of Equation \eqref{eq:basic_model_disc_trt} characterizes the treatment selection mechanism. The treatment decision of unit $i$, $D_{ig}$, may be endogenous because it is determined by a continuous unobserved factor $V_{ig}$ that can also influence the outcome. The treatment selection $D_{ig}$ depends only on the unit's own unobservable $V_{ig}$ and not directly on her group member's unobservable $V_{-ig}$. This restriction is plausible in many applications. For example, in the oligopoly market discussed above, a firm's pricing decision is driven by its own private demand shock, while the competitor's demand shock is unobserved and therefore cannot directly affect the firm's decision rule. In the returns to education example, an individual's education decision is determined solely by her own education costs. The best friend's education costs, which are unobserved to the individual, do not directly influence her schooling decision.

Crucially, it is empirically reasonable to allow the unobserved factors $V_{ig}$ and $V_{-ig}$ to be arbitrarily dependent, since group members often share related unobserved characteristics or are exposed to common shocks. This dependence further complicates identification, and the framework accommodates it without imposing parametric restrictions on the joint distribution of unobservables. This flexibility accommodates a wide range of empirically relevant correlations. In the oligopoly setting, correlation across firms' idiosyncratic shocks arises naturally. For instance, a market-wide change in consumer tastes or a new advertising regulation may simultaneously affect how all products are perceived by consumers, thereby inducing correlation between the demand shocks $V_{ig}$ and $V_{-ig}$.

I model the treatment selection mechanism using a single threshold crossing rule: unit $i$'s chooses to take the treatment, $D_{ig} = 1$, if the unobserved factor $V_{ig} \in \mathbb{R}$ does not exceed a threshold $h_i(Z_{ig}, Z_{-ig})$, where $ h_i: \mathbb{R}^{k_i} \times \mathbb{R}^{k_{-i}} \mapsto \mathbb{R} $ is an unspecified function. I do not impose a parametric form on the threshold function $h_i$, and the subscript $i$ emphasizes that its functional form may differ across units $i \in \{0,1\}$ within the same group. In contrast to complete information games, where unit $i$'s treatment $D_{ig}$ directly depends on the treatment decision of her peer $D_{-ig}$, my framework is consistent with a simultaneous-move game with incomplete information, as studied by \cite{aradillas2010semiparametric} and related papers. In this setting, $D_{ig}$ corresponds to player $i$'s action, and $(Z_{ig}, Z_{-ig})$ represent publicly observed signals that serve as instrumental variables. Each unit observes $(Z_{ig}, Z_{-ig})$ and forms beliefs about the joint treatment choices within the group, specifically the probability $\mathbb{P}(D_{ig} = 1, D_{-ig} = 1 \mid Z_{ig}, Z_{-ig})$, and then chooses her treatment based on these beliefs. Thus, treatment decisions are interdependent through expectations rather than through observing others' realized treatment choices. As shown in \citet{aradillas2010semiparametric}, the optimal decision rule in such simultaneous-move incomplete information game is consistent with a single-index threshold-crossing structure of the form in Equation~\eqref{eq:basic_model_disc_trt}. Appendix \ref{app:incomplete_info_game} provides a detailed discussion.

For identification, which I discuss in detail later, the instrumental variables $(Z_{ig}, Z_{-ig})$ must be independent of the unobserved heterogeneity $(U_{ig}, U_{-ig}, V_{ig}, V_{-ig})$ in the group and must not directly affect the outcomes $(Y_{ig}, Y_{-ig})$. In studies that focus on complete information setting, such as those analyzed by \cite{balat2023multiple} and \cite{hoshino2023treatment}, strategic interactions between $D_{ig}$ and $D_{-ig}$ are modeled explicitly, but unit $i$'s treatment is not allowed to depend on her group member's instrument $Z_{-ig}$. A distinguishing feature of this framework is that it does not rely on any additional exclusion restrictions on instruments: I allow the treatment $D_{ig}$ to depend on both the unit's own instrument $Z_{ig}$ and the peer's instrument $Z_{-ig}$, thereby accommodating potential spillovers from instruments into treatment decisions. Moreover, the instrumental variables may take the form of common public signals observed by all group members, so that the same variable $Z$ serves as the instrument for each member, or unit-specific instruments that vary across members, $Z_{ig} \neq Z_{-ig}$. This framework accommodates both shared and individual sources of exogenous variation.

Figure \ref{fig:causal_relation} presents a directed acyclic graph (DAG) that illustrates the causal relationships among the key variables within group $g$. The red arrows represent spillover channels: unit $i$'s outcome $Y_{ig}$ may depend on her peer's treatment $D_{-ig}$, and her treatment $D_{ig}$ may depend on her peer's instrument $Z_{-ig}$. Direct interaction between treatments $D_{ig}$ and $D_{-ig}$, however, is ruled out. The unobserved heterogeneity $V_{ig}$ and $V_{-ig}$ introduce endogeneity, as they may simultaneously affect both treatments and outcomes. Those are represented by the black dashed arrows. The blue dashed arrow reflects potential dependence between $V_{ig}$ and $V_{-ig}$, for which I do not impose any functional restrictions.

\begin{figure}[!ht]
    \centering
    \resizebox{0.3\textwidth}{!}{%
    \begin{circuitikz}
    \tikzstyle{every node}=[font=\LARGE]
    \draw [ line width=0.6pt ] (5.75,13.25) circle (0cm);
    \draw [ line width=0.6pt ] (6.25,13) circle (0.75cm) node {\LARGE $Y_i$} ;
    \draw [ line width=0.6pt ] (6.25,10.5) circle (0.75cm) node {\LARGE $D_i$} ;
    \draw [ line width=0.6pt ] (6.25,8) circle (0.75cm) node {\LARGE $Z_i$} ;
    \draw [ line width=0.6pt ] (6.25,15.5) circle (0.75cm) node {\LARGE $V_i$} ;
    \draw [ line width=0.6pt ] (10,15.5) circle (0.75cm) node {\LARGE $V_{-i}$} ;
    \draw [ line width=0.6pt ] (10,13) circle (0.75cm) node {\LARGE $Y_{-i}$} ;
    \draw [ line width=0.6pt ] (10,10.5) circle (0.75cm) node {\LARGE $D_{-i}$} ;
    \draw [ line width=0.6pt ] (10,8) circle (0.75cm) node {\LARGE $Z_{-i}$} ;
    \draw [line width=0.6pt, ->, >=Stealth] (6.25,8.75) -- (6.25,9.75);
    \draw [line width=0.6pt, ->, >=Stealth] (6.25,11.25) -- (6.25,12.25);
    \draw [line width=0.6pt, ->, >=Stealth] (10,11.25) -- (10,12.25);
    \draw [line width=0.6pt, ->, >=Stealth] (10,8.75) -- (10,9.75);
    \draw [ color={rgb,255:red,255; green,38; blue,0}, line width=0.8pt, ->, >=Stealth] (7,10.5) -- (9.5,12.5);
    \draw [ color={rgb,255:red,255; green,38; blue,0}, line width=0.8pt, ->, >=Stealth] (9.25,10.5) -- (6.75,12.5);
    \draw [ color={rgb,255:red,255; green,38; blue,0}, line width=0.8pt, ->, >=Stealth] (9.25,8) -- (6.75,10);
    \draw [ color={rgb,255:red,255; green,38; blue,0}, line width=0.8pt, ->, >=Stealth] (7,8) -- (9.5,10);
    \draw [line width=0.6pt, <->, >=Stealth] (7,8) -- (9.25,8);
    \node [font=\LARGE] at (8,5.5) {\textit{Group g = 1, ..., G i.i.d.}};
    \node [font=\LARGE] at (8,5.5) {};
    \draw [ color={rgb,255:red,4; green,51; blue,255}, line width=1pt, <->, >=Stealth, dashed] (7,15.5) -- (9.25,15.5);
    \draw [line width=0.8pt, ->, >=Stealth, dashed] (6.25,14.75) -- (6.25,13.75);
    \draw [line width=0.8pt, ->, >=Stealth, dashed] (10,14.75) -- (10,13.75);
    \draw [line width=0.8pt, ->, >=Stealth, dashed] (5.5,15.25) .. controls (4.5,13) and (4.25,13) .. (5.5,10.75);
    \draw [line width=0.8pt, ->, >=Stealth, dashed] (10.75,15.25) .. controls (11.75,13) and (11.75,13) .. (10.75,10.75);
    \draw [line width=0.8pt, ->, >=Stealth, dashed] (6.75,15) -- (9.25,13.25);
    \draw [line width=0.8pt, ->, >=Stealth, dashed] (9.5,15) -- (7,13.25);
    \end{circuitikz}
    }%
    \caption{Causal relations under spillover setting}
    \label{fig:causal_relation}
\end{figure}
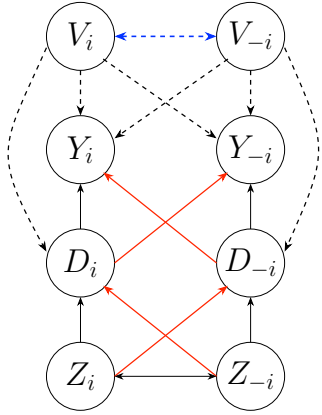

Assumptions \ref{as:RA}-\ref{as:Vdist} set out the maintained restrictions on the key variables that are imposed throughout the paper.

\begin{assumption}(Random assignment) \label{as:RA}
    The instrumental variables assigned to all members within a group are jointly independent of the group's unobserved heterogeneity:
    \begin{equation*}
        \big(Z_{ig}, Z_{-ig}\big) \indep \big(V_{ig}, V_{-ig}, U_{ig}, U_{-ig}\big)
    \end{equation*}
    for $i, -i \in \{0, 1\}$.
\end{assumption}

Assumption \ref{as:RA} requires that instruments are randomly assigned at the group level, implying that the group-level instrument vector $(Z_{ig}, Z_{-ig})$ is independent of the unobserved heterogeneity of all units in the group. This assumption places no restrictions on the dependence structure between $Z_{ig}$ and $Z_{-ig}$ within a group. The instruments may be arbitrarily correlated across units within a group, as long as they remain jointly independent of the unobserved heterogeneity $(V_{ig}, V_{-ig}, U_{ig}, U_{-ig})$.

\begin{assumption}(Exclusion restriction) \label{as:er}
    Given $d_0, d_1$ and $u_0, u_1$, the instrumental variables $(Z_{ig}, Z_{-ig})$ do not directly affect the outcome $Y_{ig}$:
    \begin{equation*}
        m_{i}\big(d_0, d_1, z_0, z_1, u_0, u_1\big) = m_{i}\big(d_0, d_1, z_0', z_1', u_0, u_1\big)
    \end{equation*}
    for any $z_0 \neq z'_0$ and $z_1 \neq z'_1$.
\end{assumption}

Assumption \ref{as:er} requires that the instruments affect the outcome only through their influence on treatment take-up, without exerting any direct effect on the outcome. This condition corresponds to the standard exclusion restriction commonly imposed in instrumental variable analyses.

\begin{assumption}(Distribution of $V_{ig}$) \label{as:Vdist}
    The unobserved variable $V_{ig}$ is continuously distributed.
\end{assumption}

Assumption \ref{as:Vdist} requires that the unobserved heterogeneity $V_{ig}$ has a continuous distribution, which is a common condition in the literature. Under this assumption, $V_{ig}$ can be normalized to follow a uniform distribution on the interval $(0,1)$.

Another implicit restriction embedded in the treatment selection equation is a monotonicity structure. Specifically, consider the case in which instruments $Z_{ig}$ and $Z_{-ig}$ are binary, taking values in $\{z_0, z_1\}$. If the threshold function satisfies the following ordering condition:
\begin{equation*}
    h_i(z_0, z_0) \leq h_i(z_0, z_1) \leq h_i(z_1, z_0) \leq h_i(z_1, z_1),
\end{equation*}
for all group members $i$ and $-i$, then the treatment selection equation implies a corresponding monotonicity property for treatment take-up, consistent with the condition studied in the literature (e.g., \citealp{vazquez2023causal}):
\begin{equation*}
    D_{ig}(z_0, z_0) \leq D_{ig}(z_0, z_1) \leq D_{ig}(z_1, z_0) \leq D_{ig}(z_1, z_1),
\end{equation*}
for each $i$ and $-i$ within group $g$.

The proposed framework applies to a broad class of empirical settings where spillovers operate through both outcomes and endogenous treatment decisions. Illustrative examples include oligopoly markets, where firms' pricing decisions may influence competitors' market shares, and education contexts, where an individual's labor market outcomes depend on the best friend's schooling decision. More broadly, the framework can be extended to settings such as households, where behaviors involving risky activities generate spillover effects on the health outcomes of other members.

\begin{example}(Duopoly market: pricing decisions)
    To illustrate, consider an oligopoly market with two competing firms, Costco and Sam's Club. Each firm decides whether to raise the price of its membership card and is interested in how this decision affects its market share. A firm's market share depends not only on its own pricing decision but also on its competitor's pricing strategy, giving rise to spillover effects from one firm's decision to the other's outcome.

    Assume that pricing decisions are made simultaneously and that neither firm observes the other's choice at the decision stage. Costco's decision, denoted by $D_{ig}$, depends on a private demand shock $V_{ig}$, such as an idiosyncratic change in reputation or advertising effectiveness, that is unobserved by Sam's Club. This unobserved factor affects both Costco's incentive to raise its membership price and its resulting market share $Y_{ig}$, thereby introducing endogeneity. Although each firm does not directly observe its competitor's pricing decision, both form expectations about rival behavior based on publicly observed market signals $(Z_{ig}, Z_{-ig})$, such as industry-wide cost shocks like tariffs, which are plausibly exogenous and can serve as valid instruments. Moreover, a tariff shock affecting Sam's Club may also influence Costco's pricing decision, generating instrumental spillovers from one firm's assignment to the other's endogenous treatment.
\end{example}

\begin{example}(Friendship Network: education decision)
    Consider a friendship network consisting of two best friends who decide whether to pursue higher education. Each individual's education choice may affect not only her own future earnings but also her friend's, generating spillover effects through information sharing, social learning, or mutual support mechanisms.

    Assume that education decisions are made simultaneously and that neither friend observes the other's choice at the decision stage. Each individual's decision depends on an idiosyncratic unobserved factor, such as intrinsic academic motivation or costs, that influences both the probability of attending college and future earnings, thereby creating endogeneity. While friends do not observe each other's choices, their education decisions may be jointly influenced by shared public characteristics, such as the average family background of classmates in their school cohort. This shared characteristic can serve as a plausible instrumental variable, as it is typically exogenous to individual-specific unobserved ability but may affect education choices through peer effects (see \cite{bifulco2011effect, bifulco2014high, cools2019girls}). The individual is also exposed to public characteristics associated with her best friend, generating instrumental spillovers from the friend's assignment to the individual's endogenous education decision.
\end{example}

\subsection{Local Average Effects and Identification Strategy} \label{sec:local_average_effects}


A central insight from the treatment effect literature is that, when treatment assignment is endogenous, causal effects can often be identified for specific subpopulations defined by the instrument, for example, the local average treatment effect (LATE) in \cite{imbens1994identification} and related studies. However, in the presence of spillovers, individuals' outcomes depend not only on their own treatment but also on the treatments received by others in their group. The spillovers complicate the interpretation of conventional LATE parameters, as variation in peers' treatments introduces additional causal channels. To disentangle these channels, this section extends the LATE framework to define local average effects that separately capture the causal effect of peers' treatments on an individual's outcome and the direct effect of the individual's own treatment. These parameters retain the causal interpretability of LATE while accommodating the presence of endogenous treatment and spillovers within groups. The identification of these local average effects further motivates the development of a framework based on marginal treatment effects, which explicitly accounts for spillovers operating through both treatment selection and outcomes, as formalized in Section \ref{sec:id_mcse_msde}. Since the identification analysis is conducted at the level of a super-population of groups, I suppress the group subscript $g$ throughout this section to simplify notation.

Building on the framework introduced in Section \ref{sec:basic_setting}, the expected potential outcome $Y_i(d, d')$ generally depends on both the individual's and her peers' unobserved characteristics, $(V_i, V_{-i})$. Following the terminology in the literature, I refer to the conditional expectations $\mathbb{E}[Y_i(d, d') \mid (V_i, V_{-i}) \in P]$, where $P$ denotes a subset of the support of $(V_i, V_{-i})$, as local average potential outcomes. These parameters capture the average potential outcomes for subpopulations defined by specific values of the group-level unobservables, which reflect the underlying unobserved heterogeneity in the population. Taking appropriate differences between local average potential outcomes under different treatment combinations $(d, d')$ yields the average spillover effects from peers' treatments and the direct effects from a unit's own treatment. Definition \ref{def:general_local_effects} provides formal definitions of these parameters.

\begin{definition}\label{def:general_local_effects}(Generalized local average controlled effects)
    Consider the model in Equation (\ref{eq:basic_model_disc_trt}).
    \begin{enumerate}
        \item Fix the treatment of unit $i$ at $D_i = d$, $d \in \{0,1\}$. The generalized local average controlled spillover effect (LACSE), conditional on the group-level unobserved heterogeneity satisfying $(V_i, V_{-i}) \in P$ for some subset $P \subset (0,1)^2$, is defined as as
        \begin{equation*}
            \operatorname{LACSE}_{i}^{(d)}(P) \equiv \mathbb{E}[Y_{i}(d, 1) - Y_{i}(d, 0) \mid (V_i, V_{-i}) \in P].
        \end{equation*}
        \item For unit $i$, fix the peer's treatment at $D_{-i} = d$, where $d \in {0,1}$. The generalized local average controlled direct effect (LACDE), conditional on the group-level unobserved heterogeneity satisfying $(V_i, V_{-i}) \in P$ for some subset $P \subset (0,1)^2$, is defined as as
        \begin{equation*}
            \operatorname{LACDE}_{i}^{(d)}(P) \equiv \mathbb{E}[Y_{i}(1, d) - Y_{i}(0, d) \mid (V_i, V_{-i}) \in P].
        \end{equation*}
    \end{enumerate}
\end{definition}

The generalized LACSEs capture counterfactual spillover effects by exogenously fixing a unit's own treatment, whereas the generalized LACDEs capture counterfactual direct effects by exogenously fixing peers' treatments. Both effects are defined conditional on a subpopulation characterized by $(V_i, V_{-i}) \in P$, in the same spirit as the LATE widely studied in the literature. These parameters possess clear causal interpretations, as they disentangle the distinct influence channels of a unit's own treatment and peers' treatments, while allowing for unobserved treatment effect heterogeneity through conditioning on group-level unobservables. I now establish the identification of the generalized LACSEs and LACDEs using instrumental variables under Assumptions \ref{as:RA}-\ref{as:Vdist}. It is worth noting that this identification result accommodates both discrete instruments with multiple support points and continuously distributed instruments.

I define the propensity score function for unit $i$ as the probability of treatment conditional on the group-level instrument vector, $P_{i}(Z_i, Z_{-i}) \equiv \mathbb{P}(D_i = 1 \mid Z_i, Z_{-i}) $. I denote this function simply by $P_i$ and define the support of the propensity scores for all group members as $\mathcal{P} \equiv \text{Supp}(P_i, P_{-i})$. The propensity score function $P_i$ identifies the threshold function $h_i$ in the treatment selection equation for each group member $i$, as demonstrated in the following derivation:
\begin{equation} \label{eq:ps_idea}
    \begin{aligned}
        &\mathbb{P}\left(D_{i} = 1 \mid Z_{i} = z_0, Z_{-i} = z_1\right) \\
        =& \mathbb{P}\left(V_{i} \leq h_i(z_0, z_1) \mid Z_{i} = z_0, Z_{-i} = z_1\right) \\
        =& \mathbb{P}\left(V_{i} \leq h_i(z_0, z_1)\right)= h_i(z_0, z_1). 
    \end{aligned}
\end{equation}
Additionally, the propensity scores $(P_i, P_{-i})$ are independent of all group members' unobserved heterogeneity, since they are functions of the instruments $(Z_i, Z_{-i})$.

Building on the identification of the propensity score, Theorem \ref{thm:id_lace} establishes the identification of the generalized LACSEs and LACDEs.

\begin{theorem}(Identifying generalized LACSEs and LACDEs) \label{thm:id_lace}
    Suppose that Assumptions \ref{as:RA}-\ref{as:Vdist} hold and let $d \in \{0,1\}$. Under the following conditions, the generalized LACSEs and LACDEs defined in Definition \ref{def:general_local_effects} can be identified.
    \begin{enumerate}
        \item Suppose there exist $(p_0, p_1), (p_0, p_1') \in \mathcal{P}$, $p_1' > p_1$. \textit{The local average controlled spillover effect}, $\text{LACSE}_i^{(1)}(P)$ for $P =\{V_i \leq p_0, p_1 < V_{-i} \leq p_1'\}$, can be identified as 
        \begin{equation*}
            \text{LACSE}_i^{(1)}(V_i \leq p_0, p_1 < V_{-i} \leq p_1') = \frac{\mu_{i,i}^{(1)}(p_0, p_1') - \mu_{i,i}^{(1)}(p_0, p_1)}{C(p_0, p_1') - C(p_0, p_1)}.
        \end{equation*}
        \textit{The local average controlled spillover effect}, $\text{LACSE}_i^{(0)}(P)$ for $P =\{V_i > p_0, p_1 < V_{-i} \leq p_1'\}$, can be identified as 
        \begin{equation*}
            \text{LACSE}_i^{(0)}(V_i > p_0, p_1 < V_{-i} \leq p_1') = \frac{\mu_{i,i}^{(0)}(p_0, p_1') - \mu_{i,i}^{(0)}(p_0, p_1)}{\big(p_1' - p_1\big) - \big[C(p_0, p_1') - C(p_0, p_1)\big]},
        \end{equation*}
        where $\mu_{i,i}^{(d)}(p_0, p_1) \equiv \mathbb{E}[Y_i \mathbbm{1}\{D_i = d\} \mid P_i = p_0, P_{-i} = p_1]$, and $C(p_0, p_1) \equiv \mathbb{E}[D_i D_{-i} \mid P_i = p_0, P_{-i} = p_1]$.
        \item Suppose there exist $(p_0, p_1), (p_0', p_1) \in \mathcal{P}$, $p_0' > p_0$. \textit{The local average controlled direct effect}, $\text{LACDE}_i^{(1)}(P)$ for $P =\{p_0 < V_i \leq p_0', V_{-i} \leq p_1\}$, can be identified as 
        \begin{equation*}
            \text{LACDE}_i^{(1)}(p_0 < V_i \leq p_0', V_{-i} \leq p_1) = \frac{\mu_{i,-i}^{(1)}(p_0', p_1) - \mu_{i,-i}^{(1)}(p_0, p_1)}{C(p_0', p_1) - C(p_0, p_1)}.
        \end{equation*}
        \textit{The local average controlled direct effect}, $\text{LACDE}_i^{(0)}(P)$ for $P =\{p_0 < V_i \leq p_0', V_{-i} > p_1\}$, can be identified as 
        \begin{equation*}
            \text{LACDE}_i^{(0)}(p_0 < V_i \leq p_0', V_{-i} > p_1) = \frac{\mu_{i,-i}^{(0)}(p_0', p_1) - \mu_{i,-i}^{(0)}(p_0, p_1)}{\big(p_0' - p_0\big) - \big[C(p_0', p_1) - C(p_0, p_1)\big]},
        \end{equation*}
        where $\mu_{i,-i}^{(d)}(p_0, p_1) \equiv \mathbb{E}[Y_i \mathbbm{1}\{D_{-i} = d\} \mid P_i = p_0, P_{-i} = p_1]$.
        \item Suppose there exist $(p_0, p_1)$, $(p_0, p_1')$, $(p_0', p_1)$, $(p_0', p_1') \in \mathcal{P}$. \textit{The local average controlled spillover effect}, $\text{LACSE}_i^{(d)}(P)$ for $P =\{p_0 < V_i \leq p_0', p_1 < V_{-i} \leq p_1'\}$, can be identified as
        \begin{equation*}
            \begin{aligned}
                &\text{LACSE}_i^{(d)}(p_0 < V_i \leq p_0', p_1 < V_{-i} \leq p_1') \\
                =& \text{sgn}(2d - 1) \cdot \frac{\big[\mu_{i, i}^{(d)}(p_0', p_1') - \mu_{i, i}^{(d)}(p_0', p_1)\big] - \big[\mu_{i, i}^{(d)}(p_0, p_1') - \mu_{i, i}^{(d)}(p_0, p_1)\big]}{\big[C(p_0', p_1') - C(p_0', p_1)\big] - \big[C(p_0, p_1') - C(p_0, p_1)\big]}.
            \end{aligned}
        \end{equation*}
        \textit{The local average controlled direct effect}, $\text{LACDE}_i^{(d)}(P)$ for $P =\{p_0 < V_i \leq p_0', p_1 < V_{-i} \leq p_1'\}$, can be identified as
        \begin{equation*}
            \begin{aligned}
                &\text{LACDE}_i^{(d)}(p_0 < V_i \leq p_0', p_1 < V_{-i} \leq p_1') \\
                =& \text{sgn}(2d - 1) \cdot \frac{\big[\mu_{i, -i}^{(d)}(p_0', p_1') - \mu_{i, -i}^{(d)}(p_0', p_1)\big] - \big[\mu_{i, -i}^{(d)}(p_0, p_1') - \mu_{i, -i}^{(d)}(p_0, p_1)\big]}{\big[C(p_0', p_1') - C(p_0', p_1)\big] - \big[C(p_0, p_1') - C(p_0, p_1)\big]}.
            \end{aligned}
        \end{equation*}
    \end{enumerate}
\end{theorem}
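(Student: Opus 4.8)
The plan is to reduce every conditional expectation appearing in the statement to an integral of potential outcomes against the joint law of $(V_i,V_{-i},U_i,U_{-i})$ over an explicit region of $(0,1)^2$, and then to recognize each displayed ratio as a conditional expectation of a treatment contrast given $(V_i,V_{-i})\in P$. First I would record the two structural consequences of the assumptions that drive everything. By the normalization in Assumption~\ref{as:Vdist} together with Equation~\eqref{eq:ps_idea}, conditioning on $\{P_i=p_0,P_{-i}=p_1\}$ fixes the selection thresholds, so that on this event $D_i=\mathbbm{1}\{V_i\leq p_0\}$ and $D_{-i}=\mathbbm{1}\{V_{-i}\leq p_1\}$. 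By the exclusion restriction (Assumption~\ref{as:er}) each potential outcome $Y_i(d,d')=m_i(d,d',U_i,U_{-i})$ is a function of $(U_i,U_{-i})$ alone, and by random assignment (Assumption~\ref{as:RA}) the vector $(V_i,V_{-i},U_i,U_{-i})$ is independent of $(Z_i,Z_{-i})$ and hence of $(P_i,P_{-i})$. Consequently any conditional expectation of a product of a potential outcome with an indicator in $(V_i,V_{-i})$, given $(P_i,P_{-i})=(p_0,p_1)$, equals the corresponding unconditional expectation with the thresholds frozen at $(p_0,p_1)$. This single fact is what lets me discard the conditioning and is the conceptual heart of the argument.

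Second I would substitute the observed-outcome decomposition into the definitions of $\mu_{i,i}^{(d)}$, $\mu_{i,-i}^{(d)}$, and $C$. For instance, writing $Y_i\mathbbm{1}\{D_i=1\}=\big(Y_i(1,1)D_{-i}+Y_i(1,0)(1-D_{-i})\big)D_i$ and applying the reduction above yields
\begin{equation*}
    \mu_{i,i}^{(1)}(p_0,p_1)=\mathbb{E}\big[Y_i(1,1)\mathbbm{1}\{V_i\leq p_0,\,V_{-i}\leq p_1\}\big]+\mathbb{E}\big[Y_i(1,0)\mathbbm{1}\{V_i\leq p_0,\,V_{-i}>p_1\}\big],
\end{equation*}
with analogous closed forms for the remaining objects, and $C(p_0,p_1)=\mathbb{P}(V_i\leq p_0,\,V_{-i}\leq p_1)$.

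Third comes the indicator algebra, which is the only real computation. For parts~1 and~2 I would take a single difference in one argument: for example $\mu_{i,i}^{(1)}(p_0,p_1')-\mu_{i,i}^{(1)}(p_0,p_1)$ collapses, via $\mathbbm{1}\{V_{-i}\leq p_1'\}-\mathbbm{1}\{V_{-i}\leq p_1\}=\mathbbm{1}\{p_1<V_{-i}\leq p_1'\}$, to the unnormalized effect $\mathbb{E}\big[(Y_i(1,1)-Y_i(1,0))\mathbbm{1}\{V_i\leq p_0,\,p_1<V_{-i}\leq p_1'\}\big]$, the numerator of $\text{LACSE}_i^{(1)}(P)$. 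The matching denominator follows because $C(p_0,p_1')-C(p_0,p_1)=\mathbb{P}\big((V_i,V_{-i})\in P\big)$; for the $d=0$ variants the denominator uses the uniform marginal $\mathbb{P}(p_1<V_{-i}\leq p_1')=p_1'-p_1$ minus the corresponding $C$-difference, which is exactly the displayed form $(p_1'-p_1)-[C(p_0,p_1')-C(p_0,p_1)]=\mathbb{P}(V_i>p_0,\,p_1<V_{-i}\leq p_1')$. Dividing numerator by denominator turns the ratio into $\mathbb{E}[\,\cdot\mid(V_i,V_{-i})\in P]$, giving the claimed identities; part~2 is identical after interchanging the roles of the two coordinates.

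Finally, for part~3 I would iterate the single-difference step in the second argument, so that a double difference of $\mu_{i,i}^{(d)}$ (resp. $\mu_{i,-i}^{(d)}$) isolates the contrast on the full rectangle $P=\{p_0<V_i\leq p_0',\,p_1<V_{-i}\leq p_1'\}$, while the double difference of $C$ equals $\mathbb{P}\big((V_i,V_{-i})\in P\big)$. The one point demanding care, and the main obstacle, is sign bookkeeping in the $d=0$ cases: because $Y_i\mathbbm{1}\{D_i=0\}$ carries the factor $1-D_i$, the relevant indicator difference is $\mathbbm{1}\{V_i>p_0'\}-\mathbbm{1}\{V_i>p_0\}=-\mathbbm{1}\{p_0<V_i\leq p_0'\}$, which flips the sign of the numerator relative to the positive denominator $\mathbb{P}\big((V_i,V_{-i})\in P\big)$. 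The factor $\text{sgn}(2d-1)$ corrects exactly this flip, equaling $+1$ for $d=1$ and $-1$ for $d=0$; verifying that this single convention reconciles both the spillover and the direct double-difference formulas is the last step.
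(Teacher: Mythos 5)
Your proposal is correct and follows essentially the same route as the paper's own proof: use random assignment plus the threshold structure to turn each conditional moment into an unconditional expectation of potential outcomes against indicators of $(V_i,V_{-i})$ regions, then apply single and double differencing with the matching $C$-differences (or uniform-marginal complements) as denominators. Your explicit verification that $\operatorname{sgn}(2d-1)$ exactly offsets the sign flip from $\mathbbm{1}\{V>p'\}-\mathbbm{1}\{V>p\}=-\mathbbm{1}\{p<V\leq p'\}$ in the $d=0$ double differences is a correct piece of bookkeeping that the paper's appendix leaves implicit by invoking "a similar strategy."
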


\begin{proof}
    See Appendix \ref{app:id_lacse_lacde}.
\end{proof}

The identification argument proceeds as follows, with the formal proof provided in Appendix \ref{app:id_lacse_lacde}. Given a pair of observed propensity scores $(P_i, P_{-i}) = (p_0, p_1)$, and noting that the propensity score $P_i$ identifies the threshold function $h_i$, the joint treatment realizations $(D_i, D_{-i})$ partition the space of unobserved heterogeneity $(V_i, V_{-i})$ into four mutually exclusive subpopulations, separated by the thresholds $(p_0, p_1)$. The relationships are summarized as
\begin{equation} \label{eq:subpopulation}
    \begin{array}{lll}
&(D_i, D_{-i}) = (1,1) \Longleftrightarrow V_i \leq p_0, V_{-i} \leq p_1, & (D_i, D_{-i}) = (0,1) \Longleftrightarrow V_i > p_0, V_{-i} \leq p_1, \\
&(D_i, D_{-i}) = (1,0) \Longleftrightarrow V_i \leq p_0, V_{-i} > p_1, & (D_i, D_{-i}) = (0,0) \Longleftrightarrow V_i > p_0, V_{-i} > p_1.
\end{array}
\end{equation}
For instance, the probability of observing $\{D_i = 1, D_{-i} = 1\}$ conditional on $(P_i, P_{-i}) = (p_0, p_1)$ identifies the share of the subpopulation with $\{V_i \leq p_0, V_{-i} \leq p_1\}$, that is,
\begin{equation*}
    \mathbb{P}(D_i = 1, D_{-i} = 1 \mid P_i = p_0, P_{-i} = p_1) = \mathbb{P}(V_i \leq p_0, V_{-i} \leq p_1).
\end{equation*}
The top left panel of Figure \ref{fig:id_local_average_effect} illustrates how these four subpopulations correspond to distinct realizations of $(D_i, D_{-i})$ given observed propensity scores $(p_0, p_1)$.

\begin{figure}[!htt]
        \centering
        \caption{Identifying Local Average Controlled Effects via Propensity Score Variation}
        \includegraphics[width = \textwidth]{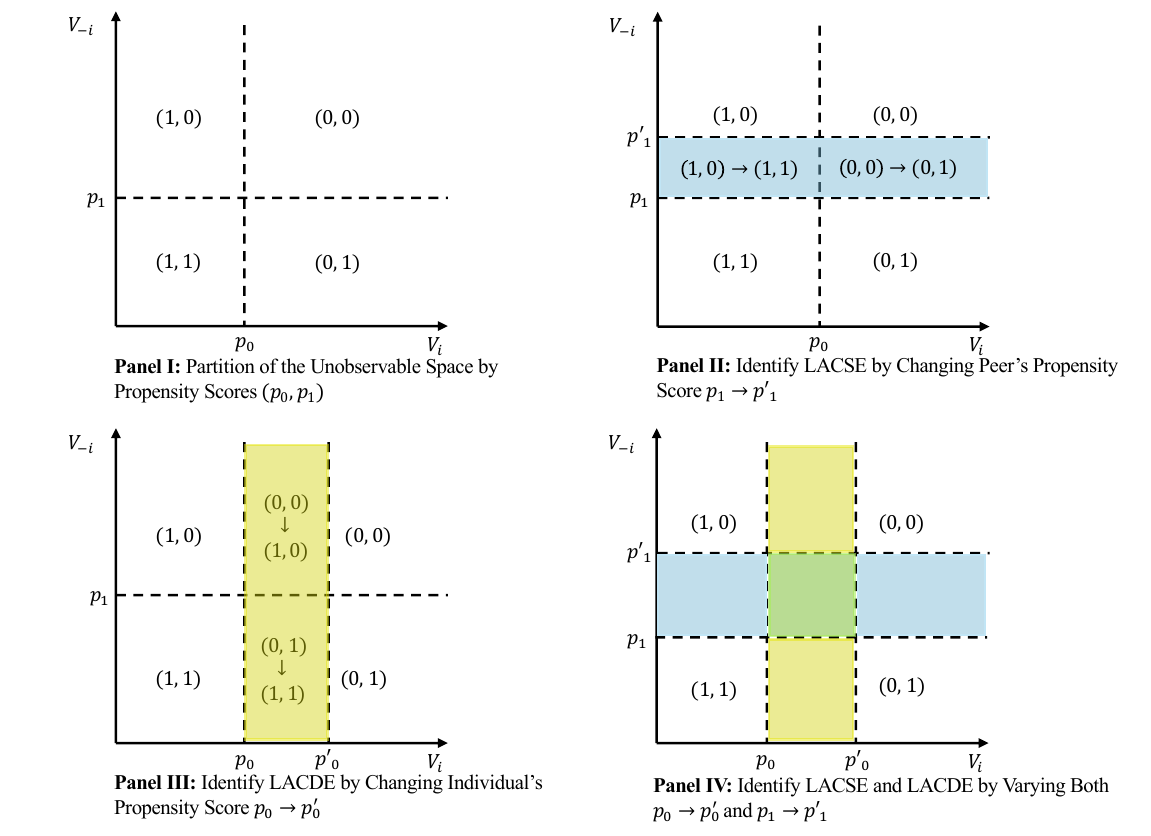}
        \label{fig:id_local_average_effect}
\end{figure}

Given the data, the conditional expectation $\mathbb{E}[Y_i \mathbbm{1}\{D_i = d, D_{-i} = d'\} \mid P_i = p_0, P_{-i} = p_1]$ can be directly recovered from observables. Under the model framework and Assumptions \ref{as:RA}-\ref{as:Vdist}, these observed moments identify the average potential outcomes $Y_i(d, d')$ for the subpopulations associated with the treatment realization $\{D_i = d, D_{-i} = d'\}$. These quantities form the foundation for the identification strategy of Theorem \ref{thm:id_lace}. For instance, when $(D_i, D_{-i}) = (1,1)$,
\begin{equation*}
    \mathbb{E}[Y_i D_i D_{-i} \mid P_i = p_0, P_{-i} = p_1] = \mathbb{E}[Y_i(1, 1) \mathbbm{1}\{V_i \leq p_0, V_{-i} \leq p_1\}],
\end{equation*}
which identifies the average potential outcome $Y_i(1,1)$ for the subpopulation with unobserved characteristics satisfying $\{V_i \leq p_0, V_{-i} \leq p_1\}$.

The identification of the generalized LACSEs and LACDEs exploits exogenous variation in the propensity score values. Suppose there exists another pair of observed propensity scores $(p_0, p_1')$, $p_1' > p_1$. By shifting the propensity scores from $(p_0, p_1)$ to $(p_0, p_1')$ and applying the relationships established in Equation~\eqref{eq:subpopulation}, the subpopulation with unobserved characteristics in the region $\{V_i \leq p_0, p_1 < V_{-i} \leq p_1'\}$ changes its treatment status from $(D_i, D_{-i}) = (1, 0)$ to $(1, 1)$. Likewise, the subpopulation in $\{V_i > p_0, p_1 < V_{-i} \leq p_1'\}$ changes from $(D_i, D_{-i}) = (0, 0)$ to $(0, 1)$. These two groups correspond to the blue-shaded areas in the top-right panel of Figure \ref{fig:id_local_average_effect}.

In both cases, only the peer $-i$ changes her treatment status $D_{-i}$, providing variation that identifies the average spillover effect. Taking the difference between the conditional expectations, $\mathbb{E}[Y_i \mathbbm{1}\{D_i = d, D_{-i} = 1\} \mid \cdot, \cdot] - \mathbb{E}[Y_i \mathbbm{1}\{D_i = d, D_{-i} = 0\} \mid \cdot, \cdot]$, evaluated at $(p_0, p_1)$ and $(p_0, p_1')$, isolates the variation in outcomes attributable to the subpopulations that experience a change in peer treatment status. This difference identifies the local average controlled spillover effects, $\operatorname{LACSE}_i^{(1)}$ and $\operatorname{LACSE}_i^{(0)}$, for the subpopulations corresponding to the two blue-shaded regions in the top-right panel of Figure \ref{fig:id_local_average_effect}. This variation yields the identification result stated in Item 1 of Theorem \ref{thm:id_lace}.

Analogously, when another pair of propensity scores $(p_0', p_1)$ with $p_0' > p_0$ is observed, shifting from $(p_0, p_1)$ to $(p_0', p_1)$ and using the relationships in Equation~\eqref{eq:subpopulation} induces changes in treatment status for unit $i$ only. Specifically, the subpopulations defined by $\{p_0 < V_i \leq p_0', V_{-i} \leq p_1\}$ and $\{p_0 < V_i \leq p_0', V_{-i} > p_1\}$ change their treatment realizations from $\{D_i = 0, D_{-i} = 1\}$ to $\{D_i = 1, D_{-i} = 1\}$ and from $\{D_i = 0, D_{-i} = 0\}$ to $\{D_i = 1, D_{-i} = 0\}$, respectively, as illustrated by the two yellow-shaded regions in the bottom-left panel of Figure \ref{fig:id_local_average_effect}. Because only unit $i$ changes treatment status, the resulting variation identifies the local average controlled direct effect. This variation corresponds to the identification result presented in Item 2 of Theorem \ref{thm:id_lace}.

Suppose the support of the propensity scores exhibits sufficient variation such that four distinct pairs, $(p_0, p_1)$, $(p_0, p_1')$, $(p_0', p_1)$, and $(p_0', p_1')$, are observed with $p_0' > p_0$ and $p_1' > p_1$. Applying the same logic as before, shifting the peer's propensity score from $p_1$ to $p_1'$ while fixing unit $i$'s score at $p_0'$ and, conversely, shifting unit $i$'s score from $p_0$ to $p_0'$ while fixing the peer's score at $p_1'$, identifies the corresponding LACSEs and LACDEs for subpopulations defined by these regions of $(V_i, V_{-i})$. Next, taking cross-differences of the local average effects across the four propensity-score pairs, $(p_0, p_1)$, $(p_0, p_1')$, $(p_0', p_1)$, and $(p_0', p_1')$, isolates the LACSEs and LACDEs for the subpopulation with $(V_i, V_{-i})$ lying in the rectangle $\{p_0 < V_i \leq p_0', p_1 < V_{-i} \leq p_1'\}$, illustrated by the green-shaded area in the bottom-right panel of Figure \ref{fig:id_local_average_effect}.

For example, the difference between $\operatorname{LACSE}_i^{(1)}$ identified for the regions $\{V_i \leq p_0, p_1 < V_{-i} \leq p_1'\}$ and $\{V_i \leq p_0', p_1 < V_{-i} \leq p_1'\}$ yields $\operatorname{LACSE}_i^{(1)}$ for the subpopulation $\{p_0 < V_i \leq p_0', p_1 < V_{-i} \leq p_1'\}$. Similarly, analogous differences yield $\operatorname{LACSE}_i^{(0)}$ and $\operatorname{LACDE}_i^{(d)}$, $d \in \{0, 1\}$, for the same region. This result is formally presented in Item 3 of Theorem \ref{thm:id_lace}, which requires the support of $(P_i, P_{-i})$ to contain four distinct points forming \textit{the “vertices” of a rectangle} in the $(p_0, p_1)$ space.

\begin{remark}(Identification With Binary Instrument) \label{remark:id_binary_iv}
    The conclusions in Theorem \ref{thm:id_lace} apply to both discrete and continuous instrumental variables, provided that they generate sufficient variation in the propensity scores to satisfy the identification conditions. To illustrate, consider the case where a binary instrumental variable is assigned to both units $i$ and $-i$ within each group. In this case, four distinct pairs of propensity scores $(P_i, P_{-i})$ can be observed, corresponding to the four possible combinations of instrument assignments in each group:
    \begin{equation*}
        \begin{array}{c|cc} 
& Z_{-i}=0 & Z_{-i}=1 \\
\hline Z_i=0 & \left(P_i(0,0), P_{-i}(0,0)\right) & \left(P_i(0,1), P_{-i}(1,0)\right) \\
Z_i=1 & \left(P_i(1,0), P_{-i}(0,1)\right) & \left(P_i(1,1), P_{-i}(1,1)\right)
\end{array}
    \end{equation*}
    
    According to Theorem \ref{thm:id_lace}, identifying the LACSEs requires at least two pairs of propensity scores where unit $i$'s score remains fixed while the peer $-i$'s score varies, and identifying the LACDEs instead requires variation in unit $i$'s propensity score while holding the peer $-i$'s propensity score fixed. When the instruments are binary, the support of $(P_i, P_{-i})$ consists of only four points. In this case, identification relies on specific equalities among these propensity scores: identification of LACSEs requires that any two of $P_i(0,0)$, $P_i(0,1)$, $P_i(1,0)$, or $P_i(1,1)$ be equal, while identification of LACDEs requires that any two of $P_{-i}(0,0)$, $P_{-i}(0,1)$, $P_{-i}(1,0)$, or $P_{-i}(1,1)$ be equal. These conditions include the one-sided noncompliance restriction as a special example, a commonly imposed assumption in the literature to achieve identification with binary instruments \citep{kang2016peer, vazquez2023causal, ditraglia2023identifying}.

    The one-sided noncompliance assumption requires that a unit cannot take the treatment unless it receives the instrument assignment, that is, $\mathbb{P}(D_i = 1 \mid Z_i = 0) = 0$ for all $i$. This restriction is equivalent to
    \begin{equation*}
        \begin{aligned}
            P_i(0, 0) = P_i(0, 1) = 0, \quad P_{-i}(0, 0) = P_{-i}(0, 1) = 0,
        \end{aligned}
    \end{equation*}
which satisfies the conditions of Items 1 and 2 in Theorem \ref{thm:id_lace}. Under this structure, the LACSEs and LACDEs identified by Theorem \ref{thm:id_lace} coincide with the local average spillover and direct effects studied in the existing literature \citep{vazquez2023causal}.
    
    As shown in Appendix \ref{app:id_binary_iv}, when the instrument is binary, additional restrictions such as one-sided noncompliance are therefore necessary to point identify local average effects. Interested readers are referred to Appendix \ref{app:id_binary_iv} for a detailed discussion.
\end{remark}

Theorem~\ref{thm:id_lace} demonstrates that local average effects can be identified when the instrumental variables exhibit sufficient variation to induce the necessary differences in propensity scores. As discussed in Remark~\ref{remark:id_binary_iv}, when instruments take only binary values, additional restrictions are required to achieve point identification of certain local average spillover or direct effects. Together, these results highlight that adequate variation in the instruments is crucial for identifying causally interpretable parameters in settings where spillovers influence both outcomes and treatment selection.

When the instrumental variables exhibit continuous variation, they induce continuous variation in the propensity scores. In this case, one can extend the identification strategy in Theorem~\ref{thm:id_lace} by taking limits as $p_0' \to p_0$ and $p_1' \to p_1$, thereby identifying the average controlled spillover and direct effects conditional on $(V_i, V_{-i})$ evaluated at a specific point $(p_0, p_1)$ within the interior of the propensity score support. The next section formalizes this idea by introducing the marginal controlled spillover and marginal controlled direct effects. These parameters serve as building blocks for identifying not only the local average controlled spillover and direct effects discussed above, but also a broader class of policy-relevant treatment effects of interest to researchers.

\subsection{Marginal Effects and Identification Results} \label{sec:id_mcse_msde}

Definition~\ref{def:MCE} formally defines the causal spillover and direct effects evaluated at specific values of the unobserved characteristics $(V_i, V_{-i})$.

\begin{definition}(Marginal controlled spillover effects (MCSE) and marginal controlled direct effects (MCDE)) \label{def:MCE}
    Consider the model in Equation (\ref{eq:basic_model_disc_trt}).
    \begin{enumerate}
        \item Fix the treatment of unit $i$ at $D_i = d$, $d \in \{0,1\}$. \textit{The marginal controlled spillover effect (MCSE)}, given $V_i = p_0$ and $V_{-i} = p_1$, is defined as
        \begin{equation*}
            \operatorname{MCSE}_{i}^{(d)}(p_0, p_1) \equiv \mathbb{E}[Y_{i}(d, 1) - Y_{i}(d, 0) \mid V_{i} = p_0, V_{-i} = p_1], (p_0, p_1) \in (0, 1)^2.
        \end{equation*}
        \item For unit $i$, fix the peer's treatment at $D_{-i} = d$, $d \in \{0,1\}$. \textit{The marginal controlled direct effect (MCDE)}, given $V_i = p_0$ and $V_{-i} = p_1$, is defined as
        \begin{equation*}
            \operatorname{MCDE}_{i}^{(d)}(p_0, p_1) \equiv \mathbb{E}[Y_{i}(1, d) - Y_{i}(0, d) \mid V_{i} = p_0, V_{-i} = p_1], (p_0, p_1) \in (0, 1)^2 
        \end{equation*}
    \end{enumerate}
\end{definition}

The marginal controlled spillover effect captures the impact of changing the peer's treatment on a unit's potential outcome while holding the unit's own treatment status fixed, conditional on the unobserved characteristics $(V_i, V_{-i})$ within the group. Similarly, the marginal controlled direct effect measures the impact of changing a unit's own treatment on her potential outcome while holding the peer's treatment constant, again conditional on $(V_i, V_{-i})$. Because both effects are defined relative to the group-level unobserved heterogeneity, they capture treatment effect heterogeneity arising from latent factors within the group.

The marginal controlled spillover and direct effects are defined analogously to the standard marginal treatment effect (MTE), conditioning on continuous unobserved heterogeneity within the support of the latent variables. Unlike the conventional MTE framework, which rules out interference across units, the marginal controlled effects explicitly incorporate spillovers arising from peers' treatment decisions. As such, they extend the MTE concept to environments where spillovers exist in both outcomes and treatment selection. Section \ref{sec:compare_mte} formally characterizes the connection and distinction between these effects and the standard MTE. By conditioning on the continuous unobservables, the marginal controlled effects provide the building blocks for a wide class of policy-relevant parameters. In particular, the generalized local average controlled effects introduced in Definition \ref{def:general_local_effects} represent a specific class of policy-relevant parameters that can be obtained by integrating the marginal controlled effects over selected regions of the latent heterogeneity space. These connections will be discussed in detail in Section \ref{sec:prte}. The policy-relevant effects play a central role in evaluating counterfactual policies in settings with endogenous treatment and spillovers.

As discussed in the setting, the unobserved heterogeneities $V_i$ and $V_{-i}$ within a group may be correlated. Identification of the parameters of interest requires recovering the joint density of $(V_i, V_{-i})$. Because the marginal distributions of $V_i$ and $V_{-i}$ are normalized to be uniform on the interval $(0,1)$, their joint distribution is characterized by their copula. Formally, the copula is defined as
\begin{equation*}
    C_{V_i, V_{-i}}(p_0, p_1) \equiv \mathbb{P}\big(V_i \leq p_0, V_{-i} \leq p_1 \big), (p_0, p_1) \in (0,1)^2.
\end{equation*}
Lemma \ref{lemma:id_copula} provides identification of this copula on the support of the propensity scores $(P_i, P_{-i})$ without imposing any functional form assumptions, where $P_i$ denotes unit $i$'s propensity score as defined in Equation \eqref{eq:ps_idea}.

\begin{lemma}(Copula of $(V_{i}, V_{-i})$) \label{lemma:id_copula}
    Under Assumptions \ref{as:RA}-\ref{as:Vdist}, the copula between $V_i$ and $V_{-i}$ is identified as
    \begin{equation*}
        C_{V_i, V_{-i}}(p_0, p_1) = \mathbb{P}\big(D_i=1, D_{-i}=1 \mid P_i=p_0, P_{-i}=p_1\big),
    \end{equation*}
    for $(p_0, p_1) \in \mathcal{P}$, where $\mathcal{P}$ denotes the support of $(P_i, P_{-i})$.
\end{lemma}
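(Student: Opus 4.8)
The plan is to establish the identity directly by exploiting the threshold-crossing structure of the selection equation together with the propensity-score identity already derived in Equation~\eqref{eq:ps_idea}, and then to invoke random assignment to remove the conditioning. The starting point is to rewrite the left-hand side event $\{D_i = 1, D_{-i} = 1\}$ in terms of the latent variables. Since $D_i = \mathbbm{1}\{V_i \leq h_i(Z_i, Z_{-i})\}$ and, symmetrically, $D_{-i} = \mathbbm{1}\{V_{-i} \leq h_{-i}(Z_{-i}, Z_i)\}$, the joint event $\{D_i = 1, D_{-i} = 1\}$ coincides with $\{V_i \leq h_i(Z_i, Z_{-i}),\ V_{-i} \leq h_{-i}(Z_{-i}, Z_i)\}$.

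Next I would use the propensity-score identity. Equation~\eqref{eq:ps_idea} shows that $P_i = h_i(Z_i, Z_{-i})$, and the same argument applied to unit $-i$ gives $P_{-i} = h_{-i}(Z_{-i}, Z_i)$. Hence the propensity scores are deterministic functions of the instruments, and on the conditioning event $\{P_i = p_0, P_{-i} = p_1\}$ the random thresholds are pinned down: $h_i(Z_i, Z_{-i}) = p_0$ and $h_{-i}(Z_{-i}, Z_i) = p_1$ almost surely. Substituting these fixed levels into the latent-variable event yields
\[
\mathbb{P}(D_i = 1, D_{-i} = 1 \mid P_i = p_0, P_{-i} = p_1) = \mathbb{P}(V_i \leq p_0,\ V_{-i} \leq p_1 \mid P_i = p_0, P_{-i} = p_1).
\]

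The final step is to discard the conditioning using Assumption~\ref{as:RA}. Because $(P_i, P_{-i})$ are measurable functions of $(Z_i, Z_{-i})$, random assignment implies $(V_i, V_{-i}) \indep (P_i, P_{-i})$, so the conditional probability on the right-hand side equals its unconditional counterpart $\mathbb{P}(V_i \leq p_0, V_{-i} \leq p_1)$. Under Assumption~\ref{as:Vdist} the marginals of $V_i$ and $V_{-i}$ are normalized to be uniform on $(0,1)$, so this joint probability is precisely the value of the copula $C_{V_i, V_{-i}}(p_0, p_1)$, which establishes the claim for every $(p_0, p_1) \in \mathcal{P}$.

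I expect the only delicate point to be the substitution step, where conditioning on the propensity-score pair is used to replace the instrument-dependent thresholds $h_i(Z_i, Z_{-i})$ and $h_{-i}(Z_{-i}, Z_i)$ by the fixed levels $p_0$ and $p_1$. Since $(P_i, P_{-i})$ is a coarsening of $(Z_i, Z_{-i})$ and the map from instruments to propensity scores need not be injective, one must verify both that the threshold equalities hold on the entire conditioning event and that the independence invoked in the last step is independence from the sigma-field generated by $(P_i, P_{-i})$ rather than from the instruments themselves. Both follow from the elementary fact that a measurable function of a variable independent of $(V_i, V_{-i})$ remains independent of $(V_i, V_{-i})$, but this is the step that deserves explicit justification; the remaining manipulations are immediate. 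Note that Assumption~\ref{as:er} plays no role here, as the argument concerns only the selection equation and the joint law of the latent heterogeneity.
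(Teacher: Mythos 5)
Your proof is correct and follows essentially the same route as the paper's own argument in Appendix \ref{appendix:id_copula}: rewrite $\{D_i=1,D_{-i}=1\}$ via the threshold-crossing rule, use the propensity-score identity $P_i=h_i(Z_i,Z_{-i})$ to replace the random thresholds by $(p_0,p_1)$ on the conditioning event, and drop the conditioning by Assumption \ref{as:RA} since $(P_i,P_{-i})$ are measurable functions of the instruments. Your added remarks, that the independence must be read with respect to the sigma-field generated by the propensity scores (a coarsening of the instruments) and that Assumption \ref{as:er} is not actually used, are both accurate refinements of the paper's argument.
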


\begin{proof}
    See Appendix \ref{appendix:id_copula}.
\end{proof}

Let $c_{V_i, V_{-i}}(\cdot, \cdot)$ denote the copula density of $(V_i, V_{-i})$. Since Lemma \ref{lemma:id_copula} establishes identification of the copula, the copula density can be obtained provided that the conditional probability $\mathbb{P}(D_i = 1, D_{-i} = 1 \mid P_i, P_{-i})$ is twice differentiable. This differentiability condition requires that $P_i$ and $P_{-i}$ exhibit continuous variation, which in turn implies that at least some components of the instrument vector $(Z_i, Z_{-i})$ must be continuously distributed. Assumption \ref{as:cts_iv} introduces this continuity requirement.

\begin{assumption}(Continuous instruments) \label{as:cts_iv}
    At least one component of the instrumental variables $(Z_i, Z_{-i})$ is continuously distributed.
\end{assumption}

It then follows that the copula density of $(V_i, V_{-i})$ is identified, as stated in Corollary \ref{corr:id_copula_density}.

\begin{corollary} (Copula density of $(V_i, V_{-i})$) \label{corr:id_copula_density}
    Suppose that Assumptions \ref{as:RA}-\ref{as:cts_iv} hold. If $\mathbb{E}[D_i D_{-i} \mid P_i=p_0, P_{-i}=p_0]$ is twice differentiable at $(p_0, p_1) \in \mathcal{P}$, the copula density $c_{V_i, V_{-i}}(p_0, p_1)$ is identified as 
    \begin{equation*}
        c_{V_i, V_{-i}}(p_0, p_1) = \frac{\partial^2 \mathbb{E}\big[D_i D_{-i} \mid P_i=p_0, P_{-i}=p_1\big]}{\partial p_0 \partial p_1}.
    \end{equation*}
\end{corollary}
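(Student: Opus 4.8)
The plan is to obtain the copula density by directly differentiating the copula identity established in Lemma~\ref{lemma:id_copula}. By definition, the copula density is the mixed second-order partial derivative of the copula function,
\[
c_{V_i, V_{-i}}(p_0, p_1) = \frac{\partial^2 C_{V_i, V_{-i}}(p_0, p_1)}{\partial p_0 \, \partial p_1},
\]
so once the copula itself is identified as a function over an open region of its arguments, the density follows immediately by differentiation, provided the required smoothness holds.

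First I would invoke Lemma~\ref{lemma:id_copula}, which yields, for every $(p_0, p_1) \in \mathcal{P}$,
\[
C_{V_i, V_{-i}}(p_0, p_1) = \mathbb{P}\big(D_i = 1, D_{-i} = 1 \mid P_i = p_0, P_{-i} = p_1\big) = \mathbb{E}\big[D_i D_{-i} \mid P_i = p_0, P_{-i} = p_1\big],
\]
where the last equality uses that $D_i$ and $D_{-i}$ are binary, so that $\mathbbm{1}\{D_i = 1, D_{-i} = 1\} = D_i D_{-i}$. Next I would check that Assumption~\ref{as:cts_iv} makes the differentiation meaningful: since at least one component of $(Z_i, Z_{-i})$ is continuously distributed, the propensity scores $(P_i, P_{-i})$ inherit continuous variation, so $\mathcal{P}$ contains an open neighborhood of the interior point $(p_0, p_1)$ throughout which the copula is identified as the function displayed above.

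Finally, I would differentiate both sides of this identity. Under the twice-differentiability hypothesis on $\mathbb{E}[D_i D_{-i} \mid P_i = p_0, P_{-i} = p_1]$, the mixed partial derivative exists at $(p_0, p_1)$, and substituting into the definition of the copula density gives
\[
c_{V_i, V_{-i}}(p_0, p_1) = \frac{\partial^2 \mathbb{E}\big[D_i D_{-i} \mid P_i = p_0, P_{-i} = p_1\big]}{\partial p_0 \, \partial p_1},
\]
as claimed. The only delicate point is ensuring that the identification in Lemma~\ref{lemma:id_copula} holds not merely at an isolated point but throughout a neighborhood, so that the two-dimensional derivative is well-defined; this is exactly what Assumption~\ref{as:cts_iv} secures by delivering continuous variation in the propensity scores. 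Beyond confirming this neighborhood condition, the argument is a routine application of the definition of the copula density together with the chain of equalities above, so I anticipate no substantive obstacle.
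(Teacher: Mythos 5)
Your proposal is correct and takes essentially the same route as the paper's own proof in Appendix \ref{appendix:id_copula}: invoke Lemma \ref{lemma:id_copula} to equate $\mathbb{E}[D_i D_{-i} \mid P_i = p_0, P_{-i} = p_1]$ with the copula $C_{V_i,V_{-i}}(p_0,p_1)$ (using that $D_i D_{-i} = \mathbbm{1}\{D_i = 1, D_{-i} = 1\}$ for binary treatments), then take the mixed second partial derivative of both sides, which by definition of the copula density gives the stated formula. Your added remark that Assumption \ref{as:cts_iv} is what makes the two-dimensional derivative meaningful mirrors the paper's own (implicit) reasoning that the twice-differentiability hypothesis presupposes continuous variation in $(P_i, P_{-i})$, so there is no substantive difference between the two arguments.
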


\begin{proof}
    See Appendix \ref{appendix:id_copula}.
\end{proof}

Following the literature, the conditional expectation of the potential outcome, given the values of the group-level unobserved characteristics $(V_i, V_{-i})$,
\begin{equation*}
    m_{i}^{(d_0, d_1)}(p_0, p_1) \equiv \mathbb{E}\big[Y_{i}(d_0, d_1) \mid V_{i}=p_0, V_{-i}=p_1\big], (p_0, p_1) \in (0, 1)^2,
\end{equation*}
is defined as the marginal treatment response (MTR) function. The marginal controlled spillover effects (MCSEs) and marginal controlled direct effects (MCDEs) introduced in Definition \ref{def:MCE} are obtained as differences of the corresponding MTR functions. Hence, identification of the MCSEs and MCDEs requires identifying the underlying MTR functions. Theorem \ref{thm:id_mtr_disc_trt} provides the identification of the parameters of interest, the MCSEs and MCDEs, while the detailed process for identifying MTR functions is presented in Appendix \ref{appendix:id_mtr}.

\begin{theorem}(Identifying MCSEs and MCDEs) \label{thm:id_mtr_disc_trt}
    Suppose that Assumptions \ref{as:RA}-\ref{as:cts_iv} hold. For $d_0, d_1 \in \{0,1\}$ and $(p_0, p_1)$ being an interior point of $\mathcal{P}$, the following additional regularity conditions are imposed: (i) $\mathbb{E}[D_i D_{-i} \mid P_i = p_0, P_{-i} = p_1]$ and $\mathbb{E}[Y_i \mathbbm{1}\{D_i = d_0\} \mathbbm{1}\{D_{-i} = d_1\} \mid P_i = p_0, P_{-i} = p_1]$ are twice differentiable; (ii) the marginal treatment response functions $m_i^{(d_0, d_1)}(p_0, p_1)$ are continuous; and (iii) the copula density $c_{V_i, V_{-i}}(p_0, p_1)$ is bounded from above and away from zero. 
    
    Then, the marginal controlled spillover effects (MCSEs), $\operatorname{MCSE}_i^{(d)}(p_0, p_1)$, are identified as
    \begin{equation*}
        \begin{aligned}
            \text{sgn}(2d - 1) \cdot \frac{\partial^2 \mathbb{E}\big[Y_i \mathbbm{1}\{D_i=d\} \mid P_i=p_0, P_{-i}=p_1\big]}{\partial p_0 \partial p_1} \Big/ \frac{\partial^2 \mathbb{E}\left[D_i D_{-i} \mid P_i=p_0, P_{-i}=p_1\right]}{\partial p_0 \partial p_1},
        \end{aligned}
    \end{equation*}
    and the marginal controlled direct effects (MCDEs), $\operatorname{MCDE}_i^{(d)}(p_0, p_1)$, are identified as 
    \begin{equation*}
        \begin{aligned}
            \text{sgn}(2d - 1) \cdot\frac{\partial^2 \mathbb{E}\big[Y_i \mathbbm{1}\{D_{-i}=d\} \mid P_i=p_0, P_{-i}=p_1\big]}{\partial p_0 \partial p_1} \Big/ \frac{\partial^2 \mathbb{E}\left[D_i D_{-i} \mid P_i=p_0, P_{-i}=p_1\right]}{\partial p_0 \partial p_1},
        \end{aligned}
    \end{equation*}
    for $d \in \{0,1\}$ and $(p_0, p_1)$ being an interior point of $\mathcal{P}$, and the function $\text{sgn(x)}$ denotes the sign of $x$.
\end{theorem}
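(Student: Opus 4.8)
The plan is to reduce each cross-partial-derivative numerator to a product of the relevant marginal-treatment-response (MTR) difference and the copula density $c_{V_i, V_{-i}}(p_0, p_1)$, and to recognize the denominator $\partial^2 \mathbb{E}[D_i D_{-i} \mid P_i = p_0, P_{-i} = p_1]/\partial p_0 \partial p_1$ as exactly that same copula density via Corollary~\ref{corr:id_copula_density}. Dividing then cancels the density and isolates the MCSE or MCDE. I would carry out the argument for $\operatorname{MCSE}_i^{(1)}$ in full and indicate how the remaining three cases follow by symmetry.

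First I would exploit the fact, established in Equation~\eqref{eq:ps_idea}, that conditioning on $(P_i, P_{-i}) = (p_0, p_1)$ pins down the thresholds $h_i = p_0$ and $h_{-i} = p_1$, so that on this event the treatment realizations satisfy $\{D_i = 1\} = \{V_i \leq p_0\}$ and $\{D_{-i} = 1\} = \{V_{-i} \leq p_1\}$ exactly, as recorded in Equation~\eqref{eq:subpopulation}. Using the consistency relation linking the observed outcome to the potential outcomes, I would decompose
\begin{equation*}
    Y_i \mathbbm{1}\{D_i = 1\} = Y_i(1,1)\,\mathbbm{1}\{V_i \leq p_0, V_{-i} \leq p_1\} + Y_i(1,0)\,\mathbbm{1}\{V_i \leq p_0, V_{-i} > p_1\}.
\end{equation*}
Because $(P_i, P_{-i})$ are deterministic functions of $(Z_i, Z_{-i})$, Assumption~\ref{as:RA} (together with the exclusion restriction of Assumption~\ref{as:er}, which guarantees the potential outcomes are instrument-free) gives $(P_i, P_{-i}) \indep (Y_i(d_0, d_1), V_i, V_{-i})$. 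The conditioning on $(P_i, P_{-i}) = (p_0, p_1)$ therefore drops out, and the law of iterated expectations conditional on $(V_i, V_{-i}) = (v_0, v_1)$ rewrites each conditional moment as a sum of double integrals of the MTR functions against the copula density, for instance
\begin{equation*}
    \mathbb{E}\big[Y_i(1,1)\,\mathbbm{1}\{V_i \leq p_0, V_{-i} \leq p_1\}\big] = \int_0^{p_0}\!\!\int_0^{p_1} m_i^{(1,1)}(v_0, v_1)\, c_{V_i, V_{-i}}(v_0, v_1)\, dv_1\, dv_0.
\end{equation*}

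The core computation is then to apply Leibniz's rule to differentiate these double integrals in $p_0$ and $p_1$. The regularity conditions (i)--(iii)---twice differentiability of the conditional moments, continuity of the MTR functions, and a copula density bounded above and away from zero---are precisely what justify differentiating under the integral sign and the final division. Each cross partial collapses a double integral to its integrand evaluated at the upper corner $(p_0, p_1)$, namely $m_i^{(d_0, d_1)}(p_0, p_1)\, c_{V_i, V_{-i}}(p_0, p_1)$, with a sign determined by whether $p_1$ (or $p_0$) enters as an upper or a lower limit of integration. Summing the two pieces gives
\begin{equation*}
    \frac{\partial^2 \mathbb{E}\big[Y_i \mathbbm{1}\{D_i = 1\} \mid P_i = p_0, P_{-i} = p_1\big]}{\partial p_0 \partial p_1} = \big[m_i^{(1,1)}(p_0,p_1) - m_i^{(1,0)}(p_0,p_1)\big]\, c_{V_i, V_{-i}}(p_0, p_1),
\end{equation*}
and dividing by the copula density from Corollary~\ref{corr:id_copula_density} yields $\operatorname{MCSE}_i^{(1)}(p_0,p_1)$, consistent with $\text{sgn}(2\cdot 1 - 1) = 1$. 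The remaining cases are mechanically identical: for $d = 0$ the relevant treatment event is $\{D_i = 0\} = \{V_i > p_0\}$ (or $\{D_{-i} = 0\} = \{V_{-i} > p_1\}$), so the conditioning level enters as a lower limit, the differentiation flips the overall sign, and the factor $\text{sgn}(2d-1) = -1$ restores the correct MTR difference; the MCDE cases follow by instead indexing on $\mathbbm{1}\{D_{-i} = d\}$, which swaps the roles of the two treatment arguments in the decomposition.

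The main obstacle I anticipate is bookkeeping rather than conceptual: I must track with care the signs generated by Leibniz's rule across the four sub-cases, since they originate from whether the conditioning threshold appears as an upper or lower limit of integration, and then verify that the single factor $\text{sgn}(2d-1)$ reconciles all of them simultaneously. A secondary point requiring attention is the justification that conditioning on $(P_i, P_{-i}) = (p_0, p_1)$ leaves the joint law of $(Y_i(d_0, d_1), V_i, V_{-i})$ unchanged; this rests on the instruments being excluded from the outcome and randomly assigned, so the MTR functions $m_i^{(d_0, d_1)}$ do not covertly depend on $(p_0, p_1)$ through the conditioning. With the density bounded away from zero, the division in the final step is well defined, completing the identification.
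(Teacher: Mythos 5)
Your proposal is correct and follows essentially the same route as the paper's proof in Appendix \ref{appendix:id_mtr}: condition on the propensity scores to turn treatment events into threshold events on $(V_i, V_{-i})$, invoke Assumptions \ref{as:RA}--\ref{as:er} to drop the conditioning, write the moments as double integrals of MTR functions against the copula density, apply the Leibniz rule, and divide by the copula density identified in Corollary \ref{corr:id_copula_density}. The only (immaterial) difference is that the paper identifies each cell-specific MTR $m_i^{(d_0,d_1)}$ separately via $\mathbb{E}[\mathbbm{1}\{Y_i \in A\}\mathbbm{1}\{D_i=d_0, D_{-i}=d_1\}\mid\cdot,\cdot]$ for Borel sets $A$ and then takes differences, whereas you differentiate the combined moment $\mathbb{E}[Y_i\mathbbm{1}\{D_i=d\}\mid\cdot,\cdot]$ directly so the two cells' signs produce the MTR difference in one step.
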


\begin{proof}
    See Appendix \ref{appendix:id_mtr}.
\end{proof}

The identification of the MCSEs and MCDEs builds on the same logic underlying the identification of the LACSEs and LACDEs in part 3 of Theorem \ref{thm:id_lace}, by taking the limits $p_1' \to p_1$ and $p_0' \to p_0$. This argument is illustrated by the green-shaded region in the bottom-right panel of Figure~\ref{fig:id_local_average_effect}, which represents the limiting case where $p_1' \to p_1$ and $p_0' \to p_0$. The validity of this limiting argument requires the propensity scores to exhibit continuous variation in a neighborhood of $(p_0, p_1)$. Moreover, the identification framework naturally extends to settings with exogenous covariates, with the corresponding results provided in Appendix \ref{appendix:id_covariates}.

The assumption of continuously distributed instruments in Assumption~\ref{as:cts_iv} is sufficient but not necessary for identifying the MCSEs and MCDEs. When instruments have limited or discrete variation, the parametric specifications in Assumption~\ref{as:para_assump} can be used to extrapolate the identification of marginal controlled effects beyond the observed support of the propensity scores. Alternative extrapolation approaches, such as those proposed by \citet{mogstad2018using}, may also be applied. However, point identification may no longer hold, and the parameters would instead be partially identified. A formal treatment of this extension is left for future research.

\begin{remark} (Groups with multiple individuals) 
    The identification strategy naturally extends to settings with more than two individuals per group. Consider a group of size $n < \infty$, indexed by $i \in \{1, \ldots, n\}$. In this case, the threshold function $h_i$ in Equation \eqref{eq:basic_model_disc_trt} depends on the full vector of instrument assignments $(Z_1, \ldots, Z_n)$. The propensity score $\mathbb{P}(D_i = 1 \mid Z_1, \ldots, Z_n)$ identifies the threshold $h_i$. The joint distribution of the unobserved heterogeneities within the group is then recovered from the conditional probability $\mathbb{P}(D_1 = 1, \ldots, D_n = 1 \mid P_1, \ldots, P_n)$, where $P_i$ denotes the propensity score of individual $i$. Once this joint distribution is identified, the marginal treatment response functions can be obtained by differentiating
    \begin{equation*}
        \mathbb{E}\left[Y_{i} \mathbbm{1}\left\{D_{1} = d_1\right\} \cdots \mathbbm{1}\left\{D_{n} = d_n\right\}\mid P_{1}, \cdots, P_{n}\right]
    \end{equation*}
with respect to $(P_1, \ldots, P_n)$, under suitable smoothness conditions. Finally, differences between the resulting MTR functions yield the MCSEs and MCDEs.
\end{remark}

\begin{remark} (Testing the spillover structure)
    The identification of marginal treatment response functions makes it possible to test additional structural assumptions about the nature of spillovers. For example, in addition to Assumptions \ref{as:RA}-\ref{as:Vdist}, suppose that each unit's outcome depends not on the entire treatment vector $\boldsymbol{D} \equiv (D_1, \cdots, D_n)$, but instead on a lower-dimensional function $H(\cdot)$ of this vector. In the literature, $H(\cdot)$ is commonly referred to as the exposure mapping. A standard specification is the average treatment level within the group, $H(\boldsymbol{D}) = \sum_{i=1}^n D_i / n$. Under this structure, given unit $i$'s own treatment $D_i = d$, any two treatment vectors $\boldsymbol{d} = (d_1, \cdots, d_n)$ and $\boldsymbol{\tilde{d}} = (\tilde{d}_1, \cdots, \tilde{d}_n)$ that generate the same exposure level, $H(\boldsymbol{d}) = H(\boldsymbol{\tilde{d}})$, should yield identical marginal treatment response functions: 
\begin{equation*}
    \mathbb{E}\big[Y_i(d, \boldsymbol{d}) \mid V_1 = p_1, \cdots, V_n = p_n\big] = \mathbb{E}\big[Y_i(d, \boldsymbol{\tilde{d}}) \mid V_1 = p_1, \cdots, V_n = p_n\big],
\end{equation*}
for all $(p_1, \cdots, p_n)$ in the support of the propensity score functions. This equality provides a testable implication of the assumed spillover structure $H(\cdot)$, thereby linking identification of MTR functions to specification testing of exposure mappings.
\end{remark}

\subsection{Policy Relevant Treatment Effects} \label{sec:prte}
The MCSEs and MCDEs not only capture heterogeneous spillover and direct effects but also serve as fundamental building blocks for deriving a wide range of causal parameters commonly examined in the literature. This section illustrates several examples demonstrating how the MCSEs and MCDEs can be used to recover other treatment effect parameters of policy relevance.

\subsubsection*{Average Controlled Spillover and Direct Effects}

Researchers are often interested in summarizing heterogeneous spillover and direct effects across individuals by aggregating them into population-level parameters (see, for example, \cite{vazquez2023identification} and related studies). Within this framework, the average controlled spillover effect (ACSE) can be defined as $\operatorname{ACSE}_i(d) \equiv \mathbb{E}[Y_i(d, 1) - Y_i(d, 0)]$, which measures the expected change in unit $i$'s outcome when the peer's treatment status changes exogenously from 0 to 1, holding the unit's own treatment fixed at $D_i = d$. Similarly, the average controlled direct effect (ACDE) can be defined as $\operatorname{ACDE}_i(d) \equiv \mathbb{E}[Y_i(1, d) - Y_i(0, d)]$, which reflects the expected change in unit $i$'s outcome when her own treatment status changes exogenously from zero to one, holding her peers' treatment status fixed at $D_{-i} = d$.

Under the assumption that the propensity scores have full support, i.e., $\mathcal{P} = (0,1)^2$, the ACSEs and ACDEs are point identified using the MCSEs, the MCDEs, and the copula density of $(V_i, V_{-i})$ identified in Section~\ref{sec:id_mcse_msde}, by integrating the MCSEs or MCDEs weighted by the corresponding copula density:
\begin{equation*}
    \begin{aligned}
        &\operatorname{ACSE}_i(d) = \int_{0}^{1} \int_{0}^{1} \operatorname{MCSE}_{i}(d; p_0, p_1) c_{V_i, V_{-i}}(p_0, p_1) dp_0 dp_1, \\
        &\operatorname{ACDE}_i(d) = \int_{0}^{1} \int_{0}^{1} \operatorname{MCDE}_{i}(d; p_0, p_1) c_{V_i, V_{-i}}(p_0, p_1) dp_0 dp_1.
    \end{aligned}
\end{equation*}

When the propensity scores lack full support, the average controlled spillover and direct effects, as well as other policy-relevant treatment effects, remain only partially identifiable. Under the additional assumption that potential outcomes are almost surely bounded, $|Y_i(d, d')| \leq B < \infty$, the MCSEs and MCDEs are confined within the range $[-2B, 2B]$ at points outside the observed support of the propensity scores. To extend identification beyond this region, one may impose the parametric structure in Assumption~\ref{as:para_assump} or adopt an extrapolation approach similar to that proposed by \citet{mogstad2018using}. A formal development of these extensions is left for future research.

\subsubsection*{Local Average Controlled Spillover and Direct Effects}

Once the MCSEs and MCDEs are point identified, they can be used to recover the LACSEs and LACDEs defined in Section~\ref{sec:local_average_effects}. The following discussion illustrates how these marginal effects can be employed to obtain the local average spillover and direct effects examined in the existing literature.

Suppose there exist two values of the instrumental variable, $z_0, z_1$, such that the associated propensity scores $P_i(z, z')$, for $z, z' \in {z_0, z_1}$, can be consistently ordered across all individuals and groups. Without loss of generality, assume that $P_i(z_0, z_0) \leq P_i(z_0, z_1) \leq P_i(z_1, z_0) \leq P_i(z_1, z_1)$. Under this ordering, the treatment selection mechanism in Equation~\eqref{eq:basic_model_disc_trt} implies the following monotonicity condition:
\begin{equation*}
    D_i(z_0, z_0) \leq D_i(z_0, z_1) \leq D_i(z_1, z_0) \leq D_i(z_1, z_1)
\end{equation*}
almost surely, where $D_i(z, z')$ denotes the potential treatment received by unit $i$ when the instrument assignments are fixed exogenously at $(Z_i, Z_{-i}) = (z, z')$.

\cite{vazquez2023causal} partitions the population into a finite number of compliance types based on the values of the potential treatment vector ${D_i(z, z')}_{z, z' \in \{z_0, z_1\}}$. This framework identifies the local average spillover effect $\mathbb{E}[Y_i(0,1) - Y_i(0,0) \mid T_{-i} = c]$, and the local average direct effect $\mathbb{E}[Y_i(1,0) - Y_i(0,0) \mid T_i = c]$, where $T_i = c$ denotes the complier subgroup, defined as the set of units whose unobserved heterogeneity $V_i$ lies between the two thresholds $P_i(z_0, z_1)$ and $P_i(z_1, z_0)$. Intuitively, these are individuals who would not take the treatment under $(z_0, z_1)$ but would take it under $(z_1, z_0)$. This paper also considers the setting in which a one-sided noncompliance condition holds, meaning that individuals cannot receive the treatment when assigned the instrument value $z_0$. Under one-sided noncompliance, the propensity scores satisfy $0 = P_i(z_0, z_0) = P_i(z_0, z_1) \leq P_i(z_1, z_0) \leq P_i(z_1, z_1)$, which corresponds to a special case of the sufficient identification condition stated in Part 1 of Theorem~\ref{thm:id_lace}.

Figure~\ref{fig:app_id_binary_iv} in Appendix~\ref{app:id_binary_iv} illustrates the regions in the $(V_i, V_{-i})$ space corresponding to the subpopulations ${T_i = c}$ and ${T_{-i} = c}$. Integrating the identified MCSEs and MCDEs over these regions, using the copula density of $(V_i, V_{-i})$ as weights, recovers the local average spillover and direct effects analyzed by \cite{vazquez2023causal}:
\begin{equation*}
    \begin{aligned}
        &\mathbb{E}\left[Y_{i}(0, 1) - Y_i(0, 0) \mid T_{-i} = c\right] \\
        =& \frac{1}{\mathbb{P}(T_{-i} = c)} \int_{0}^{P_{-i}(z_1, z_0)} \int_{0}^{1} \operatorname{MCSE}_i\left(0; v_0, v_1\right) c_{V_{i}, V_{-i}}(v_0, v_1) dv_0 dv_1, \\
        &\mathbb{E}\left[Y_{i}(1, 0) - Y_i(0, 0) \mid T_{i} = c\right] \\
        =& \frac{1}{\mathbb{P}(T_{i} = c)} \int_{0}^{1} \int_{0}^{P_i(z_1, z_0)} \operatorname{MCDE}_i\left(0; v_0, v_1\right) c_{V_{i}, V_{-i}}(v_0, v_1) dv_0 dv_1, \\
        &\mathbb{P}(T_{-i} = c) = \int_{0}^{P_{-i}(z_1, z_0)} \int_{0}^{1} c_{V_{i}, V_{-i}}(v_0, v_1) dv_0 dv_1, \\
        &\mathbb{P}(T_{i} = c) = \int_{0}^{1} \int_{0}^{P_i(z_1, z_0)} c_{V_{i}, V_{-i}}(v_0, v_1) dv_0 dv_1,
    \end{aligned}
\end{equation*}
where the copula density is identified in Corollary \ref{corr:id_copula_density}, and Theorem \ref{thm:id_mtr_disc_trt} provides identification of $\operatorname{MCSE}_i\left(0; v_0, v_1\right)$ and $\operatorname{MCDE}_i\left(0; v_0, v_1\right)$.

\subsubsection*{Other Policy Relevant Treatment Effect}

In addition, the identified MCSEs and MCDEs can be used to recover the policy-relevant treatment effect (PRTE), which quantifies the expected change in outcomes induced by a policy-driven shift in the treatment selection mechanism. The PRTE aggregates the underlying marginal controlled effects across the distribution of unobserved heterogeneity, weighted by how the policy changes the propensity of treatment participation, following the interpretation of \citet{heckman2005structural} and related work. This parameter provides a meaningful measure of the impact of counterfactual policy interventions in the presence of heterogeneous treatment effects, allowing for the evaluation of counterfactual interventions that modify the selection into treatment.

To formalize the link between marginal controlled effects and policy-relevant treatment effects, consider how the MCSEs and MCDEs characterize PRTEs arising from exogenous policy changes. Let $\mathcal{A}$ denote a set of feasible policies. For any policy $a \in \mathcal{A}$, I use superscript $a$ to denote the corresponding potential variables under that policy. For example, $D_i^a$ denotes the treatment status of unit $i$ that would be realized if policy $a$ were implemented. Thus, changing the policy from a to $a'$ induces changes in the distribution of instrumental variables, treatment selection, and outcomes, such as $Z_i^a \to Z_i^{a'}$, $D_i^a \to D_i^{a'}$, and $Y_i^a \to Y_i^{a'}$. These counterfactual changes form the basis for evaluating PRTEs.

Under policy $a$, the treatment decision is given by
\begin{equation*}
    D_i^a = \mathbbm{1}\{V_i^a \leq P_i^a(Z_i^a, Z_{-i}^a)\},
\end{equation*}
where $P_i^a(Z_i^a, Z_{-i}^a) = \mathbb{P}(D_i^a = 1 \mid Z_i^a, Z_{-i}^a)$ denotes the policy-specific propensity score. Following the standard policy invariance assumption (as described in Assumption~\ref{as:policy_invariant}), the introduction of a new policy is assumed to affect only the treatment selection mechanism through changes in the propensity score, without altering the joint distribution of unobserved characteristics.

\begin{assumption}(Policy invariance)\label{as:policy_invariant}
    The distribution of 
    \begin{equation*}
        \big(U^a_i, U^a_{-i}, V^a_i, V^a_{-i}\big)
    \end{equation*}
    is invariant to any policy $a \in \mathcal{A}$.
\end{assumption}

The policy-relevant treatment effect (PRTE) measures the average change in outcomes induced by a policy intervention that modifies the treatment assignment mechanism, which is defined as
\begin{equation*}
    \text{PRTE}(a, a') \equiv \frac{\mathbb{E}[Y^{a'}_i - Y^{a}_i]}{\Delta P},
\end{equation*}
where $\Delta P$ denotes the proportion of groups in which at least one member changes treatment status as a result of the policy shift from $a$ to $a'$. 

Given that each pair of propensity scores $(P_i^a, P_{-i}^a)$ partitions the support of $(V_i, V_{-i})$ into four regions associated with distinct treatment realizations $(D_i, D_{-i})$, the expected outcome under policy $a$ can, under Assumption~\ref{as:policy_invariant}, be expressed as follows:
\begin{equation*}
    \begin{aligned}
        \mathbb{E}\big[Y^a_i\big] =& \int_{0}^{1} \int_{0}^{1} \Big\{ m_i^{(1, 1)}(p_0, p_1) \mathbb{P}\big(P^a_i \geq p_0, P^a_{-i} \geq p_1\big)  \\
        &+ m_i^{(1, 0)}(p_0, p_1) \mathbb{P}\big(P^a_i \geq p_0, P^a_{-i} < p_1\big)  \\
        &+ m_i^{(0, 1)}(p_0, p_1) \mathbb{P}\big(P^a_i < p_0, P^a_{-i} \geq p_1\big)  \\
        & + m_i^{(0,0)}(p_0, p_1) \mathbb{P}\big(P^a_i < p_0, P^a_{-i} < p_1\big) \Big\} c_{V_i, V_{-i}}(p_0, p_1) dp_0 dp_1.
    \end{aligned}
\end{equation*}
Because $\mathbb{E}[Y_i^a]$ is represented as a weighted average of the marginal treatment response functions, and the PRTE is defined as the difference in $\mathbb{E}[Y_i^a]$ across alternative policies, the PRTE naturally admits an interpretation as a weighted average of the MCDEs and MCSEs over particular regions of the latent heterogeneity space $(V_i, V_{-i})$. Hence, the identified MCDEs and MCSEs provide the key building blocks for constructing PRTEs associated with a broad class of counterfactual policy interventions. Appendix~\ref{app:prte} presents explicit expressions for the PRTE under several empirically relevant types of policy changes.

\subsection{Comparing With Marginal Treatment Effect} \label{sec:compare_mte}
This section introduces the connection between the marginal controlled effects and the standard marginal treatment effect (MTE) framework. The MCDEs and MCSEs are defined in a manner analogous to the MTE, capturing how potential outcomes vary with continuous unobserved heterogeneity. However, unlike the standard MTE that rules out spillovers, the MCDEs and MCSEs explicitly account for spillovers arising from peers' treatments as well as endogeneity in both own and peer treatment decisions. This section formally examines the relationship between the marginal controlled effects and the conventional MTE, demonstrating that the MCSEs and MCDEs naturally extend the MTE framework to settings with spillovers in outcomes and treatment selection within groups.

If spillover effects exist but are ignored and the standard MTE identification strategy is applied, the conventional estimand
\begin{equation*}
    \begin{aligned}
        \frac{\partial}{\partial p_0} \mathbb{E}\big[Y_i \mid P_i(Z_i)=p_0\big], \text{ where } P_i(z_0) \equiv \mathbb{P}\big(D_i = 1 \mid Z_i = z_0\big),
    \end{aligned}
\end{equation*}
fails to identify the true MTE, $\mathbb{E}[Y_i(1) - Y_i(0) \mid V_i = p_0]$.

In the presence of the spillover structure specified in Equation~\eqref{eq:basic_model_disc_trt}, the conventional propensity score can be written as
\begin{equation*}
    \begin{aligned}
        &\mathbb{P}\left(D_i = 1 \mid Z_i = z_0\right) \\
        =& \mathbb{P}\left(V_i \leq h_i(z_0, Z_{-i}) \mid Z_i = z_0\right) \\
        =& \int_{z_1 \in \mathcal{Z}} \mathbb{P}\left(V_i \leq h_i(z_0, z_1) \mid Z_i = z_0, Z_{-i} = z_1\right) f_{Z_{-i} \mid Z_i=z_0}\left(z_1\right) d z_1 \\
        =& \int_{z_1 \in \mathcal{Z}} h_i\left(z_0, z_1\right) f_{Z_{-i} \mid Z_i=z_0}\left(z_1\right) d z_1,
    \end{aligned}
\end{equation*}
where $\mathcal{Z}$ denotes the support of the peer's instrumental variable $Z_{-i}$, the second equality follows from the law of iterated expectations, and the last equality uses the independence assumption (Assumption~\ref{as:RA}) and the distributional normalization in Assumption~\ref{as:Vdist}. This expression shows that the conventional propensity score is a weighted average of the unit's threshold function $h_i(z_0, z_1)$ over the peer's instrument $Z_{-i}$, conditional on $Z_i$.

Furthermore, Corollary~\ref{cor:breakdown_mte} demonstrates that, when spillovers are present, the conventional MTE identifier,  $\partial \mathbb{E}[Y_i \mid P_i\left(Z_i\right)=p_0] / \partial p_0$ captures a weighted average of the MCDEs, augmented by a bias term arising from the dependence of the unit's treatment on the peer's instrument and from the correlation between $Z_i$ and $Z_{-i}$. 

\begin{corollary}(Breakdown of MTE causal interpretation) \label{cor:breakdown_mte}
    Consider the model with spillovers specified in Equation~\eqref{eq:basic_model_disc_trt}, and suppose that the conditions in Theorem~\ref{thm:id_mtr_disc_trt} are satisfied. Under these assumptions, the conventional MTE identifier identifies 
    \begin{equation*}
        \begin{aligned}
           \frac{\partial \mathbb{E}[Y_i \mid P_i\left(Z_i\right)=p_0]}{\partial p_0} = &\int_{0}^{1} \int_{0}^{p_1} \operatorname{MCDE}_i(1; p_0, v_1) c_{V_i, V_{-i}}(p_0, v_1) f_{P_{-i} \mid P_i = p_0} (p_1) dv_1 dp_1 \\
            +&\int_{0}^{1} \int_{p_1}^{1} \operatorname{MCDE}_i(1; p_0, v_1) c_{V_i, V_{-i}}(p_0, v_1) f_{P_{-i} \mid P_i = p_0} (p_1) dv_1 dp_1 \\
            &+ \mathcal{R}.
        \end{aligned}
    \end{equation*}
    The bias term $\mathcal{R}$ is generally nonzero. It vanishes under two sufficient conditions: (i) the unit's treatment decision is unaffected by the peer's instrument, and (ii) the instruments $Z_i$ and $Z_{-i}$ are mutually independent within each group. The explicit form of $\mathcal{R}$ is provided in Appendix~\ref{app:mte_breakdown}.
    \begin{proof}
        See Appendix \ref{app:mte_breakdown}.
    \end{proof}
\end{corollary}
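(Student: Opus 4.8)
The plan is to reduce the conventional MTE identifier to a derivative of the group-level conditional mean $G(p_0',p_1') \equiv \mathbb{E}[Y_i \mid P_i = p_0',\, P_{-i} = p_1']$ and then to track how conditioning on the marginal propensity score $P_i(Z_i)$, rather than on the full pair $(P_i,P_{-i})$, distorts that derivative. First I would record the building block already available from Section~\ref{sec:id_mcse_msde}: under Assumptions~\ref{as:RA}--\ref{as:Vdist}, random assignment and the exclusion restriction let me write
\begin{equation*}
    G(p_0',p_1') = \sum_{d_0,d_1 \in \{0,1\}} \int\!\!\int_{R_{d_0 d_1}(p_0',p_1')} m_i^{(d_0,d_1)}(v_0,v_1)\, c_{V_i,V_{-i}}(v_0,v_1)\, dv_0\, dv_1,
\end{equation*}
where $R_{d_0 d_1}(p_0',p_1')$ are the four rectangles in the $(V_i,V_{-i})$ plane cut out by the thresholds $(p_0',p_1')$ according to Equation~\eqref{eq:subpopulation}. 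Differentiating under the integral sign, which is justified by the smoothness conditions in Theorem~\ref{thm:id_mtr_disc_trt}, and collapsing the boundary terms gives the clean identity
\begin{equation*}
    \frac{\partial G(p_0',p_1')}{\partial p_0'} = \int_0^{p_1'} \operatorname{MCDE}_i^{(1)}(p_0',v_1)\, c_{V_i,V_{-i}}(p_0',v_1)\, dv_1 + \int_{p_1'}^{1} \operatorname{MCDE}_i^{(0)}(p_0',v_1)\, c_{V_i,V_{-i}}(p_0',v_1)\, dv_1,
\end{equation*}
because the MTR differences $m_i^{(1,1)} - m_i^{(0,1)}$ and $m_i^{(1,0)} - m_i^{(0,0)}$ are exactly the two MCDEs. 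This is the quantity the main term of the corollary must reproduce.

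Next I would express the conventional object in terms of $G$. Using $P_i(z_0) = \int h_i(z_0,z_1)\, f_{Z_{-i} \mid Z_i = z_0}(z_1)\, dz_1$ from the display preceding the corollary, and assuming $P_i(\cdot)$ is locally invertible so that the event $\{P_i(Z_i) = p_0\}$ pins down $Z_i = z_0(p_0)$, iterated expectations together with the exclusion restriction give
\begin{equation*}
    \mathbb{E}[Y_i \mid P_i(Z_i) = p_0] = \int_{\mathcal{Z}} G\big(h_i(z_0,z_1),\, h_{-i}(z_1,z_0)\big)\, f_{Z_{-i} \mid Z_i = z_0}(z_1)\, dz_1, \qquad z_0 = z_0(p_0).
\end{equation*}
I would then differentiate in $p_0$ through $z_0 = z_0(p_0)$. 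The chain rule produces three channels: (a) the derivative of $G$ in its first argument times $\partial h_i/\partial z_0$, which, after changing variables back to the propensity-score scale and integrating against the conditional law of $P_{-i}$ given $P_i = p_0$, yields the weighted average of MCDEs displayed in the corollary; (b) the derivative of $G$ in its second argument times $\partial h_{-i}/\partial z_0$, capturing the spillover of unit $i$'s instrument onto the peer's threshold; and (c) the term in which the derivative hits $f_{Z_{-i} \mid Z_i = z_0}$, capturing dependence between $Z_i$ and $Z_{-i}$. Channels (b) and (c), together with the discrepancy in channel (a) caused by $P_i(Z_i) \neq P_i$ under averaging, are collected into $\mathcal{R}$, whose closed form I would record for Appendix~\ref{app:mte_breakdown}.

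Finally I would verify the two sufficient conditions. Under (i), no unit's treatment responds to its peer's instrument, so $\partial h_i/\partial z_1 \equiv 0$ and $\partial h_{-i}/\partial z_0 \equiv 0$; the first equality forces the full and marginal propensity scores to coincide, $P_i(Z_i) = h_i(Z_i) = P_i$, so conditioning on $P_i(Z_i) = p_0$ fixes $P_i = p_0$ and the Jacobian $dz_0/dp_0$ cancels cleanly against $\partial h_i/\partial z_0$, eliminating the averaging distortion in channel (a), while the second equality kills channel (b). Under (ii), $Z_i \indep Z_{-i}$, the conditional density $f_{Z_{-i} \mid Z_i = z_0}$ does not depend on $z_0$, so channel (c) vanishes and, combined with (i), the conditional law of $P_{-i}$ given $P_i = p_0$ is free of $p_0$; together these force $\mathcal{R} = 0$ and leave exactly the MCDE weighted average. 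I expect the main obstacle to be the bookkeeping in the differentiation step, precisely because conditioning on the marginal propensity score $P_i(Z_i)$ does not fix the full propensity score $P_i$ when $h_i$ depends on the peer's instrument. Disentangling the genuine MCDE term from the two bias channels requires a careful Leibniz-rule argument and a change of variables from the instrument scale to the propensity-score scale, and checking that the boundary terms generated by differentiating the rectangles $R_{d_0 d_1}$ telescope into MCDE differences rather than leaving stray MTR level terms is where the copula-density regularity conditions of Theorem~\ref{thm:id_mtr_disc_trt} (boundedness above and away from zero) do the real work.
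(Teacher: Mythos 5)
Your proposal is correct, but it takes a genuinely different route from the paper's proof in Appendix~\ref{app:mte_breakdown}. The paper stays entirely on the propensity-score scale: it applies the law of iterated expectations to condition on the pair of scores, writes the inner expectation as integrals of MTR functions against the copula density, uses Fubini to pull the $\int_0^{p_0} dv_0$ integral outside, and then differentiates by the Leibniz rule, so that the main term arises from the derivative hitting the upper limit $p_0$ and the \emph{entire} bias $\mathcal{R}$ arises from the derivative hitting the conditional density $f_{P_{-i}\mid P_i=p_0}(p_1)$; conditions (i) and (ii) jointly make $P_i$ and $P_{-i}$ independent, which kills that derivative. You instead pass through the group-level mean $G(p_0',p_1')=\mathbb{E}[Y_i\mid P_i=p_0',P_{-i}=p_1']$, invert $P_i(\cdot)$ to work on the instrument scale, and decompose by the chain rule into three channels. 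Both routes deliver the same main term, and your derivative identity for $G$ correctly produces $\operatorname{MCDE}_i^{(1)}$ on $[0,p_1']$ and $\operatorname{MCDE}_i^{(0)}$ on $[p_1',1]$, matching the paper's appendix computation (the corollary display itself writes $\operatorname{MCDE}_i(1;\cdot)$ in both integrals, which is a typo your derivation implicitly corrects). What your route buys is transparency about the distinction between the marginal score $P_i(Z_i)$ and the full score $h_i(Z_i,Z_{-i})$: your ``channel (a) discrepancy'' is precisely the step the paper glosses over when it treats conditioning on $P_i(Z_i)=p_0$ as forcing $\mathbbm{1}\{V_i\le h_i(Z_i,Z_{-i})\}=\mathbbm{1}\{V_i\le p_0\}$, and your channels (b) and (c) attribute the bias to two interpretable primitives, $\partial h_{-i}/\partial z_0$ and $\partial f_{Z_{-i}\mid Z_i}/\partial z_0$, rather than to the single reduced-form object $\partial f_{P_{-i}\mid P_i=p_0}(p_1)/\partial p_0$. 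What it costs is an extra regularity condition (local invertibility of $P_i(\cdot)$, which the paper never assumes) and a less compact closed form for $\mathcal{R}$. One small imprecision at the end: the telescoping of the rectangle boundary terms into MCDE differences is pure algebra requiring only the differentiability and MTR-continuity conditions that license the Leibniz rule; the bounds on the copula density in Theorem~\ref{thm:id_mtr_disc_trt} play no role in that step.
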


If spillover effects are absent from both the outcome and the treatment selection processes, the identification results collapse to the standard MTE framework, as shown in Corollary \ref{cor:consistent_mte_sutva}.

\begin{corollary} \label{cor:consistent_mte_sutva}
    Suppose that no spillover effects are present, and the SUTVA holds. Specifically, assume that $V_i \indep V_{-i}$, $Y_i(D_i, d) = Y_i(D_i, d') \equiv Y_i(D_i)$, and $h_i(Z_i, z) = h_i(Z_i, z') \equiv h_i(Z_i)$. Under the conditions of Theorem~\ref{thm:id_mtr_disc_trt}, the identification results for the MCSEs and MCDEs collapse to the standard MTE framework.
    In particular:
    \begin{enumerate}
        \item The propensity score identifies the standard threshold function $h_i(Z_i)$,
        \begin{equation*}
        \mathbb{P}\big(D_i = 1 \mid Z_i = z_0, Z_{-i} = z_1) = h_i(z_0), 
        \end{equation*}
        indicating that the treatment decision of unit $i$ is unaffected by the peer's instrument.
        \item The copula density simplifies to independence, $c_{V_i, V_{-i}}(p_0, p_1) = 1$, suggesting that $V_i \indep V_{-i}$.
        \item The marginal controlled spillover effect is zero for all $(p_0, p_1)$ in the interior of $\mathcal{P}$, $\operatorname{MCSE}_i^{(d)}(p_0, p_1) = 0$, implying that the outcome does not depend on the peer's treatment.
        \item The marginal controlled direct effect reduces to the MTE, $\operatorname{MCDE}_i^{(d)}(p_0, p_1) = \mathbb{E}[Y_i(1) - Y_i(0) \mid V_i = p_0]$, for all $(p_0, p_1)$ in the interior of $\mathcal{P}$.
    \end{enumerate}
\end{corollary}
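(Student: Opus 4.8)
The plan is to verify the four conclusions by substituting the three no-spillover restrictions — $V_i \indep V_{-i}$, $Y_i(D_i,d) \equiv Y_i(D_i)$, and $h_i(Z_i,z) \equiv h_i(Z_i)$ — directly into the identification results already established in Lemma \ref{lemma:id_copula}, Corollary \ref{corr:id_copula_density}, and Theorem \ref{thm:id_mtr_disc_trt}. Three of the four claims are essentially mechanical, so I would dispatch them first and reserve the effort for claim (4).

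For claim (1), I would begin from the selection equation and write $\mathbb{P}(D_i = 1 \mid Z_i = z_0, Z_{-i} = z_1) = \mathbb{P}(V_i \leq h_i(z_0, z_1) \mid Z_i = z_0, Z_{-i} = z_1)$; replacing $h_i(z_0,z_1)$ by $h_i(z_0)$ via the no-peer-instrument restriction, then invoking random assignment (Assumption \ref{as:RA}) to drop the conditioning, and finally the uniform normalization (Assumption \ref{as:Vdist}), delivers $h_i(z_0)$. For claim (2), I would combine Lemma \ref{lemma:id_copula} with $V_i \indep V_{-i}$, so that the identified copula factorizes as $C_{V_i,V_{-i}}(p_0,p_1) = \mathbb{P}(V_i \leq p_0)\,\mathbb{P}(V_{-i}\leq p_1) = p_0 p_1$; by Corollary \ref{corr:id_copula_density} its mixed partial derivative, the copula density, is then identically one. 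Claim (3) is immediate from Definition \ref{def:MCE}: the no-peer-treatment restriction forces $Y_i(d,1)=Y_i(d,0)$, so the conditional mean of their difference vanishes for every $(p_0,p_1)$.

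The substance lies in claim (4). Using Definition \ref{def:MCE} with the no-peer-treatment restriction, I would first reduce $\operatorname{MCDE}_i^{(d)}(p_0,p_1)$ to $\mathbb{E}[Y_i(1)-Y_i(0)\mid V_i=p_0, V_{-i}=p_1]$ for both $d\in\{0,1\}$ (consistent with the effect no longer depending on the peer's fixed treatment). Equivalently, feeding the degenerate copula density from claim (2) into the identification formula of Theorem \ref{thm:id_mtr_disc_trt} and carrying out the differentiation returns the same conditional mean. The crucial remaining step is to show that this conditional mean does not depend on $p_1$, so that it collapses to the standard MTE $\mathbb{E}[Y_i(1)-Y_i(0)\mid V_i=p_0]$.

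I expect this last step to be the main obstacle, because $V_i \indep V_{-i}$ alone does not guarantee that $Y_i(1)-Y_i(0)$ is mean-independent of $V_{-i}$ given $V_i$: the peer's selection unobservable $V_{-i}$ could remain informative about the peer-level unobservable entering $m_i$. The plan is therefore to use the full SUTVA content of the hypothesis: once interference is entirely absent, unit $i$'s potential outcomes depend only on her own unobservable, $Y_i(d)=m_i(d,U_i)$, and the own-unit pair $(U_i,V_i)$ is independent of $V_{-i}$. Under this decoupling, conditioning on $V_{-i}$ adds no information about $Y_i(1)-Y_i(0)$ beyond $V_i$, so the $p_1$-argument drops out and the MCDE coincides with the standard MTE pointwise on the interior of $\mathcal{P}$. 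I would make explicit that this cross-unit independence is exactly the feature relaxed in the main framework, and that Corollary \ref{cor:breakdown_mte} quantifies precisely what goes wrong when it fails.
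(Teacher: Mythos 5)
Your proposal is correct and follows essentially the same route as the paper's proof in Appendix~\ref{app:mce_sutva}: substitute the three restrictions into the identification results, show the copula density degenerates to one, and reduce the cross-partial identification formula of Theorem~\ref{thm:id_mtr_disc_trt} to $\mathbb{E}[Y_i(d) \mid V_i = p_0]$, whence the MCSE vanishes and the MCDE equals the standard MTE. The obstacle you flag in claim (4) is genuine, and the paper's own proof resolves it exactly as you propose, by invoking the stronger condition that $V_{-i}$ is independent of the pair $(Y_i(d), V_i)$ rather than merely $V_i \indep V_{-i}$; your explicit derivation of this decoupling from the full SUTVA content of the hypothesis (ruling out dependence of $Y_i(d)$ on $U_{-i}$) is, if anything, more careful than the paper's silent use of it.
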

\begin{proof}
    See Appendix \ref{app:mce_sutva}.
\end{proof}

To conclude, the standard MTE may lose its causal interpretation when spillovers are present, whereas the MCSEs and MCDEs coincide with the MTE under SUTVA. Hence, the framework developed in this paper provides a natural extension of the standard MTE framework, generalizing it to settings with spillovers in both outcomes and treatment selection.

\begin{remark}(Connection to the multivalued treatment model)
    If the group is treated as a single decision-making unit and the treatment vector $(D_{ig}, D_{-ig}) \in \{0,1\}^2$ can be reformulated as a multivalued group-level treatment $D_g \in \{0, 1, 2, 3\}$, the framework can be viewed as a multivalued MTE models, such as \citet{lee2018identifying}. Translating their setup into the spillover context, their identification relies on an exclusion restriction requiring that unit $i$'s threshold function does not depend on peers' instruments. In contrast, the treatment selection structure developed in this paper enables point identification of the threshold functions without imposing such exclusion restrictions and accommodates settings where spillovers arise from peers' instruments. Hence, the two frameworks are not nested. A more detailed comparison with the multivalued treatment framework is provided in Appendix~\ref{app:multi_trt}.
\end{remark}

\subsection{Testable Implications of Identifying Assumptions} \label{sec:test_implication}

The imposed spillover model structure and assumptions yield two sets of testable implications.

The first set arises from the fact that the cross-partial derivatives of the observed conditional expectations,
\begin{equation*}
    \frac{\partial^2}{\partial p_1 \partial p_0} \mathbb{E}\big[\mathbbm{1}\left\{Y_i \in A_1, Y_{-i} \in A_2\right\} \mathbbm{1}\left\{D_i=d, D_{-i}=d'\right\} \mid P_i=p_0, P_{-i}=p_1\big],
\end{equation*}
identify the joint distribution of potential outcomes weighted by the copula density of unobservables, where $A_1, A_2$ denote arbitrary Borel sets in the outcome support. Since both the conditional probabilities and the copula density are nonnegative, these derivatives must be weakly positive, generating a set of \textit{nesting inequalities} that serve as testable restrictions.

The second set of implications follows from the index sufficiency property: the marginal treatment response functions depend only on the values of the propensity scores, not directly on the realizations of the instrumental variables. Hence, for any two instrument pairs $(z_0, z_1)$ and $(\tilde{z}_0, \tilde{z}_1)$ that yield identical propensity scores $(P_i, P_{-i})$, the conditional expectation
\begin{equation*}
    \frac{\partial^2}{\partial p_1 \partial p_0} \mathbb{E}\big[\mathbbm{1}\left\{Y_i \in A_1, Y_{-i} \in A_2\right\} \mathbbm{1}\left\{D_i=d, D_{-i}=d'\right\} \mid Z_i, Z_{-i}\big]
\end{equation*}
should remain invariant across the two sets of instruments, as they correspond to the same marginal treatment response values.

Corollary \ref{corr:test_implication} formally states the nesting inequality and index sufficiency conditions implied by the model.

\begin{corollary} (Testable implications) \label{corr:test_implication}
    The following conditions constitute the testable implications of Assumptions \ref{as:RA}-\ref{as:cts_iv}, given the model structure specified in Equation \eqref{eq:basic_model_disc_trt}.
    \begin{enumerate}
        \item (Nesting inequalities) For any Borel set $A_1, A_2 \subseteq \mathcal{Y}$, $d \in \{0,1\}$, and $(p_0, p_1) \in \mathcal{P}$, the following inequality should hold:
        \begin{equation*}
            \begin{aligned}
                &\frac{\partial^2}{\partial p_1 \partial p_0} \mathbb{E}\Big[\mathbbm{1} \big\{Y_i \in A_1, Y_{-i} \in A_2 \big\} \mathbbm{1}\big\{D_{i} = d, D_{-i} = d \big\} \mid P_{i} = p_0, P_{-i}= p_1 \Big] \geq 0, \\
                &-\frac{\partial^2}{\partial p_1 \partial p_0} \mathbb{E}\Big[\mathbbm{1} \big\{Y_i \in A_1, Y_{-i} \in A_2 \big\} \mathbbm{1} \big\{D_{i} = d, D_{-i} = 1-d \big\} \mid P_{i} = p_0, P_{-i} = p_1\Big] \geq 0.
            \end{aligned}
    \end{equation*}
    \item (Index sufficiency) For any Borel set $A_1, A_2 \subseteq \mathcal{Y}$, $d_0, d_1 \in \{0,1\}$, and different values of instruments, $(z_0, z_1) \neq (\tilde{z}_0, \tilde{z}_1)$, that yield the same propensity score values such that $P_i(z_0, z_1) = P_i(\tilde{z}_0, \tilde{z}_1) = p_0$ and $P_{-i}(z_0, z_1) = P_{-i}(\tilde{z}_0, \tilde{z}_1) = p_1$, the following equalities should hold: 
    \begin{equation*}
        \begin{aligned}
            & \mathbb{E}\big[\mathbbm{1}\big\{Y_i \in A_1, Y_{-i} \in A_2 \big\} \mathbbm{1}\{D_i = d_0, D_{-i} = d_1\} \mid Z_i = z_0, Z_{-i} = z_1\big] \\
            =& \mathbb{E}\big[\mathbbm{1}\big\{Y_i \in A_1, Y_{-i} \in A_2 \big\} \mathbbm{1}\{D_i = d_0, D_{-i} = d_1\} \mid Z_i = \tilde{z}_0, Z_{-i} = \tilde{z}_1\big].
        \end{aligned}
    \end{equation*}
    \end{enumerate}
\end{corollary}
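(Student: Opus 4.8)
The plan is to derive both parts from the subpopulation decomposition in Equation~\eqref{eq:subpopulation} together with the representation of the relevant conditional moments as integrals against the copula density $c_{V_i,V_{-i}}$. The governing observation is that, conditional on $(P_i,P_{-i})=(p_0,p_1)$, the event $\{D_i=1\}$ coincides with $\{V_i\le p_0\}$ and $\{D_i=0\}$ with $\{V_i>p_0\}$ (and analogously for $-i$ with $p_1$), so each treatment configuration pins down a rectangle in $(V_i,V_{-i})$ space whose lower or upper limits are the propensity scores. Because the marginals of $(V_i,V_{-i})$ are normalized to be uniform, their joint density is exactly the copula density, so integrals over these rectangles are weighted by $c_{V_i,V_{-i}}$.

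For the nesting inequalities, first I would use Assumptions~\ref{as:RA} and~\ref{as:er} to replace observed outcomes by potential outcomes on each configuration event and to strip out any direct dependence on the instruments. For the configuration $(D_i,D_{-i})=(d,d)$ with $d=1$ this yields
\begin{equation*}
    \mathbb{E}\big[\mathbbm{1}\{Y_i\in A_1,Y_{-i}\in A_2\}\,\mathbbm{1}\{D_i=1,D_{-i}=1\}\mid P_i=p_0,P_{-i}=p_1\big]
    =\int_0^{p_0}\!\!\int_0^{p_1}\pi_{11}(v_0,v_1)\,c_{V_i,V_{-i}}(v_0,v_1)\,dv_1\,dv_0,
\end{equation*}
where $\pi_{11}(v_0,v_1)\equiv\mathbb{P}\big(Y_i(1,1)\in A_1,Y_{-i}(1,1)\in A_2\mid V_i=v_0,V_{-i}=v_1\big)\ge 0$; the $d=0$ case is analogous with both integrals running from the thresholds up to $1$. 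Applying the fundamental theorem of calculus twice, the cross-partial $\partial^2/\partial p_0\partial p_1$ returns the integrand at $(p_0,p_1)$, namely $\pi_{dd}(p_0,p_1)\,c_{V_i,V_{-i}}(p_0,p_1)\ge 0$, since both factors are nonnegative. The essential bookkeeping is that when $D_i$ and $D_{-i}$ take the \emph{same} value, the two integration limits are either both lower limits (no sign change) or both upper limits (two sign changes), so the net sign is positive; for the mixed configuration $(D_i,D_{-i})=(d,1-d)$ exactly one limit is an upper limit, producing a single sign change, so the cross-partial equals $-\pi_{d,1-d}(p_0,p_1)\,c_{V_i,V_{-i}}(p_0,p_1)\le 0$, which is precisely the second stated inequality after negation.

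For index sufficiency the argument is more direct. Conditioning on $Z_i=z_0,Z_{-i}=z_1$, I would substitute $D_i=\mathbbm{1}\{V_i\le h_i(z_0,z_1)\}$ and $D_{-i}=\mathbbm{1}\{V_{-i}\le h_{-i}(z_0,z_1)\}$, invoke the exclusion restriction (Assumption~\ref{as:er}) to render the outcomes instrument-free potential outcomes on the relevant configuration event, and then use random assignment (Assumption~\ref{as:RA}) so that the expectation is taken purely over the instrument-independent law of $(V_i,V_{-i},U_i,U_{-i})$. The resulting expression depends on $(z_0,z_1)$ only through $h_i(z_0,z_1)$ and $h_{-i}(z_0,z_1)$, which Equation~\eqref{eq:ps_idea} identifies with $P_i(z_0,z_1)$ and $P_{-i}(z_0,z_1)$. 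Hence two instrument values producing the same $(p_0,p_1)$ produce identical conditional expectations, giving the claimed equality.

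The main obstacle is the sign accounting in the first part: one must verify that differentiating an integral with an \emph{upper}-bound limit $\big(\int_{p}^{1}\big)$ contributes a factor $-1$ while a \emph{lower}-bound limit $\big(\int_{0}^{p}\big)$ contributes $+1$, and that the parity of these factors across the two coordinates matches exactly the ``same'' versus ``mixed'' treatment configurations. Once this parity is established and the nonnegativity of $\pi$ and of the copula density is invoked, both inequalities are immediate; index sufficiency needs no delicate estimates beyond recognizing that the thresholds are the only channel through which instruments enter the conditional moments.
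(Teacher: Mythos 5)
Your proposal is correct and follows essentially the same route as the paper: the paper's own derivation (Appendix on testable implications, leaning on the proof of Theorem \ref{thm:id_mtr_disc_trt} and Corollary \ref{corr:id_copula_density}) likewise represents each conditional moment as an integral of the conditional potential-outcome probability weighted by the copula density over the rectangle in $(V_i,V_{-i})$ space pinned down by $(p_0,p_1)$, recovers the integrand by cross-differentiation with exactly your sign-parity bookkeeping (two lower or two upper limits for same-treatment configurations, one of each for mixed), and obtains index sufficiency because instruments enter the identified objects only through the thresholds $h_i = P_i$. The only cosmetic difference is that you inline the Leibniz-rule computation while the paper cites it from the earlier identification proof.
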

\begin{proof}
    See Appendix \ref{app:testable_implication} for the proof of the validity of these testable implications.
\end{proof}

Existing literature, such as \cite{carr2021testing}, has developed methods that could be implemented to test the conditions in Proposition \ref{corr:test_implication}. A formal application of these testing procedures to the present framework is left for future research.

\section{Estimation and Inference} \label{sec:est_inference}

\subsection{Semiparametric estimation and inference} \label{subsec:semiparametric}

\subsubsection{Estimation procedure} \label{subsec:estimation}

Consider a random sample of $G$ groups, each consisting of n units. For each group $g = 1, \ldots, G$, the observed data
\begin{equation*}
    \{Y_{0 g}, \ldots, Y_{(n-1) g}, D_{0 g}, \ldots, D_{(n-1) g}, Z_{0 g}, \ldots, Z_{(n-1) g}, X_{0 g}, \ldots, X_{(n-1) g}\}_{g = 1, \ldots, G}
\end{equation*} 
are independently and identically distributed across groups. 

This section develops a semiparametric estimation procedure that extends the framework of \cite{carneiro2009estimating} to settings with spillover effects in both treatment and outcome equations. The estimation section considers the covariate-augmented setting introduced in Appendix~\ref{appendix:id_covariates}. The model incorporating covariates can be expressed as
\begin{equation*}
    \left\{\begin{array}{l}
        \begin{aligned}
            Y_{ig} = & m_i(X_{ig}, X_{-ig}, D_{ig}, D_{-ig}, U_{ig}, U_{-ig}), \\
            D_{ig} = & \mathbbm{1}\big\{V_{ig} \leq h_i(X_{ig}, X_{-ig}, D_{ig}, D_{-ig})\big\},
        \end{aligned} \\
        \end{array}\right.
\end{equation*}
where outcomes depend on both own and peer covariates and treatments, and treatment decisions follow a threshold-crossing rule. It is assumed that covariates and instruments are randomly assigned at the group level,
\begin{equation*}
    \big(W_{ig}, W_{-ig}\big) \indep \bigg\{\big(V_{ig}, V_{-ig}, U_{ig}, U_{-ig}\big)\bigg\},
\end{equation*}
where $W_{ig} \equiv (X_{ig}, Z_{ig})$.

Assumption~\ref{as:est_assumption} is maintained throughout the estimation section.

\begin{assumption}(Estimation assumptions) \label{as:est_assumption}
    Assume that (i) $\mathbb{E}|Y_{ig}(d, d')| < \infty, d, d' \in \{0,1\}$. (ii) Propensity scores, $P_{ig}, i \in \{0,1\}$ are nondegenerate continuous random variable. (iii) The conditional expectations $\mathbb{E}[Y_{i d d' g} \mid X_g=\mathbf{x}, P_{ig}=p_0, P_{-ig}=p_1]$, $X_g \equiv (X_{ig}, X_{-ig})$, and $\mathbb{E}[D_{ig} D_{-ig} \mid P_{ig}=p_0, P_{-ig}=p_1]$ are assumed to be twice continuously differentiable with respect to $(p_0, p_1)$. (iv) $\partial^2 \mathbb{E}[D_{ig} D_{-ig} \mid P_{ig}=p_0, P_{-ig}=p_1] / \partial p_0 \partial p_1$ is bounded from above and away from zero.
\end{assumption}

To illustrate the estimation procedure, this section focuses on a simple case where each group consists of two units, i.e., $n = 2$ and $i \in \{0, 1\}$. The extension to group sizes $n > 2$ follows analogously. Building on the identification results with exogenous covariates presented in Appendix~\ref{appendix:id_covariates}, this section aims to estimate the marginal treatment response (MTR) functions, 
\begin{equation} \label{eq:estimand}
    \begin{aligned}
        & m_{ig}^{\left(\mathbf{x}, d, d^{\prime}\right)}\left(p_0, p_1\right)=\frac{\partial^2 \mathbb{E}\left[Y_{i d d' g} \mid X_{g} = \mathbf{x}, P_{0g}=p_0, P_{1g}=p_1\right]}{\partial p_0 \partial p_1} \Big/ \frac{\partial^2 \mathbb{E}\left[D_{0g} D_{1g} \mid P_{0g}=p_0, P_{1g}=p_1\right]}{\partial p_0 \partial p_1}, \\
        & Y_{i d d' g} \equiv Y_{ig} \mathbbm{1}\left\{D_{0g}=d, D_{1g}=d^{\prime}\right\} \big\{2\operatorname{sgn}\big(1 - |d-d^{\prime}|\big) - 1 \big\}, \\
        & i, d, d' \in\{0,1\}, g \in \{1, \cdots, G\},
    \end{aligned}
\end{equation}
where $\operatorname{sgn}(x)$ denotes the sign function that indicates the sign of a scalar $x$. The term ${2\operatorname{sgn}(1 - |d - d'|) - 1}$ evaluates to $1$ when $d = d'$ and to $-1$ when $d \neq d'$. The marginal controlled effects are then obtained by taking the difference between the estimated MTR functions.

The propensity score functions, $P_{0g} \equiv P_0(W_g)$ and $P_{1g} \equiv P_1(W_g)$, defined as $\mathbb{P}(D_{ig} = 1 \mid W_g)$ with $W_g \equiv (W_{0g}, W_{1g})$, are not directly observed and must be estimated from the data. The first stage of involves estimating the propensity score functions using a series regression approach. To alleviate the curse of dimensionality, a partially linear additive specification is employed:
\begin{equation} \label{eq:ps_approximate}
    \mathbb{P}\left(D_{i g}=1 \mid W_{g} = w\right) = \varphi_1\left(w_1^{cts}\right)+\cdots+\varphi_{\ell_1}\left(w_{\ell_1}^{cts}\right)+w_{1}^{disc} \vartheta_1 + \cdots + w_{\ell_2}^{disc} \vartheta_{\ell_2}.
\end{equation}
The covariate vector $w$ includes both continuous and discrete components, denoted by $w = (w^{cts}, w^{disc})$, where $w^{cts} = (w^{cts}_1, \cdots, w^{cts}_{\ell_1})$ is an $\ell_1-$dimensional vector of continuous random variables, and $w^{disc} = (w^{disc}_1, \cdots, w^{disc}_{\ell_2})$ is an $\ell_2-$dimensional vector of discrete random variables. 

To preserve flexibility, no parametric restrictions are imposed on the unknown smooth functions $\varphi_1, \ldots, \varphi_{\ell_1}$ associated with the continuous covariates, while the coefficients $\vartheta_1, \ldots, \vartheta_{\ell_2}$ on the discrete covariates remain to be estimated.

Series estimation relies on constructing a basis for smooth functions defined on $\mathbb{R}$, denoted as $\{p_k : k = 1, 2, \dots \}$, such that each continuous function $\varphi_{\ell}$, for $\ell = 1, \dots, \ell_1$, can be approximated arbitrarily well by a linear combination of these basis functions as $k \rightarrow \infty$. Commonly used basis functions include polynomial basis functions, splines, and wavelets. Given a positive integer $\kappa$, define the regressor vector
\begin{equation*}
    P_{\kappa}(w)=\left[p_1\left(w_1^{cts}\right), \ldots, p_\kappa\left(w_1^{cts}\right), \ldots, p_1\left(w_{\ell_1}^{cts}\right), \ldots, p_\kappa\left(w_{\ell_1}^{cts}\right), w_{1}^{disc}, \cdots, w_{\ell_2}^{disc}\right]^{\prime},
\end{equation*}
where the first $\kappa \times \ell_1$ components correspond to basis function expansions of the continuous covariates, and the remaining $\ell_2$ components include the discrete covariates in their original form.

The series estimator of the conditional probability $\mathbb{P}(D_{ig} = 1 \mid W_g)$, for $i \in \{0, 1\}$, is given by
\begin{equation*}
\tilde{P}_i\left(W_g\right)=P_\kappa\left(W_g\right)^{\prime} \hat{\theta}_{\kappa}^i, 
\end{equation*}
where $\hat{\theta}_{\kappa}^i$ is obtained from the least squares optimization problem
\begin{equation*}
    \hat{\theta}_{\kappa}^i = \arg \min_{\theta_\kappa^i \in \mathbb{R}^{\widetilde{\kappa}}} \frac{1}{G} \sum_{g = 1}^{G} \bigg(D_{ig} - P_\kappa(W_g)' \theta_\kappa^i \bigg)^2 ,
\end{equation*}
and $\tilde{\kappa} = \kappa \ell_1 + \ell_2$ denotes the total dimension of the regressor vector $P_\kappa(W_g)$.

\begin{remark}(Series with Lasso regression)
    To increase flexibility in selecting relevant basis terms, one may combine nonparametric series estimation with Lasso regression, which enables automatic selection and regularization of basis terms. The estimation errors of such estimators have been studied in \cite{bickel2009simultaneous} and other related works cited therein.
    Select a positive integer $\kappa$ such that $\kappa \gg g$. Then, the series estimator, $\tilde{P}_i\left(W_g\right)$, with $l_1$-penalization is derived as 
    \begin{equation*}
        \begin{aligned}
            &\hat{\theta}_{\kappa}^i = \arg \min_{\theta_\kappa^i \in \mathbb{R}^{\widetilde{\kappa}}} \frac{1}{G} \sum_{g = 1}^{G} \bigg(D_{ig} - P_\kappa(W_g)' \theta_\kappa^i \bigg)^2 + 2 \lambda \frac{1}{\widetilde{\kappa}} \sum_{j=1}^{\widetilde{\kappa}} \lVert p_j \rVert _G \lvert \theta_j^i \rvert, \\
            & \tilde{P}_i\left(W_g\right)=P_\kappa\left(W_g\right)^{\prime} \hat{\theta}_{\kappa}^i,
        \end{aligned}
    \end{equation*}
    where $\lambda > 0$ is the tuning constant, $p_j(\cdot)$ denotes the $j$-th component of the basis expansion $P_\kappa(\cdot)$, and $\lVert \cdot \rVert _G$ stands for the empirical norm, $\lVert p_j \rVert _G = \sqrt{1 / G \sum_{g=1}^G p_j^2(W_g)}$. In the estimation procedure, the tuning parameter $\lambda$ is selected using cross-validation.
\end{remark}

A finite-sample concern is that the estimated series approximation $\tilde{P}_i(W_g)$ may take values outside the admissible unit interval $[0,1]$. To address this, a standard trimming adjustment can be applied. The trimmed estimator is defined as
\begin{equation*}
    \begin{aligned}
        \hat{P}_i\left(W_g\right)= & \tilde{P}_i\left(W_g\right)+\left(1-\delta-\tilde{P}_i\left(W_g\right)\right) \mathbbm{1}\left\{\tilde{P}_i\left(W_g\right)>1\right\} \\
        & +\left(\delta-\tilde{P}_i\left(W_g\right)\right) \mathbbm{1}\left\{\tilde{P}_i\left(W_g\right)<0\right\},
    \end{aligned}
\end{equation*}
where $\delta > 0$ is a small constant chosen by the researcher. The resulting $\hat{P}_i(W_g)$ thus provides a feasible and bounded estimator of the propensity score $\mathbb{P}(D_{ig} = 1 \mid W_g)$, which is denoted compactly as $\hat{P}_{ig}$ in subsequent analysis.


The next step involves estimating the cross-partial derivatives appearing in both the numerator and denominator of the estimand in Equation~\eqref{eq:estimand}. The procedure begins with estimating the denominator of the estimand, namely the cross-partial derivative
\begin{equation*}
    \frac{\partial^2 \mathbb{E}[D_{0g} D_{1g} \mid P_{0g}=p_0, P_{1g}=p_1]}{\partial p_0 \partial p_1}.
\end{equation*}
Local polynomial regression provides a flexible and well-established approach for estimating such derivatives of conditional expectations. Following \citet{fan1996local}, the polynomial order is set to $p = d + 1$, where $d$ denotes the derivative order. Since the object of interest is a second-order derivative, a local cubic regression ($p = 3$) is employed:
\begin{equation} \label{eq:est_copula}
    \begin{aligned}
            \min _{b_0, \cdots, b_9} \sum_{g=1}^G & {\left[D_{0 g} D_{1 g}-b_0-b_1\left(\hat{P}_{0 g}-p_0\right)-\cdots b_4\left(\hat{P}_{0 g}-p_0\right)\left(\hat{P}_{1 g}-p_1\right)\right.} \\
            & \left.-\cdots-b_9\left(\hat{P}_{1 g}-p_1\right)^3\right]^2 K_{h_{G 1}}\left(\hat{P}_g-p\right) \\
        & K_{h_{G 1}}\left(\hat{P}_g-p\right) = K\left(\frac{\hat{P}_{0 g}-p_0}{h_{G 1}}\right) \times K\left(\frac{\hat{P}_{1 g}-p_1}{h_{G 1}}\right),
    \end{aligned}
\end{equation}
where $K(\cdot)$ denotes the kernel function and $h_{G1}$ is the chosen bandwidth parameter. Bandwidths for kernel-based regressions can be selected using $K$-fold cross-validation. The estimated coefficient $\hat{b}_4(p_0, p_1)$ from Equation \eqref{eq:est_copula} serves as an estimator of the cross-partial derivative $\partial^2 \mathbb{E}[D_{0g} D_{1g} \mid P_{0g}=p_0, P_{1g}=p_1] / \partial p_0 \partial p_1$.

The subsequent stage focuses on estimating the cross-partial derivative that appears in the numerator of the estimand, 
\begin{equation*}
    \frac{\partial^2 \mathbb{E}[Y_{i d d' g} \mid X_{g} = \mathbf{x},  P_{0g} = p_0, P_{1g} = p_1]}{\partial p_0 \partial p_1}.
\end{equation*}
Since the covariate vector $X_g$ may be multidimensional, the estimation adopts the semiparametric framework to mitigate the curse of dimensionality.
\begin{assumption}(Partial linear outcomes) \label{as:est_partial_linear}
    Potential outcomes satisfy a partially linear structure of the form
    \begin{equation*}
        Y_{i g}(\mathbf{x}, d, d')= \mathbf{x}'\beta_{idd'} + U_{i g}(d, d'),
    \end{equation*}
    where $\beta_{idd'}$ is a finite-dimensional parameter vector that may vary across units $i \in \{0, 1\}$ and treatment states $(d,d') \in \{0, 1\}^2$, and $U_{ig}(d,d')$ is an unrestricted nonparametric component capturing the remaining heterogeneity. 
    
    The potential outcome is generated according to
    \begin{equation*}
        Y_{i g}(\mathbf{x}, d, d') \equiv m_i(\mathbf{x}, d, d', U_{ig}, U_{-ig})
    \end{equation*}
    where the covariates are fixed at $X_g = \mathbf{x}$ and the treatment assignments are fixed at $(D_{ig}, D_{-ig}) = (d, d')$ exogenously.
\end{assumption}

Under this specification, the conditional expectation, and consequently the marginal treatment response (MTR) function, can be expressed as a semiparametric function of $(\mathbf{x}, p_0, p_1)$, separating the parametric effect of covariates from the nonparametric dependence on the propensity scores:
\begin{equation} \label{eq:semip_estimand}
    \begin{aligned}
        &\frac{\partial^2 \mathbb{E}\left[Y_{i d d' g} \mid X_{g} = \mathbf{x}, P_{0g}=p_0, P_{1g}=p_1\right]}{\partial p_0 \partial p_1} \bigg/ \frac{\partial^2 \mathbb{E}\left[D_{0 g} D_{1 g} \mid P_{0 g}=p_0, P_{1 g}=p_1\right]}{\partial p_0 \partial p_1} \\
        =& \mathbf{x}'\beta_{idd'} + \frac{\partial^2 \mathbb{E}\left[U_{i d d' g} \mid P_{0g}=p_0, P_{1g}=p_1\right]}{\partial p_0 \partial p_1} \bigg/ \frac{\partial^2 \mathbb{E}\left[D_{0 g} D_{1 g} \mid P_{0 g}=p_0, P_{1 g}=p_1\right]}{\partial p_0 \partial p_1},\\
        & U_{i d d' g} \equiv U_{ig}(d, d^{\prime}) \mathbbm{1}\left\{D_{0g}=d, D_{1g}=d^{\prime}\right\} \big\{2\operatorname{sgn}\big(1 - |d-d^{\prime}|\big) - 1 \big\},
    \end{aligned}
\end{equation}

The conditional expectation $\mathbb{E}\left[Y_{ig} \mid D_{0g} = d, D_{1g} = d', P_{0g}, P_{1g}\right]$ can be expressed as
\begin{equation} \label{eq:semip_coefficient}
    \begin{aligned}
        &\mathbb{E}[Y_{i g} \mid D_{0g} = d, D_{1g} = d', P_{0g}, P_{1g}] = \big(\mathbb{E}[X_g \mid D_{0g} = d, D_{1g} = d', P_{0g}, P_{1g}]\big)' \beta_{idd'} \\
        & \qquad \qquad \qquad \qquad \qquad \qquad \qquad \qquad+ \mathbb{E}[U_{i g} \mid D_{0g} = d, D_{1g} = d', P_{0g}, P_{1g}].
    \end{aligned}
\end{equation}
Therefore, conditional on the subsample with $\{D_{0g} = d, D_{1g} = d'\}$, the coefficient vector $\beta_{idd'}$ can be estimated by the least squares regression as
\begin{equation} \label{eq:semip_least_square}
    \begin{aligned}
        &\widehat{\beta}_{idd'}={\left[\sum_{g=1}^G \widetilde{X_{g}}\widetilde{X_{g}}^{\prime}\right]^{-1} } \times \left[\sum_{g=1}^G \widetilde{X_{g}}\Big\{Y_{i g}-\hat{E}_h\left[Y_{i g} \mid \hat{P}_0\left(W_g\right), \hat{P}_1\left(W_{g}\right)\right]\Big\}\right], \\
        & \widetilde{X_{g}} = X_{g} -\hat{E}_h\left[X_{g} \mid \hat{P}_0\left(W_{g}\right), \hat{P}_1\left(W_{g}\right)\right]
    \end{aligned}
\end{equation}
where $\hat{E}_h[\cdot \mid \cdot]$ represents a kernel regression estimator with selected bandwidth $h$. 

The residual then follows as
\begin{equation*}
    \widehat{U}_{i d d' g} = Y_{i d d' g} - X_{g}' \widehat{\beta}_{idd'} \big\{2\operatorname{sgn}\big(1 - |d-d^{\prime}|\big) - 1 \big\}
\end{equation*}
which serves as an estimator of the unobserved component $U_{i d d' g}$. 

In the last step, use the sample 
\begin{equation*}
    \left\{\left(\hat{U}_{i d d' g}, \hat{P}_0(W_g), \hat{P}_0(W_g)\right): g=1, \ldots, G\right\}
\end{equation*}
to estimate the cross-partial derivative $\partial^2 \mathbb{E}[U_{idd'g} \mid P_{0g}=p_0, P_{1g}=p_1] / \partial p_0 \partial p_1
$ through a local polynomial regression of order three,
\begin{equation*}
    \begin{aligned}
        \min _{c_0, \cdots, c_9} \sum_{g=1}^G & {\left[\hat{U}_{i d d g^{\prime}}-c_0-c_1\left(\hat{P}_{0 g}-p_0\right)-\cdots c_4\left(\hat{P}_{0 g}-p_0\right)\left(\hat{P}_{1 g}-p_1\right)\right.} \\
        & \left.-\cdots-c_9\left(\hat{P}_{1 g}-p_1\right)^3\right]^2 K_{h_{G 2}}\left(\hat{P}_g-p\right), \\
        & K_{h_{G 2}}\left(\hat{P}_g-p\right)=K\left(\frac{\hat{P}_{0 g}-p_0}{h_{G 2}}\right) \times K\left(\frac{\hat{P}_{1 g}-p_1}{h_{G 2}}\right).
    \end{aligned}
\end{equation*}
The resulting coefficien $\hat{c}_4(d, d'; p_0, p_1)$ consistently estimates $\partial^2 \mathbb{E}[U_{idd'g} \mid P_{0g}=p_0, P_{1g}=p_1] / \partial p_0 \partial p_1 $.

Finally, the marginal treatment response functions $m_{i g}^{(\mathbf{x}, d, d')}(p_0, p_1)$ are estimated as 
\begin{equation*} 
    \widehat{m}_{i g}^{(\mathbf{x}, d, d^{\prime})}(p_0, p_1) = \mathbf{x}' \hat{\beta}_{id d'} + \frac{\hat{c}_4(d, d'; p_0, p_1)}{\hat{b}_4(p_0, p_1)},
\end{equation*}
where $\widehat{c}_4(d, d'; p_0, p_1)$ and $\widehat{b}_4(p_0, p_1)$ are the local polynomial estimators of the cross-partial derivatives of $\mathbb{E}[U_{idd'g} \mid P_{0g}, P_{1g}]$ and $\mathbb{E}[D_{0g} D_{1g} \mid P_{0g}, P_{1g}]$, respectively.

The next section derives the asymptotic distribution of the estimated marginal treatment response functions, abstracting from the role of covariates to focus on the sampling behavior of the nonparametric components, 
\begin{equation} \label{eq:semip_estimator}
    \frac{\hat{c}_4(d, d'; p_0, p_1)}{\hat{b}_4(p_0, p_1)}.
\end{equation}

\subsubsection{Asymptotic Properties}

\subsubsection*{Uniform consistent rate of propensity score} \label{subsec:ps_consistent}
Cubic spline basis functions $\{p_k: k = 1, 2, \cdots\}$ are employed to approximate the nonparametric components of the propensity score functions. The following assumptions, adapted from \cite{belloni2015some}, provide the regularity conditions required to establish the uniform convergence rate of the series estimators for the propensity score functions.

\begin{assumption}(Series estimation) \label{as:series_est}
    (i) The eigenvalues of $\mathbb{E}[P_\kappa(w_g)P_\kappa(w_g)']$ are bounded above and away from zero uniformly over $G$. (ii) Each function $\varphi_{i} \in \mathcal{G}$ in Equation \eqref{eq:ps_approximate}, where $\mathcal{G}$ is a set of functions $f$ in Hölder classes with exponent $s$, $\Sigma_s(\mathcal{W})$, such that $\|f\|_s$ is bounded from above uniformly over $\mathcal{G}$. (iii) The support of $W^{cts}$ is known and is a Cartesian product of compact connected intervals on which $W^{cts}$ has a probability density function that is bounded away from zero.
\end{assumption}

\begin{lemma}(Uniform rate of propensity score)
    Under Assumptions \ref{as:est_assumption}-\ref{as:series_est}, we have 
    \begin{equation*}
        \max _{g=1, \ldots, G}\left|\hat{P}_i\left(W_g\right)-P_i\left(W_g\right)\right|=O_p\left[\sqrt{\frac{\kappa \log \kappa}{G}}+\kappa^{-s}\right],
    \end{equation*}
    where $\kappa \rightarrow \infty$ as $G \rightarrow \infty$, $\kappa^{m/(m-2)}\operatorname{log}\kappa / G = O(1)$ for any $m > 2$, and $\kappa^{2-2 s} / G = O(1)$.
\end{lemma}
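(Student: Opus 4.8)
The plan is to treat the estimation of each propensity score $P_i(W_g) = \mathbb{P}(D_{ig} = 1 \mid W_g)$ as a standard least-squares series regression and invoke the uniform convergence theory of \cite{belloni2015some}, from whose primitives Assumption~\ref{as:series_est} is adapted. First I would decompose the untrimmed error at any point $w$ in the support as
\[
\tilde{P}_i(w) - P_i(w) = P_\kappa(w)'\big(\hat{\theta}_\kappa^i - \theta_\kappa^{i,*}\big) + \big(P_\kappa(w)'\theta_\kappa^{i,*} - P_i(w)\big),
\]
where $\theta_\kappa^{i,*}$ is the population least-squares projection coefficient. The second term is the deterministic approximation error: under the H\"older smoothness in Assumption~\ref{as:series_est}(ii), the additive structure of Equation~\eqref{eq:ps_approximate} (which keeps the total basis dimension of order $\kappa$ since $\ell_1,\ell_2$ are fixed), and the boundedness of the sup-norm Lebesgue constants of cubic splines, this term is $O(\kappa^{-s})$ uniformly over the compact support guaranteed by Assumption~\ref{as:series_est}(iii). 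This delivers the $\kappa^{-s}$ contribution to the stated rate.

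The first, stochastic term is the main object. Writing $\hat{\theta}_\kappa^i - \theta_\kappa^{i,*} = \hat{Q}^{-1}\,\frac{1}{G}\sum_g P_\kappa(W_g)\varepsilon_{ig}$ with $\hat{Q} = \frac{1}{G}\sum_g P_\kappa(W_g)P_\kappa(W_g)'$ and $\varepsilon_{ig} = D_{ig} - P_\kappa(W_g)'\theta_\kappa^{i,*}$, I would (i) show that $\hat{Q}$ concentrates around $Q = \mathbb{E}[P_\kappa(W)P_\kappa(W)']$ so that $\|\hat{Q}^{-1}\|$ is bounded with probability approaching one, using a matrix concentration argument whose validity requires the restriction $\kappa^{m/(m-2)}\log\kappa/G = O(1)$ (a moment-truncation bound with $m$ finite moments), and (ii) control the score $\frac{1}{G}\sum_g P_\kappa(W_g)\varepsilon_{ig}$ through a maximal inequality (Bernstein combined with a union bound over the $\kappa$ coordinates, or the self-normalized deviation bounds of \cite{belloni2015some}). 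Combining these with the spline envelope $\xi_\kappa \equiv \sup_w\|P_\kappa(w)\| = O(\sqrt{\kappa})$ yields
\[
\sup_w \big|P_\kappa(w)'(\hat{\theta}_\kappa^i - \theta_\kappa^{i,*})\big| = O_p\!\left(\sqrt{\tfrac{\xi_\kappa^2 \log\kappa}{G}}\right) = O_p\!\left(\sqrt{\tfrac{\kappa\log\kappa}{G}}\right),
\]
the variance contribution, in which the $\log\kappa$ factor is precisely the cost of passing from a pointwise to a uniform (maximal) statement. Because the maximum over the in-sample points $\{W_g\}$ is dominated by this supremum over the support, the claimed bound for $\max_g|\tilde{P}_i(W_g) - P_i(W_g)|$ follows.

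Finally I would argue that the trimming is innocuous. Since the nondegeneracy in Assumption~\ref{as:est_assumption}(ii) keeps $P_i$ bounded away from $0$ and $1$, the event $\{\tilde{P}_i(W_g)\notin[0,1]\}$ can occur only where the total error already exceeds the boundary gap, an event of vanishing probability once the overall rate tends to zero; the condition $\kappa^{2-2s}/G = O(1)$ enters here, controlling the interaction between the approximation error and the basis envelope in the stochastic term and thereby balancing bias against variance. Hence $\max_g|\hat{P}_i(W_g) - \tilde{P}_i(W_g)| = o_p(1)$ at the same rate, and $\hat{P}_i$ inherits the bound. I expect the main obstacle to be steps (i)--(ii): the Gram-matrix concentration and the maximal inequality are exactly where the dimension $\kappa$, the envelope $\xi_\kappa$, and the $\log\kappa$ penalty interact under only the stated moment and rate conditions. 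Because Assumption~\ref{as:series_est} is calibrated to the primitive conditions of \cite{belloni2015some}, the cleanest route is to verify those primitives and cite their uniform-rate theorem rather than reprove the underlying concentration bounds.
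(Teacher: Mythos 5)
Your proposal is correct and takes essentially the same route as the paper: the paper states this lemma without an explicit proof, relying directly on the uniform-rate theory of \citet{belloni2015some}, whose primitive conditions Assumption~\ref{as:series_est} is explicitly adapted from --- precisely the citation-based route you converge on at the end. Your sketch (approximation error $O(\kappa^{-s})$ from the H\"older class, Gram-matrix concentration plus a maximal inequality with the spline envelope $\xi_\kappa = O(\sqrt{\kappa})$ giving the $\sqrt{\kappa\log\kappa/G}$ variance term, and negligibility of the trimming step) is the standard argument underlying that theorem.
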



\subsubsection*{Asymptotic properties of cross-derivative estimators} \label{subsec:asymp_cross_deriv}
As shown in Equation~\eqref{eq:semip_estimator}, the proposed estimator is expressed as the ratio of two estimated cross-partial derivatives of conditional mean functions. This subsection derives the asymptotic properties these two cross-derivative estimators under the semiparametric estimation procedure described in Section~\ref{subsec:estimation}.

To derive the asymptotic properties of the cross-partial derivative estimators, impose the following assumption:
\begin{assumption}(Local polynomial regression) \label{as:local_polynomial}
    (i) $\mathbb{E}[D_{0 g} D_{1 g} \mid P_{0 g}=p_0, P_{1 g}=p_1]$ and $\mathbb{E}[U_{i d d^{\prime} g} \mid P_{0 g}=p_0, P_{1 g}=p_1]$ are $(p+1)$-time continuously differentiable, $p \geq 3$. (ii) The conditional distributions of $D_{0 g} D_{1 g} \mid P_{0 g}, P_{1 g}$ and $U_{i d d^{\prime} g} \mid P_{0 g}, P_{1 g}$ are continuous at the point $(p_0, p_1)$. (iii) The kernel $K \in L_1$ is bounded with compact support, and $\|u\|^{4 p} K(u) \in L_1$, $\|u\|^{4 p + 2} K(u) \rightarrow 0$ as $\|u\| \rightarrow \infty$. 
\end{assumption}
This assumption ensures sufficient smoothness of the underlying conditional mean functions and regularity of the kernel function, which together guarantee the validity of local polynomial approximations.

\begin{lemma}(Convergence rates of cross-derivative estimators)
    Under Assumptions \ref{as:est_assumption}-\ref{as:local_polynomial}, the convergence rates of estimators $\hat{b}_4(p_0, p_1)$ and $\hat{c}_4(d, d'; p_0, p_1)$, as calculated in Section \ref{subsec:estimation}, can be derived as 
    \begin{equation*}
        \begin{aligned}
            &\hat{b}_4(p_0, p_1) - \frac{\partial^2}{\partial p_0 \partial p_1} \mathbb{E}\big[D_{0g}D_{1g} \mid P_{0g} = p_0, P_{1g} = p_1\big] \\
            =& O_P\Big[(G h_{G1}^6)^{-1/2} + \max _{g: 1 \leq g \leq G}|\hat{P}_{0g}-P_{0g}| + \max _{g: 1 \leq g \leq G}|\hat{P}_{1g}-P_{1g}| + h_{G1}^4\Big], \\
            &\hat{c}_4(d, d'; p_0, p_1) - \frac{\partial^2}{\partial p_0 \partial p_1} \mathbb{E}\big[U_{i d d' g}  \mid P_{0g} = p_0, P_{1g} = p_1\big] \\
            =& O_P\Big[(G h_{G2}^6)^{-1/2} + \max _{g: 1 \leq g \leq G}|\hat{P}_{0g}-P_{0g}| + \max _{g: 1 \leq g \leq G}|\hat{P}_{1g}-P_{1g}| + h_{G2}^4\Big],
        \end{aligned}
    \end{equation*}
    where $h_{G1}, h_{G2}$ are the bandwidths selected to estimate $\hat{b}_4(p_0, p_1)$ and $\hat{c}_4(d, d'; p_0, p_1)$.
\end{lemma}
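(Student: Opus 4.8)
The plan is to establish both rates with a single template, since $\hat{b}_4(p_0,p_1)$ and $\hat{c}_4(d,d';p_0,p_1)$ are local cubic estimators of the same mixed second-order partial derivative and differ only in their (possibly generated) dependent variable. For each, I would split the total error into a \emph{generated-regressor} term and an \emph{oracle} term. Writing $\hat{b}_4[\hat{P}]$ for the estimator actually computed from the fitted scores $(\hat{P}_{0g},\hat{P}_{1g})$ and $\hat{b}_4[P]$ for the infeasible estimator that plugs in the true scores $(P_{0g},P_{1g})$, I would decompose
\begin{equation*}
    \hat{b}_4[\hat{P}] - \tau = \big(\hat{b}_4[\hat{P}] - \hat{b}_4[P]\big) + \big(\hat{b}_4[P] - \tau\big),
    \qquad \tau \equiv \frac{\partial^2}{\partial p_0 \partial p_1}\mathbb{E}\big[D_{0g}D_{1g}\mid P_{0g}=p_0,P_{1g}=p_1\big],
\end{equation*}
and bound the two bracketed terms separately. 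The identical split applies to $\hat{c}_4$ with the corresponding conditional expectation of $U_{idd'g}$ as estimand.

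First I would control the oracle term $\hat{b}_4[P]-\tau$ with standard multivariate local polynomial theory, following \citet{fan1996local} and its bivariate analogues. Because the regressors are two-dimensional, the target is a mixed derivative of total order two, and the local fit is a complete bivariate cubic, the conditional variance of the coefficient on $(P_{0g}-p_0)(P_{1g}-p_1)$ is of order $(G h_{G1}^{\,2+2\cdot 2})^{-1}=(G h_{G1}^{6})^{-1}$, producing the stochastic term $O_P\big((G h_{G1}^{6})^{-1/2}\big)$. Here Assumption~\ref{as:est_assumption}(iv) (the denominator cross-derivative bounded away from zero) together with the kernel integrability and compact-support conditions in Assumption~\ref{as:local_polynomial}(iii) ensure the design moment matrix is invertible with a stochastically bounded inverse. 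The deterministic approximation bias, governed by the $(p+1)$-fold continuous differentiability in Assumption~\ref{as:local_polynomial}(i) and the symmetry of the kernel, contributes the $O(h_{G1}^{4})$ term, so the oracle part accounts for the first and last summands in the stated rate.

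The main obstacle is the generated-regressor term $\hat{b}_4[\hat{P}]-\hat{b}_4[P]$, which must be shown to be $O_P\big(\max_g|\hat{P}_{0g}-P_{0g}| + \max_g|\hat{P}_{1g}-P_{1g}|\big)$. The difficulty is that $\hat{P}$ enters both the design points and the kernel weights $K\big((\hat{P}_{0g}-p_0)/h_{G1}\big)K\big((\hat{P}_{1g}-p_1)/h_{G1}\big)$, so a naive first-order expansion of the kernel generates factors $K'(\cdot)(\hat{P}_{ig}-P_{ig})/h_{G1}$ carrying an $h_{G1}^{-1}$ amplification. My plan is a first-order (mean-value) perturbation of the weighted-least-squares solution in the vector of fitted scores, using the smoothness and compact support of $K$ to make this map Lipschitz, and then to show that the $h_{G1}^{-1}$ factors are offset by the self-normalization of the local-polynomial weights and by averaging over the $O(G h_{G1}^{2})$ effective observations, leaving a leading sensitivity proportional to $\max_g|\hat{P}_{ig}-P_{ig}|$ with constant controlled by the bounded higher derivatives of the regression function in Assumption~\ref{as:local_polynomial}(i). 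Intuitively, since the mixed derivative is at least Lipschitz under that smoothness, displacing each design point by at most $\max_g|\hat{P}_{ig}-P_{ig}|$ shifts the estimated cross-derivative by a term of the same order; the uniform rate for $\max_g|\hat{P}_{ig}-P_{ig}|$ from the preceding lemma then yields the explicit bound.

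For $\hat{c}_4$ the same two-step argument applies, with one additional layer: its dependent variable $\hat{U}_{idd'g}=Y_{idd'g}-X_g'\hat{\beta}_{idd'}\{2\operatorname{sgn}(1-|d-d'|)-1\}$ is itself generated through the partially linear estimator $\hat{\beta}_{idd'}$ in \eqref{eq:semip_least_square}. I would therefore add a third component capturing the replacement of $U_{idd'g}$ by $\hat{U}_{idd'g}$; this component is linear in $\hat{\beta}_{idd'}-\beta_{idd'}$ with coefficient equal to the local-polynomial-weighted average of $X_g$, and under the $\sqrt{G}$-consistency of the Robinson-type partially linear estimator it is $o_P$ of the terms already present and hence absorbed into the stated rate. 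Collecting the oracle variance $(G h_{G2}^{6})^{-1/2}$, the oracle bias $O(h_{G2}^{4})$, and the generated-regressor term then yields the second display. The delicate step throughout is the compensation of the $h_{G}^{-1}$ kernel-derivative factors in the generated-regressor expansion, which is where I expect the bulk of the technical work to lie.
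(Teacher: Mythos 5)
Your two-term split $\hat b_4[\hat P]-\tau=\big(\hat b_4[\hat P]-\hat b_4[P]\big)+\big(\hat b_4[P]-\tau\big)$ is not the paper's decomposition, and the step your whole argument rests on is exactly the one you leave unproven. You need $\hat b_4[\hat P]-\hat b_4[P]=O_P\big(\max_g|\hat P_{0g}-P_{0g}|+\max_g|\hat P_{1g}-P_{1g}|\big)$, but this comparison of two weighted-least-squares solutions forces you to perturb the kernel weights as well as the design points, and for an estimator of a \emph{second} derivative the sensitivity to such perturbations carries inverse-bandwidth factors that your heuristics do not remove. Concretely, the effective weights $w_g$ satisfying $\hat b_4=\sum_g w_g Y_g$ sum in absolute value to order $h_{G1}^{-2}$, so even the noiseless (systematic) part of $\hat b_4[\hat P]-\hat b_4[P]$ is, under the crude Lipschitz bound you describe, $O_P\big(\max_g|\hat P_g-P_g|\,h_{G1}^{-2}\big)$, not $O_P\big(\max_g|\hat P_g-P_g|\big)$; the "displace the design points by $\delta$, the estimate moves by $O(\delta)$" intuition is a statement about the target function $\tau$, not about the estimator map. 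Worse, the noise component $\sum_g\big[w_g(\hat P)-w_g(P)\big]\epsilon_g$ cannot be handled by averaging arguments at all, because the first-stage errors $\hat P_g-P_g$ are dependent across $g$ (they come from a common series fit), so no termwise LLN or self-normalization applies; making this term small requires an equicontinuity/variance analysis in the spirit of the generated-covariates literature, which is a substantial piece of work you have only gestured at. (A secondary inconsistency: the "standard theory" you cite gives an oracle bias of order $h^{p+1-|\nu|}=h_{G1}^{2}$ for a local cubic fit of a total-order-two derivative, since $p-|\nu|=1$ is odd; the $h_{G1}^{4}$ in the lemma comes from the paper's own Taylor-remainder accounting, not from Fan--Gijbels-type bias formulas.)

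The paper's proof is organized precisely so that this obstacle never arises. It writes $\hat b_4=e_5'\big(\widehat{\mathbf X}_P'\widehat{\mathbf W}_{h_{G1}}\widehat{\mathbf X}_P\big)^{-1}\widehat{\mathbf X}_P'\widehat{\mathbf W}_{h_{G1}}\mathbf D$ and splits $\mathbf D=(\mathbf D-\mathbf D^*)+\mathbf D^*$, where $\mathbf D^*$ stacks the true conditional means $\mathbb E[D_{0g}D_{1g}\mid P_{0g},P_{1g}]$; it then differences $\mathbf D^*$ against $\widehat{\mathbf X}_P\mathbf D_*$, i.e.\ against the third-order Taylor polynomial written in the \emph{fitted} scores and carried by the \emph{same} feasible weights. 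Taylor-expanding $\mathbb E[D_{0g}D_{1g}\mid P_{0g},P_{1g}]$ around $(p_0,p_1)$ and decomposing $(P_{ig}-p_i)=(P_{ig}-\hat P_{ig})+(\hat P_{ig}-p_i)$ makes the first-stage error enter only through polynomial terms multiplied by bounded derivatives of the regression function, so no kernel derivative and no $h^{-1}$ amplification ever appears; the remaining stochastic term $\big(\widehat{\mathbf X}_P'\widehat{\mathbf W}_{h_{G1}}\widehat{\mathbf X}_P\big)^{-1}\widehat{\mathbf X}_P'\widehat{\mathbf W}_{h_{G1}}(\mathbf D-\mathbf D^*)$ is related to its true-score counterpart via uniform consistency of $\hat P$ and handled by \cite{masry1996multivariate}, yielding the $(Gh_{G1}^{6})^{-1/2}$ rate. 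To repair your proof you would either have to carry out the kernel-perturbation analysis rigorously or adopt the paper's algebraic split. Your treatment of the extra error in $\hat c_4$ from $\hat\beta_{idd'}$ (linear in $\hat\beta_{idd'}-\beta_{idd'}$ and absorbed by its $\sqrt G$-consistency) does match the paper and is fine.
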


\begin{proof}
    See Appendix \ref{app:cvg_rate_cross_deriv}.
\end{proof}

\subsubsection*{Asymptotic distribution of the marginal treatment response}
This section characterizes the asymptotic distribution of the marginal treatment response functions, abstracting from covariate effects. The estimator, defined in Equation~\eqref{eq:semip_estimator}, is constructed as the ratio of two estimated cross-partial derivatives of conditional mean functions. To establish the asymptotic properties of this estimator, the following assumptions are imposed.

\begin{assumption}(Asymptotic distribution) \label{as:asym_dist_ratio}
    (i) $\max _{g=1, \ldots, G}\big|\hat{P}_i(Z_{0 g}, Z_{1 g})-P_i(Z_{0 g}, Z_{1 g})\big|=o_p\big[(G h_{G 1}^6)^{-1 / 2}\big]$. (ii) $h_{G 1}, h_{G 2} \rightarrow 0, G h_{G 1}^6, G h_{G 2}^6 \rightarrow \infty$ as $G \rightarrow \infty$, $h_{G 2}=o(h_{G 1})$, $h_{G 1}, h_{G 2}=o(G^{-1 / 10})$.
\end{assumption}

\begin{theorem}(Asymptotic distributions of the ratio estimator) \label{thm:asymp_dist_ratio}
    Under Assumptions \ref{as:est_assumption}-\ref{as:asym_dist_ratio}, the asymptotic distribution of $\hat{c}_4(d, d^{\prime} ; p_0, p_1) / \hat{c}_4(d, d^{\prime} ; p_0, p_1)$, $d, d' \in \{0,1\}$, can be characterized as 
    \begin{equation*}
        \begin{aligned}
            & \big(G h_{G2}^6\big)^{1/2}\Bigg\{\frac{\hat{c}_4(d, d'; p_0, p_1)}{\hat{b}_4(p_0,p_1)} - \frac{c_4(d, d'; p_0, p_1)}{b_4(p_0,p_1)} \Bigg\} \\
            \xrightarrow{d}& N\Bigg(0, \frac{\sigma^2(p_0, p_1)}{\big(b_4(p_0, p_1)\big)^2 f(p_0, p_1)} \big(M^{-1}\Gamma M^{-1}\big)_{5,5}\Bigg),
        \end{aligned}
    \end{equation*}
    where $\sigma^2(d, d'; p_0, p_1) = \text{Var}(U_{i d d' g} \mid P_{0g} = p_0, P_{1g} = p_1)$, $f(p_0, p_1)$ denotes the density of $(P_{0g}, P_{1g})$ evaluated at the point $(p_0, p_1)$, and $A_{5,5}$ denotes the element located in the fifth row and fifth column of a matrix $A$. The definitions of the matrices $M$ and $\Gamma$ are presented in the Appendix \ref{app:asymp_dist_ratio}.
\end{theorem}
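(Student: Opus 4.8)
The plan is to treat $\hat c_4/\hat b_4$ as a smooth ratio of two local-cubic estimators and to reduce the entire statement to a single central limit theorem for the numerator estimator $\hat c_4$, showing that the denominator's sampling error and all first-stage estimation errors are asymptotically negligible once we rescale by $(Gh_{G2}^6)^{1/2}$. First I would linearize the ratio around the population quantities,
\begin{equation*}
\frac{\hat c_4}{\hat b_4}-\frac{c_4}{b_4}=\frac{1}{\hat b_4}\big(\hat c_4-c_4\big)-\frac{c_4}{b_4\,\hat b_4}\big(\hat b_4-b_4\big),
\end{equation*}
suppressing the arguments $(d,d';p_0,p_1)$. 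Since the cross-derivative rate lemma gives $\hat b_4\pto b_4$, it suffices to (a) prove $(Gh_{G2}^6)^{1/2}(\hat c_4-c_4)\xrightarrow{d}N(0,V)$ for a suitable $V$, and (b) show $(Gh_{G2}^6)^{1/2}(\hat b_4-b_4)=o_P(1)$. For (b), the rate lemma bounds $\hat b_4-b_4$ by $O_P[(Gh_{G1}^6)^{-1/2}+\max_g|\hat P_{0g}-P_{0g}|+\max_g|\hat P_{1g}-P_{1g}|+h_{G1}^4]$; multiplying by $(Gh_{G2}^6)^{1/2}$ and invoking $h_{G2}=o(h_{G1})$ (so that $(h_{G2}/h_{G1})^3\to0$), Assumption~\ref{as:asym_dist_ratio}(i), and the undersmoothing rates $h_{G1},h_{G2}=o(G^{-1/10})$ drives each term to zero. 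Hence the denominator contributes only the Slutsky factor $1/b_4$, and the limit of the ratio is $N(0,V)/b_4=N(0,V/b_4^2)$.

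Second, for the core statement (a) I would first analyze the \emph{oracle} estimator $\tilde c_4$ obtained by running the same local-cubic regression on the true regressors $(P_{0g},P_{1g})$ and the true dependent variable $U_{idd'g}$. Standard bivariate local-polynomial theory yields the sandwich representation in which the rescaled stacked coefficient vector satisfies a central limit theorem with asymptotic variance $\tfrac{\sigma^2(d,d';p_0,p_1)}{f(p_0,p_1)}M^{-1}\Gamma M^{-1}$, where $M$ and $\Gamma$ are the kernel-moment matrices of the cubic design defined in Appendix~\ref{app:asymp_dist_ratio}, $\sigma^2(d,d';p_0,p_1)=\mathrm{Var}(U_{idd'g}\mid P_{0g}=p_0,P_{1g}=p_1)$, and $f$ is the density of $(P_{0g},P_{1g})$. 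Because the coefficient multiplying the cross term $(P_{0g}-p_0)(P_{1g}-p_1)$ is the fifth entry of the stacked vector and equals exactly the mixed partial $c_4$, the relevant scalar limiting variance is the $(5,5)$ entry, giving $V=\tfrac{\sigma^2(d,d';p_0,p_1)}{f(p_0,p_1)}(M^{-1}\Gamma M^{-1})_{5,5}$; combined with the factor $1/b_4^2$ from the first step this reproduces the variance in the statement. The rate $(Gh_{G2}^6)^{1/2}$ arises because a mixed derivative of total order two estimated with a product kernel has variance of order $(Gh^6)^{-1}$, while the undersmoothing condition $h_{G2}=o(G^{-1/10})$ (hence $Gh_{G2}^{14}\to0$) removes the $O(h_{G2}^4)$ bias after rescaling.

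The main obstacle, and the step I expect to require the most care, is passing from the oracle estimator $\tilde c_4$ to the feasible estimator $\hat c_4$, which is built from two generated objects: the estimated propensity scores $\hat P_{ig}$ entering as generated regressors, and the residuals $\hat U_{idd'g}=Y_{idd'g}-X_g'\hat\beta_{idd'}\{2\operatorname{sgn}(1-|d-d'|)-1\}$ entering as a generated dependent variable through $\hat\beta_{idd'}$. I would show that substituting $(\hat P_{0g},\hat P_{1g},\hat U_{idd'g})$ for $(P_{0g},P_{1g},U_{idd'g})$ perturbs the local-cubic normal equations only by terms that are $o_P[(Gh_{G2}^6)^{-1/2}]$. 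For the generated regressors this follows from a mean-value expansion of the kernel weights in $(\hat P-P)$ together with Assumption~\ref{as:asym_dist_ratio}(i), which gives $\max_g|\hat P_{ig}-P_{ig}|=o_P[(Gh_{G1}^6)^{-1/2}]=o_P[(Gh_{G2}^6)^{-1/2}]$. For the generated dependent variable it follows because $\hat\beta_{idd'}$, estimated by the partially linear least squares in Equation~\eqref{eq:semip_least_square}, is $\sqrt{G}$-consistent, and $\sqrt{G}$ dominates $(Gh_{G2}^6)^{1/2}$ since $h_{G2}^6\to0$; thus the induced perturbation $X_g'(\hat\beta_{idd'}-\beta_{idd'})$ contributes only $O_P(G^{-1/2})$ to each averaged normal equation. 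Establishing these negligibility bounds uniformly — in particular controlling the interplay between the kernel localization at scale $h_{G2}$ and the first-stage error — is the technical crux; once it is in place, combining the oracle central limit theorem of the second step with the Slutsky reduction of the first step delivers the asserted asymptotic distribution.
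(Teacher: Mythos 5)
Your proposal is correct and follows essentially the same route as the paper's proof: an exact ratio linearization (the paper's add-and-subtract decomposition is algebraically the same), a Masry-type CLT for the local-cubic cross-derivative estimator $\hat{c}_4$ giving the variance $\sigma^2(d,d';p_0,p_1)\big(M^{-1}\Gamma M^{-1}\big)_{5,5}/f(p_0,p_1)$, Slutsky with $1/\hat{b}_4 \pto 1/b_4$, and negligibility of the $\hat{b}_4-b_4$ contribution and of the first-stage errors at scale $(Gh_{G2}^6)^{1/2}$ via $h_{G2}=o(h_{G1})$, Assumption~\ref{as:asym_dist_ratio}(i), the $o(G^{-1/10})$ bandwidth rates, and the $\sqrt{G}$-consistency of $\hat{\beta}_{idd'}$. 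Your explicit oracle-versus-feasible decomposition for $\hat{c}_4$ is just a cleaner repackaging of what the paper distributes between its rate lemma (Appendix~\ref{app:cvg_rate_cross_deriv}) and the distribution proof, so no substantive difference remains.
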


\begin{proof}
    See Appendix \ref{app:asymp_dist_ratio}.
\end{proof}

Finally, the asymptotic distributions of the MCSEs and MCDEs are derived. Their estimators are constructed using the estimated marginal treatment response functions:
\begin{equation*}
    \begin{aligned}
        &\widehat{\text{MCSE}}_i(\mathbf{x}, d; p_0, p_1) = \widehat{m}_{i g}^{(\mathbf{x}, d, 1)}(p_0, p_1) - \widehat{m}_{i g}^{(\mathbf{x}, d, 0)}(p_0, p_1), d \in \{0,1\}, \\
        &\widehat{\text{MCDE}}_i(\mathbf{x}, d; p_0, p_1) = \widehat{m}_{i g}^{(\mathbf{x}, 1, d)}(p_0, p_1) - \widehat{m}_{i g}^{(\mathbf{x}, 0, d)}(p_0, p_1), d \in \{0,1\}. 
    \end{aligned}
\end{equation*}

Assuming that the differences $\hat{c}_4(d, d'; p_0, p_1) / \hat{b}_4(p_0,p_1) - c_4(d, d'; p_0, p_1) / b_4(p_0,p_1)$ are asymptotically independent across different values of $d, d' \in \{0,1\}$, the asymptotic distributions of $\widehat{\text{MCSE}}_i(\mathbf{x}, d; p_0, p_1)$ and $\widehat{\text{MCDE}}_i(\mathbf{x}, d; p_0, p_1)$ follow in Theorem \ref{thm:asymp_dist_ratio}. 

\begin{corollary}
    Suppose that $\big(\hat{c}_4(d, d'; p_0, p_1) / \hat{b}_4(p_0,p_1) - c_4(d, d'; p_0, p_1) / b_4(p_0,p_1) \big)$ are asymptotically independent across different values of $d, d' \in \{0,1\}$, and that Assumptions \ref{as:est_assumption}–\ref{as:asym_dist_ratio} are satisfied. Then, the asymptotic distributions of $\widehat{\text{MCSE}}_i(\mathbf{x}, d; p_0, p_1)$ and $\widehat{\text{MCDE}}_i(\mathbf{x}, d; p_0, p_1)$ can be characterized as  
    \begin{equation*}
        \begin{aligned}
            &\big(G h_{G2}^6\big)^{1/2}\Bigg\{\widehat{\text{MCSE}}(\mathbf{x}, d; p_0, p_1) - \text{MCSE}(\mathbf{x}, d; p_0, p_1)\Bigg\} \\
        &\xrightarrow{d} N\Bigg(0, \frac{\sigma^2(1, d; p_0, p_1) + \sigma^2(0, d; p_0, p_1)}{\big(b_4(p_0, p_1)\big)^2 f(p_0, p_1)} \big(M^{-1}\Gamma M^{-1}\big)_{5,5}\Bigg), \\
        &\big(G h_{G2}^6\big)^{1/2}\Bigg\{\widehat{\text{MCDE}}(\mathbf{x}, d; p_0, p_1) - \text{MCDE}(\mathbf{x}, d; p_0, p_1)\Bigg\} \\
        &\xrightarrow{d} N\Bigg(0, \frac{\sigma^2(d, 1; p_0, p_1) + \sigma^2(d, 0; p_0, p_1)}{\big(b_4(p_0, p_1)\big)^2 f(p_0, p_1)} \big(M^{-1}\Gamma M^{-1}\big)_{5,5}\Bigg)
        \end{aligned}
    \end{equation*}
\end{corollary}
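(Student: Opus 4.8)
The plan is to reduce the statement to Theorem~\ref{thm:asymp_dist_ratio} by exploiting the additive structure of the MTR estimator together with the assumed asymptotic independence across treatment cells. Recall that $\widehat{m}_{ig}^{(\mathbf{x},d,d')}(p_0,p_1) = \mathbf{x}'\hat\beta_{idd'} + \hat c_4(d,d';p_0,p_1)/\hat b_4(p_0,p_1)$, so that
\begin{equation*}
\widehat{\text{MCSE}}_i(\mathbf{x},d;p_0,p_1) = \mathbf{x}'(\hat\beta_{id1}-\hat\beta_{id0}) + \left[\frac{\hat c_4(d,1;p_0,p_1)}{\hat b_4(p_0,p_1)} - \frac{\hat c_4(d,0;p_0,p_1)}{\hat b_4(p_0,p_1)}\right],
\end{equation*}
and analogously $\widehat{\text{MCDE}}_i(\mathbf{x},d;p_0,p_1)$ splits into a covariate term $\mathbf{x}'(\hat\beta_{i1d}-\hat\beta_{i0d})$ plus a difference of two ratio estimators sharing the denominator $\hat b_4$. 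I would treat MCSE and MCDE symmetrically and present the MCSE case.

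First I would show that the parametric component is asymptotically negligible at the nonparametric scaling $(Gh_{G2}^6)^{1/2}$. Under the partially linear specification of Assumption~\ref{as:est_partial_linear} and the partialling-out estimator in Equation~\eqref{eq:semip_least_square}, the coefficient estimators are $\sqrt{G}$-consistent (a Robinson-type rate), so $\mathbf{x}'(\hat\beta_{id1}-\hat\beta_{id0}) - \mathbf{x}'(\beta_{id1}-\beta_{id0}) = O_P(G^{-1/2})$. Since $(Gh_{G2}^6)^{1/2} G^{-1/2} = h_{G2}^3 \to 0$ under Assumption~\ref{as:asym_dist_ratio}(ii), multiplying this parametric error by the scaling factor yields $o_P(1)$. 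Hence the scaled, centered $\widehat{\text{MCSE}}$ is asymptotically equivalent to $(Gh_{G2}^6)^{1/2}$ times the difference of the two centered ratios $\hat c_4(d,d')/\hat b_4 - c_4(d,d')/b_4$.

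Next I would invoke Theorem~\ref{thm:asymp_dist_ratio}, which delivers marginal asymptotic normality for each centered ratio with variance $\sigma^2(d,d';p_0,p_1)\big/\!\left[(b_4(p_0,p_1))^2 f(p_0,p_1)\right](M^{-1}\Gamma M^{-1})_{5,5}$; that theorem already shows the shared denominator $\hat b_4$ contributes only a higher-order term, because $h_{G2}=o(h_{G1})$ makes $\hat b_4-b_4$ of smaller order than $\hat c_4-c_4$, so the leading stochastic behavior of each ratio is driven by $(Gh_{G2}^6)^{1/2}(\hat c_4(d,d')-c_4(d,d'))/b_4$. Under the corollary's hypothesis that the centered ratios are asymptotically independent across $(d,d')$, the pair for $(d,1)$ and $(d,0)$ converges jointly to a bivariate normal with diagonal covariance. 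Applying the continuous mapping theorem to their difference, the scaled, centered $\widehat{\text{MCSE}}$ converges to a mean-zero normal whose variance is the sum of the two marginal variances, namely $\left[\sigma^2(d,1;p_0,p_1)+\sigma^2(d,0;p_0,p_1)\right]\big/\!\left[(b_4)^2 f\right](M^{-1}\Gamma M^{-1})_{5,5}$, as claimed. The MCDE result follows by the identical argument applied to the cells $(1,d)$ and $(0,d)$.

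The main obstacle I anticipate is rigorously establishing the negligibility of the parametric component while accounting for the generated-regressor problem: the regressors in Equation~\eqref{eq:semip_least_square} are partialled out using the estimated propensity scores $\hat P_i(W_g)$ rather than the true ones, so the $\sqrt{G}$-consistency of $\hat\beta_{idd'}$ must be verified using the uniform propensity-score rate of the preceding lemma and the strengthened control in Assumption~\ref{as:asym_dist_ratio}(i). A secondary delicacy is that the two ratios literally share the denominator $\hat b_4$, so the assumed asymptotic independence is only credible once it is recast as an independence statement about the numerators $\hat c_4(d,1)$ and $\hat c_4(d,0)$, which are fit on the disjoint treatment cells $\{D_{0g}=d,D_{1g}=1\}$ and $\{D_{0g}=d,D_{1g}=0\}$; I would make this reduction explicit before appealing to the hypothesis.
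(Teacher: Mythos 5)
Your proposal matches the paper's own proof essentially step for step: the same decomposition into a parametric component, shown negligible at the $(Gh_{G2}^6)^{1/2}$ scaling because $\hat\beta_{idd'}-\beta_{idd'}=O_P(G^{-1/2})$ (the paper cites Theorem 3 of Carneiro and Lee) so that multiplying by the bandwidth factor gives $h_{G2}^3\to 0$, plus a difference of the centered ratio estimators, followed by an appeal to Theorem~\ref{thm:asymp_dist_ratio} and the assumed asymptotic independence across treatment cells to conclude that the limiting variance is the sum of the two marginal variances. Your two closing caveats (the generated-regressor issue in establishing $\sqrt{G}$-consistency of $\hat\beta_{idd'}$, and recasting the independence hypothesis as a statement about the numerators $\hat c_4(d,1)$ and $\hat c_4(d,0)$ fitted on disjoint treatment cells, since the ratios share $\hat b_4$) are points the paper assumes or glosses over rather than resolves, so your argument does not diverge from the published proof.
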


\subsection{Parametric procedure} \label{sec:para_procedure}

\subsubsection{Parametric estimation}
The estimators introduced in Section~\ref{subsec:semiparametric} converge at nonparametric rates, requiring sufficiently large sample sizes to yield reliable estimates. Moreover, when the group size $n>2$, the conditional expectations in the estimand involve higher-dimensional conditioning, further slowing the rate of convergence. Consequently, in settings with limited sample sizes or large groups, it may be preferable to impose parametric assumptions to estimate parameters of interest. This section develops a parametric approach for estimation and inference procedure.

The estimation continues to rely on Assumption~\ref{as:est_assumption}, while introducing the following additional parametric assumptions.

\begin{assumption} \label{as:para_assump} The following specifications are imposed in the parametric setting.
    \begin{enumerate}
        \item (Propensity score) For the treatment assignment equation $D_{ig}=\mathbbm{1}\{\widetilde{V}_{ig} \leq h_i(W_g)\}$ of individual $i$ in group $g$, $i \in \{0, 1\}$, assume that $h_i(\cdot)$ is a $K_1$-th order polynomial function of $W_g = (W_{g,1}, \cdots, W_{g,\ell})' \in \mathbb{R}^\ell$:
    \begin{equation*}
        h_i(W_g) = \sum_{\boldsymbol{k} \in \mathcal{K}_{\ell,K_1}} \theta_{i\boldsymbol{k}} \cdot \prod_{j=1}^\ell W_{g,j}^{k_j},
    \end{equation*}
    where $\boldsymbol{k} = (k_1, \ldots, k_\ell) \in \mathbb{N}_0^\ell$ is a multi-index, $\mathcal{K}_{\ell,K_1} = \{ \boldsymbol{k} \in \mathbb{N}_0^\ell : \sum_{j=1}^\ell k_j \leq K_1 \}$, and $(\theta_{i\boldsymbol{k}})_{\boldsymbol{k} \in \mathcal{K}_{\ell,K_1}} \equiv \theta_{i}$ are polynomial coefficients. Additionally, assume that the unobserved heterogeneity $\widetilde{V}_{ig}$ follows a standard normal distribution: $\widetilde{V}_{ig} \sim N(0, 1)$.
    \item (Copula) Assume that the joint dependence structure of the unobserved heterogeneities $V_{0g}$ and $V_{1g}$ is characterized by a Gaussian copula with correlation parameter $\rho \in [-\varepsilon, \varepsilon]$, where $\varepsilon$ is a constant such that $0< \varepsilon <1$. Specifically, let $V_{ig} = \Phi(\widetilde{V}_{ig})$ for $i \in \{0,1\}$, where $\Phi(\cdot)$ denotes the standard normal cumulative distribution function. The copula of $(V_{0g}, V_{1g})$, denoted by $C_{V_{0g},V_{1g}}(\cdot,\cdot)$, is then given by the Gaussian copula with correlation $\rho$:
    \begin{equation*}
        C_{V_{0g}, V_{1g}}(v_0, v_1) = \Phi_\rho\big(\Phi^{-1}(v_0), \Phi^{-1}(v_1)\big), \forall (v_0, v_1) \in (0, 1)^2,
    \end{equation*}
    where $\Phi_{\rho}$ is the bivariate normal CDF with zero means, unit variances, and correlation $\rho$, $\Phi^{-1}$ denotes the inverse of the standard normal CDF, and $\rho$ is an unknown parameter.
    \item (Marginal treatment response) It is assumed that the potential outcome follows a partially linear specification in the covariates, $Y_{i g}(\mathbf{x}, d, d')=\mathbf{x}' \beta_{i d d'}+U_{i g}(d, d')$, where $U_{i g}(d, d')$ satisfies the condition stated below:
    \begin{equation*}
        \begin{aligned}
            \mathbb{E}[U_{i g}(d, d') \mid V_{0g} = v_0, V_{1g} = v_1] =& \alpha_{i d d', 0} + \alpha_{i d d',1} \Phi^{-1}(v_0) \\
            &+ \alpha_{i d d', 2} \Phi^{-1}(v_1) + \alpha_{i d d', 3} \Phi^{-1}(v_0) \Phi^{-1}(v_1),
        \end{aligned}
    \end{equation*}
    for all $(v_0, v_1) \in (0, 1)^2$, and $\alpha_{i d d'} \equiv (\alpha_{i d d', 0}, \alpha_{i d d',1}, \alpha_{i d d', 2}, \alpha_{i d d', 3})'$ denotes the vector of unknown coefficients that may be heterogeneous across individuals $i$ and treatment states $(d,d')$.
    \end{enumerate}
\end{assumption}

The imposed parametric assumptions are standard in the marginal treatment effect (MTE) literature and provide a tractable yet flexible framework for estimation and inference. Modeling the selection rule as \(D_{ig}=\mathbbm{1}\{\widetilde V_{ig}\le h_i(W_g)\}\) with a parametric index $h_i(W_g)$ and a standard normal unobserved term corresponds to the probit-type latent index widely used in practical implementations of the MTE framework (see, e.g., \cite{carneiro2011estimating}; \cite{kline2016evaluating}). The polynomial specification of $h_i(\cdot)$ provides sufficient flexibility to capture nonlinear relationships between instruments and covariates.

The Gaussian copula structure for $(V_{0g},V_{1g})$ is also a common parametric choice that facilitate likelihood-based estimation and allow dependence in unobserved heterogeneity across group members. Such assumptions have been adopted in the interference literature, including \cite{hoshino2023treatment}, to capture correlated unobservables within the group.

The parametric specification of the MTR function is consistent with the functional-form assumptions commonly employed in the MTE literature to achieve point identification when the available instruments provide limited variation. Under SUTVA, \cite{brinch2017beyond} show that imposing a parametric structure on the MTR function allows for the identification of heterogeneous treatment effects even with discrete instruments. Analogously, in the presence of spillovers, a similar approach can be applied by specifying the MTR function $\mathbb{E}[U_{ig}(d, d') \mid V_{ig} = v_0, V_{-ig} = v_1]$ as a polynomial expansion in the unobserved heterogeneities, $\Phi^{-1}(v_0)$ and $\Phi^{-1}(v_1)$. This formulation accommodates spillover effects from peers' treatments $d'$ and captures potential dependence between group members through involving $\Phi^{-1}(v_1)$. Moreover, the parametric formulation enables extrapolation beyond the observed support of the propensity scores, thereby allowing for the identification of policy-relevant treatment effects (PRTEs) even when instrumental variables exhibit limited or discrete variation 
\citep{brinch2017beyond}.

The objective is to estimate the marginal treatment response function, $m_{i g}^{(\mathbf{x}, d, d')}(v_0, v_1) = \mathbb{E}[Y_{ig}(\mathbf{x}, d, d') \mid V_{0g} = v_0, V_{1g} = v_1]$, for any $\mathbf{x} \in \mathcal{X}$, $d, d' \in \{0, 1\}$, and $(v_0, v_1) \in (0, 1)^2$.

As in the semiparametric case, the first step involves estimating the propensity score functions $P_{0}(W_g), P_{1}(W_g)$, where $W_{i g}=(Z_{i g}, X_{i g}), W_g=(W_{0 g}, W_{1 g}) \in \mathbb{R}^\ell$. Under the first specification in Assumption \ref{as:para_assump}, and assuming that the instruments and covariates are independent of the unobserved heterogeneity $V_{ig}$, we can express the propensity score function as 
\begin{equation*}
    P_{i g} \equiv \mathbb{P}(D_{ig} = 1 \mid W_g)  = \Phi\Bigg(\sum_{\boldsymbol{k} \in \mathcal{K}_{\ell,K_1}} \theta_{i\boldsymbol{k}} \cdot \prod_{j=1}^\ell W_{g,j}^{k_j} \Bigg).
\end{equation*}
The polynomial coefficients can be estimated using standard maximum likelihood methods:
\begin{equation*}
    \hat{\theta}_i = \arg \max_{(\theta_{i\boldsymbol{k}})_{\boldsymbol{k} \in \mathcal{K}_{\ell,K_1}}} \sum_{g=1}^G \Bigg[ D_{ig} \log \Phi\Big(\sum_{\boldsymbol{k} \in \mathcal{K}_{\ell,K_1}} \theta_{i\boldsymbol{k}} \cdot \prod_{j=1}^\ell W_{g,j}^{k_j} \Big) + (1 - D_{ig}) \log \Bigg(1 - \Phi\Big(\sum_{\boldsymbol{k} \in \mathcal{K}_{\ell,K_1}} \theta_{i\boldsymbol{k}} \cdot \prod_{j=1}^\ell W_{g,j}^{k_j} \Big) \Bigg) \Bigg].
\end{equation*}

Once the polynomial coefficients $\hat{\theta}_i$ are estimated, they can be substituted into $P_{ig}$ to obtain the estimated propensity score as
\begin{equation*}
    \widehat{P}_{i g} = \Phi\Bigg(\sum_{\boldsymbol{k} \in \mathcal{K}_{\ell,K_1}} \widehat{\theta}_{i\boldsymbol{k}} \cdot \prod_{j=1}^\ell W_{g,j}^{k_j} \Bigg).
\end{equation*}

The next step is to estimate the joint dependence structure of $V_{0g}$ and $V_{1g}$. Under the second specification in Assumption \ref{as:para_assump}, this dependence is modeled by a Gaussian copula with correlation parameter $\rho$. Consequently, the second step of our procedure focuses on estimating $\rho$. The identification results imply the following equations,
\begin{equation*}
    \begin{aligned}
        & \mathbb{P}(D_{0g} = 1, D_{1g} = 1 \mid P_{0g} = p_0, P_{1g} = p_1) = \Phi_\rho\big(\Phi^{-1}(P_{0g}), \Phi^{-1}(P_{1g})\big), \\
        & \mathbb{P}(D_{0g} = 1, D_{1g} = 0 \mid P_{0g} = p_0, P_{1g} = p_1) = p_0 - \Phi_\rho\big(\Phi^{-1}(P_{0g}), \Phi^{-1}(P_{1g})\big), \\
        & \mathbb{P}(D_{0g} = 0, D_{1g} = 1 \mid P_{0g} = p_0, P_{1g} = p_1) = p_1 - \Phi_\rho\big(\Phi^{-1}(P_{0g}), \Phi^{-1}(P_{1g})\big), \\
        & \mathbb{P}(D_{0g} = 0, D_{1g} = 0 \mid P_{0g} = p_0, P_{1g} = p_1) = 1 - p_0 - p_1 + \Phi_\rho\big(\Phi^{-1}(P_{0g}), \Phi^{-1}(P_{1g})\big).
    \end{aligned}
\end{equation*}

Therefore, $\rho$ can be estimated using the maximum likelihood, substituting the first-stage estimates $\widehat{P}_{0g}$ and $\widehat{P}_{1g}$ for the true propensity scores $P_{0g}$ and $P_{1g}$,
\begin{equation*}
    \begin{aligned}
        \hat{\rho} = \arg \max_{\rho \in [-\varepsilon, \varepsilon]} \sum_{g = 1}^{G} & \Bigg\{ D_{0g}D_{1g} \log\bigg(\Phi_\rho\big(\Phi^{-1}(\widehat{P}_{0g}), \Phi^{-1}(\widehat{P}_{1g})\big)\bigg) +\\
        & D_{0g}(1 - D_{1g}) \log\bigg(\widehat{P}_{0g} - \Phi_\rho\big(\Phi^{-1}(\widehat{P}_{0g}), \Phi^{-1}(\widehat{P}_{1g})\big)\bigg) + \\
        & (1 - D_{0g})D_{1g} \log\bigg(\widehat{P}_{1g} - \Phi_\rho\big(\Phi^{-1}(\widehat{P}_{0g}), \Phi^{-1}(\widehat{P}_{1g})\big)\bigg) + \\
        & (1 - D_{0g})(1 - D_{1g}) \log\bigg(1 - \widehat{P}_{0g} - \widehat{P}_{1g} + \Phi_\rho\big(\Phi^{-1}(\widehat{P}_{0g}), \Phi^{-1}(\widehat{P}_{1g})\big)\bigg) \Bigg\}.
    \end{aligned}
\end{equation*}

The final step involves estimating the marginal treatment response $\mathbb{E}[Y_{ig}(\mathbf{x}, d, d') \mid V_{0g} = v_0, V_{1g} = v_1]$. Under the third specification in Assumption \ref{as:para_assump}, this function admits the following parametric representation,
\begin{equation*}
    \begin{aligned}
        m_{i g}^{(\mathbf{x}, d, d')}(v_0, v_1) \equiv& \mathbb{E}\big[Y_{ig}(\mathbf{x}, d, d') \mid V_{0g} = v_0, V_{1g} = v_1 \big] \\
        =& \mathbf{x}' \beta_{i d d'} + \alpha_{i d d', 0} + \alpha_{i d d',1} \Phi^{-1}(v_0) \\ 
        &+ \alpha_{i d d', 2} \Phi^{-1}(v_1) + \alpha_{i d d', 3} \Phi^{-1}(v_0) \Phi^{-1}(v_1).
    \end{aligned}
\end{equation*}
Hence, the last stage of our procedure focuses on estimating the coefficient vectors $\beta_{i d d'}$ and $c_{i d d'}$. For illustration, consider the case $d = 1$ and $d' = 1$.

By combining the identification results with the third specification in Assumption~\ref{as:para_assump}, the following relationship is obtained:
\begin{equation*}
    \begin{aligned}
        & \mathbb{E}[Y_{ig} D_{0g} D_{1g} \mid X_g = \mathbf{x}, P_{0g} = p_0, P_{1g} = p_1] \\
        =& \int_0^{p_1} \int_0^{p_0} \mathbb{E}\big[U_{ig}(1,1) \mid V_{0g}=v_0, V_{1g}=v_1\big] c_{V_{0g}, V_{1g}}(v_0, v_1) d v_0 d v_1 \\
        &+ \mathbf{x}'\beta_{i 11} \mathbb{P}\big(D_{0g} D_{1g} \mid P_{0g} = p_0, P_{1g} = p_1\big) \\
        =& \alpha_{i 11, 0} \int_0^{p_1} \int_0^{p_0} c_{V_{0g}, V_{1g}}(v_0, v_1) d v_0 v_1 + \alpha_{i 11, 1} \int_0^{p_1} \int_0^{p_0} \Phi^{-1}(v_0) c_{V_{0g}, V_{1g}}(v_0, v_1) d v_0 v_1 \\
        &+ \alpha_{i 11, 2} \int_0^{p_1} \int_0^{p_0} \Phi^{-1}(v_1) c_{V_{0g}, V_{1g}}(v_0, v_1) d v_0 v_1 \\
        &+ \alpha_{i 11, 3} \int_0^{p_1} \int_0^{p_0} \Phi^{-1}(v_0) \Phi^{-1}(v_1) c_{V_{0g}, V_{1g}}(v_0, v_1) d v_0 v_1 \\
        &+ \mathbf{x}'\beta_{i 11} \mathbb{P}\big(D_{0g} D_{1g} \mid P_{0g} = p_0, P_{1g} = p_1\big) \\
        \equiv& \alpha_{i 11, 0} I_{11}^0(p_0, p_1, \rho) + \alpha_{i 11, 1} I_{11}^1(p_0, p_1, \rho) + \alpha_{i 11, 2} I_{11}^2(p_0, p_1, \rho) + \alpha_{i 11, 3} I_{11}^3(p_0, p_1, \rho) \\
        &+ \mathbf{x}'\beta_{i 11} \Phi_\rho\big(p_0, p_1\big). 
    \end{aligned}
\end{equation*}
In the last line, $I_{11}^1(p_0, p_1, \rho), I_{11}^2(p_0, p_1, \rho)$, and $I_{11}^3(p_0, p_1, \rho)$ denote integrals that depend on $p_0, p_1$, and the correlation parameter $\rho$ of the Gaussian copula density $c_{V_{0g}, V_{1g}}(\cdot, \cdot)$ when $d = 1$ and $d' = 1$. Given that the propensity scores and the correlation have been estimated in the previous two steps, $\widehat{P}_{0g}$, $\widehat{P}_{1g}$, and $\hat{\rho}$  are substituted for their true values. The coefficient vectors $\alpha_{i11}$ and $\beta_{i11}$ are then estimated using the following least squares regression,
\begin{equation*}
    \begin{aligned}
        \big(\hat{\alpha}_{i 11}', \hat{\beta}_{i 11}'\big)' = \arg \min_{(\alpha_{i 11}', \beta_{i 11}')'} &\sum_{g = 1}^{G}\bigg[Y_{ig} D_{0g} D_{1g} - \alpha_{i 11, 0} I_{11}^1(\widehat{P}_{0g}, \widehat{P}_{1g}, \hat{\rho})  - \alpha_{i 11, 1} I_{11}^1(\widehat{P}_{0g}, \widehat{P}_{1g}, \hat{\rho}) \\
        &-\alpha_{i 11, 2} I_{11}^3(\widehat{P}_{0g}, \widehat{P}_{1g}, \hat{\rho}) - \alpha_{i 11, 3} I_{11}^4(\widehat{P}_{0g}, \widehat{P}_{1g}, \hat{\rho}) - 
    \widetilde{X}_{g}' \beta_{i11}\bigg]^2,
    \end{aligned}
\end{equation*}
where $\widetilde{X}_{g} \equiv X_g \cdot \mathbb{P}(D_{0g}D_{1g} \mid \widehat{P}_{0g}, \widehat{P}_{1g})$. A similar procedure can be applied to estimate the coefficient vectors $\beta_{i d d'}$ and $\alpha_{i d d'}$ for other treatment combinations $(d, d')$. The estimated marginal treatment response function, $\widehat{m}_{i g}^{(\mathbf{x}, d, d')}(v_0, v_1)$, is obtained by substituting $(\hat{\alpha}_{i 11}', \hat{\beta}_{i 11}')'$ for $(\alpha_{i 11}', \beta_{i 11}')'$.

Finally, the MCDEs and MCSEs are obtained by taking differences of the estimated marginal treatment response functions,
\begin{equation*}
    \begin{aligned}
        &\widehat{\text{MCSE}}_i(\mathbf{x}, d; p_0, p_1) = \widehat{m}_{i g}^{(\mathbf{x}, d, 1)}(p_0, p_1) - \widehat{m}_{i g}^{(\mathbf{x}, d, 0)}(p_0, p_1), d \in \{0,1\}, \\
        &\widehat{\text{MCDE}}_i(\mathbf{x}, d; p_0, p_1) = \widehat{m}_{i g}^{(\mathbf{x}, 1, d)}(p_0, p_1) - \widehat{m}_{i g}^{(\mathbf{x}, 0, d)}(p_0, p_1), d \in \{0,1\}. 
    \end{aligned}
\end{equation*}

\subsubsection{Parametric asymptotic results}
This section introduces a set of assumptions under which the parametric estimators are consistent.

\begin{assumption} \label{as:para_consist1}
    (Parametric first stage) In the first stage of propensity score estimation, for each individual $i \in \{0, 1\}$, we assume that 
    \begin{enumerate}
        \item $\theta_i \in \Theta_i$, where the parameter space $\Theta_i$ is compact.
        \item The true parameter $\theta_{i0}$ is unique.
        \item Let $l(\theta_i; D_{ig}, W_g)$ denote the log-likelihood of individual $i$'s treatment in group $g$:
        \begin{equation*}
            \begin{aligned}
                l(\theta_i; D_{ig}, W_g) =& D_{i g} \log \Phi\bigg(\sum_{k \in \mathcal{K}_{\ell, K_1}} \theta_{i k} \cdot \prod_{j=1}^{\ell} W_{g, j}^{k_j}\bigg) \\
                &+(1-D_{i g}) \log \Bigg(1-\Phi\bigg(\sum_{k \in \mathcal{K}_{\ell, K_1}} \theta_{i k} \cdot \prod_{j=1}^{\ell} W_{g, j}^{k_j}\bigg)\Bigg).
            \end{aligned}
        \end{equation*}
        The log-likelihood function $l(\theta_i; D_{ig}, W_g)$ satisfies the following conditions:
        \begin{enumerate}
            \item [(i)] $\mathbb{E}\big[\sup_{\theta_i \in \Theta_i}|l(\theta_i; D_{ig}, W_g)|\big] < \infty$.
            \item [(ii)] $\mathbb{E}\big[\nabla^2_{\theta_i} l(\theta_i; D_{ig}, W_g)\big]$ exists and is invertible.
            \item [(iii)] $\mathbb{E}\big[\sup_{\theta_i \in \Theta_i}|| \nabla^2_{\theta_i} l(\theta_i; D_{ig}, W_g)||\big] < \infty$.
        \end{enumerate}
    \end{enumerate}
\end{assumption}

Under Assumptions \ref{as:para_assump} and \ref{as:para_consist1}, the estimator $\hat{\theta}_i$ obtained in the first stage is consistent.

\begin{lemma} \label{lemma:para_consist1}
    Suppose Assumptions \ref{as:est_assumption}, \ref{as:para_assump} and \ref{as:para_consist1} hold. Then, for each $i \in \{0, 1\}$, the estimator $\hat{\theta}_i \asto \theta_{i0}$ as $G \rightarrow \infty$. 
\end{lemma}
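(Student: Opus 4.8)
The plan is to prove consistency by the classical argument for extremum estimators, viewing $\hat\theta_i$ as the maximizer of the sample average log-likelihood. Define the population criterion $Q_0(\theta_i) \equiv \mathbb{E}[l(\theta_i; D_{ig}, W_g)]$ and the sample criterion $\hat Q_G(\theta_i) \equiv \frac{1}{G}\sum_{g=1}^G l(\theta_i; D_{ig}, W_g)$, so that $\hat\theta_i = \arg\max_{\theta_i \in \Theta_i} \hat Q_G(\theta_i)$ (scaling by $1/G$ does not affect the argmax). Since the groups are i.i.d., the standard template for almost sure consistency requires four ingredients: (a) $\Theta_i$ compact; (b) $\theta_i \mapsto Q_0(\theta_i)$ continuous with a unique maximizer at $\theta_{i0}$; (c) $\theta_i \mapsto l(\theta_i; D_{ig}, W_g)$ continuous almost surely; and (d) a uniform law of large numbers, $\sup_{\theta_i \in \Theta_i}|\hat Q_G(\theta_i) - Q_0(\theta_i)| \asto 0$. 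I would verify these in turn and then assemble the usual separation argument.

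Ingredient (a) is Assumption \ref{as:para_consist1}.1. For (c), note that the index $\sum_{\boldsymbol{k}} \theta_{i\boldsymbol{k}} \prod_{j} W_{g,j}^{k_j}$ is polynomial (hence continuous) in $\theta_i$, that $\Phi$ is continuous and valued in $(0,1)$, and that $\log$ is continuous on $(0,1)$; therefore $l(\cdot; D_{ig}, W_g)$ is continuous in $\theta_i$ for every realization of $(D_{ig}, W_g)$. For the maximizer part of (b), the information (Gibbs) inequality yields $Q_0(\theta_i) \le Q_0(\theta_{i0})$ for all $\theta_i$, because under the correctly specified probit model of Assumption \ref{as:para_assump} the expected log-likelihood is maximized at the data-generating parameter; uniqueness of this maximizer is exactly Assumption \ref{as:para_consist1}.2. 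Continuity of $Q_0$ then follows from continuity of $l$ together with the envelope bound below, via dominated convergence.

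For the uniform law of large numbers in (d), I would invoke a standard uniform law of large numbers for i.i.d. data: continuity of $l$ in $\theta_i$, compactness of $\Theta_i$, and the dominance condition $\mathbb{E}[\sup_{\theta_i \in \Theta_i}|l(\theta_i; D_{ig}, W_g)|] < \infty$ from Assumption \ref{as:para_consist1}.3(i) jointly imply that $Q_0$ is continuous and that $\sup_{\theta_i \in \Theta_i}|\hat Q_G - Q_0| \asto 0$. It is worth flagging that this dominance condition is doing the essential work: in the probit log-likelihood the terms $\log\Phi(\cdot)$ and $\log(1-\Phi(\cdot))$ diverge in the tails of the index, so integrability of the envelope is not automatic and is precisely what Assumption \ref{as:para_consist1}.3(i) secures.

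Finally I would assemble consistency from (a)--(d) by the standard argument: fix any open neighborhood $N$ of $\theta_{i0}$; on the compact set $\Theta_i \setminus N$ the continuous function $Q_0$ attains a maximum $Q_0(\theta_i^*) < Q_0(\theta_{i0})$, so setting $\epsilon = [Q_0(\theta_{i0}) - Q_0(\theta_i^*)]/2 > 0$, uniform convergence guarantees $\hat Q_G(\theta_{i0}) > Q_0(\theta_{i0}) - \epsilon$ and $\sup_{\theta_i \in \Theta_i \setminus N}\hat Q_G(\theta_i) < Q_0(\theta_i^*) + \epsilon$ for all large $G$ almost surely, which forces $\hat\theta_i \in N$ eventually. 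As $N$ is arbitrary, $\hat\theta_i \asto \theta_{i0}$. The main obstacle is not any individual step but ensuring the uniform law of large numbers applies, which ultimately reduces to the envelope integrability supplied by Assumption \ref{as:para_consist1}.3(i); the remaining content is continuity and the information-inequality identification of $\theta_{i0}$.
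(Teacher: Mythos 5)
Your proof is correct and takes essentially the same route as the paper's: a classical extremum-estimator consistency argument combining compactness of $\Theta_i$, continuity of the log-likelihood in $\theta_i$, the envelope integrability condition to invoke a uniform law of large numbers, and uniqueness of the maximizer $\theta_{i0}$. The only difference is cosmetic—the paper closes with the inequality chain $0 \leq Q(\theta_{i0}) - Q(\hat{\theta}_i) \leq 2\sup_{\theta_i \in \Theta_i}\lvert Q_G(\theta_i) - Q(\theta_i)\rvert \asto 0$ and then appeals to uniqueness and compactness, whereas you spell out the equivalent neighborhood-separation step explicitly (and add the information-inequality remark justifying that $\theta_{i0}$ maximizes the population criterion, which the paper leaves implicit in its definition of the ``true parameter'').
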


\begin{proof}
    See Appendix \ref{app:proof_parametric_first}.
\end{proof}

To establish the consistency of the second-stage estimator of the Gaussian copula correlation parameter $\rho$, the following additional assumption is imposed.

\begin{assumption} \label{as:para_consist2}
    (Parametric second stage) In the second stage, to estimate the correlation parameter $\rho$ of the Gaussian copula, we impose the following conditions.
    \begin{enumerate}
        \item The true parameter $\rho_0$ is unique.
        \item Define the log-likelihood of joint treatments in group $g$ as 
        \begin{equation*}
            \begin{aligned}
                &l(\rho, \theta; D_{g}, W_{g}) = \tilde{l}(\rho;D_g, P_g)\\
                \equiv& D_{0g}D_{1g} \log\bigg(\Phi_\rho\big(\Phi^{-1}(P_{0g}), \Phi^{-1}(P_{1g})\big)\bigg) +\\
        & D_{0g}(1 - D_{1g}) \log\bigg(P_{0g} - \Phi_\rho\big(\Phi^{-1}(P_{0g}), \Phi^{-1}(P_{1g})\big)\bigg) + \\
        & (1 - D_{0g})D_{1g} \log\bigg(P_{1g} - \Phi_\rho\big(\Phi^{-1}(P_{0g}), \Phi^{-1}(P_{1g})\big)\bigg) + \\
        & (1 - D_{0g})(1 - D_{1g}) \log\bigg(1 - P_{0g} - P_{1g} + \Phi_\rho\big(\Phi^{-1}(P_{0g}), \Phi^{-1}(P_{1g})\big)\bigg),
            \end{aligned}
        \end{equation*}
        where $P_g \equiv (P_{0g}, P_{1g})$, $P_{ig}$, $i \in \{0, 1\}$, is the function of the first stage parameter $\theta_i$ and the variable $W_g$, $\theta \equiv (\theta_0', \theta_1')'$, and $D_g \equiv (D_{0g}, D_{1g})$. The log-likelihood needs to satisfy
        \begin{enumerate}
            \item [(i)] $\mathbb{E}\big[\sup_{\rho \in [-\varepsilon, \varepsilon]} |l(\rho, \theta; D_{g}, W_{g})|\big] < \infty$.
            \item [(ii)] There exists a function $L(\cdot)$ with $|L(D_g)| < \infty$ almost surely such that for all $d \in \{0, 1\}^2$, $(p, p') \in (0, 1)^2$, and $\rho \in [-\varepsilon, \varepsilon]$, $|\tilde{l}(\rho; d, p) - \tilde{l}(\rho; d, p')| \leq L(d) ||p - p'||$, where $\tilde{l}(\rho; d, p)$ is defined as the second stage log-likelihood given $(P_{0g}, P_{1g}) = p$.
            \item [(iii)] $\mathbb{E}\big[\partial^2 l(\rho, \theta; D_{g}, W_{g}) / \partial \rho^2\big]$ is bounded away from zero.
            \item [(iv)] $\mathbb{E}\big[\sup_{\rho \in [-\varepsilon, \varepsilon]}|\partial^2 l(\rho; D_g, P_g) / \partial \rho^2|\big] < \infty$ and $\mathbb{E}\big[\sup_{\theta \in \Theta_0 \times \Theta_1}||\nabla_{\theta} \frac{\partial l(\rho, \theta; D_{g}, W_{g})}{\partial \rho} ||\big] < \infty$. 
        \end{enumerate}
    \end{enumerate}
\end{assumption}

These conditions allow us to establish the consistency of the second-stage estimator of $\rho$.

\begin{lemma}
    Suppose Assumptions \ref{as:est_assumption}, \ref{as:para_assump}, \ref{as:para_consist1}, and \ref{as:para_consist2} hold. Then, $\hat{\rho} \asto \rho_0$ as $G \rightarrow \infty$.
\end{lemma}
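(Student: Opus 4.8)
The plan is to treat $\hat\rho$ as a two-step M-estimator and to verify the hypotheses of the standard consistency theorem for extremum estimators, taking care to absorb the first-stage plug-in of $\hat\theta$. Write the population criterion as $Q_0(\rho)\equiv\mathbb{E}[\tilde l(\rho;D_g,P_g)]$ and the feasible sample criterion as $\hat Q_G(\rho)\equiv\frac1G\sum_{g=1}^G l(\rho,\hat\theta;D_g,W_g)=\frac1G\sum_{g=1}^G\tilde l(\rho;D_g,\hat P_g)$, where $\hat P_g=(\hat P_{0g},\hat P_{1g})$ uses the first-stage estimates. Since $[-\varepsilon,\varepsilon]$ is compact and $Q_0$ is continuous in $\rho$ (smoothness of $\Phi_\rho$ together with the envelope in Assumption~\ref{as:para_consist2}.3(i)), it suffices to establish that (a) $Q_0$ is uniquely maximized at $\rho_0$ and (b) $\sup_{\rho\in[-\varepsilon,\varepsilon]}|\hat Q_G(\rho)-Q_0(\rho)|\asto 0$; the consistency theorem for M-estimators then yields $\hat\rho\asto\rho_0$.

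For (a), Lemma~\ref{lemma:id_copula} and the Gaussian-copula specification in Assumption~\ref{as:para_assump} imply that the four conditional probabilities of $(D_{0g},D_{1g})$ given $(P_{0g},P_{1g})$ are correctly specified at $\rho_0$, so $\tilde l(\rho;\cdot,\cdot)$ is a correctly specified conditional log-likelihood. Conditioning on $P_g$ and applying the information (Gibbs) inequality gives $\mathbb{E}[\tilde l(\rho;D_g,P_g)\mid P_g]\le\mathbb{E}[\tilde l(\rho_0;D_g,P_g)\mid P_g]$ pointwise; integrating over $P_g$ delivers $Q_0(\rho)\le Q_0(\rho_0)$, and Assumption~\ref{as:para_consist2}.1 supplies uniqueness of the maximizer.

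For (b), decompose
\[
\sup_{\rho}\bigl|\hat Q_G(\rho)-Q_0(\rho)\bigr|
\le
\underbrace{\sup_{\rho}\Bigl|\tfrac1G\sum_{g}\tilde l(\rho;D_g,\hat P_g)-\tfrac1G\sum_{g}\tilde l(\rho;D_g,P_g)\Bigr|}_{(\mathrm I)}
+\underbrace{\sup_{\rho}\Bigl|\tfrac1G\sum_{g}\tilde l(\rho;D_g,P_g)-Q_0(\rho)\Bigr|}_{(\mathrm{II})}.
\]
Term $(\mathrm{II})$ is the infeasible criterion evaluated at the true propensity scores; it vanishes a.s.\ by a uniform law of large numbers using the envelope in Assumption~\ref{as:para_consist2}.3(i), continuity of $\tilde l$ in $\rho$, and compactness of $[-\varepsilon,\varepsilon]$. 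For $(\mathrm I)$, the Lipschitz bound in Assumption~\ref{as:para_consist2}.3(ii) gives, uniformly in $\rho$, $|\tilde l(\rho;D_g,\hat P_g)-\tilde l(\rho;D_g,P_g)|\le L(D_g)\|\hat P_g-P_g\|$, so $(\mathrm I)\le\frac1G\sum_{g}L(D_g)\|\hat P_g-P_g\|$. Because $D_g$ takes only the four values in $\{0,1\}^2$, $L(D_g)$ is bounded by a finite constant $\bar L=\max_{d}L(d)$, reducing $(\mathrm I)$ to $\bar L\cdot\frac1G\sum_{g}\|\hat P_g-P_g\|$. A mean-value expansion of $\hat P_{ig}=\Phi(h_i(W_g;\hat\theta_i))$ with $h_i$ polynomial in $W_g$, together with boundedness of the normal density, yields
\[
|\hat P_{ig}-P_{ig}|\le(2\pi)^{-1/2}\,\|\hat\theta_i-\theta_{i0}\|\cdot\|m(W_g)\|,
\]
where $m(W_g)$ collects the monomials $\prod_j W_{g,j}^{k_j}$. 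Hence $\frac1G\sum_{g}\|\hat P_g-P_g\|\le C\,\|\hat\theta-\theta_0\|\cdot\frac1G\sum_{g}\|m(W_g)\|$; provided $W_g$ has finite moments up to order $K_1$, the SLLN makes $\frac1G\sum_{g}\|m(W_g)\|$ converge a.s.\ to a finite limit, while Lemma~\ref{lemma:para_consist1} gives $\|\hat\theta-\theta_0\|\asto0$, so the product tends to zero a.s. Combining with $(\mathrm{II})$ establishes (b).

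The main obstacle is controlling term $(\mathrm I)$, the first-stage plug-in error, \emph{uniformly} over $\rho$: the second-stage log-likelihood contains terms such as $\log\bigl(P_{0g}-\Phi_\rho(\Phi^{-1}(P_{0g}),\Phi^{-1}(P_{1g}))\bigr)$ that can be ill-behaved near the boundary of the unit square, so a naive pointwise expansion in $\hat P_g$ would not deliver uniform control and could even diverge. The Lipschitz envelope in Assumption~\ref{as:para_consist2}.3(ii) is precisely what tames this difficulty, converting the estimation error into the tractable average $\frac1G\sum_{g}\|\hat P_g-P_g\|$; the discreteness of $D_g$ (hence boundedness of $L(D_g)$) and the smoothness and boundedness of $\Phi$ then close the argument.
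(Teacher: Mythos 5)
Your proposal is correct and follows essentially the same route as the paper's proof: the same decomposition of the feasible criterion into the plug-in error $(\mathrm I)$ and the infeasible sampling error $(\mathrm{II})$, with $(\mathrm{II})$ handled by a uniform law of large numbers under the envelope condition and $(\mathrm I)$ by the Lipschitz bound in Assumption~\ref{as:para_consist2}, followed by the standard compactness/unique-maximizer argument for extremum estimators. The one place you go beyond the paper is in closing term $(\mathrm I)$: the paper passes directly from $\|\hat P_g - P_g\|\asto 0$ to $\sup_{\rho}|\hat Q_G(\rho)-Q_G(\rho)|\asto 0$ without addressing that the average runs over all $g=1,\dots,G$ (a triangular-array issue), whereas your mean-value expansion of $\Phi\bigl(h_i(W_g;\hat\theta_i)\bigr)$ yields the bound $\bar L\, C\,\|\hat\theta-\theta_0\|\cdot\frac1G\sum_{g}\|m(W_g)\|$, which controls the average uniformly in $g$ at the cost of a mild additional moment condition on $W_g$ — a tighter treatment of the plug-in error than the paper's own.
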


\begin{proof}
    See Appendix \ref{app:proof_parametric_second}.
\end{proof}

Finally, the consistency of the estimated coefficient vector $(\hat{\alpha}_{i 11}', \hat{\beta}_{i 11}')'$ is established, which in turn ensures consistency of the estimated marginal controlled effects.

\begin{theorem} \label{thm:para_consistency}
    (Consistency of parametric MCDEs and MCSEs) Suppose that Assumptions \ref{as:est_assumption}, \ref{as:para_assump}, \ref{as:para_consist1}, and \ref{as:para_consist2} hold. Also, assume that
    \begin{enumerate}
        \item $X'_{Pdd'}X_{Pdd'}$ and $\mathbb{E}[X'_{Pdd'}X_{Pdd'}]$ are nonsingular, where $X_{Pdd'}$ is defined as a $G \times K$ matrix with the $g$-th row as 
    \begin{equation*}
        X_{Pdd'_g} \equiv \big[\Phi_{\rho}(P_{0 g}, P_{1 g}), I_{dd'}^1(P_{0 g}, P_{1 g}, \rho), I_{dd'}^2(P_{0 g}, P_{1 g}, \rho),  I_{dd'}^3(P_{0 g}, P_{1 g}, \rho), \widetilde{X}_g' \big],
    \end{equation*}
    where $P_{ig}$, $i \in \{0, 1\}$, is the function of the first stage parameter $\theta_i$ and the variable $W_g$.
    \item $\|\widehat{X}_{Pdd'} - X_{Pdd'}\|_F^2/G \asto 0$, where $\widehat{X}_{Pdd'}$ is obtained by replacing the true values $P_{0g}$, $P_{1g}$, and $\rho$ in $X_{Pdd'}$ with their estimates $\widehat{P}_{0g}$, $\widehat{P}_{1g}$, and $\hat{\rho}$, respectively. The notation $\|\cdot\|_F$ denotes the Frobenius norm.
    \item Set $\varepsilon_{idd'} = Y_{ig} \mathbbm{1} \big\{D_{0g} = d \big\} \mathbbm{1} \big\{D_{1g} = d' \big\} - X_{Pdd'_g} (\alpha'_{idd'}, \beta'_{idd'})'$. Then, $\text{Var}(\varepsilon_{idd'}) = \sigma^2_{idd'} < \infty$.
    \item Define $\psi_{idd}(\alpha_{idd'}, \beta_{idd'}, \rho, \theta;Y_{ig}, D_{0g}, D_{1g}, W_g) = (Y_{id d'g} - X_{P_g} (\alpha'_{id d'}, \beta'_{idd'})')X_{P_g}$, where $Y_{i d d' g} \equiv Y_{i g} \mathbbm{1}\{D_{0 g}=d, D_{1 g}=d'\}$. It satisfies
    \begin{enumerate}
        \item [(i)] $\mathbb{E}[\sup_{\rho \in [-\varepsilon, \varepsilon]} ||\nabla_\rho \psi_{idd}(\alpha_{idd'}, \beta_{idd'}, \rho, \theta;Y_{ig}, D_{0g}, D_{1g}, W_g)||] < \infty$.
        \item [(ii)] $\mathbb{E}[\sup_{\theta \in \Theta_0 \times \Theta_1} ||\nabla_\theta \psi_{idd}(\alpha_{idd'}, \beta_{idd'}, \rho, \theta;Y_{ig}, D_{0g}, D_{1g}, W_g)||] < \infty$.
    \end{enumerate}
    \end{enumerate} 
    Under the above conditions, $(\hat{\alpha}_{i dd'}', \hat{\beta}_{i dd'}')' \asto (\alpha_{i dd'}', \beta_{i dd'}')'$ as $G \rightarrow \infty$. 
\end{theorem}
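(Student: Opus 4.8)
The plan is to treat the third-stage estimator as an ordinary least squares regression of $Y_{idd'g} \equiv Y_{ig}\mathbbm{1}\{D_{0g}=d, D_{1g}=d'\}$ on the generated regressor vector $X_{Pdd'_g}$, and to establish consistency by controlling the error induced by replacing the true propensity scores and copula correlation with their first- and second-stage estimates. The identification argument underlying Theorem~\ref{thm:id_mtr_disc_trt}, combined with the parametric specifications in Assumption~\ref{as:para_assump}, gives $\mathbb{E}[Y_{idd'g} \mid X_g, P_{0g}, P_{1g}] = X_{Pdd'_g}(\alpha_{idd'}', \beta_{idd'}')'$, so that with the true regressors the residual $\varepsilon_{idd'}$ of condition~3 satisfies the population moment condition $\mathbb{E}[\varepsilon_{idd'}\, X_{Pdd'_g}'] = 0$, since $X_{Pdd'_g}$ is a measurable function of $(X_g, P_g)$. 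The central difficulty is that the feasible estimator uses $\widehat{X}_{Pdd'}$ rather than $X_{Pdd'}$, so the textbook OLS consistency argument must be augmented to absorb the generated-regressor error.

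First I would write the estimator in closed form. Writing $\gamma_{idd'} \equiv (\alpha_{idd'}', \beta_{idd'}')'$ and letting $\varepsilon_{idd'}$ also denote the stacked residual vector, the decomposition $Y_{dd'} - \widehat{X}_{Pdd'}\gamma_{idd'} = \varepsilon_{idd'} + (X_{Pdd'} - \widehat{X}_{Pdd'})\gamma_{idd'}$ yields
\begin{equation*}
    \widehat{\gamma}_{idd'} - \gamma_{idd'} = \Big(\tfrac{1}{G}\widehat{X}_{Pdd'}'\widehat{X}_{Pdd'}\Big)^{-1}\Big[\tfrac{1}{G}\widehat{X}_{Pdd'}'\varepsilon_{idd'} + \tfrac{1}{G}\widehat{X}_{Pdd'}'\big(X_{Pdd'} - \widehat{X}_{Pdd'}\big)\gamma_{idd'}\Big].
\end{equation*}
The proof then reduces to showing that the Gram matrix converges to an invertible limit and that the two bracketed terms converge to zero almost surely.

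For the Gram matrix, I would expand $\tfrac{1}{G}\widehat{X}_{Pdd'}'\widehat{X}_{Pdd'}$ around $\tfrac{1}{G}X_{Pdd'}'X_{Pdd'}$; the latter converges a.s. to the nonsingular limit $\mathbb{E}[X_{Pdd'_g}'X_{Pdd'_g}]$ by the strong law (condition~1), while the cross and quadratic remainders are bounded via Cauchy--Schwarz by $\|\widehat{X}_{Pdd'}-X_{Pdd'}\|_F\,\|X_{Pdd'}\|_F/G$ and $\|\widehat{X}_{Pdd'}-X_{Pdd'}\|_F^2/G$, which vanish by condition~2. For the score term I split $\tfrac{1}{G}\widehat{X}_{Pdd'}'\varepsilon_{idd'} = \tfrac{1}{G}X_{Pdd'}'\varepsilon_{idd'} + \tfrac{1}{G}(\widehat{X}_{Pdd'}-X_{Pdd'})'\varepsilon_{idd'}$: the first piece has zero mean and finite first absolute moment (by the moment condition, condition~1, and $\sigma^2_{idd'}<\infty$ from condition~3), so it vanishes a.s.\ by the strong law, and the second is bounded by $(\|\widehat{X}_{Pdd'}-X_{Pdd'}\|_F^2/G)^{1/2}(\|\varepsilon_{idd'}\|^2/G)^{1/2}$, which vanishes by conditions~2 and~3. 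The generated-regressor bias term is handled by an identical Cauchy--Schwarz bound under condition~2. Applying the continuous mapping theorem to the matrix inverse at the nonsingular limit then gives $\widehat{\gamma}_{idd'} \asto \gamma_{idd'}$, and taking differences of the resulting marginal treatment response estimates delivers consistency of the MCDEs and MCSEs.

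The main obstacle is justifying that the generated-regressor error is asymptotically negligible, that is, verifying the high-level convergence in condition~2 and ensuring that plugging in the estimated nuisance parameters does not bias the third-stage moment condition. Here I would invoke the previously established consistency results, $\hat\theta_i \asto \theta_{i0}$ from Lemma~\ref{lemma:para_consist1} and $\hat\rho \asto \rho_0$ from the preceding lemma, together with the dominated-derivative conditions~4(i)--(ii) on the moment map $\psi_{idd}$. A mean-value expansion of $\tfrac{1}{G}\sum_g \psi_{idd}(\alpha_{idd'},\beta_{idd'},\hat\rho,\hat\theta;\cdot)$ in $(\rho,\theta)$ around the true values, with the gradients $\nabla_\rho\psi_{idd}$ and $\nabla_\theta\psi_{idd}$ controlled in $L_1$ by conditions~4(i)--(ii) and a uniform law of large numbers, shows that the plug-in bias is $O_p\big(|\hat\rho-\rho_0| + \|\hat\theta-\theta_0\|\big) = o_p(1)$; this, combined with the smoothness of the integrals $I_{dd'}^{j}$ and $\Phi_\rho$ in their arguments, is precisely what underlies condition~2 and closes the argument.
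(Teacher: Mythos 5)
Your proposal is correct and follows essentially the same route as the paper's own proof: write the third stage as OLS on the generated regressors, expand the Gram matrix and score around the infeasible versions, control every remainder with Cauchy--Schwarz via the Frobenius-norm condition $\|\widehat{X}_{Pdd'}-X_{Pdd'}\|_F^2/G \asto 0$, apply the strong law and the zero-conditional-mean property of $\varepsilon_{idd'}$, and finish with the continuous mapping theorem at the nonsingular limit. Your closing paragraph sketching how conditions 4(i)--(ii) and the first- and second-stage consistency results would justify the high-level condition 2 goes slightly beyond the paper, which simply invokes that condition as an assumption, but this does not change the substance of the argument.
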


\begin{proof}
    See Appendix \ref{app:proof_parametric_third}.
\end{proof} 

Inference regarding the estimator $(\hat{\alpha}_{i dd'}', \hat{\beta}_{i dd'}')'$ is performed using standard nonparametric bootstrap methodologies.  Based on Assumptions \ref{as:para_consist1} and \ref{as:para_consist2}, the conditions in Theorem \ref{thm:para_consistency}, and standard regularity conditions, the bootstrap distribution converges uniformly to the sampling distribution of the estimator\footnote{The regularity conditions include stochastic equicontinuity and a quadratic remainder condition. A formal proof is left for future work.} \citep{romano2012uniform}. Therefore, standard nonparametric bootstrap methods, such as resampling the data and recomputing all stages, are expected to yield valid inference.

After establishing the consistency and the asymptotic distibutions of $(\hat{\alpha}_{i dd'}', \hat{\beta}_{i dd'}')'$, the consistency and asymptotic distributions of $\widehat{\text{MCSE}}_i(\mathbf{x}, d; p_0, p_1)$ and $\widehat{\text{MCDE}}_i(\mathbf{x}, d; p_0, p_1)$ follow directly from the continuous mapping theorem. This result obtains by the continuity of $\widehat{\text{MCSE}}_i$ and $\widehat{\text{MCDE}}_i$ as functions of $(\hat{\alpha}_{i dd'}', \hat{\beta}_{i dd'}')'$.

\section{Simulation and Application} \label{sec:simulation_application}

\subsection{Parametric Simulation} \label{sec:para_simulation}
This section presents a Monte Carlo simulation to assess the validity of the proposed parametric estimation methods.

For each Monte Carlo replication, I generate $G$ i.i.d. groups, where each group $g$ consists of two members indexed by $i \in \{0, 1\}$. I draw the group instrument vector, $Z_{g} = (Z_{0g}, Z_{1g})$, i.i.d. from a bivariate normal distribution $N(0, \Sigma_Z)$ with $\Sigma_Z = (1, 0.1; 0.1, 1)$. The correlation of $Z_{0g}$ and $Z_{1g}$ is not zero, since I allow the instruments of group members to be correlated. I also the group-level unobserved heterogeneity vector, $(\widetilde{V}_{0g}, \widetilde{V}_{1g})$, i.i.d. from a bivariate normal distribution $N(0, \Sigma_V)$ with $\Sigma_V = (1, 0.2; 0.2, 1)$ and independent of the instrument vector $Z_g$. By construction, $\widetilde{V}_{ig}$, $i \in \{0, 1\}$, follows a standard normal distribution. Additionally, the copula linking the normalized unobserved heterogeneity $V_{0g}$ and $V_{ig}$, where $\widetilde{V}_{ig} = \Phi(\widetilde{V}_i)$, is a Gaussian copula with correlation $\rho = 0.2$. These specifications are consistent with Assumption \ref{as:Vdist} and the first two conditions in Assumption \ref{as:para_assump}. 

I construct the following model to generate individual's treatment and potential outcome.
\begin{equation*}
    \left\{\begin{aligned}
&D_{0g}= \mathbbm{1}\big\{\widetilde{V}_{0g} \leq Z_{0g} + 0.5 Z_{1g} \big\}\\
&D_{1g} = \mathbbm{1} \big\{\widetilde{V}_{1g} \leq Z_{1g} - 0.5 Z_{0g}\} \\
&Y_{ig}(1, 1) = 1 + 0.5 U_g + 2 \widetilde{V}_{ig} + \widetilde{V}_{(1-i)g} - \widetilde{V}_{ig} \widetilde{V}_{(1-i)g}, i = 0,1 \\
& Y_{ig}(1, 0) = 3 + 0.5 U_g + 2 \widetilde{V}_{ig} + \widetilde{V}_{(1-i)g} - \widetilde{V}_{ig} \widetilde{V}_{(1-i)g}, i = 0,1 \\
& Y_{ig}(0, 1) = 3 + 0.5 U_g + 2 \widetilde{V}_{ig} - \widetilde{V}_{ig} \widetilde{V}_{(1-i)g}, i = 0,1 \\
& Y_{ig}(0, 0) = 2 + 0.5 U_g + 2 \widetilde{V}_{ig} - \widetilde{V}_{ig} \widetilde{V}_{(1-i)g}, i = 0,1,
\end{aligned}\right.
\end{equation*}
where the group-level disturbance $U_g \in \mathbb{R}$ is generated i.i.d. from the uniform distribution $\mathcal{U}(0, 1)$ and is independent of $(Z_{0g}, Z_{1g}, \tilde{V}_{0g}, \tilde{V}_{1g})$. The observed individual outcome $Y_{ig}$ is derived from 
\begin{equation*}
    \begin{aligned}
Y_{ig}=& \big[Y_{ig}(1,1) D_{(1-i)g}+Y_{ig}(1,0)(1-D_{(1-i)g})\big] D_{ig} \\
&+ \big[Y_{ig}(0,1) D_{(1-i)g}+Y_{ig}(0,0)(1-D_{(1-i)g})\big](1-D_{ig}), i = 0, 1.
\end{aligned}
\end{equation*}

In our data generating process, the instrument vector $Z_g$ is independent of the unobserved heterogeneities and potential outcomes, $(\widetilde{V}_{ig}, U_g)_{i, d, d' \in \{0,1\}}$, satisfying Assumption \ref{as:RA}. Moreover, $Z_g$ does not directly affect the outcome $Y_{ig}(d, d')$, in accordance with Assumption \ref{as:er}. The threshold function $h_i(\cdot)$ in the treatment assignment equation is specified as a first-order polynomial in the instrument, satisfying the first condition in Assumption \ref{as:para_assump}. For the potential outcomes, their conditional means given $V_{0g}$ and $V_{1g}$ satisfy the third condition in Assumption \ref{as:para_assump}. Therefore, the data generating process satisfies all identification and parametric assumptions.

I apply the method in Section \ref{sec:para_procedure} to estimate and construct $95\%$ confidence intervals for the marginal controlled spillover (MCSE) and direct effects (MCDE) at selected evaluated points $(p_0, p_1)$. In the final step of computation, directly evaluating the integrals $I^j_{dd'}$, $j = 0, 1, 2, 3, 4$, at each estimated $(\widehat{P}_{0g}, \widehat{P}_{1g})$ is analytically intractable. To address this, I approximate the integrals using numerical integration. Specifically, I employ the Gauss-Hermite quadrature method, which I have verified to be both accurate and computationally efficient.

I arbitrarily select the following evaluation points,
\begin{equation*}
    (p_0, p_1) = (0.3, 0.7), (0.4, 0.6), (0.5, 0.5), (0.6, 0.4), (0.7, 0.3),
\end{equation*}
for which the true MCSEs and MCDEs can be readily computed. I conduct $500$ Monte Carlo replications for each of four sample sizes, $G = 1000, 3000, 5000, 10000$. Table \ref{tab:para_sim} reports the coverage rates for the MCSEs, MCDEs, and the correlation parameter $\rho$.

\begin{table}[htbp]
  \centering
  \caption{Coverage Rate of 95\% Confidence Intervals for Parametric Estimators}
    \begin{tabular}{lrrrrrr}
    \toprule
          & \multicolumn{6}{c}{Coverage rate} \\
\cmidrule{2-7}          & \multicolumn{1}{c}{(0.3,0.7)} & \multicolumn{1}{c}{(0.4,0.6)} & \multicolumn{1}{c}{(0.5,0.5)} & \multicolumn{1}{c}{(0.6,0.4)} & \multicolumn{1}{c}{(0.7,0.3)} & \multicolumn{1}{c}{$\rho$} \\
    \midrule
    \multicolumn{7}{c}{Panel A1: MCDE ($G=1000$)} \\
    $d = 1$ & 0.95  & 0.952 & 0.964 & 0.972 & 0.958 & \multirow{2}[1]{*}{0.948} \\
    $d = 0$ & 0.96  & 0.958 & 0.962 & 0.96  & 0.958 &  \\
    \midrule
    \multicolumn{7}{c}{Panel A2: MCSE ($G=1000$)} \\
    $d = 1$ & 0.958 & 0.962 & 0.972 & 0.966 & 0.96  & \multirow{2}[1]{*}{0.948} \\
    $d = 0$ & 0.964 & 0.968 & 0.956 & 0.954 & 0.942 &  \\
    \midrule
    \multicolumn{7}{c}{Panel B1: MCDE ($G=3000$)} \\
    $d = 1$ & 0.952 & 0.944 & 0.946 & 0.946 & 0.958 & \multirow{2}[1]{*}{0.942} \\
    $d = 0$ & 0.952 & 0.944 & 0.946 & 0.936 & 0.936 &  \\
    \midrule
    \multicolumn{7}{c}{Panel B2: MCSE ($G=3000$)} \\
    $d = 1$ & 0.952 & 0.948 & 0.948 & 0.94  & 0.942 & \multirow{2}[1]{*}{0.942} \\
    $d = 0$ & 0.938 & 0.93  & 0.932 & 0.944 & 0.932 &  \\
    \midrule
    \multicolumn{7}{c}{Panel C1: MCDE ($G=5000$)} \\
    $d = 1$ & 0.942 & 0.95  & 0.95  & 0.954 & 0.932 & \multirow{2}[1]{*}{0.94} \\
    $d = 0$ & 0.952 & 0.94  & 0.928 & 0.918 & 0.936 &  \\
    \midrule
    \multicolumn{7}{c}{Panel C2: MCSE ($G=5000$)} \\
    $d = 1$ & 0.944 & 0.948 & 0.922 & 0.924 & 0.926 & \multirow{2}[1]{*}{0.94} \\
    $d = 0$ & 0.958 & 0.956 & 0.938 & 0.946 & 0.966 &  \\
    \midrule
    \multicolumn{7}{c}{Panel D1: MCDE ($G=10000$)} \\
    $d = 1$ & 0.932 & 0.938 & 0.948 & 0.954 & 0.944 & \multirow{2}[1]{*}{0.94} \\
    $d = 0$ & 0.926 & 0.936 & 0.948 & 0.946 & 0.958 &  \\
    \midrule
    \multicolumn{7}{c}{Panel D2: MCSE ($G=10000$)} \\
    $d = 1 $& 0.938 & 0.95  & 0.946 & 0.94  & 0.954 & \multirow{2}[1]{*}{0.94} \\
    $d = 0$ & 0.95  & 0.952 & 0.95  & 0.954 & 0.95  &  \\
    \bottomrule
    \end{tabular}%
    \footnotesize

    \vspace{0.5em}
    \textbf{Note:} The reported Monte Carlo coverage results are based on 500 replications, with the number of groups set to G = 1000, 3000, 5000, 10000.
  \label{tab:para_sim}%
\end{table}%

For the MCSEs and MCDEs, when the sample size is $G = 1000$, the coverage rates are already close to, but slightly above, 95\% for most parameters. As the sample size increases to $G = 3000$, the coverage rates decrease slightly yet remain close to 95\%, with a few parameters falling just below this threshold. For larger sample sizes, the coverage rates for all parameters stabilize around 95\%. The coverage rates for $\rho$ are also close to 95\% across all sample sizes. These simulation results support the validity of our identification strategy and parametric estimation methods.

\subsection{Application: Returns to education in best-friend relationships} \label{subsec:para_application}

In the empirical analysis, I estimate the direct and spillover effects of returns to education among the best-friend groups. I use data from the National Longitudinal Study of Adolescent to Adult Health (Add Health)\footnote{This research uses data from Add Health, funded by grant P01 HD31921 (Harris) from the Eunice Kennedy Shriver National Institute of Child Health and Human Development (NICHD), with cooperative funding from 23 other federal agencies and foundations. Add Health is currently directed by Robert A. Hummer and funded by the National Institute on Aging cooperative agreements U01 AG071448 (Hummer) and U01AG071450 (Hummer and Aiello) at the University of North Carolina at Chapel Hill. Add Health was designed by J. Richard Udry, Peter S. Bearman, and Kathleen Mullan Harris at the University of North Carolina at Chapel Hill. No direct support was received from grant P01 HD31921 for this analysis.}, a nationally representative longitudinal survey that follows a cohort of U.S. adolescents from grades 7-12 (1994-95 school year) into adulthood. The dataset contains rich information on respondents' family background and detailed friendship networks during adolescence, as well as education attainment and income in adulthood. This unique combination of longitudinal social, demographic, and economic data makes Add Health well suited for studying the long-term effects of adolescent friendships.

The Add Health dataset collects detailed friendship information during adolescence in both the in-home and in-school components of the Wave I survey. In each component, respondents are asked to list up to five male and five female friends, ranked from best to fifth best. I construct best-friend groups, each consisting of two respondents, by matching individuals who mutually nominate each other as their best friend. Following \citet{card2013peer}, I first identify best-friend pairs from the Wave I in-home interviews. I then match any remaining mutually nominated best-friend pairs from the Wave I in-school interviews. Because respondents can nominate the best friend of each gender, I prioritize opposite-gender pairs: if a respondent appears in two different best-friend groups, I retain the group consisting of opposite-gender best friends. 

The relationship between an individual's own education and their income has been extensively studied in the economics literature. In contrast, relatively little attention has been paid to how a best friend's education attainment influences an individual's earnings. Such an effect may operate through two competing channels.  

In this empirical study, I investigate the effect of a best friend's education attainment on an individual's earnings and assess which channel, information sharing or competition, plays the dominant role within best-friend networks. Importantly, our identification framework assumes that spillover effects occur only within the same network and do not extend across different networks. In the context of returns to education, this implies that any effect of another person's education is restricted to the identified best friend, with no cross-pair spillovers. I take the total personal yearly pre-tax income from the Wave III in-home survey and apply a natural logarithm transformation to construct the outcome variable $Y$. The binary treatment variable $D$ is set to 1 if the individual has completed at least 16 years of education and 0 otherwise. I include the age, gender, race, health status, and family income as the controlled covariates $X$. I assume that, conditional on the observed covariates, the coefficients $(\alpha'_{idd'}, \beta'_{idd'})'$ in the potential outcome equations are identical for individuals $i \in \{0, 1\}$ within the same group. 

For the continuous instruments $Z$, I construct measures based on the average parental education level of the individual's non-best friends, defined as all listed friends who are not ranked as the best friend. The average parental education of non-best friends may influence an individual's education attainment through channels such as shaping aspirations, fostering self-confidence, or behavioral  sharing \citep{cools2019girls}. Furthermore, conditional on covariates capturing demographic and socioeconomic characteristics, the family background of non-best friends is plausibly independent of the individual's unobserved heterogeneity. This is because weaker social ties, such as those with non-best friends, are less likely to exhibit the strong peer spillovers characteristic of best-friend relationships, and any residual correlation in unobservables is unlikely to persist once observed similarities are controlled for. Therefore, Assumption \ref{as:RA} is likely to hold in this context. 

The average parental education level of non-best friends during adolescence is unlikely to have a direct effect on an individual's yearly income in adulthood. This is because weaker social ties, such as those with non-best friends, generally lack the sustained and intensive interactions needed to shape long-term labor market outcomes. Unlike best friends, non-best friends are less likely to share close personal networks, exchange detailed career information, or provide direct referrals in the labor market. Moreover, by adulthood, many of these weaker ties from adolescence are no longer active, further limiting the scope for any direct influence on earnings. Therefore, any effect of non-best friends' parental education on the individual's income is likely to operate indirectly through its influence on the individual's own education attainment, rather than through direct channels. Hence, Assumption \ref{as:er} is plausibly satisfied in this setting.

I also require that the education decisions of best friends do not directly affect one another. This assumption is plausible because, while best friends may share aspirations or study habits, the final decision on how many years of education to pursue is typically determined by individual specific factors, such as academic ability, that are not directly changed by the best friend's decision. Therefore, any influence between best friends' education outcomes is more likely to operate indirectly through shared environments or information exchange, which aligns with the simultaneous incomplete information framework underlying our setting, rather than through direct strategic interaction in determining each other's years of schooling.

After excluding best-friend pairs in which both members have missing values for the treatment $D$ or the instrument $Z$, the sample comprises 1,019 best-friend pairs. Given the limited sample size, the parametric framework outlined in Section~\ref{sec:para_procedure} is applied under the parametric conditions specified in Assumption~\ref{as:para_assump}. The estimated correlation between best friends' unobservables $V_{0g}$ and $V_{1g}$ is 0.36, indicating a positive dependence structure among unobservables within best-friend networks. 

\begin{figure}[!htt]
        \centering
        \caption{Estimation and Confidence Bands for MCDEs and MCSEs of Returns to Education}
        \includegraphics[width = \textwidth]{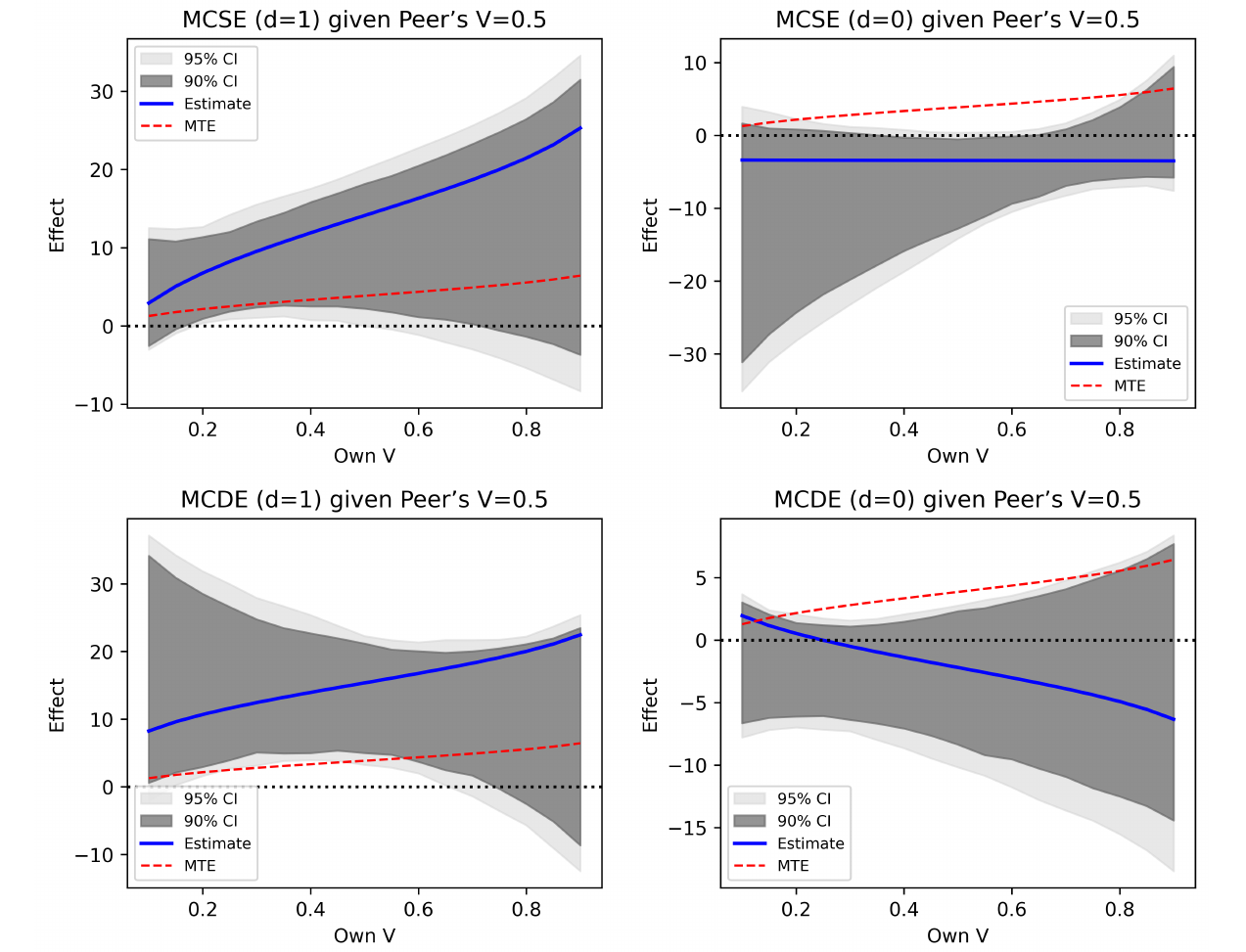}
        \label{fig:application_mcse_msde}

        \footnotesize

    \vspace{0.5em}
    \textbf{Note:} The blue solid line shows the point estimates of the MCSEs and MCDEs, while the light and dark gray shaded regions represent the 95\% and 90\% confidence intervals. The red dotted line indicates the estimated parametric standard MTEs.
\end{figure}

Figure \ref{fig:application_mcse_msde} plots the point estimates and the 90\% and 95\% confidence intervals of the marginal controlled spillover and direct effects by conditioning on the peer's unobservable $V_{-i} = 0.5$ and varing the value of individual's own unobservable. The covariates are fixed at their sample means. The blue solid line depicts the point estimates of the MCSEs and MCDEs, while the light and dark gray shaded areas represent the 95\% and 90\% confidence intervals, respectively. The red dotted line corresponds to the estimated parametric standard MTEs, which deviate substantially from the estimated MCSEs and MCDEs and lie outside their confidence intervals in most cases. This divergence provides empirical evidence of spillover effects between best friends, indicating that the standard MTE framework fails to retain a causal interpretation in the presence of such spillovers.

The results reveal substantial heterogeneity across these parameters. In particular, the estimates of the MCDEs with $d = 1$, which capture the direct effect of completing at least 16-year education given the best friend has completed at least 16 years, are positive and statistically significant at the 5\% level across most values of the individual unobservable $V_i$. However, the estimates of MCDEs with $d = 0$, which measure the direct effect of completing at least 16 years of education given the best friend has not completed this level, are not statistically significant, even at the 10\% level, across all values of the individual unobservable $V_i$. This discrepancy may reflect complementarities in human capital accumulation within best-friend pairs, consistent with the first channel discussed earlier: a highly educated best friend can provide valuable labor market information and opportunities that enhance the returns to one's own education. When both friends attain higher education, they may reinforce each other's labor market prospects through stronger professional networks, mutual encouragement in career development, or joint access to high-return opportunities. In contrast, when the best friend has lower education attainment, such reinforcing mechanisms may be absent, weakening the direct effect of one's own education on earnings.

Figure \ref{fig:application_mcse_msde} also presents the estimated MCSEs along with their confidence intervals. The MCSEs with $d = 1$, which capture the spillover effect of the best friend completing at least 16 years of education given the individual has completed 16 years, are significantly positive at the 5\% level for some values of the individual unobservable $V_i$. In contrast, the MCSEs with $d = 0$, which measure the spillover effect of the best friend completing at least 16 years of education given the individual has not completed 16 years, are even significantly negative at the 10\% level when $V_i$ is around 0.5 (approximately the value of the peer's unobservable $V_{-i}$), suggesting potential adverse spillover effects for some individuals. These patterns are consistent with the two channels through which a best friend's education attainment may affect an individual's earnings. The findings suggest that the information and opportunity channel dominates the competition channel when the individual is also highly educated, leading to positive and significant spillover effects. Conversely, when the individual has not completed 16 years of education, the competition channel appears to dominate, particularly among pairs with similar values of unobserved heterogeneity, resulting in negative estimated spillover effects. This asymmetry suggests complementarities in human capital and opportunity sharing among equally educated peers, and the potential for relative disadvantage when education attainment differs within a best-friend pair.

\section{Extensions} \label{sec:extension}

The baseline framework can be generalized to accommodate various settings in which spillovers occur within predefined groups.
First, point identification of the marginal controlled spillover and direct effects is established when outcomes depend on an exposure mapping function, rather than the full vector of group members' treatment statuses.
Subsequently, Appendix~\ref{sec:cts_treatment} extends the analysis to environments with continuous endogenous treatments, demonstrating that the marginal controlled spillover and direct effects remain point identified in such cases.

\subsection{Exposure to functions of peers' treatments}

\subsubsection{Setting}
In many applications, the predetermined groups within which spillovers occur can be large or vary in size. For example, when groups are defined at the level of schools, villages, or communities. In such cases, modeling outcomes as a function of the entire vector of group members' treatments may become infeasible. To address this issue, I instead adopt a framework in which the outcome of unit $i$ in group $g$, denoted $Y_{ig}$, depends on the unit's own treatment $D_{ig}$ and on a known function of the full vector of group treatments, denoted $H_g$. This function $H_g$ summarizes the group's effective treatment, consistent with the notion of an effective treatment in \citet{manski2013identification} and the exposure mapping framework of \citet{aronow2017estimating}. By reducing the dimensionality of peer treatments to an interpretable exposure measure, this approach allows for the analysis of spillovers in large or heterogeneous groups while maintaining tractable identification and interpretation.

I consider a sample of $G$ independent and identically distributed groups, indexed by $g = 1, \cdots, G$, where spillovers are restricted to occur within groups and not across them. Unlike the baseline framework, each group now consists of $n_g$ members, where the group size $n_g$ is allowed to vary across groups. To capture peer effects in this heterogeneous group size setting, I assume that the outcome of interest depends not on the full treatment vector but rather on a group-level exposure mapping, $H_g: \boldsymbol{D}_g \mapsto \mathbb{R}$, where $\boldsymbol{D}_g$ denotes the vector of individual treatment assignments within group $g$. This mapping $H_g$ is assumed to be continuous and correctly specified by the researcher. A common and tractable specification is the proportion of treated individuals in the group, given by $H_g = \sum_{i=1}^{n_g} D_{ig} / n_g$. Because treatment assignments $\boldsymbol{D}_g$ are observed, researchers can directly recover the realized values of $H_g$ for each group. 

I specify the outcome for individual $i$ in group $g$ as $Y_{ig} = Y_{ig}(D_{ig}, H_g, U_{ig}, U_{-i,g})$, so that outcomes may depend on the individual's own treatment $D_{ig}$, the continuous group-level exposure $H_g$, and both the individual's unobserved characteristics $U_{ig}$ and those of her peers $U_{-i,g}$. I let $Y_{ig}(d, h)$ represents the potential outcome for unit $i$ given $D_{ig} = d$ and $H_g = h$. 


I formulate the following equations to model spillovers that operate through the group-level exposure $H_g$. Throughout this section, I retain the subscript $g$ to distinguish individual level variables (indexed by $ig$) from group level variables (indexed by $g$), thereby clarifying how exposure-driven spillovers enter the model.

\begin{equation} \label{eq:model_mixed_trt}
    \left\{\begin{array}{l}
        Y_{ig} =  Y_{i g} (1, H_g) D_{i g} + Y_{i g} (0, H_g)(1 - D_{i g}) \\
        D_{i g} = \mathbbm{1}\left\{V_{i g} \leq h_i(Z_{ig}, Z_{-ig})\right\} \\
        H_g = m(Z_g, \varepsilon_g)
        \end{array}\right.
\end{equation}

In Equation \eqref{eq:model_mixed_trt}, I formulate the outcome equation within the classical potential outcomes framework, specifying that each individual's outcome depends on her own binary treatment status, $D_{ig} \in \{0, 1\}$, as well as the group-level exposure $H_g$. A key feature of our framework is the recognition that both the individual treatment $D_{ig}$ and the group exposure $H_g$ may be endogenous. Specifically, $D_{ig}$ may correlate with unobserved individual-level characteristics that also influence the outcome. Likewise, the group exposure $H_g$, which is defined as a function of all group members' treatments, may depend on group-level unobservables that also affect the individual outcome.

To address the endogeneity of the individual treatment $D_{ig}$, I model it using a single-threshold crossing rule, analogous to the specification in the basic setting. Specifically, individual $i$ selects into treatment if the unobserved characteristic $V_{ig}$ falls below a threshold $h_i(Z_{ig}, Z_{-ig})$. The threshold function depends on the vector of instruments assigned to individual $i$, $Z_{ig}$, or additionally on the instruments assigned to other group members, $Z_{-ig}$. As before, I do not impose any functional form restrictions on the threshold function $h_i$ to preserve flexibility in how instruments affect treatment selection. In addition, the subscript $i$ allows for heterogeneity in threshold functions across individuals within the same group.

To account for the potential endogeneity of the group-level exposure $H_g$, I introduce a group instrument $Z_g$ and assume that $H_g$ follows a reduced-form relationship given by $H_g = m(Z_g, \varepsilon_g)$, where $\varepsilon_g \in \mathbb{R}$ represents an unobserved group-specific characteristic. The group instrument $Z_g$ may take various forms. For instance, it may correspond to the full vector of individual instruments $(Z_{ig})_{i \in \{1, \cdots, n_g\}}$, or to an aggregate statistic such as the average instrument level within the group. The random variable $\varepsilon_g$ captures latent group-level heterogeneity, potentially containing factors such as the group's social cohesion or the dependence structure among individual-level unobservables $(V_{ig})_{i \in \{1, \cdots, n_g\}}$. I impose no functional form restrictions on $m(\cdot)$ to maintain flexibility in the modeling of group exposure. In addition, I do not restrict the dependence structure between the individual unobservable $V_{ig}$ and the group-level unobservable $\varepsilon_g$, allowing for arbitrary correlation between individual- and group-level latent factors.

\begin{remark}(Reduced function of $H_g$) \label{remark:reduced_Hg}
    To explain the reduced-form function of $H_g$, consider a scenario where the exposure function $H_g$ is defined as the average treatment level within group $g$, $H_g = \sum_{i=1}^{n_g} D_{ig} / n_g$, where $n_g$ denotes the number of members in group $g$, which may vary across groups. Let $\mathcal{I}_g$ represent the set of indices for individuals in group $g$. Assume that the group comprises two types of individuals:
    \begin{enumerate}
        \item Type 1: Individuals indexed by $i \in \mathcal{I}_g^1 \subseteq \mathcal{I}_g$, which have unobserved individual unobservable $V_{ig} = \varepsilon_g$, $\varepsilon_g \in (0, 1)$.
        \item Type 2: Individuals indexed by $j \in \mathcal{I}_g^2 = \mathcal{I}_g \setminus \mathcal{I}_g^1$, with individual unobservable $V_{jg} = 1 - \varepsilon_g$.
    \end{enumerate}
    Here, $\varepsilon_g \in (0, 1)$ captures unobserved heterogeneity at the group level, influencing the individual-level unobservables for both types. Additionally, assume that Type 1 individuals constitute an $\varepsilon_g$-proportion of the group, i.e., $|\mathcal{I}_g^1| / |\mathcal{I}_g| = \varepsilon_g$. Furthermore, suppose that individual treatment decisions depend on a group-level instrument  $Z_g$, $D_{ig} = \mathbbm{1}\{V_{ig} \leq h(Z_g)\}$.
    
    Under these assumptions, the exposure $H_g$ can be expressed as an explicit function of $\varepsilon_g$ and $Z_g$,
    \begin{equation*}
        \begin{aligned}
            H_g =& \frac{1}{n_g}\sum_{i=1}^{n_g} D_{ig} = \frac{1}{n_g} \sum_{i \in \mathcal{I}_g^1} D_{ig} + \frac{1}{n_g} \sum_{i \in \mathcal{I}_g^2} D_{ig} \\
            =& \varepsilon_g \mathbbm{1}\{\varepsilon_g \leq h(Z_g)\} + (1-\varepsilon_g) \mathbbm{1}\{1-\varepsilon_g \leq h(Z_g)\}.
        \end{aligned}
    \end{equation*}
    In this framework, $\varepsilon_g$ not only reflects the proportion of each individual type within the group but also captures the unobserved heterogeneity among different types of group members. More generally, the exposure level $H_g$ can be represented as an unknown reduced-form function of the group-level unobservable $\varepsilon_g$ and the instrument $Z_g$, where $\varepsilon_g$ can be interpreted as a scalar latent variable that fully summarizes the group-level unobserved heterogeneity relevant for determining exposure. Formally, I write $H_g = m(Z_g, \varepsilon_g)$, with $m(\cdot)$ left unspecified. One convenient interpretation is to view $m(z,e)$ as the quantile function of the conditional distribution of exposure, $Q_{H_g \mid Z_g=z}(e)$, and to define $\varepsilon_g$ as the corresponding conditional cumulative distribution function, $\varepsilon_g = F_{H_g \mid Z_g}(H_g)$. 
\end{remark}

\begin{remark}(Random saturation framework)
    Our exposure-mapping framework can accommodate randomized saturation designs, such as those studied in \citet{ditraglia2023identifying}, where each group is randomly assigned a saturation level—defined as the proportion of individuals offered treatment within the group. In this context, the group-level instrument $Z_g$ corresponds to the randomized saturation assignment for group $g$, while the group exposure $H_g$ reflects the average treatment take-up within the group. Unlike their approach, which requires one-sided noncompliance and an individualized offer response assumption, our methodology does not impose restrictions on the compliance behavior and allow the treatment take-up to depend on other group members' treatment assignment, thereby capturing richer forms of spillovers and strategic behavior.
    
Crucially, our identification strategy is based on a reduced-form modeling of the group exposure and outcome equations. This approach eliminates the need to specify random coefficient models or to model the distribution of compliance types explicitly. Within our framework, the group-level unobservable $\varepsilon_g$ can be viewed as a scalar proxy for the fraction of compliers in the spirit of \cite{ditraglia2023identifying}, while more broadly capturing latent heterogeneity in group responsiveness without imposing restrictive parametric assumptions.
\end{remark}

In the exposure mapping framework, I redefine the marginal controlled spillover effect (MCSE) and the marginal controlled direct effect (MCDE) relative to the basic setting by explicitly conditioning on both the individual-specific unobservable $V_{ig}$ and the group-level unobservable $\varepsilon_g$, which accommodates heterogeneity at both the individual and group levels.

\begin{definition} (MCSE and MCDE: Exposure mapping setting) 
    Consider the model specified in Equation \eqref{eq:model_mixed_trt}.
    \begin{enumerate}
        \item Fix the treatment of unit $i$ in group $g$ to be $D_{ig} = d$, $d \in \{0, 1\}$. I define the marginal controlled spillover effect (MCSE) of changing the exposure level from $h$ to $h'$, where $h,h' \in \mathbb{R}$, conditional on the individual unobservable $V_{ig} = p_0$ and the group-level unobservable $\varepsilon_g = p_1$, as
        \begin{equation*}
            \operatorname{MCSE}_{ig}(d, h', h; p_0, p_1) \equiv \mathbb{E}\big[Y_{ig}(d, h')-Y_{ig}(d, h) \mid V_{ig}=p_0, \varepsilon_g=p_1\big].
        \end{equation*}
        \item Fix the exposure level in group $g$ as $H_g = h$, $h \in \mathbb{R}$. I define the marginal controlled direct effect (MCDE) for individual $i$, conditional on the individual unobservable $V_{ig} = p_0$ and the group-level unobservable $\varepsilon_g = p_1$, as
        \begin{equation*}
            \operatorname{MCDE}_{ig}(h; p_0, p_1) \equiv \mathbb{E}\big[Y_{ig}(1, h)-Y_{ig}(0, h) \mid V_{ig}=p_0, \varepsilon_g=p_1\big]
        \end{equation*}
    \end{enumerate}
\end{definition}

\subsubsection{Identification}
Identification in this setting is achieved under Assumptions \ref{as:RA_mixed}-\ref{as:mon_m}.

\begin{assumption}(Random assignment: Exposure mapping setting) \label{as:RA_mixed}
    I assume that the instrument vector is randomly assigned across groups, so that for any $g \in \{1, \cdots, G\}$,
    \begin{equation*}
        (Z_{ig}, Z_g)_{i \in \{1, \cdots, n_g\}} \indep \big(Y_{ig}(d, h), V_{ig}, \varepsilon_g\big)_{d \in \{0, 1\}, h \in \mathbb{R}, i \in \{1, \cdots, n_g\}}.
    \end{equation*}
    Additionally, the instruments $(Z_{ig}, Z_g)_{i \in \{1, \cdots, n_g\}}$ satisfy an exclusion restriction in that they do not directly affect the outcome $Y_{ig}$.
\end{assumption}

Assumption \ref{as:RA_mixed} requires that the vector of instruments assigned to individuals and the group must be randomly assigned at the group level, such that they are independent of all potential outcomes, as well as of both individual- and group-level unobservables. Moreover, under the model structure in Equation \eqref{eq:model_mixed_trt}, the instruments also satisfy the exclusion restriction, in the sense that they influence outcomes only through their effect on treatment take-ups and exposure, and do not directly enter the outcome equation.

\begin{assumption}(Monotonicity of $m$) \label{as:mon_m}
    Given the instrument values $Z_g = z \in \mathbb{R}^k$, the function $m(z, e)$ is continuous and strictly monotonic in $e$.
\end{assumption}

The monotonicity condition in Assumption \ref{as:mon_m} ensures that the group-level treatment $H_g$ is a one-to-one mapping of the group-level unobservable $\varepsilon_g$, conditional on the instruments. Specifically, for any given $Z_g = z$, the reduced-form relation $H_g \mid (Z_g = z) = m(z, \varepsilon_g)$ can be inverted with respect to $\varepsilon_g$, yielding $ \varepsilon_g \mid (Z_g = z) = m_z^{-1}(H_g) $. Thus, under the random assignment assumption \ref{as:RA_mixed}, I obtain the control function representation $\varepsilon_g = m_{Z_g}^{-1}(H_g)$. As established in \citet{goff2024testing}, if the conditional distribution $F_{H_g \mid Z_g}$ is strictly increasing and continuous, then the monotonicity condition is not an additional structural restriction but instead follows directly from the reduced-form interpretation of the exposure mapping function discussed in Remark \ref{remark:reduced_Hg}.

I define the individual-level propensity score function for unit i in group g, consistent with previous definitions, as $P_{ig}(z) \equiv \mathbb{P}(D_{ig} = 1 \mid Z_{ig} = z)$. In addition, I define the group-level propensity score function for group $g$ as $P_g(z, h) \equiv \mathbb{P}(H_g \leq h \mid Z_g = z)$. I denote by $\mathcal{P}$ the support of the joint propensity score function $(P_{ig}, P_g)$. As shown in the Appendix \ref{app:proof_exposure}, the individual-level propensity score function identifies the individual threshold function $h_i(\cdot)$, while the group-level propensity score function identifies the inverse of the exposure mapping, $m_{Z_g}^{-1}(H_g)$. Taken together, these results imply that both individual- and group-level propensity score functions can be used as control functions, allowing us to account for unobserved heterogeneity in the outcome equation and thereby achieve identification of the marginal controlled spillover and direct effects. 

\begin{theorem}(Identifying MCSEs and MCDEs: Exposure mapping setting)
    Consider the model specified in Equation \eqref{eq:model_mixed_trt}. Suppose that Assumptions \ref{as:Vdist}, \ref{as:RA_mixed}, and \ref{as:mon_m} hold. For $d \in \{0, 1\}$, $h \in \mathbb{R}$, and $(p_0, p_1) \in \mathcal{P}$, I impose the following additional conditions: (i) $\mathbb{E}[Y_{i g} D_{i g} \mid H_g=h, P_{i g}(Z_g)=p_0, P_g(Z_g, H_g)=p_1]$ and $\mathbb{P}(D_{ig} = 1 \mid H_g = h, P_{ig}(Z_g) = p_0, P_g(Z_g, H_g) = p_1)$ are differentiable with respect to $p_0$; (ii) the marginal treatment response functions 
    \begin{equation*}
        m_{ig}^{(d, h)}\left(p_0, p_1\right) \equiv \mathbb{E}\left[Y_{ig}\left(d, h\right) \mid V_{ig}=p_0, \varepsilon_g=p_1\right]
    \end{equation*}
    are continuous; and (iii) the conditional density $f_{V_{ig} \mid \epsilon_g}$ is bounded from above and away from zero. 
    
    Then, $m_{ig}^{(d, h)}(p_0, p_1)$ is identified as 
    \begin{equation*}
        \begin{aligned}
            &\frac{\partial}{\partial p_0} \mathbb{E}\left[Y_{ig} \mathbbm{1} \{D_{ig} = d \} \mid H_g = h, P_{ig}\left(Z_g\right) = p_0, P_g\left(Z_g, H_g\right) = p_1\right] \Big/ \\
            &\frac{\partial}{\partial p_0} \mathbb{P}\left(D_{ig} = d \mid H_g = h, P_{ig}(Z_g) = p_0, P_g(Z_g, H_g) = p_1\right).
        \end{aligned}
    \end{equation*}
    By taking appropriate differences of the marginal treatment response functions, I obtain identification of the marginal controlled spillover effect (MCSE) and the marginal controlled direct effect (MCDE).
\end{theorem}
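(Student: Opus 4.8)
The plan is to identify the marginal treatment response (MTR) function $m_{ig}^{(d,h)}(p_0,p_1)$ first; once this is done, the MCSE and MCDE follow immediately as differences $m_{ig}^{(d,h')}-m_{ig}^{(d,h)}$ and $m_{ig}^{(1,h)}-m_{ig}^{(0,h)}$. The whole argument rests on converting the two propensity scores into control functions: the individual score pins down the selection threshold, while the group score, via Assumption~\ref{as:mon_m}, pins down the group-level unobservable $\varepsilon_g$.

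First I would record the two control-function equivalences, which are exactly the facts established in the paragraph preceding the theorem (Appendix~\ref{app:proof_exposure}). Under Assumption~\ref{as:Vdist} normalize $V_{ig}\sim U(0,1)$, so that the individual propensity score equals the threshold $h_i$; conditioning on $P_{ig}(Z_g)=p_0$ therefore fixes the threshold at $p_0$ and yields the equivalence $\{D_{ig}=1\}=\{V_{ig}\le p_0\}$. By Assumption~\ref{as:mon_m}, $m(z,\cdot)$ is continuous and strictly monotone, so $H_g=m(Z_g,\varepsilon_g)$ is invertible in $\varepsilon_g$; with the normalization $\varepsilon_g=F_{H_g\mid Z_g}(H_g)$ one gets $P_g(Z_g,H_g)=F_{H_g\mid Z_g}(H_g)=\varepsilon_g$, so conditioning on $P_g(Z_g,H_g)=p_1$ is the same as conditioning on $\varepsilon_g=p_1$.

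Next I translate the observed conditional moment into an integral of the MTR. For $d=1$, on the conditioning event one has $Y_{ig}\mathbbm{1}\{D_{ig}=1\}=Y_{ig}(1,H_g)\mathbbm{1}\{V_{ig}\le p_0\}$. By Assumption~\ref{as:RA_mixed} the instruments are jointly independent of $(Y_{ig}(d,h),V_{ig},\varepsilon_g)$ and, through the exclusion restriction, affect the outcome only through $H_g$; hence, given $\varepsilon_g=p_1$, the event $H_g=h$ is informative only about $Z_g$ and leaves the conditional law of $(Y_{ig}(1,h),V_{ig})$ unchanged, while $P_{ig}$ is independent of $V_{ig}$. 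This gives
\begin{equation*}
\mathbb{E}\big[Y_{ig}\mathbbm{1}\{D_{ig}=1\}\mid H_g=h,P_{ig}=p_0,P_g=p_1\big]=\int_0^{p_0} m_{ig}^{(1,h)}(v,p_1)\,f_{V_{ig}\mid\varepsilon_g}(v\mid p_1)\,dv,
\end{equation*}
and analogously $\mathbb{P}(D_{ig}=1\mid H_g=h,P_{ig}=p_0,P_g=p_1)=\int_0^{p_0} f_{V_{ig}\mid\varepsilon_g}(v\mid p_1)\,dv$. Differentiating both in $p_0$ (valid by the differentiability in condition (i), continuity of the MTR in (ii), and the Leibniz rule) produces $m_{ig}^{(1,h)}(p_0,p_1)\,f_{V_{ig}\mid\varepsilon_g}(p_0\mid p_1)$ and $f_{V_{ig}\mid\varepsilon_g}(p_0\mid p_1)$; condition (iii) keeps the denominator bounded away from zero, so the ratio identifies $m_{ig}^{(1,h)}(p_0,p_1)$. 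For $d=0$ the identical steps over the complementary region $\{V_{ig}>p_0\}$ introduce a minus sign in both the numerator and denominator derivatives, which cancels in the ratio, so no sign correction is needed. Taking the indicated differences of these identified MTR functions then delivers the MCSE and MCDE.

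The main obstacle I anticipate is the rigorous justification of the reduction in the third step, namely that conditioning jointly on $H_g=h$, $P_{ig}=p_0$, and $P_g=p_1$ is equivalent to conditioning on $\varepsilon_g=p_1$ while evaluating the potential outcome at exposure $h$ and integrating $V_{ig}$ over $(0,p_0]$. This requires carefully combining the exclusion restriction with the joint independence in Assumption~\ref{as:RA_mixed} to argue that, conditional on $\varepsilon_g=p_1$, the constraint $H_g=h$ restricts only $Z_g$ and conveys no further information about $(Y_{ig}(d,\cdot),V_{ig})$, and that the normalized control $\varepsilon_g=P_g$ is internally consistent. Care is also needed to keep $(p_0,p_1)$ in $\mathcal{P}$ and $(h,z)$ in the support of $(H_g,Z_g)$ so that the conditional density $f_{V_{ig}\mid\varepsilon_g}$ and the evaluation point $m(z,p_1)=h$ are well defined.
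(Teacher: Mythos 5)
Your proposal is correct and follows essentially the same route as the paper's proof in Appendix \ref{app:proof_exposure}: the individual and group propensity scores serve as control functions (the latter via the monotonicity of $m$, so that conditioning on $P_g(Z_g,H_g)=p_1$ amounts to conditioning on $\varepsilon_g=p_1$), the observed moments reduce by random assignment to integrals of the MTR against $f_{V_{ig}\mid\varepsilon_g}$ over $\{V_{ig}\le p_0\}$ (or its complement for $d=0$), and differentiation in $p_0$ followed by taking the ratio recovers $m_{ig}^{(d,h)}(p_0,p_1)$. The step you flag as the main obstacle — that conditioning on $\{H_g=h,\,P_{ig}=p_0,\,P_g=p_1\}$ restricts only $Z_g$-measurable quantities and $\varepsilon_g$, so it drops out by Assumption \ref{as:RA_mixed} — is handled in the paper by exactly the argument you sketch.
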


\begin{proof}
    See Appendix \ref{app:proof_exposure}.
\end{proof}

\section{Conclusion}
This paper develops a general framework for identifying causal effects in environments with within-group spillovers and endogenous treatment decisions. By relaxing the Stable Unit Treatment Value Assumption (SUTVA), the framework accommodates settings in which an individual's outcome depends not only on her own treatment but also on the treatment selection of group members. It further allows each individual's treatment decision to depend on instruments assigned to other group members.

The paper introduces two classes of causal parameters, he generalized local average controlled spillover and direct effects (LACSEs and LACDEs) and the marginal controlled spillover and direct effects (MCSEs and MCDEs), which extend the standard local average and marginal treatment effect frameworks to settings with spillovers. The LACSEs and LACDEs quantify peer and own treatment effects for specific subpopulations, while the MCSEs and MCDEs capture these effects conditional on continuous values of unobserved heterogeneity within groups. The paper formally establishes general conditions for the point identification of the LACSEs and LACDEs, characterizing the instrumental variation necessary for identification regardless of whether the instruments are discrete or continuous. It also shows that the MCSEs and MCDEs are nonparametrically point identified from continuous instrumental variation without imposing functional form restrictions on the outcome equation or on the joint distribution of unobserved characteristics within the group, thereby accommodating flexible forms of spillover structures.

These results extend existing approaches to causal inference with spillovers, showing that the MCSE-MCDE framework provides a natural generalization of the standard marginal treatment effect (MTE) model to spillover settings. Furthermore, the paper establishes that these marginal controlled effects serve as building blocks for policy-relevant treatment parameters (PRTEs), enabling the evaluation of both direct and spillover effects under counterfactual policy interventions.

For estimation and inference, the paper develops a semiparametric estimation strategy that builds on \citet{carneiro2009estimating}, extending it to accommodate within-group spillovers while mitigating the curse of dimensionality associated with covariates. Asymptotic properties of the semiparametric estimators are derived, and a parametric estimation framework is proposed as a practical complement when sample sizes are limited or group sizes are large. Monte Carlo simulations demonstrate that the parametric estimators perform well in finite samples, providing accurate estimates and confidence interval coverage.

An empirical application using the National Longitudinal Study of Adolescent to Adult Health (Add Health) illustrates the framework's practical relevance. The analysis examines how education attainment affects long-term earnings within best-friend networks. The results indicate positive dependence between friends' unobserved characteristics and reveal heterogeneous spillover patterns, showing that both the magnitude and direction of peer influences vary with individuals' and their best friends' education attainment.

Finally, the paper extends the framework to settings with exposure mappings, where outcomes depend on a known function of group members' treatments rather than the full treatment vector. This generalization broadens the applicability of the framework to environments with varying or large group sizes, while preserving nonparametric point identification under continuous instrumental variation. The appendix further extends the analysis to settings with continuous endogenous treatments and establishes point identification results under continuous instruments.

Overall, the proposed framework offers an econometric foundation for identifying and estimating causal effects in the presence of within-group spillovers and endogenous treatments. It provides theoretical and practical tools for studying a wide range of social, education, and economic interactions. Future research could extend the framework by modeling endogenous group formation, and by developing more efficient semiparametric estimation methods to enhance finite-sample performance.

\clearpage
\onehalfspacing
\bibliographystyle{jpe}
\renewcommand\refname{Reference}
\bibliography{Template_Bib}

@article{aradillas2010semiparametric,
  title     = {Semiparametric estimation of a simultaneous game with incomplete information},
  author    = {Aradillas-Lopez, Andres},
  journal   = {Journal of Econometrics},
  volume    = {157},
  number    = {2},
  pages     = {409--431},
  year      = {2010},
  publisher = {Elsevier}
}

@article{aronow2017estimating,
  title     = {ESTIMATING AVERAGE CAUSAL EFFECTS UNDER GENERAL INTERFERENCE, WITH APPLICATION TO A SOCIAL NETWORK EXPERIMENT},
  author    = {Aronow, Peter M and Samii, Cyrus},
  journal   = {THE ANNALS of APPLIED STATISTICS},
  pages     = {1912--1947},
  year      = {2017},
  publisher = {JSTOR}
}

@article{balat2023multiple,
  title     = {Multiple treatments with strategic substitutes},
  author    = {Balat, Jorge F and Han, Sukjin},
  journal   = {Journal of Econometrics},
  volume    = {234},
  number    = {2},
  pages     = {732--757},
  year      = {2023},
  publisher = {Elsevier}
}

@article{brinch2017beyond,
  title={Beyond LATE with a discrete instrument},
  author={Brinch, Christian N and Mogstad, Magne and Wiswall, Matthew},
  journal={Journal of Political Economy},
  volume={125},
  number={4},
  pages={985--1039},
  year={2017},
  publisher={University of Chicago Press Chicago, IL}
}

@article{belloni2015some,
  title     = {Some new asymptotic theory for least squares series: Pointwise and uniform results},
  author    = {Belloni, Alexandre and Chernozhukov, Victor and Chetverikov, Denis and Kato, Kengo},
  journal   = {Journal of Econometrics},
  volume    = {186},
  number    = {2},
  pages     = {345--366},
  year      = {2015},
  publisher = {Elsevier}
}

@article{bickel2009simultaneous,
  title     = {Simultaneous analysis of Lasso and Dantzig selector},
  author    = {Bickel, Peter J and Ritov, Ya'acov and Tsybakov, Alexandre B},
  journal   = {The Annals of Statistics},
  volume    = {37},
  number    = {4},
  pages     = {1705--1732},
  year      = {2009},
  publisher = {Citeseer}
}

@article{bifulco2011effect,
  title     = {The effect of classmate characteristics on post-secondary outcomes: Evidence from the Add Health},
  author    = {Bifulco, Robert and Fletcher, Jason M and Ross, Stephen L},
  journal   = {American Economic Journal: Economic Policy},
  volume    = {3},
  number    = {1},
  pages     = {25--53},
  year      = {2011},
  publisher = {American Economic Association}
}

@article{bifulco2014high,
  title     = {Do high school peers have persistent effects on college attainment and other life outcomes?},
  author    = {Bifulco, Robert and Fletcher, Jason M and Oh, Sun Jung and Ross, Stephen L},
  journal   = {Labour economics},
  volume    = {29},
  pages     = {83--90},
  year      = {2014},
  publisher = {Elsevier}
}

@article{blume2015linear,
  title     = {Linear social interactions models},
  author    = {Blume, Lawrence E and Brock, William A and Durlauf, Steven N and Jayaraman, Rajshri},
  journal   = {Journal of Political Economy},
  volume    = {123},
  number    = {2},
  pages     = {444--496},
  year      = {2015},
  publisher = {University of Chicago Press Chicago, IL}
}

@article{bramoulle2009identification,
  title     = {Identification of peer effects through social networks},
  author    = {Bramoull{\'e}, Yann and Djebbari, Habiba and Fortin, Bernard},
  journal   = {Journal of econometrics},
  volume    = {150},
  number    = {1},
  pages     = {41--55},
  year      = {2009},
  publisher = {Elsevier}
}

@article{card2013peer,
  title     = {Peer effects and multiple equilibria in the risky behavior of friends},
  author    = {Card, David and Giuliano, Laura},
  journal   = {Review of Economics and Statistics},
  volume    = {95},
  number    = {4},
  pages     = {1130--1149},
  year      = {2013},
  publisher = {The MIT Press}
}

@article{carneiro2009estimating,
  title     = {Estimating distributions of potential outcomes using local instrumental variables with an application to changes in college enrollment and wage inequality},
  author    = {Carneiro, Pedro and Lee, Sokbae},
  journal   = {Journal of Econometrics},
  volume    = {149},
  number    = {2},
  pages     = {191--208},
  year      = {2009},
  publisher = {Elsevier}
}

@article{carneiro2011estimating,
  title     = {Estimating marginal returns to education},
  author    = {Carneiro, Pedro and Heckman, James J and Vytlacil, Edward J},
  journal   = {American Economic Review},
  volume    = {101},
  number    = {6},
  pages     = {2754--2781},
  year      = {2011},
  publisher = {American Economic Association}
}

@article{carr2021testing,
  title   = {Testing instrument validity with covariates},
  author  = {Carr, Thomas and Kitagawa, Toru},
  journal = {arXiv preprint arXiv:2112.08092},
  year    = {2021}
}

@techreport{cools2019girls,
  title       = {Girls, boys, and high achievers},
  author      = {Cools, Angela and Fern{\'a}ndez, Raquel and Patacchini, Eleonora},
  year        = {2019},
  institution = {National Bureau of Economic Research}
}

@article{ditraglia2023identifying,
  title     = {Identifying causal effects in experiments with spillovers and non-compliance},
  author    = {DiTraglia, Francis J and Garc{\'\i}a-Jimeno, Camilo and O’Keeffe-O’Donovan, Rossa and S{\'a}nchez-Becerra, Alejandro},
  journal   = {Journal of Econometrics},
  volume    = {235},
  number    = {2},
  pages     = {1589--1624},
  year      = {2023},
  publisher = {Elsevier}
}

@book{fan1996local,
  author    = {Fan, J. and Gijbels, I.},
  publisher = {Chapman \& Hall/CRC},
  title     = {Local polynomial modelling and its applications},
  year      = 1996
}

@article{goff2024testing,
  title   = {Testing Identifying Assumptions in Parametric Separable Models: A Conditional Moment Inequality Approach},
  author  = {Goff, Leonard and K{\'e}dagni, D{\'e}sir{\'e} and Wu, Huan},
  journal = {arXiv preprint arXiv:2410.12098},
  year    = {2024}
}

@article{heckman2001policy,
  title     = {Policy-relevant treatment effects},
  author    = {Heckman, James J and Vytlacil, Edward},
  journal   = {American Economic Review},
  volume    = {91},
  number    = {2},
  pages     = {107--111},
  year      = {2001},
  publisher = {American Economic Association}
}

@article{heckman2005structural,
  title     = {Structural equations, treatment effects, and econometric policy evaluation 1},
  author    = {Heckman, James J and Vytlacil, Edward},
  journal   = {Econometrica},
  volume    = {73},
  number    = {3},
  pages     = {669--738},
  year      = {2005},
  publisher = {Wiley Online Library}
}

@article{hoshino2023treatment,
  title     = {Treatment effect models with strategic interaction in treatment decisions},
  author    = {Hoshino, Tadao and Yanagi, Takahide},
  journal   = {Journal of Econometrics},
  volume    = {236},
  number    = {2},
  pages     = {105495},
  year      = {2023},
  publisher = {Elsevier}
}

@article{hudgens2008toward,
  title     = {Toward causal inference with interference},
  author    = {Hudgens, Michael G and Halloran, M Elizabeth},
  journal   = {Journal of the American Statistical Association},
  volume    = {103},
  number    = {482},
  pages     = {832--842},
  year      = {2008},
  publisher = {Taylor \& Francis}
}

@article{imbens1994identification,
  title     = {Identification and estimation of local average treatment effects},
  author    = {Imbens, Guido W and Angrist, Joshua D},
  journal   = {Econometrica: journal of the Econometric Society},
  pages     = {467--475},
  year      = {1994},
  publisher = {JSTOR}
}

@article{kang2016peer,
  title   = {Peer encouragement designs in causal inference with partial interference and identification of local average network effects},
  author  = {Kang, Hyunseung and Imbens, Guido},
  journal = {arXiv preprint arXiv:1609.04464},
  year    = {2016}
}

@article{kline2016evaluating,
  title     = {Evaluating public programs with close substitutes: The case of Head Start},
  author    = {Kline, Patrick and Walters, Christopher R},
  journal   = {The Quarterly Journal of Economics},
  volume    = {131},
  number    = {4},
  pages     = {1795--1848},
  year      = {2016},
  publisher = {MIT Press}
}

@article{lee2009training,
  title     = {Training, wages, and sample selection: Estimating sharp bounds on treatment effects},
  author    = {Lee, David S},
  journal   = {The Review of Economic Studies},
  volume    = {76},
  number    = {3},
  pages     = {1071--1102},
  year      = {2009},
  publisher = {Wiley-Blackwell}
}

@article{lee2018identifying,
  title     = {Identifying effects of multivalued treatments},
  author    = {Lee, Sokbae and Salani{\'e}, Bernard},
  journal   = {Econometrica},
  volume    = {86},
  number    = {6},
  pages     = {1939--1963},
  year      = {2018},
  publisher = {Wiley Online Library}
}

@article{manski1993identification,
  title     = {Identification of endogenous social effects: The reflection problem},
  author    = {Manski, Charles F},
  journal   = {The review of economic studies},
  volume    = {60},
  number    = {3},
  pages     = {531--542},
  year      = {1993},
  publisher = {Wiley-Blackwell}
}

@article{manski2013identification,
  title     = {Identification of treatment response with social interactions},
  author    = {Manski, Charles F},
  journal   = {The Econometrics Journal},
  volume    = {16},
  number    = {1},
  pages     = {S1--S23},
  year      = {2013},
  publisher = {Oxford University Press Oxford, UK}
}

@article{masry1996multivariate,
  title     = {Multivariate regression estimation local polynomial fitting for time series},
  author    = {Masry, Elias},
  journal   = {Stochastic Processes and their Applications},
  volume    = {65},
  number    = {1},
  pages     = {81--101},
  year      = {1996},
  publisher = {Elsevier}
}

@article{mogstad2018using,
  title     = {Using instrumental variables for inference about policy relevant treatment parameters},
  author    = {Mogstad, Magne and Santos, Andres and Torgovitsky, Alexander},
  journal   = {Econometrica},
  volume    = {86},
  number    = {5},
  pages     = {1589--1619},
  year      = {2018},
  publisher = {Wiley Online Library}
}

@article{romano2012uniform,
  title   = {ON THE UNIFORM ASYMPTOTIC VALIDITY OF SUBSAMPLING AND THE BOOTSTRAP},
  author  = {Romano, Joseph P and Shaikh, Azeem M},
  journal = {The Annals of Statistics},
  volume  = {40},
  number  = {6},
  pages   = {2798--2822},
  year    = {2012}
}

@article{vazquez2023causal,
  title     = {Causal spillover effects using instrumental variables},
  author    = {Vazquez-Bare, Gonzalo},
  journal   = {Journal of the American Statistical Association},
  volume    = {118},
  number    = {543},
  pages     = {1911--1922},
  year      = {2022},
  publisher = {Taylor \& Francis}
}

@article{vazquez2023identification,
  title     = {Identification and estimation of spillover effects in randomized experiments},
  author    = {Vazquez-Bare, Gonzalo},
  journal   = {Journal of Econometrics},
  volume    = {237},
  number    = {1},
  pages     = {105237},
  year      = {2023},
  publisher = {Elsevier}
}


\begin{appendices}

\section{Further Details of the Modeling Framework}

\subsection{Endogenous Effects in the Outcome} \label{app:model_endo_effect}
    Some studies (e.g., \citealp{bramoulle2009identification}) model spillover effects using a system of structural equations in which a unit's outcome may directly depend on the outcomes of their peers. When outcomes among group members influence one another, such interactions are typically referred to as endogenous effects. In contrast, this framework does not explicitly model endogenous effects, since $Y_i$ does not directly depend on the outcomes of unit $i$'s peers, $Y_{-i}$. However, this should not be interpreted as ruling out the possibility of endogenous effects within the model. As discussed in \cite{manski2013identification}, the potential outcome $Y_i(d, d')$ can be interpreted as a reduced-form solution of underlying structural models that include endogenous effects.

    For illustration, consider a system of linear structural equations for $Y_i$, where $i \in \{0, 1\}$, within group $g$. The group subscript $g$ is omitted for notational simplicity.
    \begin{equation*}
        \begin{aligned}
            & Y_0=\alpha_0+\alpha_1 D_{0}+\alpha_2 D_{1}+\alpha_3 Y_{1}+U_{0}+\gamma_1 U_{1}, \\
            & Y_1=\beta_0+\beta_1 D_{1}+\beta_2 D_{0}+\beta_3 Y_{0}+U_{1}+\gamma_2 U_{0}, \alpha_3 \beta_3 \neq 1
        \end{aligned}
    \end{equation*}
    where $D_i$ denotes unit $i$'s treatment and $U_i$ captures unobserved factors. One can solve the system to obtain a reduced-form expression for $Y_i$ as a linear function of $D_i$ and $D_{-i}$, without explicitly involving $Y_{-i}$:
    \begin{equation*}
        \begin{aligned}
            Y_{0}= & \frac{\alpha_0+\alpha_3 \beta_0+\left(\alpha_1+\alpha_3 \beta_2\right) D_{0}+\left(\alpha_2+\alpha_3 \beta_1\right) D_{1}}{1-\alpha_3 \beta_3} \\
            & +\frac{\left(1+\alpha_3 \gamma_2\right) U_{0}+\left(\gamma_1+\alpha_3\right) U_{1}}{1-\alpha_3 \beta_3}, \\
            Y_{1 g}= & \frac{\beta_0+\beta_3 \alpha_0+\left(\beta_1+\beta_3 \alpha_2\right) D_{1}+\left(\beta_2+\beta_3 \alpha_1\right) D_{0}}{1-\alpha_3 \beta_3} \\
            & +\frac{\left(1+\beta_3 \gamma_1\right) U_{1}+\left(\beta_3+\gamma_2\right) U_{0}}{1-\alpha_3 \beta_3}.
        \end{aligned}
    \end{equation*}
    Therefore, $Y_i(d, d')$ can be interpreted as solutions for $Y_i$ when the treatment assignments are set to $D_i = d$ and $D_{-i} = d'$. This interpretation remains valid when the structural functions are nonlinear. In the nonlinear case, the link between the potential outcome equations and the underlying structural models is less transparent, though it can still be derived by researchers in the context of specific applications. In this paper, however, the focus is on the reduced-form treatment effects of a unit's' own and peers' treatments, rather than on the structural parameters embedded in the structural equations.

\subsection{Simultaneous Incomplete Information Game} \label{app:incomplete_info_game}

    In this framework, unit $i$'s treatment decision does not directly depend on her peers' treatment choices $D_{-i}$, implying the absence of strategic interaction in treatment take-up. This structure is consistent with a simultaneous incomplete information game, as analyzed in \cite{aradillas2010semiparametric}. In such a framework, $V_i$ represents private information observed only by unit $i$, while $Z = (Z_i, Z_{-i})$ is a vector of public signals observed by all group members. Each unit $i$ forms a subjective belief about the joint distribution $P_i(D_i = 1, D_{-i} = 1 \mid Z)$, and makes an optimal decision accordingly. The decision rule for unit $i$ can be derived as 
    \begin{equation*}
        D_i=\mathbbm{1}\{V_i \leq \alpha_i \underbrace{P_i\big(D_{-i}=1 \mid D_i=1, Z\big)}_{\text {Unit i's belief, function of } Z}\}.
    \end{equation*}
The optimal decision function satisfies the single threshold crossing structure imposed in our setting, with unit $i$'s belief captured by the threshold function $h_i(Z)$. \cite{aradillas2010semiparametric} provides conditions for the existence and uniqueness of equilibrium beliefs. For a detailed discussion of the simultaneous-move game of incomplete information, see \cite{aradillas2010semiparametric}.
    
In contrast, other studies, such as \cite{balat2023multiple} and \cite{hoshino2023treatment}, focus on settings with direct strategic interactions between $D_i$ and $D_{-i}$. However, those models do not allow unit $i$'s treatment to be directly influenced by peer instruments $Z_{-i}$. Both incomplete information games and models with strategic interaction are empirically relevant, and their associated identification strategies can be viewed as complementary to the framework developed in this paper.

\section{Proof of Identification for LACSEs and LACDEs} \label{app:id_lacse_lacde}

\textit{Identifying the generalized LACSEs: }Suppose that two distinct pairs of propensity scores, $(p_0, p_1)$ and $(p_0, p_1')$, exist in $\mathcal{P}$ with $p_1' > p_1$. The following discussion illustrates how this variation can be exploited to identify the local average controlled spillover and direct effects described in Item 1 of Theorem \ref{thm:id_lace}.

Given $(p_0, p_1) \in \mathcal{P}$, consider the observed conditional expectation $\mathbb{E}[Y_i D_i D_{-i} \mid P_i = p_0, P_{-i} = p_1]$. This expectation identifies the average potential outcome $Y_i(1, 1)$ for the subpopulation whose unobserved characteristics lie in the region $\{V_i \leq p_0, V_{-i} \leq p_1\}$:
\begin{equation*} 
    \begin{aligned}
\mathbb{E}\big[Y_i D_i D_{-i} \mid P_i=p_0, P_{-i}=p_1\big] & =\mathbb{E}\big[Y_i(1,1) \mathbbm{1}\left\{V_i \leq p_0, V_{-i} \leq p_1\right\} \mid P_i=p_0, P_{-i}=p_1\big] \\
& =\mathbb{E}\big[Y_i(1,1) \mathbbm{1}\left\{V_i \leq p_0, V_{-i} \leq p_1\right\}\big],
\end{aligned}
\end{equation*}
where the second equality holds because the propensity scores are functions of the instrumental variables and are therefore independent of the potential outcomes and unobservables $(V_i, V_{-i})$, as implied by Assumption \ref{as:RA}. Similarly, evaluating $\mathbb{E}[Y_i D_i D_{-i} \mid \cdot, \cdot]$ at $(p_0, p_1')$ identifies the mean potential outcome $Y_i(1, 1)$ for the subpopulation with $\{V_i \leq p_0, V_{-i} \leq p_1'\}$. The difference between this and the expectation evaluated at $(p_0, p_1)$ identifies the average potential outcome $Y_i(1, 1)$ for the subpopulation characterized by $\{V_i \leq p_0, p_1 < V_{-i} \leq p_1'\}$,
\begin{equation} \label{eq:id_lace1}
    \begin{aligned}
& \mathbb{E}\big[Y_i D_i D_{-i} \mid P_i=p_0, P_{-i}=p_1'\big]-\mathbb{E}\big[Y_i D_i D_{-i} \mid P_i=p_0, P_{-i}=p_1\big] \\
= & \mathbb{E}\big[Y_i(1,1) \mathbbm{1}\{V_i \leq p_0, p_1<V_{-i} \leq p_1'\}\big], 
\end{aligned}
\end{equation}
which implies that the subpopulation characterized by $\{V_i \leq p_0, p_1 < V_{-i} \leq p_1'\}$ switches to the treatment combination $(D_i, D_{-i}) = (1, 1)$ when the peer's propensity score increases from $p_1$ to $p_1'$, holding unit $i$'s score fixed at $p_0$.

Applying the same logic to the conditional expectations $\mathbb{E}[Y_i D_i (1 - D_{-i}) \mid \cdot, \cdot]$ evaluated at $(p_0, p_1)$ and $(p_0, p_1')$ yields the negative of the average potential outcome $Y_i(1, 0)$ for the same subpopulation, 
\begin{equation} \label{eq:id_lace2}
    \begin{aligned}
& \mathbb{E}\big[Y_i D_i (1 - D_{-i}) \mid P_i=p_0, P_{-i}=p_1'\big]-\mathbb{E}\big[Y_i D_i (1 - D_{-i}) \mid P_i=p_0, P_{-i}=p_1\big] \\
= & - \mathbb{E}\big[Y_i(1,0) \mathbbm{1}\{V_i \leq p_0, p_1<V_{-i} \leq p_1'\}\big], 
\end{aligned}
\end{equation}
indicating that these units transition away from the treatment combination $(D_i, D_{-i}) = (1, 0)$ under this change in propensity scores.

Summing Equations \eqref{eq:id_lace1} and \eqref{eq:id_lace2} identifies
\begin{equation*}
    \begin{aligned}
& \mathbb{E}\big[Y_i D_i \mid P_i=p_0, P_{-i}=p_1'\big]-\mathbb{E}\big[Y_i D_i \mid P_i=p_0, P_{-i}=p_1\big] \\
= & \mathbb{E}\big[\big(Y_i(1, 1) - Y_i(1,0)\big) \mathbbm{1}\{V_i \leq p_0, p_1<V_{-i} \leq p_1'\}\big],
\end{aligned}
\end{equation*}
which is an average spillover effect for the subpopulation $\{V_i \leq p_0, p_1 < V_{-i} \leq p_1'\}$, holding unit $i$'s treatment fixed at $D_i = 1$. Additionally, the share of the subpopulation $\{V_i \leq p_0, p_1 < V_{-i} \leq p_1'\}$ can be identified from 
\begin{equation*}
    \mathbb{E}\big[D_i D_{-i} \mid P_i = p_0, P_{-i} = p_1'\big] - \mathbb{E}\big[D_i D_{-i} \mid P_i = p_0, P_{-i} = p_1\big],
\end{equation*}
allowing the local average controlled spillover effect $\operatorname{LACSE}_i^{(1)}(V_i \leq p_0, p_1 < V_{-i} \leq p_1')$ to be identified as
\begin{equation*}
    \frac{\mu_{i, i}^{(1)}\left(p_0, p_1^{\prime}\right)-\mu_{i, i}^{(1)}\left(p_0, p_1\right)}{C\left(p_0, p_1^{\prime}\right)-C\left(p_0, p_1\right)} = \operatorname{LACSE}_i^{(1)}(V_i \leq p_0, p_1 < V_{-i} \leq p_1'),
\end{equation*}
where $\mu_{i, i}^{(d)}(p_0, p_1) \equiv \mathbb{E}[Y_i \mathbbm{1}\{D_i=d\} \mid P_i=p_0, P_{-i}=p_1]$, $d \in \{0,1\}$, and $C(p_0, p_1) = \mathbb{E}[D_i D_{-i} \mid P_i = p_0, P_{-i} = p_1]$.

Replacing $D_i$ with $(1 - D_i)$ in Equations~\eqref{eq:id_lace1} and~\eqref{eq:id_lace2} and repeating the same steps yield 
\begin{equation*}
    \begin{aligned}
& \mathbb{E}\big[Y_i (1 - D_i) \mid P_i=p_0, P_{-i}=p_1'\big]-\mathbb{E}\big[Y_i (1 - D_i) \mid P_i=p_0, P_{-i}=p_1\big] \\
= & \mathbb{E}\big[\big(Y_i(0, 1) - Y_i(0,0)\big) \mathbbm{1}\{V_i > p_0, p_1<V_{-i} \leq p_1'\}\big], 
\end{aligned}
\end{equation*}
which identifies the average spillover effect for the subpopulation $\{V_i > p_0, p_1<V_{-i} \leq p_1'\}$, holding unit $i$'s treatment fixed at $D_i = 0$. Dividing by the proportion of the subpopulation, which is identified from 
\begin{equation*}
    \big(p_1' - p_1\big) - \big[C(p_0, p_1') - C(p_0, p_1)\big] = \mathbb{P}\big(V_i > p_0, p_1<V_{-i} \leq p_1'\big), 
\end{equation*}
then the local average controlled spillover effect $\operatorname{LACSE}_i^{(0)}(V_i>p_0, p_1<V_{-i} \leq p_1')$ can be identified as 
\begin{equation*}
    \frac{\mu_{i, i}^{(0)}\left(p_0, p_1^{\prime}\right)-\mu_{i, i}^{(0)}\left(p_0, p_1\right)}{\left(p_1^{\prime}-p_1\right)-\left[C\left(p_0, p_1^{\prime}\right)-C\left(p_0, p_1\right)\right]} = \operatorname{LACSE}_i^{(0)}(V_i>p_0, p_1<V_{-i} \leq p_1').
\end{equation*}

\textit{Identifying the generalized LACDEs: }Suppose there exist two pairs of propensity scores, $(p_0, p_1)$ and $(p_0', p_1)$, with $p_0' > p_0$. The difference in the observed conditional expectations $\mathbb{E}[Y_i D_i D_{-i} \mid \cdot, \cdot]$ evaluated at these two points identifies
\begin{equation} \label{eq:id_lace3}
    \begin{aligned}
& \mathbb{E}\big[Y_i D_i D_{-i} \mid P_i=p_0', P_{-i}=p_1\big]-\mathbb{E}\big[Y_i D_i D_{-i} \mid P_i=p_0, P_{-i}=p_1\big] \\
= & \mathbb{E}\big[Y_i(1,1) \mathbbm{1}\{p_0 < V_i \leq p_0', V_{-i} \leq p_1\}\big], 
\end{aligned}
\end{equation}
indicating that the subpopulation with unobserved characteristics $\{p_0 < V_i \leq p_0', V_{-i} \leq p_1\}$ transitions to the treatment configuration $(D_i, D_{-i}) = (1,1)$ when the propensity scores shift from $(p_0, p_1)$ to $(p_0', p_1)$. Similarly, the difference in conditional expectations $\mathbb{E}[Y_i (1 - D_i) D_{-i} \mid \cdot, \cdot]$ evaluated at $(p_0, p_1)$ and $(p_0', p_1)$ identifies
\begin{equation} \label{eq:id_lace4}
    \begin{aligned}
& \mathbb{E}\big[Y_i (1 - D_i) D_{-i} \mid P_i=p_0', P_{-i}=p_1\big]-\mathbb{E}\big[Y_i (1 - D_i) D_{-i} \mid P_i=p_0, P_{-i}=p_1\big] \\
= & -\mathbb{E}\big[Y_i(0,1) \mathbbm{1}\{p_0 < V_i \leq p_0', V_{-i} \leq p_1\}\big], 
\end{aligned}
\end{equation}
implying that the same subpopulation moves away from the treatment configuration $(D_i, D_{-i}) = (0,1)$ under this change in the propensity scores. 

Summing Equations \eqref{eq:id_lace3} and \eqref{eq:id_lace4} identifies 
\begin{equation*}
    \begin{aligned}
        &\mathbb{E}\big[Y_i D_{-i} \mid P_i = p_0', P_{-i} = p_1\big] - \mathbb{E}\big[Y_i D_{-i} \mid P_i = p_0, P_{-i} = p_1\big] \\
        =& \mathbb{E}\big[\big(Y_i(1, 1) - Y_i(0, 1)\big) \mathbbm{1}\{p_0 < V_i \leq p_0', V_{-i} \leq p_1\} \big],
    \end{aligned}
\end{equation*}
which is the average direct effect for the subpopulation $\{p_0 < V_i \leq p_0', V_{-i} \leq p_1\}$, holding the peer's treatment fixed at $D_{-i} = 1$. The share of this subpopulation can be identified from
\begin{equation*}
    \mathbb{E}\big[D_i D_{-i} \mid P_i = p_0', P_{-i} = p_1\big] - \mathbb{E}\big[D_i D_{-i} \mid P_i = p_0, P_{-i} = p_1\big],
\end{equation*}
so the local average controlled direct effect $\operatorname{LACDE}_i^{(1)}(p_0<V_i \leq p_0', V_{-i} \leq p_1)$ is given by
\begin{equation*}
    \frac{\mu_{i,-i}^{(1)}\left(p_0^{\prime}, p_1\right)-\mu_{i,-i}^{(1)}\left(p_0, p_1\right)}{C\left(p_0^{\prime}, p_1\right)-C\left(p_0, p_1\right)} = \operatorname{LACDE}_i^{(1)}(p_0<V_i \leq p_0', V_{-i} \leq p_1),
\end{equation*}
where $\mu_{i,-i}^{(d)}(p_0, p_1) \equiv \mathbb{E}[Y_i \mathbbm{1}\{D_{-i}=d\} \mid P_i=p_0, P_{-i}=p_1]$, for $d \in \{0,1\}$.

Replacing $D_{-i}$ with $(1 - D_{-i})$ in Equations \eqref{eq:id_lace3} and \eqref{eq:id_lace4} and repeating the same reasoning identifies
\begin{equation*}
    \begin{aligned}
        &\mathbb{E}\big[Y_i (1 - D_{-i}) \mid P_i = p_0', P_{-i} = p_1\big] - \mathbb{E}\big[Y_i (1 - D_{-i}) \mid P_i = p_0, P_{-i} = p_1\big] \\
        =& \mathbb{E}\big[\big(Y_i(1, 0) - Y_i(0, 0)\big) \mathbbm{1}\{p_0 < V_i \leq p_0', V_{-i} > p_1\} \big],
    \end{aligned}
\end{equation*}
which corresponds to the average direct effect for the subpopulation $\{p_0 < V_i \leq p_0', V_{-i} > p_1\}$, holding the peer $-i$'s treatment fixed at $D_{-i} = 0$. 

Dividing by the fraction of this subpopulation, which is identified as
\begin{equation*}
    \big(p_0' - p_0\big) - \big[C(p_0', p_1) - C(p_0, p_1)\big] = \mathbb{P}\big(p_0 < V_i \leq p_0', V_{-i}  > p_1\big),
\end{equation*}
yields the local average controlled direct effect
\begin{equation*}
\frac{\mu_{i,-i}^{(0)}(p_0', p_1) - \mu_{i,-i}^{(0)}(p_0, p_1)}
{\big(p_0' - p_0\big) - \big[C(p_0', p_1) - C(p_0, p_1)\big]} = \operatorname{LACDE}_i^{(0)}(p_0 < V_i \leq p_0', V_{-i} > p_1).
\end{equation*}

\textit{Identifying the generalized LACSEs and LACDEs:} Suppose now that the support of the propensity scores exhibits greater variation, such that four distinct pairs of propensity scores, $(p_0, p_1)$, $(p_0, p_1')$, $(p_0', p_1)$, and $(p_0', p_1')$, exist in $\mathcal{P}$ with $p_0' > p_0$ and $p_1' > p_1$. These variations enable the identification of $\operatorname{LACSE}_i^{(d)}(P)$ and $\operatorname{LACDE}i^{(d)}(P)$ for $d \in {0,1}$, corresponding to the subpopulation with unobserved characteristics in the region $\{p_0 < V_i \leq p_0', p_1 < V{-i} \leq p_1'\}$.

For example, applying the identification strategy developed earlier to the points $(p_0', p_1)$ and $(p_0', p_1')$ identifies the average spillover effect for the subpopulation $\{V_i \leq p_0', p_1 < V_{-i} \leq p_1'\}$, holding unit $i$'s treatment fixed at $D_i = 1$, as
\begin{equation*}
    \begin{aligned}
& \mathbb{E}\big[Y_i D_i \mid P_i=p_0', P_{-i}=p_1'\big]-\mathbb{E}\big[Y_i D_i \mid P_i=p_0', P_{-i}=p_1\big] \\
= & \mathbb{E}\big[\left(Y_i(1,1)-Y_i(1,0)\right) \mathbbm{1}\left\{V_i \leq p_0', p_1<V_{-i} \leq p_1'\right\}\big].
\end{aligned}
\end{equation*}
Subtracting the previously identified average spillover effect for the subpopulation $\{V_i \leq p_0, p_1 < V_{-i} \leq p_1'\}$ yields
\begin{equation*}
    \begin{aligned}
        &\big(\mathbb{E}\big[Y_i D_i \mid P_i=p_0', P_{-i}=p_1'\big]-\mathbb{E}\big[Y_i D_i \mid P_i=p_0', P_{-i}=p_1\big]\big) \\
        & - \big(\mathbb{E}\big[Y_i D_i \mid P_i=p_0, P_{-i}=p_1'\big]-\mathbb{E}\big[Y_i D_i \mid P_i=p_0, P_{-i}=p_1\big]\big) \\
        =&\mathbb{E}\big[\left(Y_i(1,1)-Y_i(1,0)\right) \mathbbm{1}\{p_0 < V_i \leq p_0', p_1<V_{-i} \leq p_1'\}\big],
    \end{aligned}
\end{equation*}
which identifies the average spillover effect for the subpopulation
$\{p_0 < V_i \leq p_0', p_1 < V_{-i} \leq p_1'\}$, holding unit $i$'s treatment fixed at $D_i = 1$.

Since the proportion of the subpopulation $\{p_0 < V_i \leq p_0', p_1<V_{-i} \leq p_1'\}$ can be identified from 
\begin{equation*}
    \begin{aligned}
        &\big(\mathbb{E}[D_i D_{-i} \mid P_i=p_0', P_{-i}=p_1']-\mathbb{E}[D_i D_{-i} \mid P_i=p_0', P_{-i}=p_1]\big) \\
        &- \big(\mathbb{E}[D_i D_{-i} \mid P_i=p_0, P_{-i}=p_1']-\mathbb{E}[D_i D_{-i} \mid P_i=p_0, P_{-i}=p_1]\big),
    \end{aligned}
\end{equation*}
the $\operatorname{LACSE}_i^{(1)}\big(p_0 < V_i \leq p_0', p_1<V_{-i} \leq p_1'\big)$ is obtained as 
\begin{equation*}
    \frac{\left[\mu_{i, i}^{(1)}\left(p_0^{\prime}, p_1^{\prime}\right)-\mu_{i, i}^{(1)}\left(p_0^{\prime}, p_1\right)\right]-\left[\mu_{i, i}^{(1)}\left(p_0, p_1^{\prime}\right)-\mu_{i, i}^{(1)}\left(p_0, p_1\right)\right]}{\left[C\left(p_0^{\prime}, p_1^{\prime}\right)-C\left(p_0^{\prime}, p_1\right)\right]-\left[C\left(p_0, p_1^{\prime}\right)-C\left(p_0, p_1\right)\right]},
\end{equation*}
where $\mu_{i, i}^{(d)}(p_0, p_1) \equiv \mathbb{E}[Y_i \mathbbm{1}\{D_i = d\} \mid P_i = p_0, P_{-i} = p_1]$ and $C(p_0, p_1) \equiv \mathbb{E}[D_i D_{-i} \mid P_i = p_0, P_{-i} = p_1]$.

Applying a similar strategy, the average spillover and direct effects for the subpopulations corresponding to the propensity score pairs $(p_0', p_1)$ and $(p_0', p_1')$ can also be identified. By taking differences between the previously identified average effects for the subpopulations associated with $(p_0, p_1)$ and $(p_0, p_1')$, one can identify $\operatorname{LACSE}_i^{(d)}(P)$ and $\operatorname{LACDE}_i^{(d)}(P)$ for $d \in \{0,1\}$, where $P = \{p_0 < V_i \leq p_0', p_1<V_{-i} \leq p_1'\big\}$. These identification results correspond to Item 3 of Theorem~\ref{thm:id_lace}.

\section{Local Average Effects with Binary Instruments} \label{app:id_binary_iv}
This section illustrates that when the instrumental variables exhibit limited variation, additional restrictions are necessary to identify the local average controlled effects described in Theorem~\ref{thm:id_lace}.

Consider the case in which the instrument is binary, $Z_i \in {0,1}$, for all units $i$ across groups. As discussed in Remark~\ref{remark:id_binary_iv}, the corresponding propensity scores take on only four possible values. To apply the identification results in Theorem~\ref{thm:id_lace}, these propensity scores must satisfy certain equalities. Specifically, identification of LACSEs requires that any two of $P_i(0,0)$, $P_i(0,1)$, $P_i(1,0)$, or $P_i(1,1)$ be equal, while identification of LACDEs requires that any two of $P_{-i}(0,0)$, $P_{-i}(0,1)$, $P_{-i}(1,0)$, or $P_{-i}(1,1)$ be equal. 

The \textit{one-sided noncompliance} condition that frequently imposed in the literature is a special case of these requirements. It assumes that no unit takes the treatment unless assigned to it, which translates to
\begin{equation*}
    P_i(0, 0) = P_i(0, 1) = 0, \quad P_{-i}(0, 0) = P_{-i}(0, 1) = 0.
\end{equation*}
Assume that the propensity scores can be ordered as
\begin{equation*}
    P_i(0, 0) \leq P_i(0, 1) \leq P_i(1, 0) \leq P_i(1, 1),
\end{equation*}
The four observed pairs of propensity scores corresponding to different instrument assignments are illustrated as black dots in Figure \ref{fig:app_id_binary_iv}.

\begin{figure}[!htt]
        \centering
        \caption{Identifying Local Average Controlled Effects via Binary Instrument}
        \includegraphics[width = \textwidth]{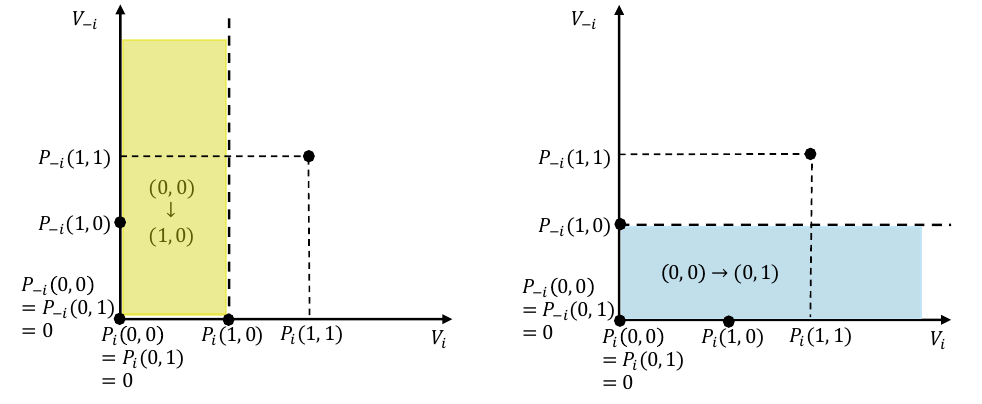}
        \label{fig:app_id_binary_iv}
\end{figure}

Under one-sided noncompliance, consider first a change in the instrument assignment from $(Z_i, Z_{-i}) = (0,0)$ to $(Z_i, Z_{-i}) = (1,0)$. In this case, unit~$i$'s propensity score increases from $P_i(0,0)$ to $P_i(1,0)$, while the peer $-i$'s propensity score remains constant at $P_{-i}(0,0) = P_{-i}(0,1) = 0$. According to Item 2 of Theorem \ref{thm:id_lace}, this variation identifies the local average controlled direct effect $\operatorname{LACDE}_i^{(1)}(0 < V_i \leq P_{i}(1, 0), 0 < V_{-i} < 1)$, as illustrated by the yellow-shaded area in the left panel of Figure \ref{fig:app_id_binary_iv}. 

Similarly, if the instrument assignment changes from $(Z_i, Z_{-i}) = (0,0)$ to $(Z_i, Z_{-i}) = (0,1)$, the peer $-i$'s propensity score shifts from $P_{-i}(0,0)$ to $P_{-i}(1,0)$, while the unit~$i$'s propensity score remains constant at $P_i(0,0) = P_i(0,1) = 0$. In this case, Item 1 of Theorem \ref{thm:id_lace} implies identification of the local average controlled spillover effect $\operatorname{LACSE}_i^{(0)}(0 < V_i < 1, 0 < V_{-i} < P_{-i}(1, 0))$, as shown by the blue-shaded area in the right panel of Figure \ref{fig:app_id_binary_iv}. The identified local average controlled spillover and direct effects correspond to the same causal parameters identified in \cite{vazquez2023causal}, which studies a similar setting with spillovers in both outcomes and endogenous treatment using a binary instrumental variable.

Without additional equalities among the propensity scores, it is generally difficult to identify the generalized local average controlled spillover and direct effects. The challenge arises because limited variation in the instrumental variables prevents constructing two distinct pairs of propensity scores in which one unit's propensity score changes while the other's remains fixed. Such variation is crucial for identification, as the model structure embedded in the treatment selection equation implies a monotonicity condition for each unit's treatment decision. Specifically, if the propensity scores satisfy
\begin{equation*}
    P_i(0, 0) \leq P_i(0, 1) \leq P_i(1, 0) \leq P_i(1, 1),
\end{equation*}
then the treatment selection mechanism implies the corresponding monotonicity in potential treatments:
\begin{equation*}
    D_i(0, 0) \leq D_i(0, 1) \leq D_i(1, 0) \leq D_i(1, 1),
\end{equation*}
where $D_i(z, z')$ denotes the potential treatment under the instrument assignment $(Z_i, Z_{-i}) = (z, z')$.
This condition ensures that each individual's treatment increases monotonically with instrument assignments, so that when one unit's propensity score changes while the other's remains fixed, we can isolate and identify the corresponding local average controlled spillover or direct effect.

However, when considering the joint treatment vector $(D_i, D_{-i})$ for both group members, the same structure does not guarantee monotonicity at the pairwise level. For example, if $D_i(0,1) < D_i(1,0)$ holds for all units across groups, shifting the instrument assignment from $(Z_i, Z_{-i}) = (0,1)$ to $(Z_i, Z_{-i}) = (1,0)$ leads to $D_i(1,0) > D_i(0,1)$ and $D_{-i}(0,1) < D_{-i}(1, 0)$. In this case, unit $i$ switches from untreated to treated while the peer $-i$ switches in the opposite direction, violating monotonicity at the group level.

To restore monotonicity and achieve identification, it is therefore necessary to impose a condition ensuring that at least two of the propensity scores remain constant when the instrument changes. This restriction allows the propensity score of one unit to change while the other's remains constant, creating variation in treatment decisions where only one unit alters its treatment status. Such variation is essential for identifying the local average controlled spillover and direct effects.

A graphical representation helps clarify this point. Consider the case in which the propensity score values do not satisfy the required equality conditions. When this occurs, local average controlled effects cannot be identified using a binary instrumental variable. Suppose the propensity scores follow a strict ordering,
\begin{equation*}
    P_i(0, 0) < P_i(0, 1) < P_i(1, 0) < P_i(1, 1)
\end{equation*}
for all units, so that none of the equality conditions hold. In this case, the four observed pairs of propensity scores are represented by the black dots in Figure \ref{fig:app_id_fail_binary}, which fail to form the “vertices” of a rectangle, illustrating the absence of sufficient variation needed for identification.

\begin{figure}[!htt]
        \centering
        \caption{Failure of Point Identification with a Binary Instrument}
        \includegraphics[width = 0.98\textwidth]{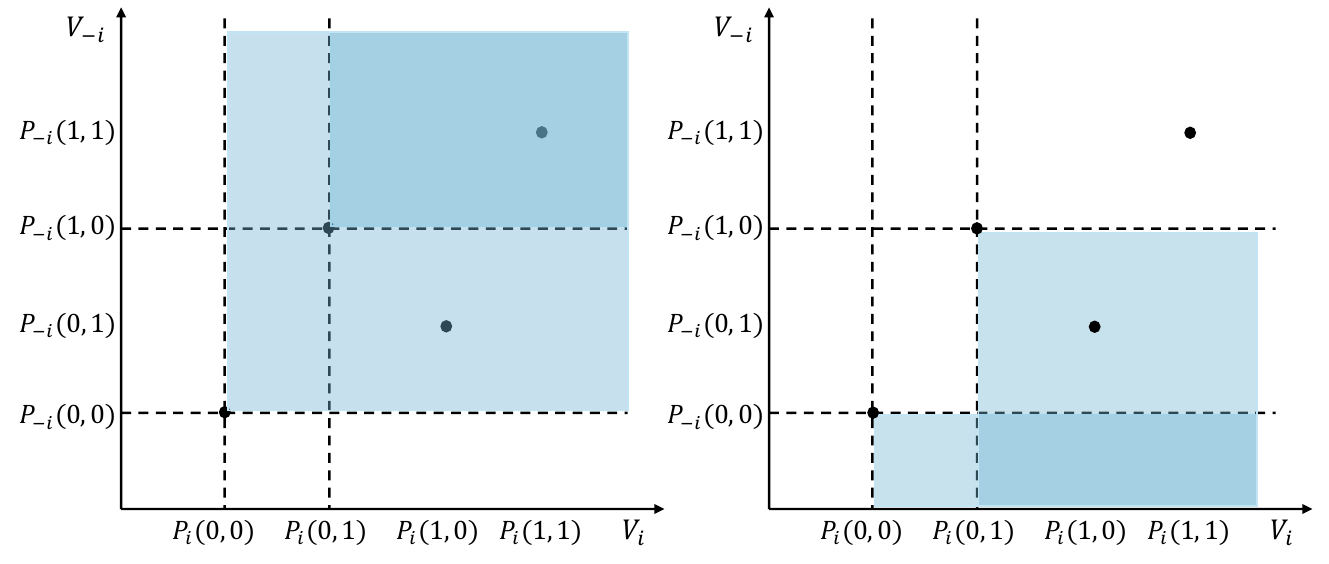}
        \label{fig:app_id_fail_binary}
\end{figure}

To illustrate the challenge, consider two pairs of propensity scores, $(P_i(0,0), P_{-i}(0,0))$ and $(P_i(0,1), P_{-i}(1,0))$, and use them to examine identification of the local average controlled spillover effect while holding unit $i$'s treatment fixed at $D_i = 0$. Following the identification logic developed earlier, the observed conditional expectations $\mathbb{E}[Y_i (1 - D_i) D_{-i} \mid \cdot, \cdot]$ and $\mathbb{E}[Y_i (1 - D_i)(1 - D_{-i}) \mid \cdot, \cdot]$ are evaluated at these two pairs of propensity scores. 
Taking the difference
\begin{equation*}
    \begin{aligned}
        &\mathbb{E}[Y_i (1 - D_i) (1 - D_{-i}) \mid P_i = P_i(0, 1), P_{-i} = P_{-i}(1, 0)] \\
        &- \mathbb{E}[Y_i (1 - D_i) (1 - D_{-i}) \mid P_i = P_i(0, 0), P_{-i} = P_{-i}(0, 0)],
    \end{aligned}
\end{equation*}
identifies the average potential outcome $Y_i(0,0)$ for the subpopulation corresponding to the lighter blue L-shaped region in the left panel of Figure \ref{fig:app_id_fail_binary}.

Taking another difference.
\begin{equation*}
    \begin{aligned}
        &\mathbb{E}[Y_i (1 - D_i) D_{-i} \mid P_i = P_i(0, 1), P_{-i} = P_{-i}(1, 0)] \\
        &- \mathbb{E}[Y_i (1 - D_i) D_{-i} \mid P_i = P_i(0, 0), P_{-i} = P_{-i}(0, 0)],
    \end{aligned}
\end{equation*}
does not identify the average potential outcome $Y_i(0,1)$ for a well-defined subpopulation, because the regions of $(V_i, V_{-i})$ corresponding to these two treatment realizations, evaluated at the two pairs of propensity scores, partially overlap without one fully containing the other. As illustrated by the two blue-shaded rectangles in the right panel of Figure \ref{fig:app_id_fail_binary}, this lack of nesting violates the monotonicity condition at the group level.

In this case, it becomes impossible to identify averages of two distinct potential outcomes, such as $Y_i(0,0)$ and $Y_i(0,1)$, for the same subpopulation. Without this, taking their difference to identify local average controlled effects is infeasible. Therefore, the generalized local average controlled spillover and direct effects cannot be point identified unless additional equality conditions on the propensity scores are imposed.

Since propensity scores can be recovered from observed data, these equality conditions can be directly tested. If the required conditions are not satisfied, additional variation in the instrumental variables is necessary to achieve point identification.

\section{Proof of Identification for MCSEs and MCDEs} 

\subsection{Identifying Copula and Copula Density} \label{appendix:id_copula}

Under Assumptions \ref{as:RA}-\ref{as:Vdist}, the propensity score for unit $i$ can be shown as
\begin{equation} \label{eq:p_score1}
    \begin{aligned}
        &\mathbb{P}\left(D_{i} = 1 \mid Z_{i} = z_0, Z_{-i} = z_1\right) \\
        =& \mathbb{P}\left(V_{i} \leq h_i(Z_{i}, Z_{-i}) \mid Z_{i} = z_0, Z_{-i} = z_1\right) \\
        =& \mathbb{P}\left(V_{i} \leq h_i(z_0, z_1) \mid Z_{i} = z_0, Z_{-i} = z_1\right) \\
        =& \mathbb{P}\left(V_{i} \leq h_i(z_0, z_1)\right) \\
        =& h_i(z_0, z_1),
    \end{aligned}
\end{equation}
given $z_0 \in \text{Supp}(Z_i) = \mathbb{R}^{k_i}, z_1 \in \text{Supp}(Z_{-i}) = \mathbb{R}^{k_{-i}}$. The third equality follows directly from Assumption \ref{as:RA}. Furthermore, under Assumption \ref{as:Vdist}, we normalize $ V_i $ to follow a uniform distribution $ \mathcal{U}(0, 1)$, which justifies the final equality. Equation \eqref{eq:p_score1} shows that the threshold function $h_i(Z_i, Z_{-i})$ in the treatment mechanism is identified by the propensity score function $P_i$ on its support $\mathcal{P}_i$.

Once the propensity scores $(P_i, P_{-i}) \in \mathcal{P}$ of all group members are identified, the copula $C_{V_i, V_{-i}}(p_0, p_1)$, which characterizes the dependence structure between the unobserved heterogeneities within the group, can be identified as
\begin{equation} \label{eq:p_score2}
    \begin{aligned}
        &\mathbb{P}\big(D_{i} = 1, D_{- i} = 1 \mid P_{i}\big(Z_{i}, Z_{-i}\big) = p_0, P_{-i}\big(Z_{-i}, Z_{i}\big) = p_1\big) \\
        =& \mathbb{P}\big(V_{i} \leq h_i(Z_{i}, Z_{-i}), V_{-i} \leq h_{-i}(Z_{-i}, Z_{i}) \mid P_{i}\big(Z_{i}, Z_{-i}\big) = p_0, P_{-i}\big(Z_{-i}, Z_{i}\big) = p_1\big) \\
        =& \mathbb{P}\big(V_{i} \leq p_0, V_{-i} \leq p_1 \mid P_{i}\big(Z_{i}, Z_{-i}\big) = p_0, P_{-i}\big(Z_{-i}, Z_{i}\big) = p_1\big) \\
        =& \mathbb{P}\big(V_{i} \leq p_0, V_{-i} \leq p_1\big) \\
        =& C_{V_i, V_{-i}}(p_0, p_1)
    \end{aligned}
\end{equation}
for $(p_0, p_1) \in \mathcal{P}$, where the second equality follows from the identification of the threshold function $h_i$ by the propensity score $P_i$, and the last equality holds under Assumption \ref{as:RA}. 

If Assumption \ref{as:cts_iv} holds and $\mathbb{E}[D_i D_{-i} \mid P_i, P_{-i}]$ is twice differentiable at $(p_0, p_1) \in \mathcal{P}$, then the copula density can be identified by taking second-order derivatives,  
    \begin{equation} \label{eq:joint_density}
        \begin{aligned}
            &\frac{\partial^2 \mathbb{E}\big[D_i D_{-i} \mid P_i=p_0, P_{-i}=p_1\big]}{\partial p_0 \partial p_1} \\
            =& \frac{\partial ^2 \mathbb{P}\big(V_{i} \leq p_0, V_{-i} \leq p_1\big)}{\partial p_0 \partial p_1} = c_{V_i, V_{-i}}(p_0, p_1).
        \end{aligned} 
    \end{equation}

\subsection{Identifying the marginal treatment response functions} \label{appendix:id_mtr}

Given the values of propensity scores $P_{i}(Z_{i}, Z_{-i}) = p_0$, $P_{-i}(Z_{-i}, Z_{i}) = p_1$, and any Borel set $A \subset \mathcal{Y}$, we have
\begin{equation*} 
    \begin{aligned}
        &\mathbb{E}\big[\mathbbm{1}\{Y_{i} \in A\} D_{i} D_{-i} \mid P_{i}(Z_{i}, Z_{-i}) = p_0, P_{-i}(Z_{-i}, Z_{i}) = p_1\big] \\
        =& \mathbb{E}\big[\mathbbm{1}\{Y_{i} (1, 1) \in A\} \cdot \mathbbm{1}\{V_{i} \leq h_i(Z_{i}, Z_{-i})\} \cdot \mathbbm{1}\big\{V_{-i} \leq h_{-i}(Z_{-i}, Z_{i})\big\} \mid P_{i}(Z_{i}, Z_{-i}) = p_0, P_{-i}(Z_{-i}, Z_{i}) = p_1\big] \\
        =& \mathbb{E}\big[\mathbbm{1}\{Y_{i} (1, 1) \in A\} \cdot \mathbbm{1}\{V_{i} \leq p_0\} \cdot \mathbbm{1}\big\{V_{-i} \leq p_1\big\} \mid P_{i}(Z_{i}, Z_{-i}) = p_0, P_{-i}(Z_{-i}, Z_{i}) = p_1\big] \\
        =& \mathbb{E}\big[\mathbbm{1}\{Y_{i} (1, 1) \in A\} \cdot \mathbbm{1}\{V_{i} \leq p_0\} \cdot \mathbbm{1}\big\{V_{-i} \leq p_1\big\}\big] \\
        =& \int_{0}^{p_1} \int_{0}^{p_0} \mathbb{P}\big(Y_{i}(1, 1) \in A \mid V_{i} = v_0, V_{-i} = v_1\big) c_{V_i, V_{-i}}(v_0, v_1)dv_0 dv_1,
    \end{aligned}
\end{equation*} 
where the second equality follows from Equation \eqref{eq:p_score1}, and the third equality holds under Assumption \ref{as:RA}. If the function $\mathbb{E}\big[\mathbbm{1}\{Y_i(1, 1) \in A\} D_i D_{-i} \mid \cdot, \cdot\big]$ is twice differentiable and $m_{i}^{\big(1, 1\big)}(\cdot, \cdot)$ is continunous at $(p_0, p_1)$, by the Leibniz integral rule,
\begin{equation} \label{eq:cond_po1}
    \begin{aligned}
        &\frac{\partial^2}{\partial p_1 \partial p_0} \mathbb{E}\big[\mathbbm{1}\{Y_i(1, 1) \in A\} D_{i} D_{-i} \mid P_{i}(Z_{i}, Z_{-i}) = p_0, P_{-i}(Z_{-i}, Z_{i}) = p_1\big] \\
        =& \mathbb{P}\big(Y_{i}(1, 1) \in A \mid V_{i} = p_0, V_{-i} = p_1\big) \cdot c_{V_i, V_{-i}}(p_0, p_1).
    \end{aligned}
\end{equation}
Since the copula density of $(V_{i}, V_{-i})$, $c(\cdot, \cdot)$, is identified from Corollary \ref{corr:id_copula_density}, we can identify $\mathbb{P}\big(Y_{i}(1, 1) \in A \mid V_{i} = p_0, V_{-i} = p_1\big)$ from Equation \eqref{eq:cond_po1}. This, in turn, implies that the marginal treatment response function $m_i^{(1, 1)}(p_0, p_1)$ is identified.

We can apply the same procedure and obtain
\begin{equation} \label{eq:cond_po2}
    \begin{aligned}
        &-\frac{\partial^2}{\partial p_1 \partial p_0} \mathbb{E}\big[\mathbbm{1}\{Y_i \in A\} D_{i} (1 - D_{-i}) \mid P_{i}(Z_{i}, Z_{-i}) = p_0, P_{-i}(Z_{-i}, Z_{i}) = p_1\big] \\
        =& \mathbb{P}\big(Y_{i}(1, 0) \in A \mid V_{i} = p_0, V_{-i} = p_1\big) \cdot c_{V_i, V_{-i}}(p_0, p_1), \\
        &-\frac{\partial^2}{\partial p_1 \partial p_0} \mathbb{E}\big[\mathbbm{1}\{Y_i \in A\} (1 - D_{i}) D_{-i} \mid P_{i}(Z_{i}, Z_{-i}) = p_0, P_{-i}(Z_{-i}, Z_{i}) = p_1\big] \\
        =& \mathbb{P}\big(Y_{i}(0, 1) \in A \mid V_{i} = p_0, V_{-i} = p_1\big) \cdot c_{V_i, V_{-i}}(p_0, p_1), \\
        &\frac{\partial^2}{\partial p_1 \partial p_0} \mathbb{E}\big[\mathbbm{1}\{Y_i \in A\} (1 - D_{i}) (1 - D_{-i}) \mid P_{i}(Z_{i}, Z_{-i}) = p_0, P_{-i}(Z_{-i}, Z_{i}) = p_1\big] \\
        =& \mathbb{P}\big(Y_{i}(0, 0) \in A \mid V_{i} = p_0, V_{-i} = p_1\big) \cdot c_{V_i, V_{-i}}(p_0, p_1).
    \end{aligned}
\end{equation}
We can then identify the remaining marginal treatment response functions from Equation \eqref{eq:cond_po2}.

Based on results in Equations \eqref{eq:cond_po1}-\eqref{eq:cond_po2}, $\operatorname{MCSE}_{i}^{(d)}(p_0, p_1)$ and $\operatorname{MCDE}_{i}^{(d)}(p_0, p_1)$ is identified for any $d \in \{0, 1\}$ and $p_0, p_1 \in \mathcal{P}$.

\subsection{Identification With Exogenous Covariates} \label{appendix:id_covariates}
The identification results can be extended to settings with exogenous covariates. Let $X_i \in \mathbb{R}^{d_i}$ denote a vector of covariates that affect both the outcome and the treatment assignment for unit $i$. For instance, unit $i$'s earnings and education choices may depend on the family characteristics of both herself and her best friend in our leading example.  Given $(X_{i}, X_{-i}) = \mathbf{x}$, $D_i = d$, and $D_{-i} = d'$, we model the potential outcome as
\begin{equation} \label{eq:model_covariates}
    Y_{i} (\mathbf{x}, d, d') = \mu_{d d'}\big(\mathbf{x}, U_{i}(d, d')\big),
\end{equation}
where the functions $\mu_{dd{\prime}}(\cdot, \cdot)$ are known and specified by the researcher, while $U_i(d, d')$ captures unobserved factors affecting unit $i$'s potential outcome under own treatment status $D_i = d$ and the peer $-i$'s treatment $D_{-i} = d'$. A common specification for $\mu_{d d'}(\cdot, \cdot)$ assumes additive separability and linearity in covariates: $\mu_{dd'}\big(\mathbf{x}, U_{i}(d, d')\big) = \mathbf{x} \beta_{d d'} + U_i(d, d').$

We next introduce a potential outcome model that incorporates exogenous covariates. 
\begin{equation} \label{eq:model_disc_trt_exo_covariates}
    \left\{\begin{array}{l}
        \begin{aligned}
            Y_{i} = & \big[Y_{i} (X_{i}, X_{-i}, 1, 1) D_{-i} + Y_{i} (X_{i}, X_{-i}, 1, 0)(1 - D_{-i})\big] D_{i} \\
            & + \big[Y_{i} (X_{i}, X_{-i}, 0, 1) D_{-i} + Y_{i}(X_{i}, X_{-i}, 0, 0)(1 - D_{-i})\big] (1 - D_{i}),
            & \quad \\
        D_{i} = & \mathbbm{1}\big\{V_{i} \leq h(W_{i}, W_{-i})\big\},
        \end{aligned} \\
        \end{array}\right.
\end{equation}
where $W_{i} \equiv (X_{i}, Z_{i}) \in \mathbb{R}^{d_i} \times \mathbb{R}^{k_i}$.

Under Equation \eqref{eq:model_disc_trt_exo_covariates}, we replace Assumption \ref{as:RA} with Assumption \ref{as:exo_covariates}, which imposes random assignment of both covariates and instruments.

\begin{assumption}(Exogenous covariates and random assignment) \label{as:exo_covariates}
    The covariates $X_{i}$ and the instruments $Z_i$ satisfy
    \begin{equation*}
        \big(X_{i}, X_{-i}, Z_{i}, Z_{-i}\big) \indep \Big\{\big(V_{i}, V_{-i}, U_{i}(d, d'), U_{-i}(d, d'),\big) \Big\}_{d \in \{0, 1\}, d' \in \{0, 1\}}.
    \end{equation*}
\end{assumption}

Under Assumptions \ref{as:er}, \ref{as:Vdist}, and \ref{as:exo_covariates}, the propensity score with exogenous covariates, defined as $P_i(W_i, W_{-i}) \equiv \mathbb{P}(D_i = 1 \mid W_i, W_{-i})$, can be expressed as
\begin{equation} \label{eq:p_score_covariates1}
    \begin{aligned}
        &\mathbb{P}\left(D_{i} = 1 \mid W_{i} = w_0, W_{-i} = w_1\right) \\
        =& \mathbb{P}\left(V_{i} \leq h_i(W_{i}, W_{-i}) \mid W_{i} = w_0, W_{-i} = w_1\right) \\
        =& \mathbb{P}\left(V_{i} \leq h_i(w, w') \mid W_{i} = w_0, W_{-i} = w_1\right) \\
        =& \mathbb{P}\left(V_{i} \leq h_i(w_0, w_1)\right) \\
        =& h_i(w_0, w_1)
    \end{aligned}
\end{equation}
given $W_{i} = w_0, W_{-i} = w_1$. Equation \eqref{eq:p_score_covariates1} demonstrates that, in the presence of exogenous covariates, the propensity score $P_i(W_i, W_{-i})$ continues to identify the threshold function $h_i$ over its support $\mathcal{P}_i$.

Similar to Equation \eqref{eq:p_score2}, we can identify the copula function $c_{V_i, V_{-i}}(p_0, p_1)$ as
\begin{equation} \label{eq:p_score_covariates2}
    \begin{aligned}
        &\mathbb{P}\big(D_{i} = 1, D_{- i} = 1 \mid P_i(W_{i}, W_{-i}) = p_0, P_{-i}(W_{-i}, W_{ig}) = p_1\big) \\
        =& \mathbb{P}\big(V_{i} \leq h(W_{i}, W_{-i}), V_{-i} \leq h(W_{-i}, W_{i}) \mid P_i(W_{i}, W_{-i}) = p_0, P_{-i}(W_{-i}, W_{i}) = p_1\big) \\
        =& \mathbb{P}\big(V_{i} \leq p_0, V_{-i} \leq p_1 \mid P_i(W_{i}, W_{-i}) = p_0, P_{-i}(W_{-i}, W_{i}) = p_1\big) \\
        =& \mathbb{P}\left(V_{i} \leq p_0, V_{-i} \leq p_1\right),
    \end{aligned}
\end{equation}
where the last equality holds under Assumption \ref{as:RA_cts}. Then, the copula density of $(V_{i}, V_{-i})$, $c_{V_i, V_{-i}}(\cdot, \cdot)$, is identifiable provided that the copula $C_{V_i, V_{-i}}(\cdot, \cdot)$ is twice differentiable,
\begin{equation} \label{eq:joint_density2}
        \begin{aligned}
            &\frac{\partial ^2}{\partial p_1 \partial p_0} \mathbb{P}\big(D_{i}=1, D_{-i}=1 \mid P_i(W_{i}, W_{-i})=p_0, P_{-i}(W_{-i}, W_{i})=p_1\big) \\
            =& \frac{\partial ^2}{\partial p_1 \partial p_0} \mathbb{P}\big(V_{i} \leq p_0, V_{-i} \leq p_1\big) = c_{V_i, V_{-i}}(p_0, p_1).
        \end{aligned} 
\end{equation}

The last step is to identify the marginal treatment response functions, defined as $m_i^{(\mathbf{x}, d, d')}(p_0, p_1) \equiv \mathbb{E}\big[Y_i(\mathbf{x}, d, d') \mid V_i=p_0, V_{-i}=p_1\big]$ with covariates. Given the covariates and propensity scores of both units $i$ and her peer $-i$, we can express the following conditional expectation as 
\begin{equation*} 
    \begin{aligned}
        &\mathbb{E}\big[Y_{i} D_{i} D_{-i} \mid (X_{i}, X_{-i}) = \mathbf{x}, P(W_{i}, W_{-i}) = p_0, P(W_{-i}, W_{i}) = p_1\big] \\
        =& \mathbb{E}\big[\mu_{11}(\mathbf{x}, U_i(1, 1)) \cdot \mathbbm{1}\{V_{i} \leq h_i(W_{i}, W_{-i})\} \cdot \mathbbm{1}\left\{V_{-i} \leq h_{-i}(W_{-i}, W_{i})\right\} \mid \\
        & \quad (X_{i}, X_{-i}) = \mathbf{x}, P_i(W_{i}, W_{-i}) = p_0, P_{-i}(W_{-i}, W_{i}) = p_1\big] \\
        =& \mathbb{E}\big[\mu_{11}(\mathbf{x}, U_{i}(1, 1)) \cdot \mathbbm{1}\{V_{i} \leq p_0\} \cdot \mathbbm{1}\left\{V_{-i} \leq p_1\right\} \mid \\
        & \quad (X_{i}, X_{-i}) = \mathbf{x}, P_i(W_{i}, W_{-i}) = p_0, P_{-i}(W_{-i}, W_{i}) = p_1\big] \\
        =& \mathbb{E}\big[\mu_{11}(\mathbf{x}, U_{i}(1, 1)) \cdot \mathbbm{1}\{V_{i} \leq p_0\} \cdot \mathbbm{1}\left\{V_{-i} \leq p_1\right\}\big] \\
        =& \int_{0}^{p_1} \int_{0}^{p_0} \mathbb{E}\big[\mu_{11}(\mathbf{x}, U_{i}(1, 1)) \mid V_{i} = v_0, V_{-i} = v_1\big] c_{V_i, V_{-i}}(v_0, v_1)dv_0 dv_1,
    \end{aligned}
\end{equation*} 
where the second equality follows from Equation \eqref{eq:p_score_covariates1}, and the third equality holds under Assumption \ref{as:exo_covariates}.  If the conditional mean $\mathbb{E}[Y_{i} D_{i} D_{-i} \mid (X_{i}, X_{-i}) = \mathbf{x}, \cdot, \cdot]$ is twice differentiable, and the marginal treatment response functions $\mathbb{E}[\mu_{dd'}(\mathbf{x}, U_{i}(d, d')) \mid \cdot, \cdot]$ are continuous at $(p_0, p_1)$, then the marginal treatment response (MTR) function $\mathbb{E}[\mu_{11}(\mathbf{x}, U_{i}(1, 1)) \mid V_i = p_0, V_{-i} = p_1]$ is identified by taking the cross-derivative as shown below:
\begin{equation*}
    \begin{aligned}
        &\frac{\partial^2}{\partial p_1 \partial p_0} \mathbb{E}\left[Y_{i} D_{i} D_{-i} \mid (X_{i}, X_{-i}) = \mathbf{x}, P_i(W_{i}, W_{-i}) = p_0, P_{-i}(W_{-i}, W_{i}) = p_1\right] \\
        =& \mathbb{E}\left[\mu_{11}(\mathbf{x}, U_{i}(1, 1)) \mid V_{i} = p_0, V_{-i} = p_1\right] c_{V_i, V_{-i}}(p_0, p_1).
    \end{aligned}
\end{equation*}
Therefore, the MTR function $m_i^{(\mathbf{x}, 1, 1)}(p_0, p_1)$ is identified, given that the copula density is identified as in Equation \eqref{eq:joint_density2}. By analogous reasoning, the remaining MTR functions $m_i^{(\mathbf{x}, d, d')}(p_0, p_1)$ are also identified for all $d, d' \in \{0, 1\}$ and $\mathbf{x} \in \mathbb{R}^{d_i}$. The marginal controlled spillover and direct effects are obtained by taking differences between the marginal treatment response (MTR) functions $m_i^{(\mathbf{x}, d, d')}(p_0, p_1)$ for $d, d' \in {0,1}$.

\section{Deriving Policy Relevant Treatment Effects with MCSEs and MCDEs} \label{app:prte}

In this section, we identify the PRTEs under three types of common policy interventions with identified MCSEs and MCDEs.

\textbf{Case 1: Absolute increase by an exogenous value.} Suppose there exists an alternative policy $a' \in \mathcal{A}$ that exogenously increases the propensity score of all units by a constant $\varepsilon > 0$, such that $P^{a'}_i = P^a_i + \varepsilon$ and $P^a_i, P^{a'}_i \in [0, 1]$, for all $i$ in every group. By taking the difference between the expected outcomes under the two policies, $\mathbb{E}[Y^a_i]$ and $\mathbb{E}[Y^{a'}_i]$, we can express this difference as weighted average of MCDEs and MCSEs as follows, 
\begin{equation*}
    \begin{aligned}
      & \mathbb{E}\left[Y_i^{a'}-Y_i^a\right] =\int_0^1 \int_0^1\bigg\{\operatorname{MCDE_i}(0; p_0, p_1) \mathbb{P}\left(p_0-\varepsilon \leq P_i^a \leq p_0, P_{-i}^a \leq p_1-\varepsilon\right) \\
      & +\operatorname{MCSE}_i(0; p_0, p_1) \mathbb{P}\left(P_i^a \leq p_0-\varepsilon, p_1-\varepsilon \leq P_{-i}^a<p_1\right) \\
      & +\operatorname{MCDE}_i(1;p_0,p_1) \mathbb{P}\left(p_0-\varepsilon \leq P_i^a \leq p_0, p_1 \leq P_{-i}^a\right) \\
      & +\operatorname{MCSE}_i(1;p_0,p_1) \mathbb{P}\left(p_0 \leq P_i^a, p_1-\varepsilon \leq P_{-i}^a<p_1\right) \\
      & +\left(\operatorname{MCDE}_i(1;p_0, p_1) + \operatorname{MCSE}_i(0;p_0,p_1)\right) \\
      & \quad \mathbb{P}\left(p_0 - \varepsilon \leq P_i^a < p_0, p_1 - \varepsilon \leq P_{-i}^a < p_1\right) \bigg\} c_{V_i, V_{-i}}(p_0, p_1) dp_0 dp_1
      \end{aligned}
  \end{equation*}


Once we can identify MCDEs and MCSEs, as well as the joint distributions of propensity scores $(P_i^a, P_{-i}^a)$ and unobservables $(V_i, V_{-i})$ over the full support $[0, 1] \times [0, 1]$, the policy relevant treatment effect can be point identified as 
\begin{equation*}
    \mathbb{E}\left[Y^{a'}_i - Y^a_i\right] / \Delta P, 
\end{equation*}
where $\Delta P$ denotes the proportion of groups in which at least one member changes treatment status as a result of the policy shift from $a$ to $a'$. This proportion is identified as  
\begin{equation*}
    \begin{aligned}
        \Delta P=&\int_{0}^{1} \int_{0}^{1} \bigg\{\mathbb{P}\left(p_0 - \varepsilon \leq P^a_i \leq p_0, P_{-i}^a \leq p_1 -\varepsilon\right) + \mathbb{P}\left(P^a_i \leq p_0 - \varepsilon, p_1 - \varepsilon \leq P_{-i}^a < p_1\right) \\
        &+ \mathbb{P}\left(p_0 - \varepsilon \leq P^a_i \leq p_0, p_1 \leq P_{-i}^a\right) + \mathbb{P}\left(p_0 \leq P^a_i, p_1 - \varepsilon \leq P_{-i}^a < p_1\right) \\
        &+ \mathbb{P}\left(p_0 - \varepsilon \leq P_i^a < p_0, p_1 - \varepsilon \leq P_{-i}^a < p_1\right) \bigg\} c_{V_i, V_{-i}}(p_0, p_1) dp_0 dp_1.
    \end{aligned} 
\end{equation*}

We can also identify the PRTEs for cases where $\varepsilon<0$, or where the policy shift affects group members in opposite directions—for instance, $\varepsilon_i > 0$ and $\varepsilon_{-i} < 0$—by applying analogous derivations.

\textbf{Case 2: Proportional increase by an exogenous value.} Consider an alternative policy $a' \in \mathcal{A}$ that exogenously increases the propensity score of all individuals proportionally, such that $P^{a'}_i = P^a_i + \varepsilon (1-P^a_i)$ for all individuals $i$, where $0< \varepsilon < 1$ and $P^a_i, P^{a'}_i \in [0,1]$. Under this policy shift, we can identify the PRTE as $\mathbb{E}[Y^{a'}_i - Y^a_i] / \Delta P$, where
\begin{equation*}
    \begin{aligned}
      & \mathbb{E}\left[Y_i^{a^{\prime}}-Y_i^a\right] =\int_0^1 \int_0^1\bigg\{\operatorname{MCDE_i}(0; p_0, p_1) \mathbb{P}\left(\frac{p_0 - \varepsilon}{1-\varepsilon} \leq P_i^a \leq p_0, P_{-i}^a \leq \frac{p_1-\varepsilon}{1-\varepsilon}\right) \\
      & +\operatorname{MCSE}_i(0; p_0, p_1) \mathbb{P}\left(P_i^a \leq \frac{p_0 - \varepsilon}{1-\varepsilon}, \frac{p_1-\varepsilon}{1-\varepsilon} \leq P_{-i}^a<p_1\right) \\
      & +\operatorname{MCDE}_i(1;p_0,p_1) \mathbb{P}\left(\frac{p_0 - \varepsilon}{1-\varepsilon} \leq P_i^a \leq p_0, p_1 \leq P_{-i}^a\right) \\
      & +\operatorname{MCSE}_i(1;p_0,p_1) \mathbb{P}\left(p_0 \leq P_i^a, \frac{p_1-\varepsilon}{1-\varepsilon} \leq P_{-i}^a<p_1\right) \\
      & +\left(\operatorname{MCDE}_i(1;p_0, p_1) + \operatorname{MCSE}_i(0;p_0,p_1)\right) \\
      & \quad \mathbb{P}\left(\frac{p_0 - \varepsilon}{1-\varepsilon} \leq P_i^a < p_0, \frac{p_1-\varepsilon}{1-\varepsilon} \leq P_{-i}^a < p_1\right) \bigg\} c_{V_i, V_{-i}}(p_0, p_1) dp_0 dp_1, \\
      & \Delta P =\int_0^1 \int_0^1\bigg\{\mathbb{P}\left(\frac{p_0 - \varepsilon}{1-\varepsilon} \leq P_i^a \leq p_0, P_{-i}^a \leq \frac{p_1-\varepsilon}{1-\varepsilon}\right) \\
      & + \mathbb{P}\left(P_i^a \leq \frac{p_0 - \varepsilon}{1-\varepsilon}, \frac{p_1-\varepsilon}{1-\varepsilon} \leq P_{-i}^a<p_1\right) \\
      & +\mathbb{P}\left(\frac{p_0 - \varepsilon}{1-\varepsilon} \leq P_i^a \leq p_0, p_1 \leq P_{-i}^a\right) \\
      & +\mathbb{P}\left(p_0 \leq P_i^a, \frac{p_1-\varepsilon}{1-\varepsilon} \leq P_{-i}^a<p_1\right) \\
      & +\mathbb{P}\left(\frac{p_0 - \varepsilon}{1-\varepsilon} \leq P_i^a < p_0, \frac{p_1-\varepsilon}{1-\varepsilon} \leq P_{-i}^a < p_1\right) \bigg\} c_{V_i, V_{-i}}(p_0, p_1) dp_0 dp_1.
      \end{aligned}
\end{equation*}

\textbf{Case 3: Increase the instrument value.} The third type of policy intervention involves shifting the value of certain instruments. For example, consider a policy change where the $j$-th component of the instrument is increased by $\varepsilon$, such that $P_i^{a'} = P_i^a(Z + \varepsilon e_j)$, where $e_j$ denotes the unit vector in the $j$-th coordinate. In the previous policy changes, the direction of the shift in propensity scores was known for all individuals, allowing us to determine the corresponding changes in treatment responses across the entire range of unobserved characteristics $(V_i, V_{-i})$. However, when we change the instruments, the effect on propensity scores is not necessarily uniform—some individuals may experience an increase in their propensity scores, while others may see a decrease. The heterogeneous shifts in propensity scores introduce variation in group members' treatment responses, making the analysis more complicated. To address this problem, we decompose the expected outcome difference, $\mathbb{E}[Y_i^{a'} - Y_i^a]$, as 
\begin{equation*}
    \begin{aligned}
        \mathbb{E}\big[Y_i^{a'} - Y_i^a \big] =& \mathbb{E}\big[(Y_i^{a'} - Y_i^a) \mathbbm{1} \{P_i^{a'} \geq P_i^a, P_{-i}^{a'} \geq P_{-i}^a\} \big] \\
        &+ \mathbb{E}\big[(Y_i^{a'} - Y_i^a) \mathbbm{1} \{P_i^{a'} \geq P_i^a, P_{-i}^{a'} < P_{-i}^a\} \big] \\
        &+ \mathbb{E}\big[(Y_i^{a'} - Y_i^a) \mathbbm{1} \{P_i^{a'} < P_i^a, P_{-i}^{a'} \geq P_{-i}^a\} \big] \\
        &+ \mathbb{E}\big[(Y_i^{a'} - Y_i^a) \mathbbm{1} \{P_i^{a'} < P_i^a, P_{-i}^{a'} < P_{-i}^a\} \big],
    \end{aligned}
\end{equation*}
with applying the law of total probability. Given that the distribution of $P_i^a(\cdot)$ is identified and $P_i^{a'}(Z) = P_i^a (Z +\varepsilon e_j)$, we can also identify the joint distribution of $(P_i^a, P_{-i}^a, P_i^{a'}, P_{-i}^{a'})$.

We can solve each component in the above equation as 
\begin{equation*}
    \begin{aligned}
        &\mathbb{E}\big[(Y_i^{a'} - Y_i^a) \mathbbm{1} \{P_i^{a'} \geq P_i^a, P_{-i}^{a'} \geq P_{-i}^a\} \big] \\
        =& \int_{0}^{1} \int_{0}^{1} \bigg\{ \operatorname{MCDE_i}(0; p_0, p_1) \mathbb{P}\big(P_i^a < p_0 \leq P_i^{a'}, p_1 > P_{-i}^{a'} \geq P_{-i}^{a}\big) \\
        &+ \operatorname{MCSE}_i(0; p_0, p_1) \mathbb{P}\big(p_0 > P_i^{a'} \geq P_i^a, P_{-i}^{a} < p_1 \leq P_{-i}^{a'}\big) \\
        &+ \operatorname{MCDE}_i(1; p_0, p_1) \mathbb{P}\big(P_i^a < p_0 \leq P_i^{a'}, p_1 \leq P_{-i}^{a} \leq P_{-i}^{a'}\big) \\
        &+ \operatorname{MCSE}_i(1; p_0, p_1) \mathbb{P}\big(p_0 \leq P_i^{a} \leq P_i^{a'}, P_{-i}^{a} < p_1 \leq P_{-i}^{a'}\big) \\
        &+ \big(\operatorname{MCDE}_i(1;p_0, p_1) + \operatorname{MCSE}_i(0;p_0,p_1) \big) \mathbb{P}\big(P_i^a < p_0 \leq P_i^{a'}, P_{-i}^{a} < p_1 \leq P_{-i}^{a'}\big) \bigg\} c_{V_i, V_{-i}}(p_0, p_1) d p_0 d p_1, \\
        &\mathbb{E}\big[(Y_i^{a'} - Y_i^a) \mathbbm{1} \{P_i^{a'} \geq P_i^a, P_{-i}^{a'} < P_{-i}^a\} \big] \\
        =& \int_{0}^{1} \int_{0}^{1} \bigg\{ \operatorname{MCDE_i}(0; p_0, p_1) \mathbb{P}\big(P_i^a < p_0 \leq P_i^{a'}, p_1 > P_{-i}^{a} > P_{-i}^{a'}\big) \\
        &- \operatorname{MCSE}_i(0; p_0, p_1) \mathbb{P}\big(p_0 > P_i^{a'} \geq P_i^a, P_{-i}^{a'} < p_1 \leq P_{-i}^{a}\big) \\
        &+ \operatorname{MCDE}_i(1; p_0, p_1) \mathbb{P}\big(P_i^a < p_0 \leq P_i^{a'}, p_1 \leq P_{-i}^{a'} < P_{-i}^{a}\big) \\
        &- \operatorname{MCSE}_i(1; p_0, p_1) \mathbb{P}\big(p_0 \leq P_i^{a} \leq P_i^{a'}, P_{-i}^{a'} < p_1 \leq P_{-i}^{a}\big) \\
        &+ \big(\operatorname{MCDE}_i(0;p_0, p_1) - \operatorname{MCSE}_i(0;p_0,p_1) \big) \mathbb{P}\big(P_i^a < p_0 \leq P_i^{a'}, P_{-i}^{a'} < p_1 \leq P_{-i}^{a}\big) \bigg\} c_{V_i, V_{-i}}(p_0, p_1) d p_0 d p_1, \\
        &\mathbb{E}\big[(Y_i^{a'} - Y_i^a) \mathbbm{1} \{P_i^{a'} < P_i^a, P_{-i}^{a'} \geq P_{-i}^a\} \big] \\
        =& \int_{0}^{1} \int_{0}^{1} \bigg\{ -\operatorname{MCDE_i}(0; p_0, p_1) \mathbb{P}\big(P_i^{a'} < p_0 \leq P_i^{a}, p_1 > P_{-i}^{a'} \geq P_{-i}^{a}\big) \\
        &+ \operatorname{MCSE}_i(0; p_0, p_1) \mathbb{P}\big(p_0 > P_i^{a} > P_i^{a'}, P_{-i}^{a} < p_1 \leq P_{-i}^{a'}\big) \\
        &- \operatorname{MCDE}_i(1; p_0, p_1) \mathbb{P}\big(P_i^{a'} < p_0 \leq P_i^{a}, p_1 \leq P_{-i}^{a} \leq P_{-i}^{a'}\big) \\
        &+ \operatorname{MCSE}_i(1; p_0, p_1) \mathbb{P}\big(p_0 \leq P_i^{a'} < P_i^{a}, P_{-i}^{a} < p_1 \leq P_{-i}^{a'}\big) \\
        &+ \big(-\operatorname{MCDE}_i(0;p_0, p_1) + \operatorname{MCSE}_i(0;p_0,p_1) \big) \mathbb{P}\big(P_i^{a'} < p_0 \leq P_i^{a}, P_{-i}^{a} < p_1 \leq P_{-i}^{a'}\big) \bigg\} c_{V_i, V_{-i}}(p_0, p_1) d p_0 d p_1, \\
        &\mathbb{E}\big[(Y_i^{a'} - Y_i^a) \mathbbm{1} \{P_i^{a'} < P_i^a, P_{-i}^{a'} < P_{-i}^a\} \big] \\
        =& \int_{0}^{1} \int_{0}^{1} \bigg\{ -\operatorname{MCDE_i}(0; p_0, p_1) \mathbb{P}\big(P_i^{a'} < p_0 \leq P_i^{a}, p_1 > P_{-i}^{a} > P_{-i}^{a'}\big) \\
        &- \operatorname{MCSE}_i(0; p_0, p_1) \mathbb{P}\big(p_0 > P_i^{a} > P_i^{a'}, P_{-i}^{a'} < p_1 \leq P_{-i}^{a}\big) \\
        &- \operatorname{MCDE}_i(1; p_0, p_1) \mathbb{P}\big(P_i^{a'} < p_0 \leq P_i^{a}, p_1 \leq P_{-i}^{a'} < P_{-i}^{a}\big) \\
        &- \operatorname{MCSE}_i(1; p_0, p_1) \mathbb{P}\big(p_0 \leq P_i^{a'} < P_i^{a}, P_{-i}^{a'} < p_1 \leq P_{-i}^{a}\big) \\
        &- \big(\operatorname{MCDE}_i(1;p_0, p_1) + \operatorname{MCSE}_i(0;p_0,p_1) \big) \mathbb{P}\big(P_i^{a'} < p_0 \leq P_i^{a}, P_{-i}^{a'} < p_1 \leq P_{-i}^{a}\big) \bigg\} c_{V_i, V_{-i}}(p_0, p_1) d p_0 d p_1,
    \end{aligned}
\end{equation*}
which can be identified once we identify the MCDEs, MCSEs, and the copula density of $(V_i, V_{-i})$.

Finally, we can identify the PRTE in this case as $\mathbb{E}[Y_i^{a'}-Y_i^a] / \Delta P$, where 
\begin{equation*}
    \begin{aligned}
        & \Delta P = \Delta P_1 + \Delta P_2 + \Delta P_3 + \Delta P_4, \\
        &\Delta P_1 = \int_{0}^{1} \int_{0}^{1} \bigg\{ \mathbb{P}\big(P_i^a < p_0 \leq P_i^{a'}, p_1 > P_{-i}^{a'} \geq P_{-i}^{a}\big) + \mathbb{P}\big(p_0 > P_i^{a'} \geq P_i^a, P_{-i}^{a} < p_1 \leq P_{-i}^{a'}\big) \\
        &+ \mathbb{P}\big(P_i^a < p_0 \leq P_i^{a'}, p_1 \leq P_{-i}^{a} \leq P_{-i}^{a'}\big) + \mathbb{P}\big(p_0 \leq P_i^{a} \leq P_i^{a'}, P_{-i}^{a} < p_1 \leq P_{-i}^{a'}\big) \\
        &+ \mathbb{P}\big(P_i^a < p_0 \leq P_i^{a'}, P_{-i}^{a} < p_1 \leq P_{-i}^{a'}\big) \bigg\} c_{V_i, V_{-i}}(p_0, p_1) d p_0 d p_1, \\
        &\Delta P_2 = \int_{0}^{1} \int_{0}^{1} \bigg\{ \mathbb{P}\big(P_i^a < p_0 \leq P_i^{a'}, p_1 > P_{-i}^{a} > P_{-i}^{a'}\big) + \mathbb{P}\big(p_0 > P_i^{a'} \geq P_i^a, P_{-i}^{a'} < p_1 \leq P_{-i}^{a}\big) \\
        &+ \mathbb{P}\big(P_i^a < p_0 \leq P_i^{a'}, p_1 \leq P_{-i}^{a'} < P_{-i}^{a}\big) + \mathbb{P}\big(p_0 \leq P_i^{a} \leq P_i^{a'}, P_{-i}^{a'} < p_1 \leq P_{-i}^{a}\big) \\
        &+ \mathbb{P}\big(P_i^a < p_0 \leq P_i^{a'}, P_{-i}^{a'} < p_1 \leq P_{-i}^{a}\big) \bigg\} c_{V_i, V_{-i}}(p_0, p_1) d p_0 d p_1, \\
        &\Delta P_3 = \int_{0}^{1} \int_{0}^{1} \bigg\{ \mathbb{P}\big(P_i^{a'} < p_0 \leq P_i^{a}, p_1 > P_{-i}^{a'} \geq P_{-i}^{a}\big) + \mathbb{P}\big(p_0 > P_i^{a} > P_i^{a'}, P_{-i}^{a} < p_1 \leq P_{-i}^{a'}\big) \\
        &+\mathbb{P}\big(P_i^{a'} < p_0 \leq P_i^{a}, p_1 \leq P_{-i}^{a} \leq P_{-i}^{a'}\big) + \mathbb{P}\big(p_0 \leq P_i^{a'} < P_i^{a}, P_{-i}^{a} < p_1 \leq P_{-i}^{a'}\big) \\
        &+ \mathbb{P}\big(P_i^{a'} < p_0 \leq P_i^{a}, P_{-i}^{a} < p_1 \leq P_{-i}^{a'}\big) \bigg\} c_{V_i, V_{-i}}(p_0, p_1) d p_0 d p_1, \\
        &\Delta P_4 = \int_{0}^{1} \int_{0}^{1} \bigg\{ \mathbb{P}\big(P_i^{a'} < p_0 \leq P_i^{a}, p_1 > P_{-i}^{a} > P_{-i}^{a'}\big) + \mathbb{P}\big(p_0 > P_i^{a} > P_i^{a'}, P_{-i}^{a'} < p_1 \leq P_{-i}^{a}\big) \\
        &+\mathbb{P}\big(P_i^{a'} < p_0 \leq P_i^{a}, p_1 \leq P_{-i}^{a'} < P_{-i}^{a}\big) + \mathbb{P}\big(p_0 \leq P_i^{a'} < P_i^{a}, P_{-i}^{a'} < p_1 \leq P_{-i}^{a}\big) \\
        &+ \mathbb{P}\big(P_i^{a'} < p_0 \leq P_i^{a}, P_{-i}^{a'} < p_1 \leq P_{-i}^{a}\big) \bigg\} c_{V_i, V_{-i}}(p_0, p_1) d p_0 d p_1.
    \end{aligned}
\end{equation*}

\section{Comparison With Relevant Literature}

\subsection{Breakdown of MTE causal validity: Proof} \label{app:mte_breakdown}

We can express $\mathbb{E}\left[Y_i D_i \mid p_i\left(Z_i\right)=p_0\right]$ as 
\begin{equation*}
    \begin{aligned}
    & \mathbb{E}\left[Y_i D_i \mid P_i(Z_i) = p_0\right] \\
    =& \mathbb{E}\left[\mathbb{E}\left[Y_i D_i \mid P_i(Z_i) = p_0, P_{-i}(Z_{-i}) = p_1\right] \mid P_i(Z_i) = p_0\right]
    \end{aligned}
\end{equation*}
by applying the law of iterated expectations. The inner conditional expectation can be further expressed as 
\begin{equation*}
    \begin{aligned}
        &\mathbb{E}\left[Y_i D_i D_{-i} \mid P_i(Z_i) = p_0, P_{-i}(Z_{-i}) = p_1\right] \\
        &+ \mathbb{E}\left[Y_i D_i (1 - D_{-i}) \mid P_i(Z_i) = p_0, P_{-i}(Z_{-i}) = p_1\right] \\
        =&\mathbb{E}\left[Y_i(1, 1) \mathbbm{1}\{V_i \leq h_i(Z_i, Z_{-i})\} \mathbbm{1}\{V_{-i} \leq h_{-i}(Z_{-i}, Z_i)\} \mid h_i(Z_i, Z_{-i}) = p_0, h_{-i}(Z_{-i}, Z_i) = p_1\right] \\
        &+ \mathbb{E}\left[Y_i(1, 0) \mathbbm{1}\{V_i \leq h_i(Z_i, Z_{-i})\} \mathbbm{1}\{V_{-i} > h_{-i}(Z_{-i}, Z_i)\} \mid P_i(Z_i, Z_{-i}) = p_0, P_{-i}(Z_{-i}, Z_i) = p_1\right] \\
        =&\mathbb{E}\left[Y_i(1, 1) \mathbbm{1}\{V_i \leq p_0\} \mathbbm{1}\{V_{-i} \leq p_1\}\right] + \mathbb{E}\left[Y_i(1, 0) \mathbbm{1}\{V_i \leq p_0\} \mathbbm{1}\{V_{-i} > p_1\}\right] \\
        =& \int_{0}^{p_1} \int_{0}^{p_0} \mathbb{E}\left[Y_i(1, 1) \mid V_i = v_0, V_{-i} = v_1\right] c_{V_i, V_{-i}}(v_0, v_1) dv_0 dv_1 \\
        &+ \int_{p_1}^{1} \int_{0}^{p_0} \mathbb{E}\left[Y_i(1, 0) \mid V_i = v_0, V_{-i} = v_1\right] c_{V_i, V_{-i}}(v_0, v_1) dv_0 dv_1.
    \end{aligned}
\end{equation*}

Therefore, 
\begin{equation*}
    \begin{aligned}
        &\mathbb{E}\left[Y_i D_i \mid P_i(Z_i) = p_0\right] \\
        =& \int_{0}^{1} \int_{0}^{p_1} \int_{0}^{p_0} \mathbb{E}\left[Y_i(1, 1) \mid V_i = v_0, V_{-i} = v_1\right] c_{V_i, V_{-i}}(v_0, v_1) f_{P_{-i} \mid P_i = p_0} (p_1) dv_0 dv_1 dp_1 \\
        &+ \int_{0}^{1} \int_{p_1}^{1} \int_{0}^{p_0} \mathbb{E}\left[Y_i(1, 0) \mid V_i = v_0, V_{-i} = v_1\right] c_{V_i, V_{-i}}(v_0, v_1) f_{P_{-i} \mid P_i = p_0} (p_1) dv_0 dv_1 dp_1,
    \end{aligned}
\end{equation*}
where $f_{P_{-i} \mid P_i = p_0}(\cdot)$ denotes the conditional density of propensity score function $P_{-i}$ given $P_i = p_0$. 
If $Y_i$ is bounded, i.e., $|Y_i| < \infty$, then by Fubini's theorem, we can interchange the order of integration in the expression above, yielding the following result.
\begin{equation*}
    \begin{aligned}
        &\mathbb{E}\left[Y_i D_i \mid P_i(Z_i) = p_0\right] \\
        =& \int_{0}^{p_0} \int_{0}^{1} \int_{0}^{p_1} \mathbb{E}\left[Y_i(1, 1) \mid V_i = v_0, V_{-i} = v_1\right] c_{V_i, V_{-i}}(v_0, v_1) f_{P_{-i} \mid P_i = p_0} (p_1) dv_1 dp_1 dv_0 \\
        &+ \int_{0}^{p_0} \int_{0}^{1} \int_{p_1}^{1} \mathbb{E}\left[Y_i(1, 0) \mid V_i = v_0, V_{-i} = v_1\right] c_{V_i, V_{-i}}(v_0, v_1) f_{P_{-i} \mid P_i = p_0} (p_1) dv_1 dp_1 dv_0.
    \end{aligned}
\end{equation*}
Suppose the above function is continuously differentiable with respect to $p_0$. In that case, we can apply the Leibniz integral rule to differentiate and obtain the following equalities,
\begin{equation*}
    \begin{aligned}
        & \frac{\partial}{\partial p_0} \int_{0}^{p_0} \int_{0}^{1} \int_{0}^{p_1} \mathbb{E}\left[Y_i(1, 1) \mid V_i = v_0, V_{-i} = v_1\right] c_{V_i, V_{-i}}(v_0, v_1) f_{P_{-i} \mid P_i = p_0} (p_1) dv_1 dp_1 dv_0 \\
        =& \int_{0}^{1} \int_{0}^{p_1} \mathbb{E}\left[Y_i(1, 1) \mid V_i = p_0, V_{-i} = v_1\right] c_{V_i, V_{-i}}(p_0, v_1) f_{P_{-i} \mid P_i = p_0} (p_1) dv_1 dp_1 \\
        &+ \int_{0}^{p_0} \int_{0}^{1} \int_{0}^{p_1} \mathbb{E}\left[Y_i(1, 1) \mid V_i = v_0, V_{-i} = v_1\right] 
        c_{V_i, V_{-i}}(v_0, v_1) \frac{\partial}{\partial p_0} f_{P_{-i} \mid P_i = p_0} (p_1) dv_1 dp_1 dv_0 \\
    \equiv& \int_{0}^{1} \int_{0}^{p_1} \mathbb{E}\left[Y_i(1, 1) \mid V_i = p_0, V_{-i} = v_1\right] c_{V_i, V_{-i}}(p_0, v_1) f_{P_{-i} \mid P_i = p_0} (p_1) dv_1 dp_1 + \mathcal{R}_{11}, \\
    & \frac{\partial}{\partial p_0} \int_{0}^{p_0} \int_{0}^{1} \int_{p_1}^{1} \mathbb{E}\left[Y_i(1, 0) \mid V_i = v_0, V_{-i} = v_1\right] c_{V_i, V_{-i}}(v_0, v_1) f_{P_{-i} \mid P_i = p_0} (p_1) dv_1 dp_1 dv_0 \\
        =& \int_{0}^{1} \int_{p_1}^{1} \mathbb{E}\left[Y_i(1, 0) \mid V_i = p_0, V_{-i} = v_1\right] c_{V_i, V_{-i}}(p_0, v_1) f_{P_{-i} \mid P_i = p_0} (p_1) dv_1 dp_1 \\
        &+ \int_{0}^{p_0} \int_{0}^{1} \int_{p_1}^{1} \mathbb{E}\left[Y_i(1, 0) \mid V_i = v_0, V_{-i} = v_1\right] c_{V_i, V_{-i}}(v_0, v_1) \frac{\partial}{\partial p_0} f_{P_{-i} \mid P_i = p_0} (p_1) dv_1 dp_1 dv_0 \\
    \equiv& \int_{0}^{1} \int_{p_1}^{1} \mathbb{E}\left[Y_i(1, 0) \mid V_i = p_0, V_{-i} = v_1\right] c_{V_i, V_{-i}}(p_0, v_1) f_{P_{-i} \mid P_i = p_0} (p_1) dv_1 dp_1 + \mathcal{R}_{10}.
    \end{aligned}
\end{equation*}
The terms $\mathcal{R}_{11}$ and $\mathcal{R}_{10}$ are nonzero because $c_{V_i, V_{-i}}(v_0, v_1) \neq 0$, and $\partial f_{P_{-i} \mid P_i = p_0} (p_1) / \partial p_0 \neq 0$ given that $P_i$ and $P_{-i}$ are dependent as they are both functions of $(Z_i, Z_{-i})$.

Similarly, under the assumption that $Y_i$ is bounded and that $\mathbb{E}\left[Y_i (1 - D_i) \mid P_i(Z_i) = p_0 \right]$ is continuously differentiable with respect to $p_0$, we obtain the following equalities,
\begin{equation*}
    \begin{aligned}
        &\frac{\partial}{\partial p_0} \mathbb{E}\left[Y_i(1 - D_i) \mid P_i(Z_i) = p_0\right] \\
        =& -\int_{0}^{1} \int_{0}^{p_1} \mathbb{E}\left[Y_i(0, 1) \mid V_i = p_0, V_{-i} = v_1\right] c_{V_i, V_{-i}}(p_0, v_1) f_{P_{-i} \mid P_i = p_0} (p_1) dv_1 dp_1 + \mathcal{R}_{01} \\
        &- \int_{0}^{1} \int_{p_1}^{1} \mathbb{E}\left[Y_i(0, 0) \mid V_i = p_0, V_{-i}= v_1\right] c_{V_i, V_{-i}}(p_0, v_1) f_{P_{-i} \mid P_i = p_0} (p_1) dv_1 dp_1 + \mathcal{R}_{00}, \\
    & \mathcal{R}_{01} = \int_{p_0}^{1} \int_{0}^{1} \int_{0}^{p_1} \mathbb{E}\left[Y_i(0,1) \mid V_i = v_0, V_{-i} = v_1\right] c_{V_i, V_{-i}}(v_0, v_1) \frac{\partial}{\partial p_0} f_{P_{-i} \mid P_i = p_0} (p_1) dv_1 dp_1 dv_0, \\
    & \mathcal{R}_{00} = \int_{p_0}^{1} \int_{0}^{1} \int_{p_1}^{1} \mathbb{E}\left[Y_i(0, 0) \mid V_i = v_0, V_{-i} = v_1\right] c_{V_i, V_{-i}}(v_0, v_1) \frac{\partial}{\partial p_0} f_{P_{-i} \mid P_i = p_0} (p_1) dv_1 dp_1 dv_0.
    \end{aligned}
\end{equation*}

By taking the difference between $\partial \mathbb{E}[Y_i D_i \mid 
P_i(Z_i) = p_0] / \partial p_0$ and $-\partial \mathbb{E}[Y_i (1 - D_i) \mid 
P_i(Z_i) = p_0] / \partial p_0$, we would identify 
\begin{equation*}
    \begin{aligned}
        &\int_{0}^{1} \int_{0}^{p_1} \mathbb{E}\left[Y_i(1, 1) - Y_i(0,1) \mid V_i = p_0, V_{-i} = v_1\right] c_{V_i, V_{-i}}(p_0, v_1) f_{P_{-i} \mid P_i = p_0} (p_1) dv_1 dp_1 \\
        +&\int_{0}^{1} \int_{p_1}^{1} \mathbb{E}\left[Y_i(1, 0) - Y_i(0, 0) \mid V_i = p_0, V_{-i} = v_1\right] c_{V_i, V_{-i}}(p_0, v_1) f_{P_{-i} \mid P_i = p_0} (p_1) dv_1 dp_1 \\
        &+ \mathcal{R},
    \end{aligned}
\end{equation*}
where the bias term $\mathcal{R}$ is given by the sum of its components $\mathcal{R}_{dd'}$, where $d, d' \in \{0,1\}$, as defined in the preceding derivations.

The first two terms correspond to the averages of the marginal controlled direct effects for unit $i$, $\operatorname{MCDE}i^{(1)}(p_0, v_1)$ and $\operatorname{MCDE}i^{(0)}(p_0, v_1)$, weighted by the peer's propensity scores conditional on $P_i = p_0$ and the copula density of $(V_i, V{-i})$. Since the bias terms, $\mathcal{R}$, involves the derivative
\begin{equation*}
    \frac{\partial}{\partial p_0} f_{P_{-i} \mid P_i = p_0}(p_1),
\end{equation*}
it vanishes whenever this derivative equals zero for all $p_1$, i.e., when $f_{P_{-i} \mid P_i = p_0}(p_1)$ does not depend on $p_0$. This holds when the propensity scores $P_i$ and $P_{-i}$ are independent, which occurs under two sufficient conditions:
\begin{enumerate}
    \item The instruments $Z_i$ and $Z_{-i}$ are independent within each group.
    \item The threshold function for unit $i$ depends solely on its own instrument, $h_i(Z_i, Z_{-i}) = h_i(Z_i)$.
\end{enumerate}

Under these conditions, since the propensity score identifies the threshold function, it follows that $P_i(Z_i, Z_{-i}) = h_i(Z_i)$ and $P_{-i}(Z_{-i}, Z_i) = h_{-i}(Z_{-i})$ almost surely. Since $Z_i$ is assumed to be independent of $Z_{-i}$, $P_i$ and $P_{-i}$ are independent, implying that the bias term $\mathcal{R}$ equals zero.

\subsection{Marginal Controlled Effects in the Absence of Spillovers} \label{app:mce_sutva}

If spillover effects are absent in both income and treatment selection, which means that $V_i \indep V_{-i}$, $Y_{i}(D_{i}, d)=Y_{i}(D_{i}, d') \equiv Y_{i}(D_{i})$, and $h_{i}(Z_{i}, z)=h_{i}(Z_{i}, z') \equiv h_{i}(Z_{i})$, then our identification results reduce to the standard MTE framework. 

In this case, the propensity score identifies
\begin{equation*}
    \begin{aligned}
        & \mathbb{P}\big(D_i=1 \mid Z_i=z_0, Z_{-i}=z_1\big) \\
        =& \mathbb{P}\big(V_i \leq h_i(z_0) \mid Z_i=z_0, Z_{-i}=z_1\big) \\
        =& \mathbb{P}\big(V_i \leq h_i(z_0)\big) = h_i(z_0),
    \end{aligned}
\end{equation*}
which is the threshold function from the standard MTE setting, where peer instruments do not influence individual treatment decisions.

Taking the cross-partial derivative of $\mathbb{P}(D_i = 1, D_{-i} = 1 \mid P_i = p_0, P_{-i} = p_1)$, the copula density evaluated at the point $(p_0, p_1)$, $c_{V_i, V_{-i}}(p_0, p_1)$, is identified as
\begin{equation*}
    \begin{aligned}
        & \frac{\partial^2}{\partial p_1 \partial p_0} \mathbb{P}(D_i = 1, D_{-i} = 1 \mid P_i = p_0, P_{-i} = p_1) \\
        =& \frac{\partial^2}{\partial p_1 \partial p_0} \mathbb{P}(V_i \leq p_0, V_{-i} \leq p_1 \mid P_i = p_0, P_{-i} = p_1) \\
        =& \frac{\partial^2}{\partial p_1 \partial p_0} \mathbb{P}\big(V_i \leq p_0, V_{-i} \leq p_1\big) \\
        =& \frac{\partial}{\partial p_0}\mathbb{P}\big(V_i \leq p_0\big) \cdot \frac{\partial}{\partial p_1}\mathbb{P}\big(V_{-i} \leq p_1\big) = 1.
    \end{aligned}
\end{equation*}
The copula density equals one since $V_i$ and $V_{-i}$ are independent. This aligns with the standard MTE framework, where the individuals' unobserved heterogeneities are independent.

Finally, by taking the cross-partial derivative of $\mathbb{E}[Y_{i} D_{i} D_{-i} \mid P_{i}=p_0, P_{-i}=p_1]$, 
\begin{equation*}
    \begin{aligned}
        &\frac{\partial^2}{\partial p_0 \partial p_1} \mathbb{E}\big[Y_{i} D_{i} D_{-i} \mid P_{i}=p_0, P_{-i}=p_1\big] \\
        =& \frac{\partial^2}{\partial p_0 \partial p_1} \mathbb{E}\big[Y_{i} \mathbbm{1} \{V_i \leq p_0\} \mathbbm{1} \{V_{-i} \leq p_1\} \mid P_{i}=p_0, P_{-i}=p_1\big] \\
        =& \frac{\partial^2}{\partial p_0 \partial p_1} \mathbb{E}\big[Y_{i} \mathbbm{1} \{V_i \leq p_0\} \mathbbm{1} \{V_{-i} \leq p_1\}\big] \\
        =& \frac{\partial^2}{\partial p_0 \partial p_1} \int_{0}^{p_1} \int_{0}^{p_0} \mathbb{E}\big[Y_i(1) \mid V_i = v_0, V_{-i} = v_1\big] c_{V_i, V_{-i}}(v_0, v_1) d v_0 d v_1 \\
        =& \frac{\partial^2}{\partial p_0 \partial p_1} \int_{0}^{p_1} \int_{0}^{p_0} \mathbb{E}\big[Y_i(1) \mid V_i = v_0\big] d v_0 d v_1 \\
        =& \mathbb{E}\big[Y_i(1) \mid V_i = p_0\big],
    \end{aligned}
\end{equation*}
where the third line follows from Assumption \ref{as:RA}, the fifth line holds because $V_{-i}$ is independent with $(Y_i(d), V_i)$, and the copula density equals to one. By analogous reasoning, taking the cross-partial derivatives of $\mathbb{E}[Y_{i} \mathbbm{1}\{D_{i} = d, D_{i} = d\} \mid P_{i}=p_0, P_{-i}=p_1]$ identifies 
\begin{equation*}
    \begin{aligned}
        &\frac{\partial^2}{\partial p_0 \partial p_1} \mathbb{E}\big[Y_{i} \mathbbm{1}\{D_{i} = d, D_{i} = d\} \mid P_{i}=p_0, P_{-i}=p_1\big] \\
        =& \mathbb{E}\big[Y_i(d) \mid V_i = p_0\big],
    \end{aligned}
\end{equation*}
and the cross-partial derivatives of $\mathbb{E}[Y_{i} \mathbbm{1}\{D_{i} = d, D_{-i} = 1-d\} \mid P_{i}=p_0, P_{-i}=p_1]$ identifies 
\begin{equation*}
    \begin{aligned}
        &\frac{\partial^2}{\partial p_0 \partial p_1} \mathbb{E}\big[Y_{i} \mathbbm{1}\{D_{i} = d, D_{-i} = 1-d\} \mid P_{i}=p_0, P_{-i}=p_1\big] \\
        =& -\mathbb{E}\big[Y_i(d) \mid V_i = p_0\big],
    \end{aligned}
\end{equation*}
for $d \in \{0, 1\}$ and $(p_0, p_1)$ being an interior point of $\mathcal{P}$. Then, it follows that the $\operatorname{MCSE}_i^{(d)}(p_0, p_1)$ is identified as 
\begin{equation*}
    \operatorname{sgn}(2 d-1) \cdot \frac{\partial^2 \mathbb{E}\left[Y_i \mathbbm{1}\{D_i=d\} \mid P_i=p_0, P_{-i}=p_1\right]}{\partial p_0 \partial p_1} = 0,
\end{equation*}
and the $\operatorname{MCDE}_i^{(d)}(p_0, p_1)$ is identified as 
\begin{equation*}
    \operatorname{sgn}(2 d-1) \cdot \frac{\partial^2 \mathbb{E}\left[Y_i \mathbbm{1}\{D_{-i}=d\} \mid P_i=p_0, P_{-i}=p_1\right]}{\partial p_0 \partial p_1} = \mathbb{E}\big[Y_i(1) - Y_i(0) \mid V_i = p_0\big].
\end{equation*}

One may interpret the model in Equation \eqref{eq:basic_model_disc_trt} as a standard MTE framework, treating $(D_{-i}, Z_{-i})$ as covariates. This raises a natural question: can the marginal treatment response (MTR) functions be identified using conventional MTE methods by conditioning on the peer's treatment status $D_{-i}$? The answer is affirmative, provided that $D_{-i}$ is exogenous with respect to unit $i$'s potential outcomes and unobserved heterogeneity. Specifically, this requires the independence assumption 
\begin{equation} \label{eq:indep_peer_trt}
    (Z_i, Z_{-i}, D_{-i}) \indep \{(V_i, Y_{i}(d, d'))\}_{d \in \{0,1\}, d' \in \{0,1\}}.
\end{equation}
When this condition holds, the marginal treatment response functions conditional on the individual's own unobserved heterogeneity, $\mathbb{E}[Y_i(d, d') \mid V_i ]$, can be identified using the standard MTE identification strategy, extended to include $D_{-i}$ as an additional covariate. For instance,
\begin{equation*}
    \begin{aligned}
         & \frac{\partial}{\partial p} \mathbb{E}\left[Y_i \cdot D_i \mid P_i(Z_i, Z_{-i}) = p, D_{-i} = d'\right] \\
         =& \frac{\partial}{\partial p} \mathbb{E}\left[Y_i(1, d') \cdot \mathbbm{1}\{V_i \leq p\} \mid P_i(Z_i, Z_{-i}) = p, D_{-i} = d'\right] \\
         =& \frac{\partial}{\partial p} \int_{0}^{p} \mathbb{E}\left[Y_i(1, d') \mid V_i = v\right] dv \\
         =& \mathbb{E}\left[Y_i(1, d') \mid V_i = p\right].
    \end{aligned}
\end{equation*}
It is important to note that the second equality in the above derivation holds only if $D_{-i}$ satisfies the independence condition in Equation \eqref{eq:indep_peer_trt}.

However, the assumption of exogeneity for $D_{-i}$ is not realistic within our spillover framework. As illustrated in Figure \ref{fig:causal_relation}, the peer's unobserved heterogeneity $V_{-i}$ affects her treatment $D_{-i}$ and may also be correlated with unit $i$'s unobserved confounder $V_i$, thereby threatening the exogeneity of $D_{-i}$. In the marginal spillover setting, where $V_i$ and $V_{-i}$ are allowed to be arbitrarily correlated, applying the standard MTE identification strategy by conditioning on $D_{-i}$ would lead to
\begin{equation*}
    \begin{aligned}
        & \frac{\partial}{\partial p} \mathbb{E}\left[Y_i \cdot D_i \mid P_i(Z_i, Z_{-i}) = p, D_{-i} = d'\right] \\
         =& \frac{\partial}{\partial p} \mathbb{E}\left[Y_i(1, d') \cdot \mathbbm{1}\{V_i \leq p\} \mid P_i(Z_i, Z_{-i}) = p, D_{-i} = d'\right] \\
         =& \frac{\partial}{\partial p} \int_{0}^{p} \mathbb{E}\left[Y_i(1, d') \mid V_i = v, D_{-i} = 1\right] f_{V_i \mid D_{-i} = d}(v) dv \\
         =& \mathbb{E}\left[Y_i(1, d') \mid V_i = p, D_{-i} = d'\right],
    \end{aligned}
\end{equation*}
with the distribution of $V_i \mid D_{-i} = d$ being normalized to follow a uniform distribution on $[0, 1]$. However, $\mathbb{E}\left[Y_i(1, d') \mid V_i = p, D_{-i} = d'\right]$ is not equal to $\mathbb{E}\left[Y_i(1, d') \mid V_i = p\right]$, because $D_{-i}$ is determined by $V_{-i}$, which is dependent on $Y_i(d, d')$ even conditioning on $V_i$, as illustrated in Figure \ref{fig:causal_relation}.

\subsection{Comparing With Multivalued Treatments Literature} \label{app:multi_trt}

In this section, we compare the methods for identifying marginal spillover effects with the framework discussed in \cite{lee2018identifying}. We focus on the group level and consider the treatment vector $ \boldsymbol{D}_g \equiv (D_{0g}, D_{1g}) $, assigned to each group $ g $. The treatment vector $ \boldsymbol{D}_g $ takes values from the set $ \{(1,1), (1,0), (0,1), (0,0)\} $, consisting of four elements. Therefore, $ \boldsymbol{D}_g $ can be regarded as multivalued treatments assigned at the group level. To align with the notation in \cite{lee2018identifying}, we relabel the treatment vectors as follows: $ (0,0) \equiv 0 $, $ (0,1) \equiv 1 $, $ (1,0) \equiv 2 $, and $ (1,1) \equiv 3 $. Consequently, the treatment $ \boldsymbol{D}_g $ takes values from $ \{0, 1, 2, 3\} \equiv \mathcal{D} $. 

Each group is randomly assigned a continuous instrumental variable, with the instrument vector for group $ g $ denoted as $ \boldsymbol{Z}_g \equiv (Z_{0g}, Z_{1g}) $. Let $ V_{ig} \in \mathbb{R} $ represent the unobserved characteristics of individual $ i $ in group $ g $, and let $ \boldsymbol{V}_g \equiv (V_{0g}, V_{1g}) $ denote the vector of unobserved heterogeneity for both individuals in group $ g $. For simplicity, we omit the group subscript $ g $ from the notation. The parameters of interest in \cite{lee2018identifying}, $ E(Y_k \mid \boldsymbol{V} = \boldsymbol{v}) - E(Y_{k'} \mid \boldsymbol{V} = \boldsymbol{v}) $, where $ k \neq k' $ and $ k, k' \in \{0, 1, 2, 3\} $, can be interpreted as marginal controlled spillover effects and marginal controlled direct effects within the spillover framework.

According to the model in Equation \eqref{eq:basic_model_disc_trt}, we have 
\begin{enumerate}
    \item $\boldsymbol{D} = 0$ if and only if $V_0 > h_0(\boldsymbol{Z})$ and $V_1 > h_1(\boldsymbol{Z})$.
    \item $\boldsymbol{D} = 1$ if and only if $V_0 > h_0(\boldsymbol{Z})$ and $V_1 \leq h_1(\boldsymbol{Z})$.
    \item $\boldsymbol{D} = 2$ if and only if $V_0 \leq h_0(\boldsymbol{Z})$ and $V_1 > h_1(\boldsymbol{Z})$.
    \item $\boldsymbol{D} = 3$ if and only if $V_0 \leq h_0(\boldsymbol{Z})$ and $V_1 \leq h_1(\boldsymbol{Z})$.
\end{enumerate}

It is straightforward to see that the treatment $ \boldsymbol{D} $ is measurable with respect to the $ \sigma $-field generated by the events $ \left\{V_i < Q_i(\boldsymbol{Z})\right\} $ for $ i \in \{0,1\} $, which aligns with the selection mechanism described in Assumption 2.1 of \cite{lee2018identifying}. Furthermore, Theorem 3.1 in \cite{lee2009training} is similar to our approach in identifying the joint density of unobserved heterogeneity $ \boldsymbol{V} $ and the marginal treatment response functions, once the threshold functions $ h_i(\boldsymbol{Z}) $ (denoted as $ Q_i(\boldsymbol{Z}) $ in \cite{lee2018identifying}) are identified.

However, in our setting, fewer assumptions are needed to point identify the thresholds. Specifically, we only require that the instruments $ \boldsymbol{Z} $ are randomly assigned at the group level and do not directly influence the outcomes, without relying on the additional exclusion restrictions on instruments imposed in Assumption 4.1 of \cite{lee2018identifying}. This is because we have more information on the observed treatments $ D $, allowing us to identify the marginal distributions of $ V_0 $ and $ V_1 $ from the proportions of observed treatments:
\begin{equation*}
    \begin{aligned}
        &\mathbb{P}(\boldsymbol{D} = 3 \mid \boldsymbol{Z} = z) + \mathbb{P}(\boldsymbol{D} = 2 \mid \boldsymbol{Z} = z) = \mathbb{P}(V_0 \leq h_0(z)), \\
        &\mathbb{P}(\boldsymbol{D} = 3 \mid \boldsymbol{Z} = z) + \mathbb{P}(\boldsymbol{D} = 1 \mid \boldsymbol{Z} = z) = \mathbb{P}(V_1 \leq h_1(z)).
    \end{aligned}
\end{equation*}

\begin{remark}
    (Monotonicity for each unit) When we focus on each individual unit $i$ within a group, the monotonicity condition is satisfied. Specifically, consider any two vectors of instruments, denoted as $(z_0, z_1)$ and $(\tilde{z}_0, \tilde{z}_1)$, where $P_i(z_0, z_1) \leq P_i(\tilde{z}_0, \tilde{z}_1)$. Under the monotonicity assumption, this ordering of the propensity scores implies that the corresponding potential treatments satisfy $D_i(z_0, z_1) \leq D_i(\tilde{z}_0, \tilde{z}_1)$. However, if we treat the entire group as a single decision-making unit and reformulate the setting into a multivalued treatment model, the monotonicity assumption may no longer hold. For instance, in the two-way flow model discussed in \cite{lee2018identifying}, when the proportion of $D = 2$ changes, it is unclear whether the shift is driven by changes in $h_0$ or $h_1$. In other words, shifts in either  $h_0$ or $h_1$ can induce changes in the proportion of receiving a given treatment, making it impossible to distinguish between the two effects. As a result, the monotonicity condition is violated, and the marginal distributions of $V_0$ and $V_1$ cannot be identified.
\end{remark}

In the multivalued treatment setting discussed in \cite{lee2018identifying}, if we have enough information on the observed treatment $D$ that allows us to identify the threshold $h_{j}(\boldsymbol{Z})$ for each $j \in \{1, \cdots, J\}$, then we can point identify the joint density of $\boldsymbol{V}$ and the marginal treatment response functions in Theorem 3.1 of \cite{lee2018identifying}. Specifically, for each dimension $j \in \{1, \cdots, J\}$ of the unobservable $\boldsymbol{V}$, we need a subset of the support of the treatments, $\mathcal{K}_j \subseteq \mathcal{K}$, $\mathcal{K} = \{0, \cdots, K-1\}$, such that
\begin{equation*}
    \sum_{k \in \mathcal{K}_j} \mathbb{P}\left(D = k \mid \boldsymbol{Z}\right) = \mathbb{P}\left(V_{j} \leq h_j(\boldsymbol{Z}) \mid \boldsymbol{Z}\right) = \mathbb{P}\left(V_j \leq h_j(\boldsymbol{Z})\right) =h_j(\boldsymbol{Z}).
\end{equation*}

If $\mathcal{K}_j$ that satisfies the above conditions does not exist for some $j$, we can still partially identify the threshold $h_j(\boldsymbol{Z})$. $h_j(\boldsymbol{Z})$ can be partially identified as 
\begin{equation*}
    \sum_{k \in \underline{\mathcal{K}}_j} \mathbb{P}\left(D = k \mid \boldsymbol{Z}\right) \leq \mathbb{P}\left(V_{j} \leq h_j(\boldsymbol{Z}) \mid \boldsymbol{Z}\right) = \mathbb{P}\left(V_j \leq h_j(\boldsymbol{Z})\right) = h_j(\boldsymbol{Z}) \leq \sum_{k \in \bar{\mathcal{K}_j}} \mathbb{P}\left(D = k \mid \boldsymbol{Z}\right),
\end{equation*}
where $\underline{\mathcal{K}}_j$ is the largest subset $\widetilde{\underline{\mathcal{K}}}_j$ of $\mathcal{K}$ such that 
\begin{equation*}
    \cup_{k \in \widetilde{\underline{\mathcal{K}}}_j} d_k^{-1}\{D = k\} \subseteq \{V_{j} \leq h_j(\boldsymbol{Z})\},
\end{equation*}
and $\bar{\mathcal{K}_j}$ is the smallest subset $\widetilde{\bar{\mathcal{K}_j}}$ of $\mathcal{K}$ such that 
\begin{equation*}
    \{V_{j} \leq h_j(\boldsymbol{Z})\} \subseteq \cup_{k \in \widetilde{\bar{\mathcal{K}_j}}} d_k^{-1}\{D = k\}.
\end{equation*}

For example, in the two-way flow model discussed in \cite{lee2018identifying}, we can partially identify $h_1(\boldsymbol{Z})$ and $h_2(\boldsymbol{Z})$ as 
\begin{equation*}
    \begin{aligned}
        & \mathbb{P}\left(D = 0 \mid \boldsymbol{Z}\right) \leq h_1(\boldsymbol{Z}) \leq \mathbb{P}\left(D = 0 \mid \boldsymbol{Z}\right) + \mathbb{P}\left(D = 2\mid \boldsymbol{Z}\right), \\
        & \mathbb{P}\left(D = 0 \mid \boldsymbol{Z}\right) \leq h_2(\boldsymbol{Z}) \leq \mathbb{P}\left(D = 0 \mid \boldsymbol{Z}\right) + \mathbb{P}\left(D = 2\mid \boldsymbol{Z}\right).
    \end{aligned}
\end{equation*}

\section{Derivations of Testable Implications} \label{app:testable_implication}

Corollary~\ref{corr:id_copula_density} establishes identification of the copula density. Because any valid copula density must be nonnegative, this result directly implies a testable inequality restriction:
\begin{equation*}
    \frac{\partial^2 \mathbb{E}\big[D_i D_{-i} \mid P_i=p_0, P_{-i}=p_1\big]}{\partial p_0 \partial p_1} \geq 0
\end{equation*}
if the cross-derivative correctly identifies the copula density. This constitutes one of the testable implications.

In addition, the proof of Theorem~\ref{thm:id_mtr_disc_trt} identifies the conditional joint distributions of potential outcomes, $\mathbb{P}(Y_i(d, d') \in A_1, Y_{-i}(d, d') \in A_2 \mid V_i = p_0, V_{-i} = p_1)$, weighted by the copula density $c_{V_i, V_{-i}}(p_0, p_1)$. By the nonnegativity of both probabilities and copula densities, this quantity must be nonnegative. Together with the inequality derived from the copula density, this yields a set of testable inequality restrictions that must hold for any Borel sets $A_1, A_2 \subseteq \mathcal{Y}$ and all $d \in \{0,1\}$:
\begin{equation*}
        \begin{aligned}
            &\frac{\partial^2}{\partial p_1 \partial p_0} \mathbb{E}\Big[\mathbbm{1} \big\{Y_i \in A_1, Y_{-i} \in A_2 \big\} \mathbbm{1}\big\{D_{i} = d, D_{-i} = d \big\} \mid P_{i} = p_0, P_{-i}= p_1 \Big] \\ 
            =& \mathbb{P}\big(Y_i(d, d) \in A_1, Y_{-i}(d, d) \in A_2 \mid V_i = p_0, V_{-i} = p_1\big) c_{V_i, V_{-i}}(p_0, p_1) \geq 0, \\
            &-\frac{\partial^2}{\partial p_1 \partial p_0} \mathbb{E}\Big[\mathbbm{1} \big\{Y_i \in A_1, Y_{-i} \in A_2 \big\} \mathbbm{1}\big\{D_{i} = d, D_{-i} = 1-d \big\} \mid P_{i} = p_0, P_{-i} = p_1\Big] \geq 0 \\
            =& \mathbb{P}\big(Y_i(d, 1-d) \in A_1, Y_{-i}(d, 1-d) \in A_2 \mid V_i = p_0, V_{-i} = p_1\big) c_{V_i, V_{-i}}(p_0, p_1) \geq 0.
        \end{aligned}
\end{equation*}

It is worth noting that both the copula function and the marginal treatment response functions are functions of the propensity scores rather than the instrument values themselves. This feature implies an additional set of testable implications. In particular, if two distinct pairs of instrument values, $(z_0, z_1) \neq (\tilde{z}_0, \tilde{z}_1)$, yield the same propensity scores, $P_i(z_0, z_1) = P_i(\tilde{z}_0, \tilde{z}_1) = p_0$ and $P_{-i}(z_0, z_1) = P_{-i}(\tilde{z}_0, \tilde{z}_1) = p_1$, then any identified quantities that depend only on $(p_0, p_1)$, such as the copula density or the marginal treatment response functions, must be equal across these instrument values. This leads to the following testable equality
\begin{equation*}
    \begin{aligned}
        & \mathbb{P}\big(D_i = 1, D_{-i} = 1 \mid Z_i = z_0, Z_{-i} = z_1\big) \\
        =& \mathbb{P}\big(D_i = 1, D_{-i} = 1 \mid P_i(z_0, z_1) = p_0, P_{-i}(z_1, z_0) = p_1\big) \\
        =& \mathbb{P}\big(D_i = 1, D_{-i} = 1 \mid P_i(\tilde{z}_0, \tilde{z}_1) = p_0, P_{-i}(\tilde{z}_1, \tilde{z}_0) = p_1\big) \\
        =& \mathbb{P}\big(D_i = 1, D_{-i} = 1 \mid Z_i = \tilde{z}_0, Z_{-i} = \tilde{z}_1\big),
    \end{aligned}
\end{equation*}
since both conditional probabilities $\mathbb{P}(D_i = 1, D_{-i} = 1 \mid Z_i = z_0, Z_{-i} = z_1)$ and $\mathbb{P}(D_i = 1, D_{-i} = 1 \mid Z_i = \tilde{z}_0, Z_{-i} = \tilde{z}_1
)$ identify the copula function $C_{V_i, V_{-i}}(p_0, p_1)$. Similarly, 
    \begin{equation*}
        \begin{aligned}
            & \mathbb{E}\big[\mathbbm{1}\{Y_i \in A_1, Y_{-i} \in A_2\} \mathbbm{1}\{D_i = d_0, D_{-i} = d_1\} \mid Z_i = z_0, Z_{-i} = z_1\big] \\
            =& \mathbb{E}\big[\mathbbm{1}\{Y_i \in A_1, Y_{-i} \in A_2\} \mathbbm{1}\{D_i = d_0, D_{-i} = d_1\} \mid P_i(z_0, z_1)=p_0, P_{-i}(z_1, z_0)=p_1\big] \\
            =& \mathbb{E}\big[\mathbbm{1}\{Y_i \in A_1, Y_{-i} \in A_2\} \mathbbm{1}\{D_i = d_0, D_{-i} = d_1\} \mid P_i(\tilde{z}_0, \tilde{z}_1)=p_0, P_{-i}(\tilde{z}_1, \tilde{z}_0)=p_1\big] \\
            =& \mathbb{E}\big[\mathbbm{1}\{Y_i \in A_1, Y_{-i} \in A_2\} \mathbbm{1}\{D_i = d_0, D_{-i} = d_1\} \mid Z_i = \tilde{z}_0, Z_{-i} = \tilde{z}_1\big],
        \end{aligned}
    \end{equation*}
since both conditional means $\mathbb{E}[\mathbbm{1}\{Y_i \in A_1, Y_{-i} \in A_2\} \mathbbm{1}\{D_i = d_0, D_{-i} = d_1\} \mid Z_i = z_0, Z_{-i} = z_1]$ and $\mathbb{E}[\mathbbm{1}\{Y_i \in A_1, Y_{-i} \in A_2\} \mathbbm{1}\{D_i = d_0, D_{-i} = d_1\} \mid Z_i = \tilde{z}_0, Z_{-i} = \tilde{z}_1]$ identify the same integral of $\mathbb{P}(Y_i(d_0, d_1) \in A_1, Y_{-i}(d_0, d_1) \in A_2 \mid V_i=v_0, V_{-i}=v_1)$ weighted by the copula density $c_{V_i, V_{-i}}(v_0, v_1)$, according to the proof of Theorem \ref{thm:id_mtr_disc_trt}.

To conclude, these insights yield the sets of testable implications stated in Corollary~\ref{corr:test_implication}.

\section{Proof of asymptotic results}

\subsection{Convergence rate of nonparametric cross-derivative estimators} \label{app:cvg_rate_cross_deriv}

To simplify the notation, we introduce the following matrix definitions,
\begin{equation*}
    \begin{aligned}
        &\widehat{\mathbf{X}}_P \equiv \left[\begin{array}{cccccc}
            1 & \big(\hat{P}_{01}-p_0\big) & \cdots & \big(\hat{P}_{01}-p_0\big) \big(\hat{P}_{11}-p_1\big) & \cdots & \big(\hat{P}_{11}-p_1\big)^3\\
            \vdots & \vdots & \vdots & \vdots & \vdots & \vdots \\
            1 & \big(\hat{P}_{0G}-p_0\big) & \cdots & \big(\hat{P}_{0G}-p_0\big) \big(\hat{P}_{1G}-p_1\big) & \cdots & \big(\hat{P}_{1G}-p_1\big)^3,
            \end{array}\right] \\
        & \widehat{\mathbf{W}}_h \equiv \operatorname{diag} \Big(K_{h}\big(\hat{P}_1-p\big), \cdots, K_{h}\big(\hat{P}_G-p\big)\Big) \\
        & \mathbf{D} \equiv \big[D_{01}D_{11}, \cdots, D_{0G}D_{1G}\big]', \\
        & \widehat{\mathbf{U}}_{idd'} = \big[\hat{U}_{i d d' 1}, \cdots, \hat{U}_{i d d' G}\big]',
    \end{aligned}
\end{equation*}
where $\widehat{\mathbf{X}}_P$ is a $G \times 10$ matrix of regressors used in the local polynomial regression, $\widehat{\mathbf{W}}_h$ is a $G \times G$ diagonal matrix consisting of the kernel functions, $\mathbf{D}$ is a $G \times 1$ vector entries $D_{0g}D_{1g}$, and $\widehat{\mathbf{U}}_{idd'}$ is a $G \times 1$ vector residuals $\hat{U}_{i d d' g}$ as entries. Using this notation, we can express the estimators $\hat{b}_4(p_0, p_1)$ and $\hat{c}_4(d, d'; p_0, p_1)$ derived in Section \ref{subsec:estimation} as 
\begin{equation*}
    \begin{aligned}
        &\hat{b}_4(p_0, p_1) = e'_5 \big(\widehat{\mathbf{X}}'_P \widehat{\mathbf{W}}_{h_{G1}} \widehat{\mathbf{X}}_P\big)^{-1} \widehat{\mathbf{X}}'_P \widehat{\mathbf{W}}_{h_{G1}} \mathbf{D}, \\
        & \hat{c}_4(d, d'; p_0, p_1) = e'_5 \big(\widehat{\mathbf{X}}'_P \widehat{\mathbf{W}}_{h_{G1}} \widehat{\mathbf{X}}_P\big)^{-1} \widehat{\mathbf{X}}'_P \widehat{\mathbf{W}}_{h_{G1}} \widehat{\mathbf{U}}_{idd'},
    \end{aligned}
\end{equation*}
where $e_5$ is the $10 \times 1$ standard basis vector with a one in the fifth entry and zeros elsewhere.

We aim to characterize the asymptotic behavior of 
\begin{equation} \label{eq:asymp_denom}
    e'_5 \big(\widehat{\mathbf{X}}'_P \widehat{\mathbf{W}}_{h_{G1}} \widehat{\mathbf{X}}_P\big)^{-1} \widehat{\mathbf{X}}'_P \widehat{\mathbf{W}}_{h_{G1}} \mathbf{D} - \frac{\partial^2}{\partial p_0 \partial p_1} \mathbb{E}\big[D_{0g}D_{1g} \mid P_{0g} = p_0, P_{1g} = p_1\big].
\end{equation}
We rewrite Equation \eqref{eq:asymp_denom} as 
\begin{equation*}
    \begin{aligned}
        &e'_5 \big(\widehat{\mathbf{X}}'_P \widehat{\mathbf{W}}_{h_{G1}} \widehat{\mathbf{X}}_P\big)^{-1} \widehat{\mathbf{X}}'_P \widehat{\mathbf{W}}_{h_{G1}} \big[\mathbf{D} - \mathbf{D}^*\big] + e'_5 \big(\widehat{\mathbf{X}}'_P \widehat{\mathbf{W}}_{h_{G1}} \widehat{\mathbf{X}}_P\big)^{-1} \widehat{\mathbf{X}}'_P \widehat{\mathbf{W}}_{h_{G1}} \mathbf{D}^* \\
        &- e'_5 \big(\widehat{\mathbf{X}}'_P \widehat{\mathbf{W}}_{h_{G1}} \widehat{\mathbf{X}}_P\big)^{-1} \widehat{\mathbf{X}}'_P \widehat{\mathbf{W}}_{h_{G1}} \widehat{\mathbf{X}}_P \mathbf{D}_* \\
        =& e'_5 \big(\widehat{\mathbf{X}}'_P \widehat{\mathbf{W}}_{h_{G1}} \widehat{\mathbf{X}}_P\big)^{-1} \widehat{\mathbf{X}}'_P \widehat{\mathbf{W}}_{h_{G1}} \big[\mathbf{D} - \mathbf{D}^*\big] + e'_5 \big(\widehat{\mathbf{X}}'_P \widehat{\mathbf{W}}_{h_{G1}} \widehat{\mathbf{X}}_P\big)^{-1} \widehat{\mathbf{X}}'_P \widehat{\mathbf{W}}_{h_{G1}} \big[\mathbf{D}^* - \widehat{\mathbf{X}}_P \mathbf{D}_*\big],
    \end{aligned} 
\end{equation*}
and define $\mathbf{D}^*, \mathbf{D}_*$ as
\begin{equation*}
    \begin{aligned}
        \mathbf{D}^* \equiv & \big[\mathbb{E}[D_{01} D_{11} \mid P_{01}, P_{11}], \cdots, \mathbb{E}[D_{0G} D_{1G} \mid P_{0G}, P_{1G}]\big]',\\
         \mathbf{D}_* \equiv & \Big[\mathbb{E}\big[D_{0g}D_{1g} \mid P_{0g} = p_0, P_{1g} = p_1\big], \cdots, \frac{\partial^2}{\partial p_0 \partial p_1} \mathbb{E}\big[D_{0g}D_{1g} \mid P_{0g} = p_0, P_{1g} = p_1\big], \\
        & \cdots, \frac{\partial^3}{\partial p_1^3} \mathbb{E}\big[D_{0g}D_{1g} \mid P_{0g} = p_0, P_{1g} = p_1\big]\Big]',
    \end{aligned}
\end{equation*}
where $\mathbf{D}^*$ is a $G \times 1$ vector consisting of the conditional means $\mathbb{E}[D_{0g} D_{1g} \mid P_{0g}, P_{1g}], g = 1, \cdots, G$, and $\mathbf{D}_*$ is a $10 \times 1$ vector comprising the partial derivatives of $\mathbb{E}[D_{0g}D_{1g} \mid P_{0g} = p_0, P_{1g} = p_1]$ up to third order.

We apply Taylor series expansion to expand the conditional mean $\mathbb{E}[D_{0g} D_{1g} \mid P_{0g}, P_{1g}]$ around $(p_0, p_1)$ and express it as
\begin{equation*}
    \begin{aligned}
        & \mathbb{E}\big[D_{0g} D_{1g} \mid P_{0g} = p_0, P_{1g} = p_1\big] + \frac{\partial}{\partial p_0} \mathbb{E}\big[D_{0g} D_{1g} \mid P_{0g} = p_0, P_{1g} = p_1\big] (P_{0g} - p_0) \\
        &+ \frac{\partial}{\partial p_1} \mathbb{E}\big[D_{0g} D_{1g} \mid P_{0g} = p_0, P_{1g} = p_1\big] (P_{1g} - p_1) + \cdots \\
        &+ \frac{\partial^3}{6 \partial p_1^3} \mathbb{E}\big[D_{0g} D_{1g} \mid P_{0g} = p_0, P_{1g} = p_1\big] (P_{1g} - p_1)^3 + R_P(p_0, p_1) \\
        =& \mathbb{E}\big[D_{0g} D_{1g} \mid P_{0g} = p_0, P_{1g} = p_1\big] + \frac{\partial}{\partial p_0} \mathbb{E}\big[D_{0g} D_{1g} \mid P_{0g} = p_0, P_{1g} = p_1\big] \big[(P_{0g} - \hat{P}_{0g}) + (\hat{P}_{0g} - p_0)\big] \\
        &+ \frac{\partial}{\partial p_1} \mathbb{E}\big[D_{0g} D_{1g} \mid P_{0g} = p_0, P_{1g} = p_1\big] \big[(P_{1g} - \hat{P}_{1g}) + (\hat{P}_{1g} - p_1)\big] + \cdots \\
        &+ \frac{\partial^3}{6 \partial p_1^3} \mathbb{E}\big[D_{0g} D_{1g} \mid P_{0g} = p_0, P_{1g} = p_1\big] \big[(P_{1g} - \hat{P}_{1g}) + (\hat{P}_{1g} - p_1)\big]^3 + R_P(p_0, p_1) 
    \end{aligned}
\end{equation*}
where $R_P(p_0, p_1)$ represents the remainder terms from the Taylor expansion, and the last step is to decompose $(P_{ig} - p_i)$ as $[(P_{ig} - \hat{P}_{ig})+(\hat{P}_{ig} - p_i)]$, $i \in \{0,1\}$. Noting that the $g$th entry of $[\mathbf{D}^* - \widehat{\mathbf{X}}_P \mathbf{D}_*]$ equals to
\begin{equation*}
    \begin{aligned}
        \big[\mathbf{D}^* - \widehat{\mathbf{X}}_P \mathbf{D}_* \big]_g =& \mathbb{E}\big[D_{0g} D_{1g} \mid P_{0g}, P_{1g}\big] - \mathbb{E}\big[D_{0g} D_{1g} \mid P_{0g} = p_0, P_{1g} = p_1\big] \\
        &- \frac{\partial}{\partial p_0} \mathbb{E}\big[D_{0g} D_{1g} \mid P_{0g} = p_0, P_{1g} = p_1\big] (\hat{P}_{0g} - p_0) - \cdots \\
        &- \frac{\partial^3}{\partial p_1^3} \mathbb{E}\big[D_{0g} D_{1g} \mid P_{0g} = p_0, P_{1g} = p_1\big] (\hat{P}_{1g} - p_1)^3,
    \end{aligned}
\end{equation*}
we can leverage the expansion results to show
\begin{equation*}
    \begin{aligned}
        \big[\mathbf{D}^* - \widehat{\mathbf{X}}_P \mathbf{D}_* \big]_g = O_P\Big[\max _{g: 1 \leq g \leq G}|\hat{P}_{0g}-P_{0g}| + \max _{g: 1 \leq g \leq G}|\hat{P}_{1g}-P_{1g}| + h_{G1}^4\Big].
    \end{aligned}
\end{equation*}
The results in Section \ref{subsec:ps_consistent}, combined with the boundedness of the kernel assumed in Assumption \ref{as:local_polynomial}, imply that
\begin{equation*}
    \begin{aligned}
        \big\|\widehat{\mathbf{X}}_P - \mathbf{X}_P\big\|\ = o_P(1), \big\|\widehat{\mathbf{W}}_{h_{G 1}}-\mathbf{W}_{h_{G 1}}\big\| = o_P(1),
    \end{aligned}
\end{equation*}
where we define $\mathbf{X}_P$ and $\mathbf{W}_{h_{G 1}}$ as 
\begin{equation*}
    \begin{aligned}
        &\mathbf{X}_P \equiv \left[\begin{array}{cccccc}
            1 & \big(P_{01}-p_0\big) & \cdots & \big(P_{01}-p_0\big) \big(P_{11}-p_1\big) & \cdots & \big(P_{11}-p_1\big)^3\\
            \vdots & \vdots & \vdots & \vdots & \vdots & \vdots \\
            1 & \big(P_{0G}-p_0\big) & \cdots & \big(P_{0G}-p_0\big) \big(P_{1G}-p_1\big) & \cdots & \big(P_{1G}-p_1\big)^3,
            \end{array}\right] \\
        & \mathbf{W}_h \equiv \operatorname{diag} \Big(K_{h}\big(P_1-p\big), \cdots, K_{h}\big(P_G-p\big)\Big).
    \end{aligned}
\end{equation*}
Since $\big(\mathbf{X}_P^{\prime} \mathbf{W}_{h_{G 1}} \mathbf{X}_P\big)^{-1} \mathbf{X}_P^{\prime} \mathbf{W}_{h_{G 1}} = O_P(1)$ by assumptions, it follows that
\begin{equation*}
    \begin{aligned}
        & \big(\widehat{\mathbf{X}}_P^{\prime} \widehat{\mathbf{W}}_{h_{G 1}} \widehat{\mathbf{X}}_P\big)^{-1} \widehat{\mathbf{X}}_P^{\prime} \widehat{\mathbf{W}}_{h_{G 1}} = O_P(1) \\
        \implies& e_5^{\prime}\big(\widehat{\mathbf{X}}_P^{\prime} \widehat{\mathbf{W}}_{h_{G 1}} \widehat{\mathbf{X}}_P\big)^{-1} \widehat{\mathbf{X}}_P^{\prime} \widehat{\mathbf{W}}_{h_{G 1}}\big[\mathbf{D}^*-\widehat{\mathbf{X}}_P \mathbf{D}_*\big] \\
        & = O_P\Big[\max _{g: 1 \leq g \leq G}|\hat{P}_{0g}-P_{0g}| + \max _{g: 1 \leq g \leq G}|\hat{P}_{1g}-P_{1g}| + h_{G1}^4\Big].
    \end{aligned}
\end{equation*}
Additionally, 
\begin{equation*}
    e'_5 \big(\widehat{\mathbf{X}}'_P \widehat{\mathbf{W}}_{h_{G1}} \widehat{\mathbf{X}}_P\big)^{-1} \widehat{\mathbf{X}}'_P \widehat{\mathbf{W}}_{h_{G1}} \big[\mathbf{D} - \mathbf{D}^*\big] = e'_5 \big(\mathbf{X}_P^{\prime} \mathbf{W}_{h_{G 1}} \mathbf{X}_P\big)^{-1} \mathbf{X}_P^{\prime} \mathbf{W}_{h_{G 1}} \big[\mathbf{D} - \mathbf{D}^*\big]\big[1 + o_P(1)\big].
\end{equation*} 

By applying the results from \cite{masry1996multivariate}, the term $e'_5 \big(\mathbf{X}_P^{\prime} \mathbf{W}_{h_{G 1}} \mathbf{X}_P\big)^{-1} \mathbf{X}_P^{\prime} \mathbf{W}_{h_{G 1}} \big[\mathbf{D} - \mathbf{D}^*\big]$ converges at the rate $O_P[(G h_{G1}^6)^{-1/2}]$. the convergence rate of Equation \eqref{eq:asymp_denom} is given by 
\begin{equation*}
    \begin{aligned}
        &e'_5 \big(\widehat{\mathbf{X}}'_P \widehat{\mathbf{W}}_{h_{G1}} \widehat{\mathbf{X}}_P\big)^{-1} \widehat{\mathbf{X}}'_P \widehat{\mathbf{W}}_{h_{G1}} \mathbf{D} - \frac{\partial^2}{\partial p_0 \partial p_1} \mathbb{E}\big[D_{0g}D_{1g} \mid P_{0g} = p_0, P_{1g} = p_1\big] \\
        =& O_P\Big[(G h_{G1}^6)^{-1/2} + \max _{g: 1 \leq g \leq G}|\hat{P}_{0g}-P_{0g}| + \max _{g: 1 \leq g \leq G}|\hat{P}_{1g}-P_{1g}| + h_{G1}^4\Big]
    \end{aligned}
\end{equation*}

Furthermore, we can show that
\begin{equation*}
    \begin{aligned}
        & e'_5 \big(\widehat{\mathbf{X}}'_P \widehat{\mathbf{W}}_{h_{G1}} \widehat{\mathbf{X}}_P\big)^{-1} \widehat{\mathbf{X}}'_P \widehat{\mathbf{W}}_{h_{G1}} \widehat{\mathbf{U}}_{idd'} - \frac{\partial^2}{\partial p_0 \partial p_1} \mathbb{E}\big[U_{i d d' g} \mid P_{0g} = p_0, P_{1g} = p_1\big] \\
        =& e'_5 \big(\widehat{\mathbf{X}}'_P \widehat{\mathbf{W}}_{h_{G1}} \widehat{\mathbf{X}}_P\big)^{-1} \widehat{\mathbf{X}}'_P \widehat{\mathbf{W}}_{h_{G1}} \big[\mathbf{U}_{idd'} - X_i(\hat{\beta}_{dd'} - \beta_{dd'})\big] \\
        &- \frac{\partial^2}{\partial p_0 \partial p_1} \mathbb{E}\big[U_{i d d' g}  \mid P_{0g} = p_0, P_{1g} = p_1\big] \\
        =& e'_5 \big(\widehat{\mathbf{X}}'_P \widehat{\mathbf{W}}_{h_{G1}} \widehat{\mathbf{X}}_P\big)^{-1} \widehat{\mathbf{X}}'_P \widehat{\mathbf{W}}_{h_{G1}} \mathbf{U}_{idd'} - \frac{\partial^2}{\partial p_0 \partial p_1} \mathbb{E}\big[U_{i d d' g}  \mid P_{0g} = p_0, P_{1g} = p_1\big] + O_P(G^{-1/2}),
    \end{aligned}
\end{equation*}
where the last equation holds by $\big(\widehat{\mathbf{X}}'_P \widehat{\mathbf{W}}_{h_{G1}} \widehat{\mathbf{X}}_P\big)^{-1} \widehat{\mathbf{X}}'_P \widehat{\mathbf{W}}_{h_{G1}} = O_P(1)$, $\hat{\beta}_{dd'} - \beta_{dd'} = O_P(G^{-1/2})$ according to Theorem 3 in \cite{carneiro2009estimating}, and $\widehat{\mathbf{U}}_{idd'}, \mathbf{U}_{idd'}$ are defined as
\begin{equation*}
    \widehat{\mathbf{U}}_{idd'} = \big[\hat{U}_{i d d' 1}, \cdots, \hat{U}_{i d d' G}\big]', \mathbf{U}_{idd'} = \big[U_{i d d' 1}, \cdots, U_{i d d' G}\big]'.
\end{equation*}
Then, the convergence rate of $e'_5 \big(\widehat{\mathbf{X}}'_P \widehat{\mathbf{W}}_{h_{G1}} \widehat{\mathbf{X}}_P\big)^{-1} \widehat{\mathbf{X}}'_P \widehat{\mathbf{W}}_{h_{G1}} \widehat{\mathbf{U}}_{idd'}$ can be proven as 
\begin{equation*}
    \begin{aligned}
        &e'_5 \big(\widehat{\mathbf{X}}'_P \widehat{\mathbf{W}}_{h_{G1}} \widehat{\mathbf{X}}_P\big)^{-1} \widehat{\mathbf{X}}'_P \widehat{\mathbf{W}}_{h_{G1}} \widehat{\mathbf{U}}_{idd'} - \frac{\partial^2}{\partial p_0 \partial p_1} \mathbb{E}\big[U_{i d d' g}  \mid P_{0g} = p_0, P_{1g} = p_1\big] \\
        =& O_P\Big[(G h_{G2}^6)^{-1/2} + \max _{g: 1 \leq g \leq G}|\hat{P}_{0g}-P_{0g}| + \max _{g: 1 \leq g \leq G}|\hat{P}_{1g}-P_{1g}| + h_{G2}^4\Big]
    \end{aligned}
\end{equation*}
using an argument analogous to that used in the preceding analysis.

\subsection{Asymptotic distribution of nonparametric marginal treatment response estimators} \label{app:asymp_dist_ratio}

Under Assumption \ref{as:asym_dist_ratio}, combined with conclusions in \cite{masry1996multivariate}, we have 
\begin{equation*}
    \begin{aligned}
        &\big(G h_{G2}^6\big)^{1/2}\Big\{\hat{c}_4(d, d'; p_0, p_1) - c_4(d, d'; p_0, p_1) \Big\} \xrightarrow{d} N\Big(0, \frac{\sigma^2(d, d'; p_0, p_1)}{f(p_0, p_1)} \big(M^{-1}\Gamma M^{-1}\big)_{5,5}\Big), \\
        & c_4(d, d'; p_0, p_1) \equiv \frac{\partial^2}{\partial p_0 \partial p_1} \mathbb{E}\big[U_{i d d' g} \mid P_{0g} = p_0, P_{1g} = p_1\big],
    \end{aligned}
\end{equation*}
where $\sigma^2(d, d'; p_0, p_1) = \text{Var}(U_{i d d' g} \mid P_{0g} = p_0, P_{1g} = p_1)$, and $f(p_0, p_1)$ denotes the density of $(P_{0g}, P_{1g})$ evaluated at the point $(p_0, p_1)$. The matrices $M$ and $\Gamma$ are $10 \times 10$-matrices composed of multivariate moments of the kernel functions $K$ and $K^2$, and are defined as 
\begin{equation*}
    \begin{aligned}
        &M=\left[\begin{array}{lllll}
            \int u_0^0 u_1^0 K(u)d(u) & \int u_0^1 u_1^0 K(u)d(u) & \cdots & \int u_0^1 u_1^2 K(u)d(u) & \int u_0^0 u_1^3 K(u)d(u) \\
            \int u_0^1 u_1^0 K(u)d(u) & \int u_0^2 u_1^0 K(u)d(u) & \cdots & \int u_0^2 u_1^2 K(u)d(u) & \int u_0^1 u_1^3 K(u)d(u) \\
            \vdots & \vdots & \cdots & \vdots & \vdots \\
            \int u_0^0 u_1^3 K(u)d(u) & \int u_0^1 u_1^2 K(u)d(u) & \cdots & \int u_0^1 u_1^5 K(u)d(u) & \int u_0^0 u_1^6 K(u)d(u) \\
            \end{array} \right], \\
        &\Gamma=\left[\begin{array}{lllll}
                \int u_0^0 u_1^0 K^2(u)d(u) & \int u_0^1 u_1^0 K^2(u)d(u) & \cdots & \int u_0^1 u_1^2 K^2(u)d(u) & \int u_0^0 u_1^3 K^2(u)d(u) \\
                \int u_0^1 u_1^0 K^2(u)d(u) & \int u_0^2 u_1^0 K^2(u)d(u) & \cdots & \int u_0^2 u_1^2 K^2(u)d(u) & \int u_0^1 u_1^3 K^2(u)d(u) \\
                \vdots & \vdots & \cdots & \vdots & \vdots \\
                \int u_0^0 u_1^3 K^2(u)d(u) & \int u_0^1 u_1^2 K^2(u)d(u) & \cdots & \int u_0^1 u_1^5 K^2(u)d(u) & \int u_0^0 u_1^6 K^2(u)d(u) \\
                \end{array} \right].
    \end{aligned}
\end{equation*}
Since $\partial^2 \mathbb{E}[D_{0 g} D_{1 g} \mid P_{0 g}=p_0, P_{1 g}=p_1] / \partial p_0 \partial p_1$ is bounded from above and away from zero by Assumption \ref{as:local_polynomial} and $\hat{b}_4(p_0,p_1) \xrightarrow{p} \partial^2 \mathbb{E}[D_{0 g} D_{1 g} \mid P_{0 g}=p_0, P_{1 g}=p_1] / \partial p_0 \partial p_1 \equiv b_4(p_0, p_1)$, it follows that 
\begin{equation*}
    \begin{aligned}
        &\big(G h_{G2}^6\big)^{1/2}\Bigg\{\hat{c}_4(d, d'; p_0, p_1) - c_4(d, d'; p_0, p_1) \Bigg\} \frac{1}{\hat{b}_4(p_0,p_1)} \\
        \xrightarrow{d}& N\Bigg(0, \frac{\sigma^2(d, d'; p_0, p_1)}{\big(b_4(p_0, p_1)\big)^2 f(p_0, p_1)} \big(M^{-1}\Gamma M^{-1}\big)_{5,5}\Bigg)
    \end{aligned}
\end{equation*} 
We can rewrite $\big(G h_{G2}^6\big)^{1/2}\Big\{\hat{c}_4(d, d'; p_0, p_1) - c_4(d, d'; p_0, p_1) \Big\} \big/ \hat{b}_4(p_0,p_1)$ as 
\begin{equation*}
    \begin{aligned}
        &\big(G h_{G2}^6\big)^{1/2}\Bigg\{\hat{c}_4(d, d'; p_0, p_1) - c_4(d, d'; p_0, p_1) \Bigg\} \frac{1}{\hat{b}_4(p_0,p_1)} \\
        =& \big(G h_{G2}^6\big)^{1/2}\Bigg\{\frac{\hat{c}_4(d, d'; p_0, p_1)}{\hat{b}_4(p_0,p_1)} - \frac{c_4(d, d'; p_0, p_1)}{b_4(p_0,p_1)} + \frac{c_4(d, d'; p_0, p_1)}{b_4(p_0,p_1)} - \frac{c_4(d, d'; p_0, p_1)}{\hat{b}_4(p_0,p_1)} \Bigg\} \\
        =& \big(G h_{G2}^6\big)^{1/2}\Bigg\{\frac{\hat{c}_4(d, d'; p_0, p_1)}{\hat{b}_4(p_0,p_1)} - \frac{c_4(d, d'; p_0, p_1)}{b_4(p_0,p_1)} \Bigg\} \\
        &+ \big(G h_{G2}^6\big)^{1/2}\Bigg\{\frac{c_4(d, d'; p_0, p_1)}{b_4(p_0,p_1)} - \frac{c_4(d, d'; p_0, p_1)}{\hat{b}_4(p_0,p_1)} \Bigg\}.
    \end{aligned}
\end{equation*}
By applying Assumption \ref{as:asym_dist_ratio} along with the results in Section \ref{subsec:asymp_cross_deriv}, we obtain
\begin{equation*}
    \frac{1}{\hat{b}_4(p_0,p_1)} - \frac{1}{b_4(p_0,p_1)} = O_P \big[(G h_{G 1}^6)^{-1 / 2}\big], 
\end{equation*}
which implies that
\begin{equation*}
    \big(G h_{G2}^6\big)^{1/2}\Bigg\{\frac{c_4(d, d'; p_0, p_1)}{b_4(p_0,p_1)} - \frac{c_4(d, d'; p_0, p_1)}{\hat{b}_4(p_0,p_1)} \Bigg\} = o_P(1)
\end{equation*}
under the condition $h_{G2} = o(h_{G1})$.
Therefore, 
\begin{equation*}
    \begin{aligned}
        &\big(G h_{G2}^6\big)^{1/2}\Bigg\{\frac{\hat{c}_4(d, d'; p_0, p_1)}{\hat{b}_4(p_0,p_1)} - \frac{c_4(d, d'; p_0, p_1)}{b_4(p_0,p_1)} \Bigg\} \\
        =& \big(G h_{G2}^6\big)^{1/2}\Bigg\{\hat{c}_4(d, d'; p_0, p_1) - c_4(d, d'; p_0, p_1) \Bigg\} \frac{1}{\hat{b}_4(p_0,p_1)} + o_P(1),
    \end{aligned}
\end{equation*}
and the asymptotic distribution of estimated marginal treatment response function without the covariate effect can be characterized as
\begin{equation*}
    \begin{aligned}
        & \big(G h_{G2}^6\big)^{1/2}\Bigg\{\frac{\hat{c}_4(d, d'; p_0, p_1)}{\hat{b}_4(p_0,p_1)} - \frac{c_4(d, d'; p_0, p_1)}{b_4(p_0,p_1)} \Bigg\} \\
        \xrightarrow{d}& N\Bigg(0, \frac{\sigma^2(d, d'; p_0, p_1)}{\big(b_4(p_0, p_1)\big)^2 f(p_0, p_1)} \big(M^{-1}\Gamma M^{-1}\big)_{5,5}\Bigg).
    \end{aligned}
\end{equation*}

Finally, under the assumptions that $\hat{c}_4(d, d'; p_0, p_1) / \hat{b}_4(p_0,p_1) - c_4(d, d'; p_0, p_1) / b_4(p_0,p_1)$ are asymptotically independent across different values of $d, d' \in \{0,1\}$, we can derive the asymptotic distributions of $\widehat{\text{MCSE}}(\mathbf{x}, d; p_0, p_1)$ as 
\begin{equation*}
    \begin{aligned}
        &\big(G h_{G2}^6\big)^{1/2}\Bigg\{\widehat{\text{MCSE}}(\mathbf{x}, d; p_0, p_1) - \text{MCSE}(\mathbf{x}, d; p_0, p_1)\Bigg\} \\
        =& \big(G h_{G2}^6\big)^{1/2} \Big[\mathbf{x}'\big(\hat{\beta}_{d1} - \beta_{d1}\big) +  \mathbf{x}'\big(\hat{\beta}_{d0} - \beta_{d0}\big)\Big] \\
        &+ \big(G h_{G2}^6\big)^{1/2} \Bigg[\Big(\frac{\hat{c}_4(d, 1; p_0, p_1)}{\hat{b}_4(p_0,p_1)} - \frac{c_4(d, 1; p_0, p_1)}{b_4(p_0,p_1)}\Big) + \Big(\frac{\hat{c}_4(d, 0; p_0, p_1)}{\hat{b}_4(p_0,p_1)} - \frac{c_4(d, 0; p_0, p_1)}{b_4(p_0,p_1)}\Big)\Bigg] \\
        =& o_P(1) + \big(G h_{G2}^6\big)^{1/2} \Bigg[\Big(\frac{\hat{c}_4(d, 1; p_0, p_1)}{\hat{b}_4(p_0,p_1)} - \frac{c_4(d, 1; p_0, p_1)}{b_4(p_0,p_1)}\Big) + \Big(\frac{\hat{c}_4(d, 0; p_0, p_1)}{\hat{b}_4(p_0,p_1)} - \frac{c_4(d, 0; p_0, p_1)}{b_4(p_0,p_1)}\Big)\Bigg] \\
        &\xrightarrow{d} N\Bigg(0, \frac{\sigma^2(d, 1; p_0, p_1) + \sigma^2(d, 0; p_0, p_1)}{\big(b_4(p_0, p_1)\big)^2 f(p_0, p_1)} \big(M^{-1}\Gamma M^{-1}\big)_{5,5}\Bigg),
    \end{aligned}
\end{equation*}
where the second equality holds because $\hat{\beta}_{dd'} - \beta_{dd'} = O_P(G^{-1/2})$ applying Theorem 3 in \cite{carneiro2009estimating}. Similarly, the asymptotic distribution of $\widehat{\text{MCDE}}(\mathbf{x}, d; p_0, p_1)$ can be derived as 
\begin{equation*}
    \begin{aligned}
        &\big(G h_{G2}^6\big)^{1/2}\Bigg\{\widehat{\text{MCSE}}(\mathbf{x}, d; p_0, p_1) - \text{MCSE}(\mathbf{x}, d; p_0, p_1)\Bigg\} \\
        &\xrightarrow{d} N\Bigg(0, \frac{\sigma^2(1, d; p_0, p_1) + \sigma^2(0, d; p_0, p_1)}{\big(b_4(p_0, p_1)\big)^2 f(p_0, p_1)} \big(M^{-1}\Gamma M^{-1}\big)_{5,5}\Bigg).
    \end{aligned}
\end{equation*}

\subsection{Consistency of the parametric first-stage estimator} \label{app:proof_parametric_first}
For each $i \in \{0, 1\}$, the function $l(\theta_i; d, \mathbf{w})$ is continuous in $\theta_i$ for all $d \in \{0, 1\}$ and $\mathbf{w} \in \mathcal{W}$. Additionally, the parameter space $\Theta_i$ is compact and $\mathbb{E}[\sup_{\theta_i \in \Theta_i}|l(\theta_i; D_{ig}, W_g)|] < \infty$ given Assumption \ref{as:para_consist1}, applying the uniform law of large numbers, we have 
    \begin{equation*}
        \sup_{\theta_i \in \Theta_i}\Bigg|\frac{1}{G} \sum_{g = 1}^{G} l(\theta_i; D_{ig}, W_g) - \mathbb{E}\big[l(\theta_i; D_{ig}, W_g)\big] \Bigg| \asto 0,
    \end{equation*}
    as $G \rightarrow \infty$. Define $Q(\theta_i) = \mathbb{E}\big[l(\theta_i; D_{ig}, W_g)\big]$ and $Q_G(\theta_i) = \sum_{g = 1}^{G} l(\theta_i; D_{ig}, W_g) / G$. We can derive 
    \begin{equation*}
        \begin{aligned}
0 \leq Q(\theta_{i0})-Q(\hat{\theta}_i) & =Q_G(\hat{\theta}_i)-Q(\hat{\theta}_i)+Q(\theta_{i0})-Q_G(\hat{\theta}_i) \\
& \leq \sup _{\theta_i \in \Theta_i}\big|Q_G(\theta_i)-Q(\theta_i)\big|+Q(\theta_{i0})-Q_G(\theta_{i0}) \\
& \leq 2 \sup _{\theta_i \in \Theta_i}\left|Q_G(\theta_i)-Q(\theta_i)\right| \\
& \asto 0,
\end{aligned}
    \end{equation*}
    as $G \rightarrow \infty$, where the second line holds because $\hat{\theta}_i$ maximizes the function $Q_G(\theta_i)$. Since $\theta_i$ is the unique maximizer of $Q(\theta_i)$ and $\Theta_i$ is compact based on Assumption \ref{as:para_consist1}, $Q_G(\hat{\theta}_i) \asto Q(\theta_{i0})$ implies that $\hat{\theta}_i \asto \theta_{i0}$ as $G \rightarrow \infty$.

\subsection{Consistency of the parametric second-stage estimator} \label{app:proof_parametric_second}
We use $Q_G(\rho)$ and $\widehat{Q}_G(\rho)$ to define 
    \begin{equation*}
        \begin{aligned}
            &Q_G(\rho) = \frac{1}{G} \sum_{g = 1}^{G} \tilde{l}\big(\rho; D_g, P_g \big), \\
            &\widehat{Q}_G(\rho) = \frac{1}{G} \sum_{g = 1}^{G} \tilde{l}\big(\rho; D_g, \widehat{P}_g\big),
        \end{aligned}
    \end{equation*}
    where $\widehat{P}_g = (\widehat{P}_{0g}, \widehat{P}_{1g})$ is the vector of propensity scores estimated in the first stage. Then, we can write 
    \begin{equation*}
        \begin{aligned}
            \bigg|\widehat{Q}_G(\rho) - \mathbb{E}\big[\tilde{l}(\rho; D_g, P_g)\big] \bigg| \leq& \bigg|\widehat{Q}_G(\rho) - Q_G(\rho)\bigg| \\
            +& \bigg|Q_G(\rho) - \mathbb{E}\big[\tilde{l}(\rho; D_g, P_g)\big]\bigg|.
        \end{aligned}
    \end{equation*}
    Since $\tilde{l}(\rho; d, p)$ is continuous in $\rho$ for all $d \in \{0, 1\}^2$ and $p \in (0, 1)^2$, $\rho$ lies in a compact interval, and $\mathbb{E}\big[\sup_{\rho \in [-\varepsilon, \varepsilon]} |l(\rho, \theta; D_{g}, W_{g})|\big] < \infty$ under Assumption \ref{as:para_consist2}, the law of large numbers implies that
    \begin{equation*}
        \sup_{\rho \in [-\varepsilon, \varepsilon]}\bigg|Q_G(\rho) - \mathbb{E}\big[\tilde{l}(\rho; D_g, P_g)\big]\bigg| \asto 0.
    \end{equation*}

    Assuption \ref{as:para_consist2} also assumes that there exists a function $L(\cdot)$ such that for all $\rho \in [-\varepsilon, \varepsilon]$, 
    \begin{equation*}
        \big|\tilde{l}(\rho; D_g, \widehat{P}_g) - \tilde{l}(\rho; D_g, P_g)\big| \leq L(D_g) \big|\big|\widehat{P}_g - P_g \big| \big|.
    \end{equation*}
    Since $|L(D_g)| < \infty$ almost surely and $\big|\big|\widehat{P}_g - P_g \big| \big| \asto 0$ by Lemma \ref{lemma:para_consist1}, it follows that $\sup_{\rho \in [\varepsilon, \varepsilon]} \big|\tilde{l}(\rho; D_g, \widehat{P}_g) - \tilde{l}(\rho; D_g, P_g)\big| \asto 0$, which further implies 
    \begin{equation*}
        \sup_{\rho \in [\varepsilon, \varepsilon]} \bigg|\widehat{Q}_G(\rho) - Q_G(\rho)\bigg| \asto 0.
    \end{equation*}
    
    Therefore, we have 
    \begin{equation*}
        \sup_{\rho \in [\varepsilon, \varepsilon]} \bigg|\widehat{Q}_G(\rho) - \mathbb{E}\big[\tilde{l}(\rho; D_g, P_g)\big] \bigg| \asto 0.
    \end{equation*}
    Since $\rho_0$ is the unique maximizer of $\mathbb{E}\big[\tilde{l}(\rho; D_g, P_g)\big]$ and lies within a compact interval, by the similar arguments in the proof of Lemma \ref{lemma:para_consist1}, we can derive $\hat{\rho} \asto \rho_0$ as $G \rightarrow \infty$.

\subsection{Consistency of the parametric MTR coefficient estimates} \label{app:proof_parametric_third}
Based on the identification results the third specification in Assumption \ref{as:para_assump}, for each $i \in \{0, 1\}$ and $g \in \{1, \cdots, G\}$,
    \begin{equation*}
        Y_{ig} \mathbbm{1} \big\{D_{0g} = d \big\} \mathbbm{1} \big\{D_{1g} = d' \big\} = X_{Pdd'_g} \big(\alpha'_{idd'}, \beta'_{idd'} \big)' + \varepsilon_{idd'g},
    \end{equation*}
    where the error term $\varepsilon_{idd'g}$ satisfies $\mathbb{E}[\varepsilon_{idd'g} \mid X_{Pdd'_g}] = 0$. The vector of coefficients $(\alpha'_{idd'}, \beta'_{idd'})'$ is estimated by 
    \begin{equation*}
        \begin{aligned}
            \big(\hat{\alpha}'_{idd'}, \hat{\beta}'_{idd'} \big)' = \big(\widehat{X}'_{Pdd'} \widehat{X}_{Pdd'}\big)^{-1} \widehat{X}'_{Pdd'} \widetilde{Y}_{idd'},
        \end{aligned}
    \end{equation*}
    where $\widetilde{Y}_{idd'}$ is defined as a $G \times 1$ vector with the $g$-th element as $Y_{ig} \mathbbm{1} \big\{D_{0g} = d \big\} \mathbbm{1} \big\{D_{1g} = d' \big\}$, and $\widehat{X}_{Pdd'}$ is obtained by substituting $\widehat{P}_{0g}$, $\widehat{P}_{1g}$, and $\hat{\rho}$ for the true values into $X_{Pdd'}$. Then, we can write the estimated coefficients as
    \begin{equation*}
        \begin{aligned}
             \big(\hat{\alpha}'_{idd'}, \hat{\beta}'_{idd'} \big)' =& \big(\widehat{X}'_{Pdd'} \widehat{X}_{Pdd'}\big)^{-1} \widehat{X}'_{Pdd'} \big(X_{Pdd'} \big(\alpha'_{idd'}, \beta'_{idd'} \big)' + \varepsilon_{idd'}\big) \\
            =& \big(\widehat{X}'_{Pdd'} \widehat{X}_{Pdd'}\big)^{-1} \widehat{X}'_{Pdd'} X_{Pdd'} \big(\alpha'_{idd'}, \beta'_{idd'} \big)' + \big(\widehat{X}'_{Pdd'} \widehat{X}_{Pdd'}\big)^{-1} \widehat{X}'_{Pdd'} \varepsilon_{idd'}.
        \end{aligned}
    \end{equation*}

    Let $\widehat{X}_{Pdd'} = X_{Pdd'} + \Delta_G$, where $\Delta_G$ is defined as a $G \times K$ matrix such that $\Delta_G = \widehat{X}_{Pdd'} - X_{Pdd'}$. Then, we have 
    \begin{equation*}
        \begin{aligned}
            \frac{1}{G} \widehat{X}'_{Pdd'} \widehat{X}_{Pdd'} =& \frac{1}{G} \big(X_{Pdd'} + \Delta_G\big)'\big(X_{Pdd'} + \Delta_G\big) \\
            =& \frac{1}{G} X'_{Pdd'} X_{Pdd'} + \frac{1}{G} X'_{Pdd'} \Delta_G + \frac{1}{G} \Delta_G' X_{Pdd'} + \frac{1}{G} \Delta_G' \Delta_G, \\
            \frac{1}{G} \widehat{X}'_{Pdd'} X_{Pdd'} =& \frac{1}{G} \big(X_{Pdd'} + \Delta_G\big)'X_{Pdd'} \\
            =& \frac{1}{G} X'_{Pdd'} X_{Pdd'} + \frac{1}{G} \Delta_G' X_{Pdd'} 
        \end{aligned}
    \end{equation*}
    Applying the Cauchy-Schwarz inequalities, we obtain
    \begin{equation*}
        \frac{1}{G}\big\|X_{Pdd'} \Delta_G\big\|_F \leq \sqrt{\frac{\|X_{Pdd'}\|_F^2}{G}} \cdot \sqrt{\frac{\|\Delta_G\|_F^2}{G}} \asto 0,
    \end{equation*}
    since $\|\Delta_G \|_F^2 / G \asto 0$, and $\|X_{Pdd'}\|_F^2/G$ is bounded almost surely by $\mathbb{E}[X_{P d d'}', X_{P d d'}]$ is nonsingular. Therefore, we should have 
    \begin{equation*}
        \begin{aligned}
            & \frac{1}{G} \widehat{X}'_{Pdd'} \widehat{X}_{Pdd'} =  \frac{1}{G} X'_{Pdd'} X_{Pdd'} + o_{a.s.}(1) \asto \mathbb{E}\big[X'_{Pdd'g} X_{Pdd'g}\big], \\
            & \frac{1}{G} \widehat{X}'_{Pdd'} X_{Pdd'} =  \frac{1}{G} X'_{Pdd'} X_{Pdd'} + o_{a.s.}(1) \asto \mathbb{E}\big[X'_{Pdd'g} X_{Pdd'g}\big],
        \end{aligned}
    \end{equation*}
    and then 
    \begin{equation*}
        \bigg(\frac{1}{G} \widehat{X}'_{Pdd'} \widehat{X}_{Pdd'} \bigg)^{-1} \asto \bigg(\mathbb{E}\big[X'_{Pdd'g} X_{Pdd'g}\big] \bigg)^{-1}
    \end{equation*}
    by the continuous mapping theorem and the nonsingularity condition.

    We can also express the term $\widehat{X}_{P d d'}' \varepsilon_{i d d'} / G$ as
    \begin{equation*}
        \frac{\widehat{X}_{P d d'}' \varepsilon_{i d d'}}{G} = \frac{1}{G}\big(X_{P d d'} + \Delta_G\big)' \varepsilon_{i d d'} = \frac{1}{G} X_{P d d'}' \varepsilon_{i d d'} + \frac{1}{G} \Delta_G' \varepsilon_{i d d'}.
    \end{equation*}
    The first term $X_{P d d'}' \varepsilon_{i d d'} / G \asto \mathbb{E}[X_{P d d'_g}' \varepsilon_{i d d' g}] = 0$ as $\mathbb{E}[\varepsilon_{idd'g} \mid X_{Pdd'_g}] = 0$. Applying the Cauchy-Schwarz, the second term becomes 
    \begin{equation*}
        \bigg\|\frac{1}{G} \Delta_G' \varepsilon_{idd'}\bigg\| \leq \sqrt{\frac{1}{G}\big\|\Delta_G\big\|_F^2} \cdot \sqrt{\frac{1}{G}\|\varepsilon_{idd'}\|^2} \asto 0,
    \end{equation*}
    since $\|\Delta_G \|_F^2 / G \asto 0$, and $\|\varepsilon_{idd'}\|^2/G$ is bounded almost surely by $\text{Var}(\varepsilon_{idd'g}) = \sigma_{idd'g} < \infty$. Thus,
    \begin{equation*}
        \frac{\widehat{X}_{P d d'}' \varepsilon_{i d d'}}{G} \asto 0.
    \end{equation*}

    Combining the above results, we have 
    \begin{equation*}
        \big(\hat{\alpha}'_{idd'}, \hat{\beta}'_{idd'} \big)' \asto \bigg(\mathbb{E}\big[X'_{Pdd'g} X_{Pdd'g}\big] \bigg)^{-1} \mathbb{E}\big[X'_{Pdd'g} X_{Pdd'g}\big] \big(\alpha'_{idd'}, \beta'_{idd'} \big)' = \big(\alpha'_{idd'}, \beta'_{idd'} \big)'.
    \end{equation*}

\section{Model With Continuous Treatment} \label{sec:cts_treatment}
\subsection{Setting}
An important extension of Section \ref{sec:disc_treatment} considers settings in which the treatment $D_{ig}$ is continuous. This extension broadens the applicability of our framework to a wide range of empirical environments. For example, in education, students' long-term outcomes may depend not only on their own time of schooling or study intensity but also on the continuous education investments of their peers, such as best friends or classmates. Similarly, in agriculture, a farmer's yield may be influenced by her own input choices, such as fertilizer use or irrigation intensity, as well as by neighboring farmers' continuous decisions that affect shared resources such as groundwater or pest control. In both cases, treatment choices are endogenous, depending on unobserved preferences, constraints, or abilities, while outcomes may be directly affected by peers' continuous treatments within the group.  

For expositional clarity, we focus on the case where each group contains two units, $i = 0, 1$, while the identification results extend straightforwardly to settings with any finite number of units per group. In this extension, the treatment for unit $i$ in group $g$, $D_{ig} \in \mathbb{R}$, is a continuous random variable. We formulate the following potential outcomes framework To accommodate such treatments. Again, we suppress the group subscript $g$ for notational convenience.

\begin{equation} \label{eq:basic_model_cts_trt}
    \left\{\begin{array}{l}
        Y_{i} = g_i(D_{i}, D_{- i}, U_{i}, U_{-i}) \\
        D_{i} = h_i(Z_{i}, Z_{-i}, V_{i}) 
        \end{array}\right.
\end{equation}

In Equation \eqref{eq:basic_model_cts_trt}, the random variable $V_{i} \in \mathbb{R}$ represents private unobserved characteristics, such as ability in education decisions or resource constraints in agricultural production, that may simultaneously influence both the treatment $D_i$ and the outcome $Y_i$, thereby generating endogeneity. To address this endogeneity, we introduce continuous instrumental variables. Specifically, each unit $i$ is assigned a vector of instruments $Z_{i} \in \mathbb{R}^{k_i}$. We allow the endogenous treatment $D_i$ to depend not only on the individual's own instrument $Z_i$ but also on her peer's instrument $Z_{-i}$. This formulation accommodates the possibility of spillovers in treatment assignment, where one individual's instruments may affect both her own treatment decision and those of her peers.

In the outcome equation, we allow the individual's outcome $Y_i$ to depend not only on her own continuous treatment $D_i$ but also on the vector of continuous treatments $D_{-i}$ chosen by her peers within the group. The random vector $U_i \in \mathbb{R}^{l_i}$ captures the individual-specific unobserved characteristics that affect outcomes, and we impose no restrictions on its dimensionality. Moreover, we explicitly permit the outcome $Y_i$ to depend on the peers' unobservables $U_{-i}$. This formulation captures a rich set of spillover channels, as outcomes may be influenced both by peers' observed treatment decisions and by their latent characteristics. For example, in education, a student's earnings or academic performance may depend on both her own study effort and her peers' corresponding education investments. At the same time, peers' unobserved abilities or motivation may also affect individual's outcomes through collaboration, competition, or shared environments.

For the unobserved random variables $(U_i, V_i)$, we impose no restrictions on their joint dependence structure with $(U_{-i}, V_{-i})$ within each group, similar to the discrete treatment case. This allows for arbitrary correlation in both outcome- and treatment-related unobservables across group members.

In the continuous treatment setting, we maintain Assumption \ref{as:Vdist} and introduce two additional conditions, Assumptions \ref{as:RA_cts} and \ref{as:mon_h}, to establish identification of the marginal controlled spillover and direct effects.

\begin{assumption}(Random assignment: continuous treatment) \label{as:RA_cts}
    We assume that the instruments assigned to each group, $(Z_{i}, Z_{- i})$, satisfy
    \begin{equation*}
        \big(Z_{i}, Z_{- i}\big) \indep \big(V_{i}, V_{-i}, U_{i}, U_{-i}\big).
    \end{equation*}
\end{assumption}

\begin{assumption}(Monotonicity of $h_i$) \label{as:mon_h}
    Given $\mathbf{z}, \mathbf{z'} \in \mathbb{R}^k$, the treatment function $h_i(\mathbf{z}, \mathbf{z'}, v)$ is continuous and strictly monotonic in $v$.
\end{assumption}

In Assumption \ref{as:RA_cts}, we require that the instruments are randomly assigned at the group level. Assumption \ref{as:mon_h} further imposes a monotonicity condition on the function $h_i(\cdot)$, ensuring that the treatment $D_i$ is uniquely mapped to the unobservable $V_i$ conditional on the instruments $(Z_i, Z_{-i})$. Formally, given $Z_i = \mathbf{z}$ and $Z_{-i} = \mathbf{z'}$, the equation $D_{i}\mid(Z_{i} = \mathbf{z}, Z_{-i} = \mathbf{z'}) = h_i(\mathbf{z}, \mathbf{z'}, V_{i})$ can be inverted with respect to $V_{i}$, yielding $V_{i}\mid(Z_{i} = \mathbf{z}, Z_{-i} = \mathbf{z'}) = {h^{i}_{\mathbf{z}, \mathbf{z'}}}^{-1}(D_{i})$. Then, under Assumption \ref{as:RA_cts}, it follows that $V_{i} = {h^{i}_{Z_i, Z_{-i}}}^{-1}(D_{i})$. Without loss of generality, we specify that $h_i(\mathbf{z}, \mathbf{z'}, v)$ is strictly increasing in $v$ for all instrument values $(\mathbf{z}, \mathbf{z'})$. The monotonicity requirement on the treatment function $h_i$ is not an additional structural restriction but instead a natural implication of the existence of a well-defined conditional distribution of the treatment. As shown in \citet{goff2024testing}, if the conditional distribution $F_{D_i \mid Z_i, Z_{-i}}$ is strictly increasing and continuous, then the condition $(Z_i, Z_{-i}) \indep V_i$, together with Assumptions \ref{as:Vdist} and \ref{as:mon_h}, follows naturally under a reduced-form interpretation of the treatment selection equation. In particular, by defining $h_i(\mathbf{z}, \mathbf{z'}, v)=Q_{D_i \mid Z_i=\mathbf{z}, Z_{-i}=\mathbf{z'}}(v)$ and $V_i = F_{D_i \mid Z_i, Z_{-i}}(D_i)$, where $Q_{D_i \mid Z_i=\mathbf{z}, Z_{-i}=\mathbf{z'}}$ denotes the quantile function of distribution $D_i \mid Z_i=\mathbf{z}, Z_{-i}=\mathbf{z'}$, the treatment $D_i$ can be represented as a strictly increasing function of the latent variable $V_i$.

\subsection{Identification}

Analogous to the discrete-treatment case in Equation \eqref{eq:p_score1}, we define the propensity score function under continuous treatments as $\mathbb{P}\left(D_{i} \leq d \mid Z_{i}, Z_{- i}\right) \equiv P_i(Z_{i}, Z_{- i}, d)$ given $d \in \mathbb{R}$. Then, we have 
\begin{equation} \label{eq:p_score1_cts}
    \begin{aligned}
        P_i(\mathbf{z}, \mathbf{z'}, d) \equiv&\mathbb{P}\left(D_{i} \leq d \mid Z_{i} = \mathbf{z}, Z_{-i} = \mathbf{z'}\right) \\
        =& \mathbb{P}\left(h(\mathbf{z}, \mathbf{z'}, V_{i}) \leq d \mid Z_{i} = \mathbf{z}, Z_{-i} = \mathbf{z'}\right) \\
        =& \mathbb{P}\left(V_{i} \leq {h^{i}_{\mathbf{z}, \mathbf{z'}}}^{-1}(d) \mid Z_{i} = \mathbf{z}, Z_{-i} = \mathbf{z'}\right) \\
        =& \mathbb{P} \left(V_{i} \leq {h^{i}_{\mathbf{z}, \mathbf{z'}}}^{-1}(d)\right) \\
        =& {h^{i}_{\mathbf{z}, \mathbf{z'}}}^{-1}(d),
    \end{aligned}
\end{equation}
where the second equality in Equation \eqref{eq:p_score1_cts} is a direct implication of Assumption \ref{as:mon_h}, which guarantees invertibility of the treatment equation. The third equality holds under Assumption \ref{as:RA_cts}, requiring random assignment of instruments. Finally, the last equality is implied by Assumption \ref{as:Vdist}, since the unobservable $V_i$ can be normalized to follow a uniform distribution on $[0,1]$. Denote the identified support of the propensity score $P_i(Z_{i}, Z_{- i}, D_i)$ as $\mathcal{P}_i$.

After identifying the mapping ${h^{i}_{Z_{i}, Z_{-i}}}^{-1}(D_{i})$ for $i \in \{0, 1\}$ within each group $g$, we can identify the conditional distributions of potential outcomes by using the propensity score functions $P_i(Z_{i}, Z_{-i}, D_{i})$ as control functions:
\begin{equation} \label{eq:cond_po3}
    \begin{aligned}
        &\mathbb{P}\left(Y_{i} \in B \mid D_{i} = d, D_{-i} = d', P_i(Z_{i}, Z_{-i}, D_{i}) = p, P_{-i}(Z_{-i}, Z_{i}, D_{-i}) = p'\right)\\
        =& \mathbb{P}\left(g_i(d, d', U_{i}) \in B \mid h_i(Z_{i}, Z_{-i}, V_{i}) = d, h_{-i}(Z_{-i}, Z_{i}, V_{-i}) = d', \right.\\
        &\left. {h^{i}_{Z_{i}, Z_{-i}}}^{-1}(d) = p, {h^{-i}_{Z_{-i}, Z_{i}}}^{-1}(d') = p'\right) \\
        =& \mathbb{P}\left(g_i(d, d', U_{i}) \in B \mid V_{i} = {h^{i}_{Z_{i}, Z_{-i}}}^{-1}(d), V_{-i} = {h^{-i}_{Z_{-i}, Z_{i}}}^{-1}(d'), \right.\\
        &\left. {h^{i}_{Z_{i}, Z_{-i}}}^{-1}(d) = p, {h^{-i}_{Z_{-i}, Z_{i}}}^{-1}(d') = p'\right) \\ 
        =& \mathbb{P}\left(g(d, d', U_{i}) \in B \mid V_{i} = p, V_{-i} = p', {h^{i}_{Z_{i}, Z_{-i}}}^{-1}(d) = p, {h^{-i}_{Z_{-i}, Z_{i}}}^{-1}(d') = p'\right) \\
        =& \mathbb{P}\left(Y_{i}(d, d') \in B \mid V_{i} = p, V_{-i} = p'\right),
    \end{aligned}
\end{equation}
where $(d, d') \in \mathbb{R}^2$, $(p, p') \in \mathcal{P}_i \times \mathcal{P}_{-i}$, and $B$ denotes any Borel set in the sigma-field generated by $Y_{i}$. The first equality follows directly from Equation \eqref{eq:p_score1_cts}, which links the propensity score to the inverse of the treatment function. The second equality is implied by Assumption \ref{as:mon_h}, which ensures that the treatment equation is strictly monotone in the unobservable and hence invertible. Finally, the last equality holds under Assumption \ref{as:RA_cts}, since the instruments $(Z_i, Z_{-i})$ are independent of the unobserved component $U_i$.

Equation \eqref{eq:cond_po3} provides identification of the marginal treatment response (MTR) function, 
\begin{equation*}
    m_i^{(d, d')}(p, p') \equiv \mathbb{E}\big[Y_i\left(d, d'\right) \mid V_i=p, V_{-i}=p'\big], (d, d') \in \mathbb{R}^2, (p, p') \in \mathcal{P}_i \times \mathcal{P}_{-i},
\end{equation*}
by taking expectations over the identified potential outcome distributions, $\mathbb{P}(Y_{i}(d, d') \in B \mid V_{i} = p, V_{-i} = p')$. Building on this result, the marginal controlled spillover effect (MCSE) and the marginal controlled direct effect (MCDE) with continuous treatments are identified as differences across the MTR functions, corresponding respectively to changes in peers' treatments and in one's own treatment while conditioning on latent characteristics $(V_i, V_{-i})$.

\begin{theorem}(Identifying marginal treatment response) \label{thm:id_mtr_cts_trt}
    Consider the model in Equation \eqref{eq:basic_model_cts_trt}. Suppose that Assumptions \ref{as:Vdist}, \ref{as:RA_cts} and \ref{as:mon_h} hold. For any $d_0, d_1 \in \mathbb{R}^2$ and $(p_0, p_1) \in \mathcal{P}_i \times \mathcal{P}_{-i}$, $m_{i}^{(d_0, d_1)}(p_0, p_1)$ is identified as
    \begin{equation*}
        \mathbb{E}\big[Y_{i} \mid D_{i} = d_0, D_{-i} = d_1, P_i(Z_{i}, Z_{-i}, D_{i}) = p_0, P_{-i}(Z_{-i}, Z_{i}, D_{-i}) = p_1\big],
    \end{equation*}
    where $P_i(\mathbf{z}, \mathbf{z'}, d) \equiv \mathbb{P}(D_{i} \leq d \mid Z_{i} = \mathbf{z}, Z_{- i} = \mathbf{z'})$.
\end{theorem}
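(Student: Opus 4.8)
The plan is to adapt the control-function logic underlying the discrete-treatment results to the continuous-treatment model of Equation \eqref{eq:basic_model_cts_trt}, using the strict monotonicity of the selection equation to recover the latent heterogeneity $(V_i, V_{-i})$ from the observed treatments and their propensity scores. The argument is assembled from the two displays already derived in the excerpt, Equations \eqref{eq:p_score1_cts} and \eqref{eq:cond_po3}, with the marginal treatment response recovered in the final step by integrating the identified conditional outcome distribution. The whole proof is short once these pieces are in place, so the task is mainly to justify each link in the chain of conditional-distribution equalities.

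First I would establish the control-function representation of the latent heterogeneity. Under Assumption \ref{as:mon_h} the map $v \mapsto h_i(\mathbf{z}, \mathbf{z'}, v)$ is continuous and strictly increasing, hence invertible for every fixed $(\mathbf{z}, \mathbf{z'})$, so that $V_i = {h^i_{Z_i, Z_{-i}}}^{-1}(D_i)$ almost surely. Combining invertibility with the random-assignment condition in Assumption \ref{as:RA_cts} and the uniform normalization in Assumption \ref{as:Vdist}, the conditional c.d.f. $P_i(\mathbf{z}, \mathbf{z'}, d) = \mathbb{P}(D_i \le d \mid Z_i = \mathbf{z}, Z_{-i} = \mathbf{z'})$ coincides with ${h^i_{\mathbf{z}, \mathbf{z'}}}^{-1}(d)$, which is exactly Equation \eqref{eq:p_score1_cts}. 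The decisive consequence is that conditioning on $\{D_i = d_0, P_i(Z_i, Z_{-i}, D_i) = p_0\}$ forces $V_i = p_0$, and the analogous statement pins down $V_{-i} = p_1$ for the peer.

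Second I would transport this representation to the joint outcome. Writing $Y_i = g_i(D_i, D_{-i}, U_i, U_{-i})$ and substituting the inverted selection equations, conditioning on $(D_i, D_{-i}) = (d_0, d_1)$ together with $(P_i, P_{-i}) = (p_0, p_1)$ is equivalent, by the first step, to conditioning on $(V_i, V_{-i}) = (p_0, p_1)$ together with the instruments entering only through their realized propensity values. The joint independence $(Z_i, Z_{-i}) \indep (V_i, V_{-i}, U_i, U_{-i})$ in Assumption \ref{as:RA_cts} then removes the residual dependence on the instruments, so that the conditional law of $g_i(d_0, d_1, U_i, U_{-i})$ given $(V_i, V_{-i}) = (p_0, p_1)$ and the instrument values equals its conditional law given $(V_i, V_{-i}) = (p_0, p_1)$ alone. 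This reproduces the chain of equalities in Equation \eqref{eq:cond_po3}, identifying $\mathbb{P}(Y_i(d_0, d_1) \in B \mid V_i = p_0, V_{-i} = p_1)$ for every Borel set $B$, and hence $m_i^{(d_0, d_1)}(p_0, p_1)$ as the conditional expectation stated in the theorem.

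The main obstacle is measure-theoretic rather than algebraic: because $D_i$ is continuous, the conditioning events $\{D_i = d_0\}$ and $\{P_i = p_0\}$ have probability zero, so the manipulations in Equation \eqref{eq:cond_po3} must be interpreted through regular conditional distributions rather than elementary conditioning. I would therefore make explicit that the continuity of $h_i$ in Assumption \ref{as:mon_h} yields well-defined conditional densities and that the substitution $V_i = {h^i_{Z_i, Z_{-i}}}^{-1}(D_i)$ is valid almost everywhere, so the conditional expectation is identified at Lebesgue-almost every interior point of $\mathcal{P}_i \times \mathcal{P}_{-i}$, in the reduced-form sense of \citet{goff2024testing}. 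A second point worth flagging is that $(U_i, V_i)$ may be arbitrarily dependent on $(U_{-i}, V_{-i})$; this causes no difficulty because the exclusion and independence restrictions are imposed jointly on the full group-level vector of unobservables, so the independence used to drop the instruments survives without requiring any within-group independence among the latent variables.
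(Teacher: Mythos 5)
Your proposal follows essentially the same route as the paper: Equation \eqref{eq:p_score1_cts} to turn the propensity scores into control functions via monotonicity, random assignment, and the uniform normalization, then the chain of conditional-distribution equalities in Equation \eqref{eq:cond_po3} to recover $\mathbb{P}(Y_i(d_0,d_1)\in B \mid V_i = p_0, V_{-i} = p_1)$, and finally taking expectations to obtain $m_i^{(d_0,d_1)}(p_0,p_1)$. Your explicit treatment of the measure-zero conditioning events through regular conditional distributions is a useful refinement the paper leaves implicit, but it does not alter the argument.
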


\section{Proofs for the Exposure Mapping Model} \label{app:proof_exposure}

The individual propensity score $P_{ig}(Z_g)$ can be used to identify her threshold function $h_i(\cdot)$:
\begin{equation*}
    \begin{aligned}
        & \mathbb{P}\left(D_{ig}=1 \mid Z_g=z\right) \\
        = & \mathbb{P}\left(V_{ig} \leq h_i\left(Z_g\right) \mid Z_g=z\right) \\
        = & \mathbb{P}\left(V_{ig} \leq h_i\left(z\right) \mid Z_g=z\right) \\
        = & \mathbb{P}\left(V_{ig} \leq h_i\left(z\right)\right) \\
        = & h_i\left(z\right).
    \end{aligned}
\end{equation*}
These equalities follow from two key ingredients. First, by normalizing the individual unobservable $V_{ig}$ to be uniformly distributed on $(0, 1)$. Second, because the group-level instrument $Z_g$ is randomly assigned, it is independent of the unobserved heterogeneity. Additionally, the group-level propensity score function identifies the inverse of the exposure function $m$:
\begin{equation*}
    \begin{aligned}
        & \mathbb{P}\left(H_g \leq h \mid Z_g=z\right) \\
        = & \mathbb{P}\left(m\left(z, \varepsilon_g\right) \leq h \mid Z_g=z\right) \\
        = & \mathbb{P}\left(\varepsilon_g \leq m^{-1}_z(h) \mid Z_g=z\right) \\
        = & \mathbb{P}\left(\varepsilon_g \leq m^{-1}_z(h)\right) \\
        = & m^{-1}_z(h).
        \end{aligned}
\end{equation*}
The second equality is implied by Assumption \ref{as:mon_m}, which guarantees that the exposure mapping is strictly monotone in the group-level unobservable and therefore invertible. The third equality follows from Assumption \ref{as:RA_mixed}, which ensures that the instruments are randomly assigned and thus independent of group-level unobserved heterogeneity. Finally, the last equality results from normalizing the group-level unobservable $\varepsilon_g$ to follow a uniform distribution on $(0, 1)$.

Next, we use the individual- and group-level propensity score functions as control functions to identify the conditional distribution of $V_{ig}$ given $\varepsilon_g$,
\begin{equation*}
    \begin{aligned}
        & \mathbb{P}\left(D_{ig} = 1 \mid H_g = h, P_{ig}(Z_g) = p_0, P_g(Z_g, H_g) = p_1\right) \\
        =& \mathbb{P}\left(V_{ig} \leq h_{i}(Z_g) \mid m(Z_g, \varepsilon_g) = h, h_{i}(Z_g) = p_0, m_{Z_g}^{-1}(h) = p_1\right) \\
        =& \mathbb{P}\left(V_{ig} \leq p_0 \mid \varepsilon_g = m_{Z_g}^{-1}(h), h_{i}(Z_g) = p_0, m_{Z_g}^{-1}(h) = p_1\right) \\
        =& \mathbb{P}\left(V_{ig} \leq p_0 \mid \varepsilon_g = p_1, h_{i}(Z_g) = p_0, m_{Z_g}^{-1}(h) = p_1\right) \\
        =& \mathbb{P}\left(V_{ig} \leq p_0 \mid \varepsilon_g = p_1\right),
    \end{aligned}
\end{equation*}
where the first equality follows from the identification of control functions, the second equality is implied by the monotonicity of function $m(\cdot)$, and the last equality holds under the random assignment of instruments $Z_g$. 

Given that the probability $\mathbb{P}(D_{ig} = 1 \mid H_g = h, P_{ig}(Z_g) = p_0, P_g(Z_g, H_g) = p_1)$ is differentiable with respect to $p_0$, we can take derivatives to recover the conditional density of the individual-level unobservable given the group-level unobservable, denoted $f_{V_{ig} \mid \varepsilon_g}(\cdot)$:
\begin{equation*}
    \begin{aligned}
        \frac{\partial}{\partial p_0} \mathbb{P}\left(D_{ig} = 1 \mid H_g = h, P_{ig}(Z_g) = p_0, P_g(Z_g, H_g) = p_1\right) = f_{V_{ig} \mid \varepsilon_g = p_1}(p_0).
    \end{aligned}
\end{equation*}

Then, the marginal treatment response function, $\mathbb{E}[Y_{ig}(1, h) \mid V_{ig} = p_0, \varepsilon_g = p_1]$, can be identified from 
\begin{equation*}
    \begin{aligned}
        &\mathbb{E}\left[Y_{ig} D_{ig} \mid H_g = h, P_{ig}\left(Z_g\right) = p_0, P_g\left(Z_g, H_g\right) = p_1\right] \\
        =& \mathbb{E}\left[Y_{ig} \mathbbm{1}\{V_{ig} \leq h_{i}(Z_g)\} \mid m(Z_g, \varepsilon_g) =h, h_i(Z_g) = p_0, m_{Z_g}^{-1}(h) = p_1\right] \\
        =& \mathbb{E}\left[Y_{ig} \mathbbm{1}\{V_{ig} \leq p_0\} \mid \varepsilon_g=m_{Z_g}^{-1}(h), h_i(Z_g) = p_0, m_{Z_g}^{-1}(h) = p_1\right] \\
        =& \mathbb{E}\left[Y_{ig} \mathbbm{1}\{V_{ig} \leq p_0\} \mid \varepsilon_g=p_1\right].
    \end{aligned}
\end{equation*}
Given that the function $\mathbb{E}[Y_{ig} D_{ig} \mid H_g = h, P_{ig}\left(Z_g\right) = p_0, P_g\left(Z_g, H_g\right) = p_1]$ is differentiable with repect to $p_0$ and that the marginal treatment response functions are continuous, we can differentiate this conditional expectation to obtain
\begin{equation*}
    \begin{aligned}
        & \frac{\partial}{\partial p_0} \mathbb{E}\left[Y_{ig} D_{ig} \mid H_g = h, P_{ig}\left(Z_g\right) = p_0, P_g\left(Z_g, H_g\right) = p_1\right] \\
        =& \mathbb{E}\left[Y_{ig}(1, h) \mid V_{ig} = p_0, \varepsilon_g = p_1\right] \cdot f_{V_{ig} \mid \varepsilon_g = p_1}(p_0).
    \end{aligned}
\end{equation*}
By dividing both sides of the previous expression by the conditional density $f_{V_{ig} \mid \varepsilon_g = p_1}(p_0)$, we can identify $\mathbb{E}[Y_{ig}(1, h) \mid V_{ig} = p_0, \varepsilon_g = p_1]$ for $h \in \mathbb{R}$ and $(p_0, p_1) \in \mathcal{P}$. Analogously, the marginal treatment response function $\mathbb{E}[Y_{ig}(0, h) \mid V_{ig} = p_0, \varepsilon_g = p_1]$ can be identified as 
\begin{equation*}
    \begin{aligned}
        & \frac{\partial}{\partial p_0} \mathbb{E}\left[Y_{ig} (1 - D_{ig}) \mid H_g = h, P_{ig}\left(Z_g\right) = p_0, P_g\left(Z_g, H_g\right) = p_1\right] \Big/ \\
        & \frac{\partial}{\partial p_0} \mathbb{P}\left(1 - D_{ig} \mid H_g = h, P_{ig}\left(Z_g\right) = p_0, P_g\left(Z_g, H_g\right) = p_1\right).
    \end{aligned}
\end{equation*}

\end{appendices}

\end{document}